\newcommand{\R}{\mathbb{R}}
\newcommand{\Z}{\mathbb{Z}}
\newcommand{\K}{\mathcal{K}}
\newcommand{\U}{\mathcal{U}}
\newcommand{\J}{\mathcal{J}}
\newcommand{\RE}{\mathcal{R}}
\newcommand{\Jsf}{\mathsf{J}}
\newcommand{\C}{\mathcal{C}}
\renewcommand{\P}{\mathcal{P}}
\renewcommand{\H}{\mathcal{H}}
\newcommand{\opt}{\text{opt}}
\definecolor{BlackBlue}{rgb}{0.38, 0.44, 1.0}
\newcommand{\boundary}{\partial}
\newcommand{\coboundary}{\delta}
\newcommand{\tensor}{\otimes}
\newcommand{\oracle}{\mathcal{O}}
\newcommand{\EMPH}[1]{\textit{\textbf{#1}}}
\renewcommand{\L}{\mathcal{L}}
\newcommand{\prism}[1]{P{#1}}
\renewcommand{\sp}[1]{SP{#1}}
\renewcommand{\star}[1]{S{#1}}
\newcommand{\id}{\mathbb{I}}
\DeclareMathOperator{\spec}{spec}
\newcommand{\specnz}{\spec_{NZ}}
\newtheorem{theorem}{Theorem}[section]
\newtheorem{lemma}[theorem]{Lemma}
\newtheorem{corollary}[theorem]{Corollary}
\newtheorem{definition}{Definition}
\newtheorem{observation}[theorem]{Observation}
\newtheorem*{problem}{Problem}
\DeclareMathOperator{\im}{im}
\DeclareMathOperator{\supp}{supp}
\DeclareMathOperator{\spn}{span}
\DeclareMathOperator{\poly}{poly}
\DeclareMathOperator{\polylog}{polylog}
\DeclareMathOperator{\diag}{diag}
\DeclarePairedDelimiter\bra{\langle}{\rvert}
\DeclarePairedDelimiter\ket{\lvert}{\rangle}
\DeclarePairedDelimiterX\braket[2]{\langle}{\rangle}{#1 \delimsize\vert #2}
\author{Mitchell Black\thanks{Oregon State University: blackmit@oregonstate.edu} \and William Maxwell\thanks{Oregon State University} \and Amir Nayyeri\thanks{Oregon State University: nayyeria@oregonstate.edu}}
\title{An Incremental Span-Program-Based Algorithm and the Fine Print of Quantum Topological Data Analysis\footnote{A preliminary version of this paper appeared at ISAAC 2021 under the title ``Effective Resistance and Capacitance in Simplicial Complexes and a Quantum Algorithm.''}}
\date{}
\begin{document}

\maketitle

\begin{abstract}
        We introduce a new quantum algorithm for computing the Betti numbers of a simplicial complex. In contrast to previous quantum algorithms that work by estimating the eigenvalues of the combinatorial Laplacian, our algorithm is an instance of the generic Incremental Algorithm for computing Betti numbers that incrementally adds simplices to the simplicial complex and tests whether or not they create a cycle. In contrast to existing quantum algorithms for computing Betti numbers that work best when the complex has close to the maximal number of simplices, our algorithm works best for sparse complexes. 
        \par 
        To test whether a simplex creates a cycle, we introduce a quantum span-program algorithm. We show that the query complexity of our span program is parameterized by quantities called the effective resistance and effective capacitance of the boundary of the simplex. Unfortunately, we also prove upper and lower bounds on the effective resistance and capacitance, showing both quantities can be exponentially large with respect to the size of the complex, implying that our algorithm would have to run for exponential time to exactly compute Betti numbers.
        \par 
        However, as a corollary to these bounds, we show that the spectral gap of the combinatorial Laplacian can be exponentially small. As the runtime of all previous quantum algorithms for computing Betti numbers are parameterized by the inverse of the spectral gap, our bounds show that all quantum algorithms for computing Betti numbers must run for exponentially long to exactly compute Betti numbers.
        \par
        Finally, we prove some novel formulas for effective resistance and effective capacitance to give intuition for these quantities.
\end{abstract}

\section{Introduction.}

The past few years has seen the development of quantum algorithms with the potential to speed up computation of topological features of simplicial complexes called \textit{Betti numbers}. Betti numbers are important topological invariants of a space; indeed, there is an entire, rapidly-growing field called \textit{Topological Data Analysis} (TDA) that studies the application of topological invariants like Betti numbers (among other)~\cite{carlsson2021topological,dey2022computational, EdelsbrunnerHarer}. Accordingly, the study of quantum algorithms for computing Betti numbers has been deemed \textit{Quantum Topological Data Analysis} (QTDA). 
\par 
Betti numbers can be both time and space inefficient for classical computers to compute. For example, a simplicial complex on vertices can be exponentially large and it can take exponential time to compute Betti numbers in arbitrary dimensions. Quantum computers offer a potential solution to the shortcomings of the classical algorithm. For example, quantum computers can efficiently store a simplicial complex with $n$ vertices using only $O(\poly(n))$ qubits.
\par 
However, while these quantum algorithms have certain advantages over their classical counterparts like improved space complexity, QTDA algorithms only achieve significant advantage over classical TDA algorithms under certain circumstances. QTDA algorithms only achieve significant speed up over classical algorithms when the input complex is \textit{clique-dense}---it has close to the maximal number of simplices---and when the spectral gap of the combinatorial Laplacian of the complex is polynomially small. This second point is a particular problem as, before now, it was unknown how small the spectral gap of the combinatorial Laplacian could be. This makes the spectral gap of the combinatorial Laplacian an example of ``fine print''~\cite{Aaronson2015fineprint}: an unbounded parameter in the runtime of a celebrated quantum algorithm.

\subsection{Our Contributions.}

\begin{itemize}
    \item In Sections \ref{sec:incremental_algorithm} and \ref{sec:quantum_algorithm_null_homology_testing}, we provide a novel quantum algorithm for computing Betti numbers using the framework of span programs~\cite{Karchmer, Reichardt2012}. As opposed to existing QTDA algorithms that work by estimating the eigenvalues of the combinatorial Laplacian or singular values of the boundary matrices, our algorithm is more similar to classical matrix-reduction algorithms for computing Betti numbers as it works by incrementally adding simplices to the simplicial complex and testing if these simplices create or destroy a cycle. One advantage of our algorithm is that it avoids the step of creating a superpostion over the $k$-simplices, which is a bottleneck of existing QTDA algorithms that restricts their utility to the clique-dense regime. In \Cref{sec:witness_sizes} and \Cref{sec:time_complexity}, we show that the query and time complexity of our span program algorithm for QTDA is parameterized by the maximum effective resistance and capacitance of cycles in $\K$. In \Cref{sec:comparision_existing_algorithms}, we compare our algorithm with existing QTDA algorithms. The culmination of this section is the following theorem. 
    
    \begin{restatable}{theorem}{thmincrementalalgorithmruntime}
    \label{thm:incremental_algorithm_runtime}
        Let $\K$ be a simplicial complex. There is a quantum algorithm for computing the $d$th Betti number $\beta_d$ of $\K$ in time 
        $$
            \Tilde{O}\left( \left( \sqrt{\frac{\RE_{\max}\C_{\max}}{\tilde{\lambda}_{\min}}} n_0 + \sqrt{\RE_{\max}n_0}\right)(n_{d} + n_{d+1})\right),  
        $$
        where
        \begin{itemize}
            \item[$\bullet$] $n_i$ is the number of $i$-simplices of $\K$.
            \item[$\bullet$] $\RE_{\max}$ is the maximum finite effective resistance $\RE_{\boundary\sigma}(\L)$ of the boundary of any $d$- or $(d+1)$-simplex $\sigma\in\K$ in any subcomplex $\L\subset\K$.
            \item[$\bullet$] $\C_{\max}$ are the maximum finite effective capacitance $\C_{\boundary\sigma}(\L,\K)$ of the boundary of any $d$- or $(d+1)$-simplex $\sigma\in\K$ in any subcomplex $\L\subset\K$.
            \item[$\bullet$] $\tilde{\lambda}_{\min}$ is the minimum spectral gap of the normalized up Laplacians $\tilde{L}_{d-1}^{up}[\K]$ and $\tilde{L}_{d}^{up}[\K]$.
        \end{itemize}
    \end{restatable}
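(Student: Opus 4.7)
The plan is to combine two results established earlier in the paper: the Incremental Algorithm framework of \Cref{sec:incremental_algorithm}, which reduces Betti number computation to a sequence of null-homology tests, together with the quantum span program algorithm of \Cref{sec:quantum_algorithm_null_homology_testing}, whose query and time complexities are analyzed in \Cref{sec:witness_sizes,sec:time_complexity}. First I would invoke the Incremental Algorithm, which processes the $d$- and $(d+1)$-simplices of $\K$ in an arbitrary filtered order; for each newly inserted simplex $\sigma$, it tests whether $\boundary\sigma$ is null-homologous in the current subcomplex $\L\subset\K$. By the standard rank-nullity accounting---each positive test increments $\beta_{\dim\sigma}$ and each negative test decrements $\beta_{\dim\sigma-1}$---the resulting tally produces $\beta_d$ exactly, using a total of $n_d + n_{d+1}$ null-homology tests.

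Next, I would instantiate each test with the quantum span program of \Cref{sec:quantum_algorithm_null_homology_testing}. By the witness-size analysis of \Cref{sec:witness_sizes}, the positive and negative witness sizes are upper-bounded by the effective resistance $\RE_{\boundary\sigma}(\L)$ and effective capacitance $\C_{\boundary\sigma}(\L,\K)$, each of which is at most $\RE_{\max}$ or $\C_{\max}$ by hypothesis; taking these worst-case quantities is justified because any $\L$ appearing during the incremental construction is a subcomplex of $\K$. The time-complexity bound of \Cref{sec:time_complexity} then converts these witness sizes into a per-test runtime of $\tilde O\!\left(\sqrt{\RE_{\max}\C_{\max}/\tilde\lambda_{\min}}\,n_0 + \sqrt{\RE_{\max} n_0}\right)$, where the $1/\sqrt{\tilde\lambda_{\min}}$ factor arises from phase estimation against the normalized up-Laplacians used to simulate the span program unitary, while the $n_0$ factor reflects the cost of implementing the boundary/coboundary operators whose quantum addressing scales with the number of vertices. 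Multiplying by the $n_d + n_{d+1}$ tests then yields the stated bound.

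The main obstacle is extracting the precise two-term sum $\sqrt{\RE_{\max}\C_{\max}/\tilde\lambda_{\min}}\,n_0 + \sqrt{\RE_{\max} n_0}$ rather than a single product $\sqrt{\RE_{\max}\C_{\max}}$. This additive structure reflects an asymmetry between the accepting and rejecting branches of the span program: the first term is incurred when distinguishing accept from reject requires resolving the span program's spectrum (so that $\tilde\lambda_{\min}$ enters via phase estimation), while the second captures a cheaper rejection path in which a non-null-homologous boundary can be certified using only the positive-witness branch without a spectral-gap-dependent amplification. Verifying that \Cref{sec:time_complexity} yields exactly this sum---and checking that the worst-case parameters $\RE_{\max}$, $\C_{\max}$, and $\tilde\lambda_{\min}$ hold uniformly over \emph{every} intermediate subcomplex that appears during the incremental construction, not only over $\K$ itself---is the key technical step.
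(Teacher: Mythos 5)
The proposal matches the paper's proof exactly: run the Incremental Algorithm (\Cref{alg:incremental_algorithm}) and invoke \Cref{thm:null_homology_time_complexity} once for each of the $n_d+n_{d+1}$ null-homology tests, taking $\RE_{\max}$ and $\C_{\max}$ to be worst-case over the subcomplexes that arise. One inaccuracy in your final paragraph is worth flagging, though it does not affect correctness since you correctly defer to \Cref{thm:null_homology_time_complexity} for the per-test cost: the second term $\sqrt{\RE_{\max}\,n_0}$ is not a ``cheaper rejection path''---it is the cost of preparing the initial state $\ket{w_0}=\boundary^+\gamma$ via the nested amplitude amplification of \Cref{thm:init}, a cost incurred on every call regardless of whether the instance is accepted or rejected.
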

    
    \item In \Cref{sec:bounds}, we provide upper bounds on the maximum effective resistance by the size of the simplicial complex and the maximal rank of the torsion subgroup of the simplicial complex, as well as looser upper bounds purely in terms of the size of the complex. These upper bounds show that the effective resistance can be at most exponentially-large with respect to the size of the complex. We also provide similar upper bounds on effective capacitance for special cases. Finally, we provide families of simplicial complexes with cycles whose effective resistance or effective capacitance is exponentially large, thus matching the upper bound up to the base of the exponent. This implies that our algorithm for QTDA can take exponentially long in the worst case. However, in the next paragraph, we will see that our results imply all other QTDA algorithms must run for exponential time as well.
    
    \item In \Cref{sec:bounds_spectral_gap}, we show how the upper and lowers bounds for effective resistance provide lower and upper bounds for the spectral gap of the combinatorial Laplacian respectively; thus, the spectral gap is exponentially small in the worst case.  Moreover, we show there are clique-dense complexes that achieve worst-case spectral gap.  
    
    \begin{restatable}{theorem}{spectralgaplowerbound}
    \label{thm:spectral_gap_lower_bound}
        Let $K$ be a simplicial complex. Let $n_i$ be the number of $i$-simplices of $K$. Let $n = \max\{\min\{n_{d-1},n_d\},\min\{n_d,n_{d+1}\}\}$.
        Then the spectral gap $\lambda_{\min}(L_d[K])\in\Omega\left(\frac{1}{n^{2}d^{n}}\right)$. 
    \end{restatable}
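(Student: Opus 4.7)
The plan is to reduce $\lambda_{\min}(L_d[K])$ to the smallest nonzero eigenvalue of an up-Laplacian via the standard Hodge decomposition, and then to combine the variational characterization of that eigenvalue with the exponential upper bound on effective resistance established in \Cref{sec:bounds}.

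First, I would use the fact that the nonzero spectra of $L_d^{up}[K]$ and $L_d^{down}[K]$ lie in orthogonal subspaces and together form the nonzero spectrum of $L_d[K]$, combined with the standard identity that $L_d^{down} = \partial_d^T \partial_d$ and $L_{d-1}^{up} = \partial_d \partial_d^T$ share nonzero eigenvalues. This yields $\lambda_{\min}(L_d[K]) = \min\{\lambda_{\min}(L_d^{up}[K]),\, \lambda_{\min}(L_{d-1}^{up}[K])\}$. Since $n$ is calibrated to simultaneously upper bound $\text{rank}(\partial_d)$ and $\text{rank}(\partial_{d+1})$, the theorem reduces to proving a $\Omega(1/(n^2 d^n))$ lower bound for each $\lambda_{\min}(L_k^{up})$, $k \in \{d-1, d\}$.

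Second, for a fixed $k$ I would invert the variational identity
$$
\frac{1}{\lambda_{\min}(L_k^{up})} \;=\; \sup_{0 \neq x \in \im \partial_{k+1}} \frac{R(x)}{\|x\|^2},
$$
where $R(x) = x^T (L_k^{up})^+ x$ is the effective resistance of $x$, and upper bound the supremum. Any $x \in \im \partial_{k+1}$ decomposes as $x = \sum_{\sigma \in T} c_\sigma\, \partial \sigma$ over a chosen basis $T \subseteq K_{k+1}$ for $\im \partial_{k+1}$ with $|T| = \text{rank}(\partial_{k+1}) \leq n$, and the subadditivity of the seminorm $\sqrt{R(\cdot)}$ induced by the PSD operator $(L_k^{up})^+$ gives $\sqrt{R(x)} \leq \sqrt{R_{\max}} \sum_\sigma |c_\sigma|$. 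Here $R_{\max}$ is the maximum effective resistance of a boundary of a $(k+1)$-simplex, at most $O(d^n)$ by \Cref{sec:bounds}. Cauchy--Schwarz combined with a Cauchy--Binet / integer-determinant argument on the $\{0,\pm 1\}$-valued submatrix $\partial_{k+1}|_T$ would then bound $(\sum_\sigma |c_\sigma|)^2 \leq n^2\, \|x\|^2$, giving $R(x)/\|x\|^2 \leq O(n^2 d^n)$ and hence the claimed lower bound on $\lambda_{\min}(L_k^{up})$.

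The main obstacle is this conditioning step that converts a bound on effective resistance of \emph{specific} simplex-boundary cycles into a bound for \emph{arbitrary} $k$-boundaries: a naive Cauchy--Binet argument on $\partial_{k+1}|_T$ already threatens to lose an exponential factor in the $\ell_1 \to \ell_2$ passage, so the proof must exploit the integer structure and column-sparsity of the boundary operator in tandem with the specific form of the upper bound on $R_{\max}$ from \Cref{sec:bounds}---rather than treating the latter as a black box---in order to keep the total loss down to the claimed $n^2$ factor.
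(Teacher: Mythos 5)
Your first two steps match the paper's route: the reduction to $\min\{\lambda_{\min}(L_d^{up}),\lambda_{\min}(L_{d-1}^{up})\}$ is exactly the content of \Cref{lem:spectral_gap_combinatorial_up_down} together with \Cref{lem:properties_spectrum_laplacian}, and the variational identity expressing $1/\lambda_{\min}(L_k^{up})$ as the maximum effective resistance over unit-length cycles in $\im\partial_{k+1}$ is \Cref{lem:spectral_gap_effective_resistance}. Your departure is in the third step, and there the departure creates a genuine gap.

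You restrict the resistance bound to simplex boundaries $\partial\sigma$ and then try to lift it to arbitrary $x\in\im\partial_{k+1}$ by writing $x=\sum_{\sigma\in T}c_\sigma\,\partial\sigma$, applying subadditivity of $\sqrt{R(\cdot)}$, and then invoking a claimed estimate $\|c\|_1^2\le n^2\|x\|^2$. That estimate is false, and the paper's own worst-case construction refutes it. In the complex $\mathcal{B}_d^n$ of \Cref{thm:resistance_lower_bound} one has $\ker\partial_d[\mathcal{B}_d^n]=0$ (\Cref{lem:Bdn_ker_boundary}), so $T$ is forced to be all of $(\mathcal{B}_d^n)_d$ and the coefficient vector for $x=\partial y_n$ is the chain $y_n$ of \Cref{lem:Bdn_d_chain} itself, which satisfies $\|y_n\|^2/\|\partial y_n\|^2\in\Omega(d^{2n})$. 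Thus $\|c\|_1/\|x\|\ge\|c\|_2/\|x\|$ is exponentially large, not $O(n)$. You flag this as the ``main obstacle'' and suggest it can be overcome by exploiting integer structure, but when $\ker\partial_{k+1}=0$ the decomposition $c$ is \emph{unique}, so there is no slack to exploit; the step cannot be repaired.

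The fix is to drop the decomposition entirely. \Cref{thm:resistance_upper_bound_torsion}, combined with \Cref{lem:bound_relative_torsion} and packaged as \Cref{cor:resistance_upper_bound}, already bounds the effective resistance of an \emph{arbitrary} unit-length null-homologous cycle by $O(n^2(d+1)^n)$; it is not restricted to simplex boundaries. Its Cramer's-rule argument works directly with the given $\gamma$: one picks a full-column-rank submatrix $B$ of the boundary operator, solves $Bf=c$ for the restriction $c$ of $\gamma$, bounds each $|f(\sigma)|$ by a cofactor determinant via Hadamard's inequality, and the only $\ell_1\to\ell_2$ conversion is $\|c\|_1\le\sqrt{n}\|c\|_2\le\sqrt{n}\|\gamma\|$, which is applied to the input cycle itself (of unit norm), not to a synthetic coefficient vector $c$ that could blow up. Plugging this corollary into \Cref{lem:spectral_gap_effective_resistance} for $k\in\{d-1,d\}$ and taking the minimum via \Cref{lem:spectral_gap_combinatorial_up_down} gives the stated bound with $n=\max\{\min\{n_{d-1},n_d\},\min\{n_d,n_{d+1}\}\}$, which is precisely the paper's argument.
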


    \begin{restatable}{theorem}{thmspectralgapupperbounddense}
    \label{thm:spectral_gap_upper_bound_dense}
        Let $d,n\geq 1$. There are constants $c_d,\kappa_d$ that depends only on $d$ and a $d$-dimensional simplicial complex $\mathcal{C}_d^{n}$ with $n_d=\Omega(\kappa_d\binom{n_{0}}{d})$ $d$-simplices such that the spectral gaps $\lambda_{\min}(L_{d-1}[\mathcal{C}_d^{n}])$, $\lambda_{\min}(L_d[\mathcal{C}_d^{n}])\in O(\frac{1}{c_d^{n_d}})$.
    \end{restatable}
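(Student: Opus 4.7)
The plan is to combine the upper bound on the spectral gap in terms of effective resistance and capacitance (proved earlier in \Cref{sec:bounds_spectral_gap}) with the exponential lower-bound construction from \Cref{sec:bounds}, adapted to be clique-dense. Earlier in this section I show that if a simplicial complex $\K$ contains a simplex $\sigma$ whose boundary has effective resistance or capacitance $R$ in some subcomplex of $\K$, then the corresponding Laplacian spectral gap $\lambda_{\min}(L_\bullet[\K])$ is at most $1/R$ up to polynomial factors in the size of $\K$. So it suffices to exhibit a clique-dense $\mathcal{C}_d^n$ containing a simplex whose boundary has both effective resistance and effective capacitance exponentially large in $n_d$.

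I would start with the exponential-resistance and exponential-capacitance family $\mathcal{X}_d^n$ from \Cref{sec:bounds} and form $\mathcal{C}_d^n := \mathcal{X}_d^n \sqcup \mathcal{F}_d^n$, where $\mathcal{F}_d^n$ is a filler supported on disjoint vertices that brings $n_d$ up to $\Omega(\kappa_d \binom{n_0}{d})$. Disjoint union is compatible with both the spectrum and resistance/capacitance: the Laplacian of $\mathcal{C}_d^n$ is block diagonal, so its nonzero spectrum is the union of the summand spectra; and no chain supported on $\mathcal{F}_d^n$ can be homologous to the boundary of the distinguished simplex $\sigma \in \mathcal{X}_d^n$, so the effective resistance and capacitance of $\partial\sigma$ remain exactly what they were in $\mathcal{X}_d^n$.

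The main obstacle is quantitative: the inherited spectral bound is exponentially small in $n_d(\mathcal{X}_d^n)$ only, whereas the theorem demands an exponential in $n_d(\mathcal{C}_d^n) = n_d(\mathcal{X}_d^n) + n_d(\mathcal{F}_d^n)$. For a constant exponent base $c_d > 1$ that depends only on $d$, this forces $n_d(\mathcal{X}_d^n) = \Theta(n_d(\mathcal{C}_d^n))$, so $\mathcal{X}_d^n$ itself must already be clique-dense on its own vertex set. The technical crux of the argument is therefore verifying that the lower-bound construction of \Cref{sec:bounds} admits such a clique-dense variant: one would thicken the bad gadget by adding many $d$-simplices in a controlled way that preserves the exponential effective resistance and capacitance of the distinguished cycle, i.e., introduces no new $d$-chain whose boundary equals $\partial\sigma$. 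Checking that the thickening does not create any such shortcut chain is what makes this step nontrivial, and will likely require a direct combinatorial analysis of the gadget's homology in the densified complex.
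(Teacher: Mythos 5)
Your proposal diverges substantially from the paper's argument, and the ``thickening'' plan you sketch in your third paragraph would not close the gap you correctly identified in your second. The quantitative obstacle is worse than you acknowledge: if you keep the gadget's vertex set fixed at $n_0 = \Theta((d+1)n)$ and try to make it clique-dense by \emph{adding} $d$-simplices, you would need $n_d = \Omega(\binom{n_0}{d}) = \Omega(n^d)$ new simplices, while the spectral gap inherited from the gadget is only $O(1/c^n)$. Written in terms of the final $n_d$, that gap is $O(1/c^{n_d^{1/d}})$, which is not of the form $O(1/c_d^{n_d})$ for any constant base $c_d>1$ when $d\geq 2$. Moreover, adding $d$-simplices can only \emph{decrease} effective resistance by Rayleigh monotonicity (\Cref{thm:resistance_monoticity}), so the ``direct combinatorial analysis'' you defer to would have to show that $\Theta(n^d)$ extra simplices create no new $d$-chain spanning $\partial\sigma$, which is a much harder constraint than adding a constant factor's worth of simplices and is not obviously achievable while also hitting the density target.

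The paper goes in the opposite direction: instead of adding simplices, it \emph{shrinks the vertex set}. Specifically, it applies Newman's probabilistic coloring argument (\Cref{lem:existence_proper_coloring}) to the sparse gadget $\mathcal{B}_d^n$ to obtain a proper coloring with only $\tilde{n}_0 = O(f(d)\sqrt[d]{n_0})$ colors, and then takes the pattern complex $\mathcal{C}_d^n$ under that coloring. The pattern complex has the same $d$- and $(d-1)$-simplices up to recoloring, and \Cref{lem:proper_coloring_preserves_laplacian} shows the spectrum of the up Laplacian is unchanged. So $n_d$ stays at $\Theta((d+1)^3 n)$ but the vertex count drops to $\Theta(\sqrt[d]{n})$, which makes $n_d = \Omega(\kappa_d\binom{\tilde{n}_0}{d})$ while keeping $\lambda_{\min}$ exponentially small in $n_d$. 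This vertex-identification trick is the missing idea in your proposal; without it, or some substitute for it, the route you sketch cannot reach the stated bound.
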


    This answers one of the most important question in QTDA: how small can the spectral gap be? As all existing QTDA algorithms are parameterized by the inverse of the spectral gap of the combinatorial Laplacian $\frac{1}{\lambda_{\min}}$, this implies that all existing QTDA algorithms need exponential time to exactly estimate Betti numbers.\footnotemark Additionally, the space complexity of some QTDA algorithms is parameterized by $\log(\frac{1}{\lambda_{\min}})$, so these algorithms will need space proportional to the number of $(d-1)-$, $d-$, or $(d+1)$-simplices, rather than space proportional to the number of vertices. 
    
    \footnotetext{This exponential time complexity is not a result of the quantum nature of these algorithms. Some classical algorithms for computing Betti numbers are also parameterized by the inverse of the spectral gap, so these algorithms would need to run for exponential time as well~\cite{apers2022simple, Friedman1998BettiNumbers}.}
    
    \item We also prove some interesting formulas for effective resistance and capacitance that give intuition for these quantities. In \Cref{sec:formulas}, we provide series and parallel formulas for effective resistance akin to the formulas for effective resistance in graphs. We also show that effective resistance satisfies a Rayeligh monotonicity property akin to effective resistance in graphs. Finally, in \Cref{sec:duality}, we show that effective resistance is dual to effective capacitance for embedded simplicial complexes.
\end{itemize}

\subsection{Related Work}.

\paragraph{History of QTDA.} Lloyd, Garnerone, and Zanardi (LGZ) introduced the first quantum algorithm for computing Betti numbers up to a multiplicative error~\cite{Lloyd2016lgz}. Their algorithm works by estimating the eigenvalues of the combinatorial Laplacian, which is inspired by Friedman's classical algorithm for computing Betti numbers~\cite{Friedman1998BettiNumbers}. The LGZ algorithm has the advantage that its runtime is only polynomial with respect to the number of vertices, as opposed to the number of simplices like the matrix reduction algorithm. The trade-off is that this algorithm gains a dependence on the inverse of the spectral gap and the ratio of the number of simplices to the number of possible simplices. The LGZ algorithm performs best in the regime where the spectral gap of the combinatorial Laplacian is polynomially lower-bounded and the simplicial complex is clique-dense, meaning it has close to the maximal number of simplices. Subsequent works have improved the LGZ algorithm in different ways but maintain a runtime dependence on the inverse of the spectral gap and the clique density~\cite{gunn2019, mccardle2023streamlined, ubaru2021}.
\par 
Another line of QTDA research has been developing algorithms for \textit{persistent} Betti numbers. While the LGZ algorithm was initially claimed to be able to compute persistent Betti numbers, this was later disproved by Meijer~\cite{Meijer2019} and Neumann and den Breeijen~\cite{Neumann2019}. Hayakawa was the first to develop a quantum algorithm for computing persistent Betti numbers~\cite{Hayakawa2022quantumalgorithm}. McArdle, Gily\'{e}n, and Berta have also developed algorithms for computing persistent Betti numbers~\cite{mccardle2023streamlined}. 

\paragraph{Hardness of Computing Betti Numbers.} In addition to new algorithms for computing Betti numbers, there have also been a number of works arguing computing Betti numbers is hard in general. Adamaszek and Stacho~\cite{adamaszek2016complexity} show that determining if a simplicial complex has non-zero Betti number is NP-Hard when parameterized either by the number of vertices and the number of maximal simplices, or the number of vertices and number of minimal non-faces. Additionally, they show the problem is NP-Hard for clique complexes when parameterized by the number of vertices. Schmidhuber and Lloyd~\cite{alex2022complexitytheoretic} show that computing Betti numbers of a clique complex is \#P-Hard and estimating the Betti number up to a multiplicative constant is NP-Hard when parameterized by the number of vertices.
Moreover, the hardness results of Schmidhuber and Lloyd hold for clique-dense clique complexes. This is an important restriction as the runtime of LGZ and and other QTDA algorithms are lowest for clique-dense complexes. Here, the assumptions on the input are vital. Computing Betti numbers is in $P$ when parameterized by the number of all simplices in the complex. This does not contradict $P\neq NP$ though, as the number of simplices can be exponentially large with respect to the number of vertices.
\par 
There have also been a number of works showing that problems related to computing Betti numbers are hard for the quantum computing complexity class DCQ-1. Crichigno and Kohler~\cite{crichigno2022clique} showed that determining if the Betti number of a clique complex  is nonzero is QMA\textsubscript{1}-Hard when parameterized by the number of vertices, and computing the Betti number of a clique complex is \#BQP-Hard. Gyurik, Cade, and Dunjko~\cite{gyurik2022advantage} show that a generalization of Betti number estimation called \textit{low-lying spectral density estimation (LLSD)} is DCQ1-Complete, suggesting that LLSD may be classically intractable. Cade and Crichigno~\cite{cade2021complexity} showed that estimating Betti numbers for general chain complexes (not just those arising from simplicial complexes) is also DCQ1-complete. 

\paragraph{Lower Bounds on the Spectral Gap of the Combinatorial Laplacian.} 
All existing QTDA algorithms are parameterized by the inverse of the spectral gap of the combinatorial Laplacian. While we show the spectral gap can be exponentially small, there have also been a number of exact or expected lower bounds on the spectral gap of the combinatorial Laplacian for certain families of simplicial complexes~\cite{beit2020spectral, Friedman1998BettiNumbers, gundert2016eigenvalues,  lew2020spectralb, lew2020spectral, lew2023garland, STEENBERGEN201456, yamada2019spectrum}. However, these bounds place non-trivial assumptions on the simplicial complex so should not be taken to represent general simplicial complexes. 

\section{Preliminaries.}
\label{sec:prelim}

\paragraph{Algebraic Topology.} A \EMPH{simplicial complex} $\K$ on a set of vertices $V$ is a subset of the power set $\K \subseteq P(V)$ with the property that if $\sigma \in \K$ and $\tau \subset \sigma$ then $\tau \in \K$. An element of $\K$ is a \textit{\textbf{simplex}}. A simplex $\sigma\in\K$ of size $|\sigma|=d+1$ is a \EMPH{d-simplex}.
The set of all $d$-simplices of $\K$ is denoted $\K_d$, and the number of $d$-simplices is denoted $n_d = |\K_d|$. The \EMPH{d-skeleton} of $\K$, denoted $\K^d$, is the simplicial complex of all simplices of $\K$ of dimension at most $d$, i.e. $\K^{d}=\cup_{i=0}^{d}\K_i$. The \EMPH{dimension} of $\K$ is the largest $d$ such that $\K$ contains a $d$-simplex; a 1-dimensional simplicial complex is a \EMPH{graph}. 
\par
The \EMPH{d\textsuperscript{th} chain group} $C_d(\K)$ is the vector space over $\R$ with orthonormal basis $\K_d$. An element of $C_d(\K)$ is a \EMPH{d-chain}. Unless otherwise stated, all vectors and matrices will be in the basis $\K_d$. For a chain $f\in C_d(\K)$, we denote its $\sigma$ coordinate $f(\sigma)$. Finally, the \EMPH{support} of a chain $f$ is the set of simplices given a non-zero value by $f$ and is denoted $\supp(f) = \{\sigma_i \in \K_d \colon f(\sigma_i) \neq 0\}$.
\par 
 We assume there is a fixed but arbitrary order on the vertices $V = (v_1, \dots, v_n)$. Let $\sigma=\{v_{i_0},\ldots,v_{i_d}\}$  be a $d$-simplex in $\K$ with $v_{i_j}\leq v_{i_k}$ whenever $j\leq k$. The \EMPH{boundary} of $\sigma$ is the $(d-1)$-chain $\boundary\sigma=\sum_{j=0}^{d}(-1)^{j}\cdot(\sigma\setminus\{v_{i_j}\})$.  The \EMPH{d\textsuperscript{th} boundary map} is the linear map $\boundary_d:C_d(\K)\to C_{d-1}(\K)$ defined $\boundary_d f=\sum_{\sigma\in \K_d} f(\sigma)\boundary\sigma$ where $f(\sigma)$ denotes the component of $f$ indexed by the simplex $\sigma$. An element in $\ker\boundary_d$ is a \EMPH{cycle}, and an element in $\im\boundary_d$ is a \EMPH{boundary} or a \EMPH{null-homologous cycle}. See \Cref{fig:null_homologous_and_not_null_homologous_cycles}.
The boundary maps have the property that $\boundary_d\circ\boundary_{d+1}=0$, so $\im\boundary_{d+1}\subset\ker\boundary_{d}$. The \EMPH{d\textsuperscript{th} homology group} is the quotient group $H_d(\K)=\ker(\boundary_{d})/\im(\boundary_{d+1})$. The \EMPH{d\textsuperscript{th} Betti number} $\beta_d$ is the dimension of $H_d(\K)$.
 The \EMPH{d\textsuperscript{th} coboundary map} is the map $\coboundary_d:=\boundary_{d+1}^T:C_{d}(\K)\to C_{d+1}(\K)$. An element of $\ker\coboundary_d$ is a \EMPH{cocycle}, and an element in $\im\coboundary_{d-1}$ is a \EMPH{coboundary}.
We will use the notation $\partial[\K]$ and $\delta[\K]$ when we want to specify the complex associated with the (co)boundary operator.
\par 
While our algorithms calculate homology with real coefficients, for some of our topological results, we will need to consider homology with integer coefficients. The \textit{\textbf{integral chain group}} $C_d(\K, \Z)$ is the free abelian group generated by the set of $d$-simplices $\K_d$ whose elements are formal sums $\sum_{\sigma_i \in \K_d} \alpha_i \sigma_i$ with coefficients $\alpha_i \in \Z$. The integer homology groups are constructed in the same way as the real homology groups. We define boundary maps $\boundary_d:C_d(\K;\Z)\to C_{d-1}(\K;\Z)$ the same way as for the real chain groups, except now the boundary maps are group homomorphisms rather than linear maps. The \textit{\textbf{integral homology groups}} are the quotient groups $H_d(\K; \Z)=\ker\boundary_{d}/\im\boundary_{d+1}$.
\par
\paragraph{Laplacians.} The \EMPH{d\textsuperscript{th} up Laplacian} is $L^{up}_{d}=\boundary_{d+1}\coboundary_{d}$, the \EMPH{d\textsuperscript{th} down Laplacian} is $L^{down}_d=\coboundary_{d-1}\boundary_d$, and the \EMPH{d\textsuperscript{th} (combinatorial) Laplacian} is $L_d = L_d^{up} + L_d^{down}$. The Laplacians define the following orthogonal decomposition of the $d$\textsuperscript{th} chain group $C_d(\K)$ called the \textit{\textbf{Hodge Decomposition}}.
\begin{align*}
    C_d(\K) &= \im L_d^{up} \oplus \im L_d^{down} \oplus \ker L_d \\
    &= \im\boundary_d \oplus \im\coboundary_{d-1} \oplus \ker L_d 
\end{align*}
where the second equality follows from the fact that $\im AA^{T} = \im A$ for any matrix $A$. We call the subspaces $\im\boundary_d$, $\im\coboundary_{d-1}$, and $\ker L_d$ the \textit{\textbf{boundary}}, \textit{\textbf{coboundary}}, and \textit{\textbf{harmonic spaces}}. Arguably the fundamental theorem of the combinatorial Laplacian is the \textit{\textbf{Hodge Theorem}}.

\begin{theorem}[Hodge Theorem, Eckmann~\cite{Eckmann1944}]
\label{thm:hodge}
     The $d$\textsuperscript{th} harmonic space is isomorphic to the $d$\textsuperscript{th} homology group, i.e. $\ker L_d\cong H_d(\K)$.
\end{theorem}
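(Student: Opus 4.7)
The plan is to exploit the Hodge decomposition stated just above the theorem, combined with the standard linear-algebraic identity $(\ker A)^\perp = \im A^T$, to identify a canonical representative in $\ker L_d$ for every homology class. The core observation I would establish first is that a chain is harmonic if and only if it is simultaneously a cycle and a cocycle, i.e.
\[
  \ker L_d \;=\; \ker \boundary_d \,\cap\, \ker \coboundary_d.
\]
One direction is immediate from $L_d = \boundary_{d+1}\coboundary_d + \coboundary_{d-1}\boundary_d$. For the other, if $L_d f = 0$, then
\[
  0 \;=\; \langle L_d f, f\rangle \;=\; \langle \boundary_{d+1}\coboundary_d f, f\rangle + \langle \coboundary_{d-1}\boundary_d f, f\rangle \;=\; \|\coboundary_d f\|^2 + \|\boundary_d f\|^2,
\]
so both $\boundary_d f$ and $\coboundary_d f$ vanish.

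Next I would use the Hodge decomposition to split the cycle space. Since $\im\coboundary_{d-1} = \im \boundary_d^T = (\ker \boundary_d)^\perp$, the coboundary summand meets $\ker \boundary_d$ trivially, while $\im\boundary_{d+1} \subseteq \ker \boundary_d$ (because $\boundary_d \boundary_{d+1} = 0$) and $\ker L_d \subseteq \ker\boundary_d$ (by the previous paragraph). Intersecting the Hodge decomposition with $\ker \boundary_d$ therefore yields the orthogonal splitting
\[
  \ker \boundary_d \;=\; \im \boundary_{d+1} \,\oplus\, \ker L_d.
\]

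Finally, passing to the quotient defining homology gives
\[
  H_d(\K) \;=\; \ker \boundary_d / \im \boundary_{d+1} \;\cong\; \ker L_d,
\]
where the isomorphism sends a homology class $[z]$ to the orthogonal projection of any representative $z$ onto $\ker L_d$; equivalently, every class has a unique harmonic representative. The only real obstacle I anticipate is being careful that the Hodge decomposition given in the excerpt is the correct orthogonal decomposition over $\R$ (so that taking perpendicular complements is justified); once this is granted, the argument is essentially a bookkeeping exercise in linear algebra, and no further topological input is needed beyond $\boundary\circ\boundary = 0$.
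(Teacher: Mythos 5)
Your proof is correct; note, however, that the paper does not actually prove this statement---it is stated as a cited result of Eckmann~\cite{Eckmann1944}, so there is no in-paper argument to compare against. Your argument is the standard one: the identity $\ker L_d = \ker\boundary_d \cap \ker\coboundary_d$ (via the positive-semidefinite pairing $\langle L_d f, f\rangle = \|\boundary_d f\|^2 + \|\coboundary_d f\|^2$), followed by intersecting the Hodge decomposition with $\ker\boundary_d$ and passing to the quotient. One small point worth flagging: the Hodge decomposition as printed in the paper reads $C_d(\K) = \im\boundary_d \oplus \im\coboundary_{d-1} \oplus \ker L_d$, which contains a typo---since $\im L_d^{up} = \im(\boundary_{d+1}\boundary_{d+1}^T) = \im\boundary_{d+1}$, the first summand should be $\im\boundary_{d+1}$. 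Your proof implicitly uses the correct form (you intersect with $\ker\boundary_d$ and keep the $\im\boundary_{d+1}$ piece), so the argument goes through as you wrote it.
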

Therefore, the $d$\textsuperscript{th} Betti number can equivalently be computed by computing the rank of $L_d$, a fact used by many existing QTDA algorithms. 
\par 
The following lemma gives several properties of the spectrum of the combinatorial Laplacian. 

\begin{lemma}
\label{lem:properties_spectrum_laplacian}
    Let $\specnz(A)$ denote the multiset of the non-zero eigenvalues of a linear operator $A$. Let $\K$ be a simplicial complex. Let $d>0$ be a positive integer. Then
    \begin{enumerate}
        \item (Goldberg \cite[Lemma 4.1.8]{Goldberg2002}) $\specnz(L^{up}_{d})=\specnz(L^{down}_{d+1})$
        \item (Goldberg \cite[Lemma 4.1.7]{Goldberg2002}) $\specnz(L_{d}) = \specnz(L^{up}_{d})\cup \specnz(L^{down}_{d})$
        \item (Goldberg \cite[Lemma 4.2.3]{Goldberg2002}) If $\K$ has connected components $\K_1,\ldots,\K_m$, then $$\specnz(L_{d}[\K]) = \specnz(L_{d}[\K_1])\cup\cdots\cup \specnz(L_{d}[\K_m])$$
    \end{enumerate}
    where all unions are multiset unions.
\end{lemma}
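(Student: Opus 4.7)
The plan is to prove each of the three parts using standard linear-algebraic facts about the boundary operators, together with the defining identity $\boundary_d\boundary_{d+1}=0$.

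For part (1), I would observe that $L^{up}_d=\boundary_{d+1}\boundary_{d+1}^T$ and $L^{down}_{d+1}=\boundary_{d+1}^T\boundary_{d+1}$, so these are of the form $AA^T$ and $A^TA$ for $A=\boundary_{d+1}$. The standard singular-value argument then gives the non-zero spectral equality: if $A^TAv=\lambda v$ with $\lambda\neq 0$, then $Av\neq 0$ and $AA^T(Av)=\lambda(Av)$, giving an injection of non-zero eigenspaces (with multiplicity) in each direction. Alternatively, this is immediate from the SVD of $\boundary_{d+1}$.

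For part (2), the key identity is that $L^{up}_d L^{down}_d = \boundary_{d+1}\coboundary_d\coboundary_{d-1}\boundary_d = \boundary_{d+1}(\boundary_d\boundary_{d+1})^T\boundary_d = 0$, and similarly $L^{down}_d L^{up}_d=0$, using $\boundary_d\boundary_{d+1}=0$. Thus $L^{up}_d$ and $L^{down}_d$ commute and, by the Hodge Decomposition, their non-kernel eigenspaces lie in the orthogonal subspaces $\im\boundary_{d+1}$ and $\im\coboundary_{d-1}=\im\boundary_d^T$, respectively. Diagonalizing each operator on its own image and taking the direct sum with $\ker L_d$ produces an eigenbasis of $L_d = L_d^{up}+L_d^{down}$ in which every non-zero eigenvalue comes from exactly one of the two summands, yielding the multiset union.

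For part (3), I would take the disjoint-component decomposition $\K=\K_1\sqcup\cdots\sqcup\K_m$ and note that after ordering $\K_d$ block-wise by component, the boundary matrix $\boundary_d$ is block-diagonal: no simplex has a face in a different component. Consequently $L_d[\K]$ itself is block diagonal with blocks $L_d[\K_i]$, and the spectrum (with multiplicity) is the disjoint union of spectra of the blocks, which gives the claim after restricting to non-zero eigenvalues.

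The only genuine subtlety will be in part (2), where one must justify that $L^{up}_d$ and $L^{down}_d$ can be simultaneously diagonalized with non-overlapping non-zero eigenspaces; everything else is either the $AA^T$/$A^TA$ correspondence or a block-diagonal observation. I would make this precise by explicitly using the Hodge Decomposition \ (already stated above) to identify $\im L_d^{up}$ with $\im\boundary_{d+1}$ and $\im L_d^{down}$ with $\im\coboundary_{d-1}=\im\boundary_d^T$, which are orthogonal, and showing that $L^{down}_d$ vanishes on the former while $L^{up}_d$ vanishes on the latter.
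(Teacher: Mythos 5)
The paper does not prove this lemma at all; it simply cites all three parts to Goldberg's thesis. Your argument is a correct, self-contained proof along the standard lines: part (1) is the $AA^{T}$ versus $A^{T}A$ nonzero-spectrum identity with $A=\boundary_{d+1}$; part (2) uses $\coboundary_{d}\coboundary_{d-1}=(\boundary_{d}\boundary_{d+1})^{T}=0$ together with the Hodge decomposition to show that $L^{up}_{d}$ annihilates $\im\coboundary_{d-1}$ while $L^{down}_{d}$ annihilates $\im\boundary_{d+1}$, so on each Hodge summand $L_{d}$ restricts to exactly one of the two operators; and part (3) is block-diagonality of the (co)boundary maps once simplices are ordered by component, which makes $L_{d}[\K]$ block diagonal with blocks $L_{d}[\K_{i}]$. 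The only place that deserves one more explicit sentence is in part (2): since $L^{up}_{d}$ and $L^{down}_{d}$ are symmetric and positive semidefinite, their nonzero eigenvectors span precisely $\im L^{up}_{d}=\im\boundary_{d+1}$ and $\im L^{down}_{d}=\im\coboundary_{d-1}$, which is what guarantees that restricting each operator to its own Hodge summand recovers the entire nonzero spectrum with multiplicity; you gesture at this but it is worth stating outright to close the argument.
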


 The up, down, and combinatorial Laplacian are all \textit{\textbf{positive-semidefinite}}, meaning their eigenvalues are all non-negative~\cite{Goldberg2002}. The \textit{\textbf{spectral gap}} $\lambda_{\min}(L_d)$ is the smallest non-zero eigenvalue of the combinatorial Laplacian. \Cref{lem:properties_spectrum_laplacian} Part 2 implies the following theorem about the spectral gap of the combinatorial Laplacian.

\begin{corollary}
\label{lem:spectral_gap_combinatorial_up_down}
    Let $\K$ be a simplicial complex. Let $d$ be a positive integer. Then
    $$
     \lambda_{\min}(L_{d}[\K]) = \min\{\lambda_{\min}(L^{down}_{d}[\K]),\, \lambda_{\min}(L^{up}_{d}[\K])\}
    $$
\end{corollary}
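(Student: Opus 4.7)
The plan is to read off the corollary almost directly from Part 2 of \Cref{lem:properties_spectrum_laplacian}. Recall that the spectral gap $\lambda_{\min}(A)$ for a positive semidefinite operator $A$ is by definition $\min \specnz(A)$, the minimum over the multiset of nonzero eigenvalues. Since Part 2 gives us
\[
    \specnz(L_d[\K]) \;=\; \specnz(L_d^{up}[\K]) \,\cup\, \specnz(L_d^{down}[\K]),
\]
as multisets, and since $\min(S\cup T)=\min\{\min S,\min T\}$ for nonempty sets of real numbers, the equality in the corollary follows immediately.

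The only subtlety is handling the corner case where one of $L_d^{up}[\K]$ or $L_d^{down}[\K]$ is the zero operator, so that its nonzero spectrum is empty and its $\lambda_{\min}$ is undefined. First I would note that positive semidefiniteness of both up and down Laplacians (already recorded in the text preceding the corollary) guarantees that $\specnz$ consists of strictly positive numbers, so the minimum is well defined whenever the set is nonempty. Then I would observe that if say $\specnz(L_d^{up}[\K])=\emptyset$, the union formula collapses to $\specnz(L_d[\K])=\specnz(L_d^{down}[\K])$, and the stated minimum should be read with the convention that $\min$ over the empty multiset is $+\infty$, which still makes the claimed equality hold. (Alternatively, one can state the corollary implicitly restricting to the Laplacians whose nonzero spectrum is nonempty; this is the usual reading.)

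The main — and essentially only — obstacle is therefore purely notational: making precise the convention for $\lambda_{\min}$ when the nonzero spectrum is empty. Once that is fixed, the proof is a one-line application of \Cref{lem:properties_spectrum_laplacian}, and no further computation is required.
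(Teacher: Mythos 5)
Your proof is correct and takes the same route as the paper: the text immediately preceding the corollary states that it follows from \Cref{lem:properties_spectrum_laplacian} Part 2, and that is exactly the one-line argument you give (plus a careful handling of the empty-nonzero-spectrum corner case, which the paper leaves implicit).
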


In \Cref{sec:bounds_spectral_gap}, we discuss upper and lower bounds on the spectral gap. There are also known upper and lower bounds on the \textit{largest} eigenvalue of the combinatorial Laplacian.

\begin{theorem}
\label{thm:upper_bound_maximum_eigenvalue_laplacian}
    Let $\K$ be a simplicial complex with $n_0$ vertices. Let $d$ be a natural number. Then the maximal eigenvalue of the combinatorial Laplacian $\lambda_{\max}(L_d)\leq n_0$.
\end{theorem}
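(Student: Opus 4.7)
The plan is to reduce the bound to a direct computation on the full simplex. Let $V$ denote the vertex set of $\K$ with $|V| = n_0$, and let $\K^{*} = 2^{V}\setminus\{\emptyset\}$ be the full $(n_0-1)$-simplex on $V$, so that $\K \subseteq \K^{*}$ as a subcomplex.

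First, I would prove the Rayleigh-type monotonicity $\lambda_{\max}(L_d[\K]) \le \lambda_{\max}(L_d[\K^{*}])$. Given any $x \in C_d(\K)$, extend it by zero to $\tilde{x} \in C_d(\K^{*})$, so $\|\tilde x\| = \|x\|$. Because $\K$ is closed under taking faces, every $(d-1)$-face of every $d$-simplex supporting $\tilde x$ already lies in $\K$; consequently $(\partial_d[\K^{*}]\tilde x)(\mu) = 0$ for $\mu \in \K^{*}_{d-1}\setminus \K_{d-1}$, and so $\|\partial_d[\K] x\|^{2} = \|\partial_d[\K^{*}] \tilde x\|^{2}$. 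In contrast, $\K^{*}$ may contain $(d+1)$-simplices absent from $\K$ that contribute additional nonzero terms to the coboundary, so $\|\delta_d[\K] x\|^{2} \le \|\delta_d[\K^{*}] \tilde x\|^{2}$. Summing the two yields $\langle x, L_d[\K] x\rangle = \|\partial_d[\K]x\|^{2}+\|\delta_d[\K]x\|^{2} \le \langle \tilde x, L_d[\K^{*}] \tilde x\rangle$, and the variational characterization of the maximum eigenvalue delivers the monotonicity.

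Second, I would show $\lambda_{\max}(L_d[\K^{*}]) = n_0$ by a direct entrywise computation, proving that $L_d[\K^{*}] = n_0\,\id$ for $d \ge 1$. The $(\sigma,\sigma)$ diagonal entry equals the number of $(d+1)$-cofaces of $\sigma$ plus the number of $(d-1)$-faces of $\sigma$, namely $(n_0 - d - 1) + (d + 1) = n_0$. For distinct $d$-simplices $\sigma \ne \tau$, the off-diagonal entries of $L_d^{up}$ and $L_d^{down}$ cancel: if $\sigma,\tau$ share no $(d-1)$-face, then they share no common $(d+1)$-coface either, so both entries vanish; otherwise, writing $\mu = \sigma \cap \tau$ and $\rho = \sigma \cup \tau$, the simplex $\rho$ is the unique common $(d+1)$-coface in $\K^{*}$, and reading off the coefficient of $\mu$ in $\partial_d\partial_{d+1}\rho = 0$ yields the incidence-sign identity that forces $(L_d^{up})_{\sigma,\tau} + (L_d^{down})_{\sigma,\tau} = 0$. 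The case $d = 0$ is handled separately and trivially: $L_0[\K^{*}]$ is the graph Laplacian of the complete graph $K_{n_0}$, equal to $n_0\,\id - J$, whose maximum eigenvalue is exactly $n_0$.

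Combining the two steps gives $\lambda_{\max}(L_d[\K]) \le \lambda_{\max}(L_d[\K^{*}]) = n_0$. The main (minor) technical point is the sign cancellation in the off-diagonal entries on the full simplex; this is the only spot that needs care, but it is a direct consequence of $\partial \circ \partial = 0$ applied to $\rho$. Everything else is either the variational principle or bookkeeping of faces and cofaces.
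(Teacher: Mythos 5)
Your proof is correct, but it takes a genuinely different route from the paper's, which is shorter but relies on two external citations. The paper reduces to the \emph{up} Laplacian: it uses the decomposition $\specnz(L_d) = \specnz(L_d^{up}) \cup \specnz(L_{d-1}^{up})$ from \Cref{lem:properties_spectrum_laplacian}, then invokes a theorem of Gundert for $\lambda_{\max}(L_d^{up}[\Delta_{n_0}]) = n_0$ on the complete complex and the Horak--Jost interlacing theorem for the monotonicity of $\lambda_{\max}(L_d^{up})$ under subcomplex inclusion. You instead work with the full combinatorial Laplacian throughout and make the argument self-contained: your extension-by-zero Rayleigh-quotient argument proves monotonicity of $\lambda_{\max}(L_d)$ directly (the crucial observation being that passing to $\K^{*}$ leaves $\|\partial_d x\|^2$ unchanged while $\|\delta_d x\|^2$ can only grow), and you then establish the stronger fact that $L_d[\K^{*}] = n_0\,\id$ for $d\ge 1$ by an entrywise computation, with the off-diagonal cancellation falling out of $\partial\circ\partial = 0$ applied to the union $\rho = \sigma\cup\tau$. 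The paper's approach is shorter on the page because it outsources the hard steps; yours is longer but elementary and yields the cleaner identity $L_d[\Delta_{n_0}] = n_0\,\id$ as a byproduct, which is more information than the paper's cited $\lambda_{\max}(L_d^{up}[\Delta_{n_0}]) = n_0$. Both are sound. One small remark: your monotonicity step is essentially a special case of the interlacing machinery the paper cites, just specialized to the top eigenvalue and applied to $L_d$ rather than $L_d^{up}$ --- which is precisely why the cited result alone would not suffice without the up/down spectral decomposition, whereas your direct argument does.
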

\begin{proof}
    Let $\Delta_{n_0}$ be the complete complex on $n_0$ vertices. The maximum eigenvalue of the $d$\textsuperscript{th} up Laplacian is $\lambda_{\max}(L_{d}^{up}[\Delta_{n_0}])=n_0$ for any dimension $d$~\cite[Lemma 2.6]{GundertThesis2013}. Moreover, by the interlacing theorem of eigenvalues of the up Laplacian, $\lambda_{\max}(L_{d}^{up}[\K])\leq \lambda_{\max}(L_{d}^{up}[\Delta_{n_0}])$ for any subcomplex $\K\subset \Delta_{n_0}$~\cite[Theorem 1.1]{horak2013interlacing}. The theorem follows as $\lambda_{\max}(L_d[\K]) = \max\{\lambda_{\max}(L_d^{up}[\K]),$ $\lambda_{\max}(L_{d-1}^{up}[\K])\}\leq n_0$
\end{proof}

We also consider two variants of the Laplacian variants of the up-Laplacian: the weighted up Laplacian and the normalized up Laplacian. Let $w:\K_{d+1} \to \R^+$ be a weight function on the $(d{+}1)$-simplices. Let $W:C_{d+1}(\K)\to C_{d+1}(\K)$ be the diagonal matrix with $W_{\tau,\tau}=w(\tau)$. The \EMPH{d\textsuperscript{th} weighted up Laplacian} is $L^{up,\,W}_d=\boundary_{d+1} W \coboundary_d$. The \EMPH{degree} of a $d$-simplex $\sigma$ is $\deg(\sigma) = \sum_{\tau\in\K_{d+1}\colon \sigma\subset\tau}w(\tau)$. Let $D:C_d(\K)\to C_d(\K)$ be the diagonal matrix with $D_{\sigma,\sigma}=\deg(\sigma)$. The \EMPH{d\textsuperscript{th} normalized up Laplacian} is $\Tilde{L}^{up}_d = D^{-1/2} \boundary_{d+1} W \coboundary_d D^{-1/2}$. The following theorem relates the spectral gap of the normalized and unnormalized Laplacians. A proof can be found in \Cref{apx:normalized_vs_unnoramlized}

\begin{restatable}{lemma}{lemnormalizedvsunnormalizedspectralgap}
\label{lem:normalized_vs_unnormalized_spectral_gap}
    Let $\K$ be a simplicial complex. Let $d_{\min}$ and $d_{\max}$ be the minimum and maximum degrees of any $d$-simplex in $\K$. Suppose that $d_{\min}>0$. The normalized and unnormalized spectral gap are related as follows:
    $$
         \frac{1}{d_{\max}} \lambda_{\min}(L_d^{up}) \leq \lambda_{\min}(\tilde{L}_d^{up}) \leq \frac{1}{d_{\min}} \lambda_{\min}(L_d^{up})
    $$
\end{restatable}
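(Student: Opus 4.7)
The plan is to reduce both spectral gaps to Rayleigh-quotient minimax problems and then to relate the two via the diagonal sandwich $d_{\min}\|z\|^2 \le z^T D z \le d_{\max}\|z\|^2$.

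First I would observe that, since $d_{\min} > 0$, the diagonal matrix $D$ is invertible, and the identity $\tilde{L}^{up}_d = D^{-1/2} L^{up}_d D^{-1/2}$ (reading $L^{up}_d$ as the weighted up-Laplacian from the definition preceding the lemma) exhibits $\tilde{L}^{up}_d$ as congruent to $L^{up}_d$ via the invertible matrix $D^{-1/2}$. By Sylvester's law of inertia the two matrices share the same rank, hence the same kernel dimension $k := \dim\ker L^{up}_d = \dim\ker\tilde{L}^{up}_d$. Having the kernel dimensions agree is what makes the comparison below work.

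Next, I would invoke the Courant--Fischer characterization of the $(k+1)$-th smallest eigenvalue of a PSD symmetric operator $A$ on $C_d(\K)$:
\[
    \lambda_{\min}(A) \;=\; \min_{\substack{V\subseteq C_d(\K)\\ \dim V = k+1}}\ \max_{\substack{x\in V\\ x\ne 0}} \frac{x^T A x}{x^T x}.
\]
Because $L^{up}_d$ and $\tilde{L}^{up}_d$ are both PSD with kernel dimension $k$, this formula computes each of the two spectral gaps. Now I apply the substitution $x = D^{1/2} z$, which is a linear bijection on $C_d(\K)$ and therefore a dimension-preserving bijection on subspaces of any fixed dimension. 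Under this substitution $x^T \tilde{L}^{up}_d x = z^T L^{up}_d z$ and $x^T x = z^T D z$, so
\[
    \lambda_{\min}(\tilde{L}^{up}_d) \;=\; \min_{\substack{V'\subseteq C_d(\K)\\ \dim V' = k+1}}\ \max_{\substack{z\in V'\\ z\ne 0}} \frac{z^T L^{up}_d z}{z^T D z}.
\]

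Finally, because $D$ is diagonal with entries in $[d_{\min}, d_{\max}]$ and $z^T L^{up}_d z \ge 0$, the generalized Rayleigh quotient is sandwiched:
\[
    \frac{1}{d_{\max}}\cdot \frac{z^T L^{up}_d z}{z^T z} \;\le\; \frac{z^T L^{up}_d z}{z^T D z} \;\le\; \frac{1}{d_{\min}}\cdot \frac{z^T L^{up}_d z}{z^T z}.
\]
Taking the outer minimax over subspaces $V'$ of dimension $k+1$ on each side and comparing with the Courant--Fischer expression for $\lambda_{\min}(L^{up}_d)$ produces both inequalities simultaneously. I do not expect any real obstacle here; the only subtlety worth spelling out carefully is that the dimension $k+1$ of the minimization subspaces is the same in the formulas for $L^{up}_d$ and $\tilde{L}^{up}_d$, which is exactly the content of the congruence observation made at the start.
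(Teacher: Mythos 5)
Your proof is correct, and it takes a genuinely different route from the paper's. The paper reduces the claim to one about the largest eigenvalue of the pseudoinverse, $\lambda_{\min}(\tilde{L}^{up}_d)=\lambda_{\max}^{-1}\bigl((\tilde{L}^{up}_d)^{+}\bigr)$, expresses $x^T(\tilde{L}^{up}_d)^{+}x$ as a minimum flow energy (effective resistance $\RE_{D^{1/2}x}(\K)$) via the bijection $D^{1/2}$, and then sandwiches using $\|D^{1/2}x\|^2 \in [d_{\min}, d_{\max}]$ for unit $x$; this deliberately recycles the effective-resistance characterizations (Lemmas~\ref{lem:effective_resistance_flows} and~\ref{lem:spectral_gap_effective_resistance}) developed earlier in the paper. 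Your argument instead works directly with the Courant--Fischer min-max characterization of the $(k+1)$-st eigenvalue after observing, via Sylvester's law of inertia (or just rank-preservation under the invertible congruence $D^{-1/2}$), that the two operators have kernels of the same dimension $k$; the substitution $x=D^{1/2}z$ converts the normalized Rayleigh quotient into a generalized quotient $z^T L^{up}_d z / z^T D z$, which is sandwiched termwise, and since both $\min$ and $\max$ are monotone the inequality passes through the min-max. Both proofs are sound; yours is more self-contained, relying only on standard spectral theory, whereas the paper's gains economy of exposition by leaning on machinery it already needs elsewhere. One cosmetic remark: the step in your argument where the $(k+1)$-dimensional subspace constraint is preserved under the bijection $D^{1/2}$ is exactly the crux, and you correctly flag it; without matching kernel dimensions the min-max formula would be comparing different eigenvalue indices and the sandwich would not yield the stated conclusion.
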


\paragraph{Pseudoinverse of a Linear Map.}

Let $A:\R^{n}\to\R^{m}$ be a rank $k$ linear operator with singular value decomposition $A = \sum_{i=1}^{k} \sigma_{i}u_iv_i^{T}$. The \textit{\textbf{pseudoinverse}} of $A$ is the linear operator $A^+:\R^{m}\to\R^{n}$ defined $ A^{+}= \sum_{i=1}^{k} \sigma^{-1}_{i}u_iv_i^{T}$. While this is in the most compact definition of the pseudoinverse, it is not the most informative. Equivalently, the \textit{\textbf{pseudoinverse}} of $A:\R^{m}\to\R^{n}$ is the unique linear operator with the following properties: (1) $A^{+}$ maps each vector $x\in\im A$ to the unique vector $y\in\im A^{T}$ such that $Ay=x$ and (2) $A^{+}$ maps each vector in $(\im A)^{\perp}$ to 0. The following are well-known properties of the pseudoinverse that follow from these definitions

\begin{lemma}
\label{lem:properties_pseudoinverse}
    Let $A:\R^{m}\to\R^{n}$ be a linear map.
    \begin{enumerate}
        \item $(AA^{T})^{+} = (A^{T})^{+}A^{+}$.
        \item For $x\in\im A$, $A^{+}x = \arg\min\{\|y\| : Ay=x\}$
    \end{enumerate}
\end{lemma}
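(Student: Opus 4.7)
My plan is to prove both items by combining the SVD formula for $A^{+}$ with the intrinsic characterization stated just above the lemma: $A^{+}$ sends every $x\in\im A$ to the unique $y\in\im A^{T}$ with $Ay=x$, and annihilates $(\im A)^{\perp}$. Both parts reduce to short bookkeeping once the SVD is written down, so I do not expect any genuine obstacle; the only thing to be careful about is matching domains and orthogonal complements correctly, and I will explicitly invoke $\ker A=(\im A^{T})^{\perp}$ as the one non-trivial linear-algebra fact.

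For Part 1, I would start from an SVD $A=\sum_{i=1}^{k}\sigma_i u_i v_i^{T}$ with orthonormal families $\{u_i\}\subset\R^{n}$ and $\{v_i\}\subset\R^{m}$. From the SVD definition I can then read off $A^{+}=\sum_{i}\sigma_i^{-1}v_iu_i^{T}$ and $(A^{T})^{+}=\sum_{i}\sigma_i^{-1}u_iv_i^{T}$, as well as $AA^{T}=\sum_{i}\sigma_i^{2}u_iu_i^{T}$ and hence $(AA^{T})^{+}=\sum_{i}\sigma_i^{-2}u_iu_i^{T}$. Multiplying the factored expressions for $(A^{T})^{+}$ and $A^{+}$ and collapsing the double sum using $v_i^{T}v_j=\delta_{ij}$ yields $\sum_{i}\sigma_i^{-2}u_iu_i^{T}$, which matches $(AA^{T})^{+}$.

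For Part 2, fix $x\in\im A$ and set $y_0:=A^{+}x$. By the intrinsic characterization of the pseudoinverse, $y_0\in\im A^{T}$ and $Ay_0=x$, so $y_0$ is a feasible solution of $Ay=x$. For any other solution $y$, the difference satisfies $y-y_0\in\ker A=(\im A^{T})^{\perp}$, so $y-y_0$ is orthogonal to $y_0\in\im A^{T}$. The Pythagorean identity then gives $\|y\|^{2}=\|y_0\|^{2}+\|y-y_0\|^{2}\geq\|y_0\|^{2}$, with equality iff $y=y_0$, proving that $A^{+}x$ is the unique minimum-norm preimage of $x$.

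The only step that requires real justification is the identity $\ker A=(\im A^{T})^{\perp}$, which I would simply cite as the standard fundamental-theorem-of-linear-algebra fact (it follows immediately from $\langle Ay,u\rangle=\langle y,A^{T}u\rangle$). Everything else is routine manipulation of the SVD. If I wanted a unified presentation, I could alternatively derive Part 2 by writing $y=A^{+}x+z$ with $z\in\ker A$ and again applying Pythagoras, which is essentially the same proof phrased without explicitly naming $y_0$; I prefer the version above because it makes direct use of the intrinsic characterization already in the paper.
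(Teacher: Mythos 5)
The paper does not prove this lemma; it simply asserts it as ``well-known properties of the pseudoinverse that follow from these definitions,'' so there is no paper proof to compare against. Your proof is correct and is the standard argument: Part 1 is a direct SVD computation (note in passing that the paper's stated SVD formula $A^{+}=\sum_{i}\sigma_i^{-1}u_iv_i^{T}$ is a typo for $\sum_{i}\sigma_i^{-1}v_iu_i^{T}$, which you correctly use), and Part 2 follows from the intrinsic characterization together with $\ker A = (\im A^{T})^{\perp}$ and Pythagoras. Both parts check out; no gaps.
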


\paragraph{Bra-Ket Notation.} When discussing quantum algorithms, we will use \EMPH{bra-ket notation} for vectors. As this paper may also be of interest to topologists who may be unfamiliar with this notation, we introduce bra-ket notation now. Assuming a fixed basis for a finite vector space, a \EMPH{bra} is a row vector represented by the notation $\bra{v}$. A \EMPH{ket} is a column vector represented by the notation $\ket{v}$. Using bras and kets, we can represent an inner product as $\braket{u}{v}$, an outer product as $\ket{u}\bra{v}$, or a tensor product as $\ket{u}\ket{v}$. 

\section{The Incremental Algorithm for Computing Betti Numbers.}
\label{sec:incremental_algorithm}

\begin{figure}[t]
    \centering
    \begin{subfigure}{0.45\textwidth}
        \centering
        \includegraphics[height=1.5in]{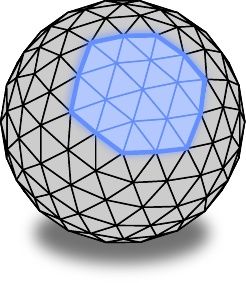}    
        \vspace*{0.05in}
    \end{subfigure}
    \begin{subfigure}{0.45\textwidth}
        \centering
        \includegraphics[height=1.5in]{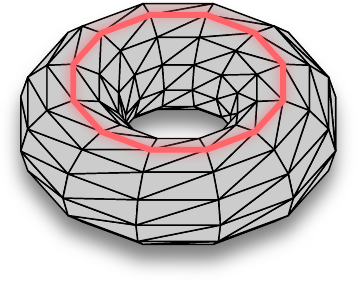}    
    \end{subfigure}
    \caption{Left: The 1-cycle is null-homologous as it is the boundary of the pictured 2-chain. Right: The 1-cycle is not null-homologous as it is not the boundary of any 2-chain. Coefficients on colored simplices are $\pm 1$. Orientations on the simplices have been omitted for simplicity.}
   \label{fig:null_homologous_and_not_null_homologous_cycles}
\end{figure}

In this section, we review the incremental algorithm for computing Betti numbers introduced by Delfinado and Edelsbrunner~\cite{Delfinado1993}. The incremental algorithm is a generic framework for computing Betti numbers based around the primitive of \textit{\textbf{null-homology testing}}. See \Cref{fig:null_homologous_and_not_null_homologous_cycles}
\begin{problem}[Null-Homology Testing]
    Given a simplicial complex $\K$ and a cycle $\gamma$ in $\K$, determine if $\gamma$ is null-homologous.
\end{problem}
\par 
The \textit{\textbf{Incremental Algorithm for Computing Betti Numbers}} computes $\beta_d$ by testing if the boundary of simplices are null-homologous. Specifically, the incremental algorithm incrementally adds $d$-simplices to the simplicial complex and and then performs a null-homology test on their boundaries to see how the dimension of the spaces $\ker\boundary_d$ and $\im\boundary_{d+1}$ change. 
\par 
To see how this works, fix an order on the $d$-simplices $\K_d = \{\sigma_1,\ldots,\sigma_d\}$, and then iteratively add each simplex $\sigma_i$ in increasing order of $i$. Adding the simplex $\sigma_i$ will either increase the dimension of $\ker\boundary_d$ or $\im\boundary_{d-1}$ by 1. If $\boundary\sigma_i$ is null-homologous in $\K^{d-1}\cup\{\sigma_1,\ldots,\sigma_{i-1}\}$, then adding $\sigma_i$ will increase the dimension of $\ker\boundary_d$ by 1, which also increases $\beta_d$ by 1. If not, then adding $\sigma_i$ will decrease $\im\boundary_{d-1}$ by 1, which also decreases $\beta_{d-1}$ by 1.\footnote{In their original paper on the incremental algorithm~\cite{Delfinado1993}, Delfinado and Edelsbrunner phrase this slightly differently as testing ``whether $\sigma_i$ is in [the support of] a cycle.'' It is straightforward to verify that $\sigma_i$ is in the support of a cycle if and only if $\boundary\sigma_i$ is null-homologous.} The incremental algorithm is summarized in~\Cref{alg:incremental_algorithm}. 

\begin{algorithm}[ht]
\caption{The incremental algorithm for computing the $d$\textsuperscript{th} Betti Number $\beta_d$~\cite{Delfinado1993}}
\label{alg:incremental_algorithm}
\begin{algorithmic}
    \State $\beta_d \gets 0$
    \For{$\sigma_i\in\K_{d}=\{\sigma_1,\ldots,\sigma_{n_d}\}$}
        \If{$\boundary\sigma_i$ is null-homologous in $\K^{d-1}\cup\{\sigma_1,\ldots,\sigma_{i-1}\}$}
            \State $\beta_d\gets \beta_d+1$ 
        \EndIf
    \EndFor
    \For{$\tau_j\in\K_{d+1}=\{\tau_1,\ldots,\tau_{n_{d+1}}\}$}
        \If{$\boundary\tau_j$ is not null-homologous in $\K^{d}\cup\{\tau_1,\ldots,\tau_{j-1}\}$}
            \State $\beta_d\gets \beta_d-1$
        \EndIf
    \EndFor
    \State Return $\beta_d$
\end{algorithmic}
\end{algorithm}

In the classical matrix reduction algorithm for computing Betti numbers, testing whether $\boundary\sigma_i$ is null-homologous is done by reducing the column corresponding to $\sigma_i$ in the boundary matrix, which takes $O(n_{d-1} n_{d})$ time in the worst case. However, there are special cases where null-homology testing can be performed much more quickly. For example, when a simplicial complex is embedded in $\R^{3}$ or the 3-sphere, null-homology testing can be performed in nearly-linear time using the union-find algorithm~\cite{Delfinado1993}. In the next section, we give a quantum algorithm for null-homology testing.

\section{A Quantum Algorithm for Null-Homology Testing.}
\label{sec:quantum_algorithm_null_homology_testing}

In this section, we provide a quantum algorithm based on the \textit{span program} model to decide whether or not a cycle $\gamma$ is null-homologous in a simplicial complex $\K$.
\par
Our algorithm is a generalization of the quantum algorithm developed by Belovs and Reichardt to decide $st$-connectivity in a graph~\cite{Belovs2012}. Their algorithm is parameterized by the effective resistance and capacitance between the vertices $s$ and $t$. The query complexity of our algorithm is parameterized by higher-dimensional analogues of effective resistance and capacitance of $\gamma$ that we introduce in \Cref{sec:res_and_cap}.
\par 
Upper bounds on the effective resistance and capacitance in graphs imply a query complexity of $O(n^{3/2})$ for $st$-connectivity, where $n$ is the number of vertices~\cite{jarret_et_al:LIPIcs:2018:9512}. In \Cref{sec:bounds}, we provide upper bounds on the effective resistance and capacitance. Our upper bounds on effective resistance and capacitance imply that the query complexity is polynomial in both the number of $d$-simplices as well as the cardinality of the largest torsion subgroup of a relative homology group of $\K$. In the case that $\K$ is a graph, our analysis of the witness sizes matches the $O(n^{3/2})$ upper bounds of previous analyses. Specifically, under the assumptions that $\K$ is relative torsion free and that $\gamma$ is the boundary of a $d$-simplex (which may or may not be included in the complex), we match the $O(n^{3/2})$ upper bound. These assumptions are always true for $st$-connectivity in graphs, which is why we match the query complexity for this problem. However, in \Cref{sec:lower_bounds}, we provide examples of simplicial complexes where the effective resistance or capacitance of $\gamma$ is exponentially large.

\subsection{A Brief Introduction to Span Programs.}

Span programs were first defined by Karchmer and Wigderson \cite{Karchmer} and were first used for quantum algorithms by Reichardt and \v{S}palek \cite{Reichardt2012}. Intuitively, a span program is a model of computation which encodes a boolean function $f \colon \{0, 1\}^n\rightarrow \{0, 1\}$ into the geometry of two vector spaces and a linear operator between them. Encoding $f$ into a span program implies the existence of a quantum query algorithm evaluating $f$ (Theorem~\ref{thm:reichardt}.)

\begin{definition}
A \textbf{span program} $\P = (\H, \U, \ket{\tau}, A)$ over the set of strings $\{0, 1\}^n$ is a 4-tuple consisting of:
\begin{enumerate}
\item A finite dimensional Hilbert space $\H = \H_1 \oplus \dots \oplus \H_n$ where $\H_i = \H_{i, 0} \oplus \H_{i, 1}$,
\item a vector space $\U$,
\item a non-zero vector $\ket{\tau} \in \U$, called the \textit{\textbf{target vector}}
\item a linear operator $A \colon \H \rightarrow \U$.
\end{enumerate}
For every string $x = (x_1,\dots,x_n) \in \{0, 1\}^n$ we associate the Hilbert space $\H(x) = \H_{1, x_1} \oplus \dots \oplus \H_{N, x_n}$ and the linear operator $A(x) = A \Pi_{\H(x)} \colon \H \rightarrow \U$ where $\Pi_{\H(x)}$ is the projection of $\H$ onto $\H(x)$. A string $x\in\{0,1\}^n$ is a \textit{\textbf{positive instance}} if $\ket{\tau}\in\im A(x)$ and is a \textit{\textbf{negative witness}} otherwise.
\end{definition}

A span program $\P$ \textit{\textbf{decides}} the function $f \colon \{0,1\}^n\to\{0,1\}$ if $f(x)=1$ when $x$ is a positive instance and $f(x)=0$ when $x$ is a negative instance. A span program can also evaluate a partial boolean function $g \colon D\to\{0,1\}$ where $D\subset\{0,1\}^n$ by the same criteria.
\par
Span programs are a popular method in quantum computing because there are upper bounds on the complexity of evaluating span programs in the \textit{\textbf{query model}}. The query model evaluates the complexity of a quantum algorithm by its \EMPH{query complexity}, the number of times it queries an input oracle. In our case, the input oracle returns the bits of the binary string $x$. The \textit{\textbf{input oracle}} $\oracle_x$ takes $\oracle_x:\ket{i}\ket{b}\to\ket{i}\ket{b\oplus x_i}$ where $i\in[n]$. Observe that the states $\ket{i}$ can be stored on $\lceil\log n\rceil$ qubits.
Reichardt~\cite{Reichardt2009} showed that the query complexity of a span program is a function of the positive and negative witness sizes of the program, which we now define.

\begin{definition}
Let $\P$ be a span program and let $x\in\{0,1\}^n$. A \textit{\textbf{positive witness}} for $x$ is a vector $\ket{w}\in \H(x)$ such that $A\ket{w}=\ket{\tau}$. The \textit{\textbf{positive witness size}} of $x$ is
$$
  w_{+}(x,\P)=\min\{\|\ket{w}\|^2:\ket{w}\in \H(x),\, A\ket{w}=\ket{\tau}\}.
$$
If no positive witness exists for $x$, then $w_{+}(x,\P)=\infty$.
\par
A \textit{\textbf{negative witness}} for $x$ is a linear map $\bra{w}:\U\to\R$ such that $\bra{w}A\Pi_{\H(x)}=0$ and $\braket{w}{\tau}=1$. The \textit{\textbf{negative witness size}} of $x$ is
$$
w_{-}(x,\P)=\min\{\|\bra{w}A\|^2:\bra{w}:\U\to\R,\, \bra{w}A\Pi_{\H(x)} = 0,\, \braket{w}{\tau}=1\}.
$$
If no negative witness exists for $x$, then $w_{-}(x,\P)=\infty$.
\end{definition}

\begin{theorem}[Reichardt~\cite{Reichardt2009}]\label{thm:reichardt}
  Let $D\subset \{0,1\}^n$ and $f:D\to\{0,1\}$. Let $\P$ be a span program that decides $f$. Let $W_+(f,\P)=\max_{x\in f^{-1}(1)}w_{+}(x,\P)$ and $W(f,\P)_{-}=\max_{x\in f^{-1}(0)}w_{-}(x,\P)$. There is a bounded error quantum algorithm that decides $f$ with query complexity $O\left(\sqrt{W_+(f,\P)W_-(f,\P)}\right)$.
\end{theorem}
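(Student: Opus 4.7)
The plan is to follow Reichardt's approach of turning a span program into a quantum walk whose spectrum near $+1$ is sensitive to membership in $f^{-1}(1)$ versus $f^{-1}(0)$, and then to extract the answer with quantum phase estimation. First I would augment the span program by adjoining the target to the input space. Set $\H' = \H \oplus \R$ and define $A' : \H' \to \U$ by $A'(\ket{w} + c\ket{1}) = A\ket{w} - c\ket{\tau}$. A positive instance now corresponds to a non-zero vector in $\ker A' \cap (\H(x)\oplus\R)$ with non-zero flag component, while a negative witness lifts to a linear functional on $\U$ whose pullback vanishes on $\H(x)\oplus\R$ except on the flag direction. This reformulates both witness conditions as relative-position conditions between a fixed subspace $\ker A'$ and an input-dependent subspace $\H(x)\oplus \R$.

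Next I would define two reflections on $\H'$,
\begin{align*}
R_{\ker} &= 2\,\Pi_{\ker A'} - I, \\
R_x     &= 2\,\Pi_{\H(x)\oplus\R} - I,
\end{align*}
and consider the quantum walk $U(x) = R_{\ker} R_x$. The reflection $R_{\ker}$ is input-independent and can be realized offline from the description of $A$. Because the decomposition $\H_i = \H_{i,0}\oplus\H_{i,1}$ depends only on the value of $x_i$, $R_x$ can be implemented with $O(1)$ queries to $\oracle_x$ (query $x_i$, conditionally reflect on $\H_i$, then uncompute). I would then run quantum phase estimation on $U(x)$ initialized in the flag state $\ket{1}$.

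The heart of the proof is Reichardt's \emph{effective spectral gap lemma}, which quantitatively relates witness sizes to the distribution of the flag state over eigenspaces of $U(x)$. I would establish: (i) when $x\in f^{-1}(1)$, a rescaling of any positive witness gives a vector in $\ker A' \cap (\H(x)\oplus\R)$, i.e.\ a $+1$-eigenvector of $U(x)$, whose overlap with $\ket{1}$ is controlled from below by a function of $w_+(x,\P)$; (ii) when $x\in f^{-1}(0)$, the negative witness certifies that $\ket{1}$ has no component in the $+1$-eigenspace and moreover that its projection onto eigenspaces with phase $|\theta| \le \Theta(1/\sqrt{w_-(x,\P)})$ is small. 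Setting the phase estimation precision to $\Theta(1/\sqrt{W_+(f,\P)\,W_-(f,\P)})$ then distinguishes the two regimes with bounded error, giving query complexity $O(\sqrt{W_+ W_-})$.

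The main obstacle is proving the effective spectral gap lemma itself. This requires a Jordan-type decomposition of the two-reflection unitary $R_{\ker}R_x$ into two-dimensional invariant subspaces parameterized by the principal angles between $\ker A'$ and $\H(x)\oplus\R$, and a careful translation of the algebraic witness conditions into geometric statements about these angles. The bound $w_+$ controls the length of the positive witness and hence the overlap of $\ket{1}$ with the $+1$-eigenspace, while $w_-$ gives an upper bound on $\|\bra{w}A\|^2$ that ultimately controls how much weight $\ket{1}$ can place on small-phase eigenspaces, yielding the $O(\theta^2\, w_-)$ tail estimate that makes the chosen phase-estimation precision succeed. The remaining steps—amplitude amplification to boost success probability and bookkeeping of the logarithmic overheads from phase estimation—are standard and only contribute polylogarithmic factors hidden in the $O(\cdot)$.
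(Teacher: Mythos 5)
The paper does not prove this theorem: it is imported verbatim as a black-box result from Reichardt~\cite{Reichardt2009}, so there is no in-paper argument to compare against. That said, your sketch is a faithful high-level reconstruction of how the theorem is actually established in the span-program literature: adjoin a flag coordinate to $\H$, reformulate both witness conditions as relative-position conditions between the fixed subspace $\ker A'$ and the input-dependent subspace $\H(x)\oplus\R$, run phase estimation on the product of reflections $U(x)=R_{\ker}R_x$ starting from the flag state, and conclude using a spectral argument (the ``effective spectral gap lemma'') that converts $w_+(x,\P)$ into a lower bound on the $+1$-eigenspace overlap and $w_-(x,\P)$ into an upper bound on small-phase mass. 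Your algebraic checks of the lifted positive and negative witnesses are correct, and the choice of phase-estimation precision $\Theta\bigl(1/\sqrt{W_+W_-}\bigr)$ is the right one.

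Two small caveats worth being aware of. First, attributing the ``effective spectral gap lemma'' by that name to Reichardt~2009 is slightly anachronistic: the cleanest statement appears in later joint work on state conversion, though the underlying Jordan-decomposition argument is already present in Reichardt's proof. Second, the final mention of amplitude amplification is not quite the role it plays here: the precision choice already yields a constant separation in the phase-estimation outcome, so amplification is used (if at all) only to trade constant factors, not to obtain a quadratic speedup; standard error reduction by repetition would also suffice and still hides only polylogarithmic overhead. Neither point affects the correctness of your outline.
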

A caveat to the query complexity model is that in general the time complexity of an algorithm can be much larger than its query complexity.

\subsection{A Span Program for Null-Homology Testing.}
\label{sec:span_homology}

In this section, we present a span program for testing if a cycle is null-homologous in a simplicial complex. This span program is a generalization of the span program for $st$-connectivity defined in \cite{Karchmer} and used to develop quantum algorithms in \cite{Belovs2012,cade,jarret_et_al:LIPIcs:2018:9512,Jeffery2017}.
\par 
Let $\K$ be a $d$-dimensional simplicial complex. Let $\gamma\in C_{d-1}(\K)$ be a $(d-1)$-cycle. Let $n_d$ be the number of $d$-simplices in $\K$. Order the $d$-simplices $\{\sigma_1,\ldots,\sigma_{n_d}\}$. Let $w \colon \K_d \rightarrow \R$ be a weight function on the $d$-simplices, and let $W:C_d(\K)\to C_d(\K)$ be the diagonal weight matrix. We define a span program over the strings $\{0,1\}^{n_d}$ as follows.
\begin{enumerate}
  \item $\H=C_d(\K)$, with $\H_{i,1}=\spn\{\ket{\sigma_i}\}$ and $\H_{i,0}=\{0\}$.
  \item $\U=C_{d-1}(\K)$
  \item $ A= \boundary_d\sqrt{W} \colon C_d(\K) \rightarrow C_{d-1}(\K)$
  \item $\ket{\tau}=\gamma$
\end{enumerate}

We denote the above span program by $\P_\K$. Let $x\in\{0,1\}^{n_d}$ be a binary string. We define the subcomplex $\K(x) \coloneqq \K^{d-1}\cup\{\sigma_i:x_i=1\}$; that is, $\K(x)$ contains the $d$-simplices $\sigma_i$ such that $x_i = 1$. There exists a solution $v$ to the linear system $\partial_d \sqrt{W} \Pi_{\K(x)} v = \gamma$ if and only if the cycle $\gamma$ is null-homologous in $\K(x)$ if and only if $x$ is a positive instance of $\P_\K$. The span program $\P_\K$ decides the boolean function $f:\{0,1\}^{n_d}\to\{0,1\}$ where $f(x) = 1$ if and only if $\gamma$ is a null-homologous cycle in the subcomplex $\K(x)$.
\par 
\Cref{thm:reichardt} allows us to bound the query complexity of our span program by the size of positive and negative witness. In the next section, we provide bounds on the positive and negative witness size of our span program.

\subsection{Witness Sizes of the Null-Homology Testing Span Program.}
\label{sec:witness_sizes}

In this section, we bound the positive and negative witness sizes of our span program for null-homology testing. We will show that they are equal to the quantities called the effective resistance and effective capacitance of the cycle. We first introduce these quantities and show some of their properties. Then, in~\Cref{sec:res_and_cap_and_witness_sizes}, we show that these quantities are the witness sizes of our span program.

\subsubsection{Background: Effective Resistance and Effective Capacitance.}
\label{sec:res_and_cap}

Let $\gamma\in C_{d-1}(\K)$ be a cycle in a simplicial complex. We associate two quantities with $\gamma$: its \textit{effective resistance} and \textit{effective capacitance}. The effective resistance is finite if and only if $\gamma$ is null-homologous, and the effective capacitance is finite if and only if $\gamma$ is not null-homologous.
We begin with the definition of effective resistance.

\begin{definition}
\label{def:effective_resistance}
  Let $\K$ be a simplicial complex with weight function $w:\K\to\R^+$. Let $\gamma$ be a $(d{-}1)$-cycle in $\K$. The \textbf{effective resistance} of $\gamma$ is
  $$
    \RE_\gamma(\K, W)= \begin{cases}
        \gamma^T \left( L^{up,\,W}_{d-1} \right)^+ \gamma & \text{if $\gamma$ is null-homologous} \\
        \infty & \text{otherwise} \\
    \end{cases}
  $$
  When obvious or when $\K$ is unweighted, we drop the weights from the notation and write $\RE_\gamma(\K)$.
\end{definition}

This definition of effective resistance is consistent with effective resistance in graphs (see \cite{Spielman2019}) and other definitions of effective resistance in simplicial complexes \cite{Kook2018, osting2020sparsification, Hansen2019}.\footnotemark However, this definition gives little intuition about effective resistance. We now prove there is an alternative definition of effective resistance in terms of chains with boundary $\gamma$. We begin with two definitions.
\footnotetext{Effective resistance can be defined even more generally using the combinatorial Laplacian. For simplicity, consider the unweighted case. For a null-homologous $(d-1)$-cycle $\gamma$, the effective resistance can be defined $\RE_{\gamma}(\K) = \gamma^{T}L_{d-1}^{+}\gamma$. This equals the formula in \Cref{def:effective_resistance} because (1) $L_{d-1}^{+} = (L_{d-1}^{up})^{+} + (L_{d-1}^{down})^{+}$ and (2) $\gamma^{T}(L_{d-1}^{down})^{+}\gamma=0$ as $\gamma\in\ker L_{d-1}^{down} = \ker (L_{d-1}^{down})^{+}$.}

\begin{definition}
Given a $d$-dimensional simplicial complex $\K$ and a $(d-1)$-dimensional null-homologous cycle $\gamma$, a \textbf{unit \boldmath$\gamma$-flow} is a $d$-chain $f \in C_d(\K)$ such that $\partial f = \gamma$.
\end{definition}

In the case of graphs, a unit $st$-flow is a flow sending 1 unit of flow from $s$ to $t$.
\begin{definition}
Given a $d$-dimensional simplicial complex $\K$ with weight function $w:C_d(\K) \to \R^{+}$ and a unit $\gamma$-flow $f$, the \textbf{flow energy} of $f$ on $\K$ is \[ \mathsf{J}(f) = \sum_{\sigma \in \K^{(d)}} \frac{f(\sigma)^2}{w(\sigma)} = f^{T} W^{-1} f \] where $W$ is the $n_d\times n_d$ diagonal matrix whose entries are the weights of the $d$-simplices.
\end{definition}

We will now relate unit $\gamma$-flows and their energy to effective resistance. This generalizes a formula for effective resistance in graphs~\cite[Chapter IX Corollary 6]{bollobas1998modern}.

\begin{lemma}
\label{lem:effective_resistance_flows}
  Let $\K$ be a simplicial complex and let $\gamma$ be a null-homologous $d$-cycle. The effective resistance of $\gamma$ is the minimum flow energy over all unit $\gamma$-flows, i.e.
  $$
    \mathcal{R}_\gamma(\K) = \min\{ \mathsf{J}(f) \mid \boundary f = \gamma \}
  $$
\end{lemma}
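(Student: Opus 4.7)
The plan is to reduce the constrained minimization on the right-hand side to a minimum-norm linear-system problem and then apply the pseudoinverse characterization in \Cref{lem:properties_pseudoinverse}. Concretely, I would substitute $g = W^{-1/2} f$ so that $f = W^{1/2} g$. Under this change of variables, the constraint $\boundary_d f = \gamma$ becomes $A g = \gamma$ with $A \coloneqq \boundary_d W^{1/2}$, and the objective becomes
\[
    \Jsf(f) \;=\; f^{T} W^{-1} f \;=\; g^{T} g \;=\; \|g\|^2.
\]
So the right-hand side equals $\min\{\|g\|^2 : Ag = \gamma\}$.

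Next, I would verify that this minimum is attained and has the claimed value. Since $\gamma$ is null-homologous, $\gamma \in \im \boundary_d = \im(\boundary_d W^{1/2}) = \im A$ (the middle equality uses invertibility of $W^{1/2}$), so the feasible set is nonempty. Part 2 of \Cref{lem:properties_pseudoinverse} then gives that the minimizer is $g^{*} = A^{+}\gamma$, with optimal value
\[
    \|A^{+}\gamma\|^2 \;=\; \gamma^{T}(A^{+})^{T}A^{+}\gamma \;=\; \gamma^{T}(AA^{T})^{+}\gamma,
\]
where the second equality uses part 1 of \Cref{lem:properties_pseudoinverse} applied to $A^{T}$ (so that $(A^{+})^{T}A^{+} = ((A^{T})^{+})^{T}(A^{T})^{+} = (A^{T}(A^{T})^{T})^{+} = (A^{T}A)^{+}$ — but more directly, the identity $(A^+)^T A^+ = (AA^T)^+$ holds for any matrix $A$).

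Finally I would identify $AA^{T}$ with the weighted up Laplacian:
\[
    AA^{T} \;=\; \boundary_d W^{1/2}\,(W^{1/2})^{T}\boundary_d^{T} \;=\; \boundary_d W \coboundary_{d-1} \;=\; L^{up,\,W}_{d-1}.
\]
Combining these gives $\min\{\Jsf(f) : \boundary_d f = \gamma\} = \gamma^{T}(L^{up,\,W}_{d-1})^{+}\gamma = \RE_\gamma(\K)$, as desired.

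The only subtle point — which I would treat carefully but briefly — is the pseudoinverse identity $(A^{+})^{T}A^{+} = (AA^{T})^{+}$ together with the membership $\gamma \in \im(AA^{T})$ needed so that $\gamma^{T}(AA^{T})^{+}\gamma$ really equals $\|A^{+}\gamma\|^2$. Both follow from the SVD definition of $A^{+}$ and the fact that $\im(AA^{T}) = \im A$, so the ``hard part'' is really just a careful bookkeeping of the pseudoinverse; there is no deeper obstacle.
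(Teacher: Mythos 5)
Your proof is correct and is essentially the paper's proof run in the reverse direction: both use the change of variables by $W^{1/2}$, Part~2 of \Cref{lem:properties_pseudoinverse} to identify the minimum-norm solution as the pseudoinverse applied to $\gamma$, and Part~1 (together with $(A^T)^{+}=(A^{+})^{T}$) to rewrite $\|(\boundary W^{1/2})^{+}\gamma\|^2$ as $\gamma^{T}(L_{d-1}^{up,\,W})^{+}\gamma$. One small wart: the intermediate chain $(A^{+})^{T}A^{+}=((A^{T})^{+})^{T}(A^{T})^{+}=\cdots$ is not correct as written (the first equality would require $A^{+}$ symmetric), but your fallback identity $(A^{+})^{T}A^{+}=(AA^{T})^{+}$, obtained by combining $(A^{+})^{T}=(A^{T})^{+}$ with Part~1, is exactly right and is what the argument needs.
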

\begin{proof}
  Our first observation is that we can factor the weighted Laplacian as
  \begin{align*}
    L_d^{up,\,W} &= \boundary_{d+1} W \coboundary_d \\
          &= \boundary_{d+1} W^{1/2} W^{1/2} \coboundary_d \\
          &= (\boundary_{d+1} W^{1/2}) (\boundary_{d+1} W^{1/2})^T
  \end{align*}
  By \Cref{lem:properties_pseudoinverse} Part 1, $(L_d^{up,\,W})^+ = ((\boundary_{d+1} W^{1/2})^T)^+ (\boundary_{d+1} W^{1/2})^+$. Therefore,
  $$
  \mathcal{R}_\gamma(\K)= \gamma^{T}((\boundary_{d+1} W^{1/2})^T)^+(\boundary_{d+1} W^{1/2})^+\gamma = \| (\boundary W^{1/2})^+\gamma \|^2.
  $$
   By \Cref{lem:properties_pseudoinverse} Part 2, $\mathcal{R}_\gamma(\K)$ is the minimum squared-norm of a vector that $\boundary_{d+1}W^{1/2}$ maps to $\gamma$. Let $f=(\boundary W^{1/2})^+\gamma$; the vector $f$ is the unit $\gamma$-flow of minimum flow energy, which we now prove.
   \par
    A vector $v$ is mapped to $\gamma$ by $\boundary W^{1/2}$ iff $W^{1/2}v$ is mapped to $\gamma$ by $\boundary$ as $W^{1/2}$ is a bijection; that is all to say, $W^{1/2}v$ is a unit $\gamma$-flow. Moreover, the flow energy of $W^{1/2}v$ is
   \begin{align*}
    \mathsf{J}(W^{1/2}v) &=  (W^{1/2}v)^{T} W^{-1} W^{1/2}v \\
    &= v^{T} W^{1/2}W^{-1}W^{1/2} v \\
    &= v^{T}v \\
    &= \| v \|^2
   \end{align*}
   Therefore, the minimum flow energy of a unit $\gamma$-flow is the minimum squared-norm of a vector that $\boundary W^{1/2}$ maps to $\gamma$, which we previously saw was $\mathcal{R}_\gamma(\K)$.
\end{proof}

\noindent
We call $(\boundary W^{1/2})^+\gamma$ the \EMPH{minimum-energy} unit $\gamma$-flow\footnote{The minimum-energy unit $\gamma$-flow is unlike the optimal bounded chain \cite{Dunfield2011}, another minimum spanning object of a null-homologous cycle. The optimal bounded chain of $\gamma$ is (often) defined with $\Z_2$ coefficients and is the smallest set of simplices with boundary $\gamma$; it is analogous to an $s,t$-path in a graph. The minimum-energy $\gamma$-flow is defined with real coefficients and can have fractional values on simplices; it is analogous to an $s,t$-flow in a graph. While the weight of the optimal bounded chain is always minimized when fewer simplices are used, the minimum minimum-energy $\gamma$-flow will push a fraction of the flow on many chains spanning $\gamma$. Indeed, in Section \ref{sec:formulas}, we prove the minimum-energy $\gamma$-flow pushes a non-zero amount of flow on each chain spanning $\gamma.$}.
\par 
Some of the key properties of effective resistance in graphs are the series and parallel formulas and Rayleigh Monotonicity. In~\Cref{sec:formulas}, we prove analogous results for higher-dimensional effective resistance.
\par
While effective resistance has previously been generalized from graphs to simplicial complexes~\cite{Hansen2019, Kook2018, osting2020sparsification}, to our knowledge, we are the first to generalize effective capacitance from graphs to simplicial complexes. Unfortunately, effective capacitance is more opaque than effective resistance, both in graphs and simplicial complexes. The definition of effective capacitance is less intuitive than the definition for effective resistance, and there are fewer results about effective capacitance in graphs than effective resistance.
\par
Before defining effective capacitance in simplicial complexes, we review the definition of effective capacitance in graphs, which can be found in~\cite{jarret_et_al:LIPIcs:2018:9512}.
Let $G$ be a graph such that $s$ and $t$ are connected in $G$, and let $H \subseteq G$ be a subgraph such that $s$ and $t$ are not connected in $H$. A \textit{unit $st$-potential} is a function $p \colon V(G) \rightarrow \R$ such that $p(t) = 1$, $p(s)=0$, and $p(u) = p(v)$ for any two vertices $u,v$ in the same connected component. The \textit{potential energy} of $p$ is $\sum_{\{u,v\}\in E(G)} (p(u)-p(v))^{2}$. The \textit{effective capacitance} of $s$ and $t$ is the minimum potential energy of any $st$-potential.
\par 
Our definition of effective capacitance in simplicial complexes will be analogous to the defintion in graphs; namely, the effective capacitance of a cycle $\gamma$ will be the minimum energy of a unit $\gamma$-potential.

\begin{figure}
    \centering
    \begin{subfigure}{0.3\textwidth}
        \centering
        \vspace{0.25in}
        \includegraphics[height=1in]{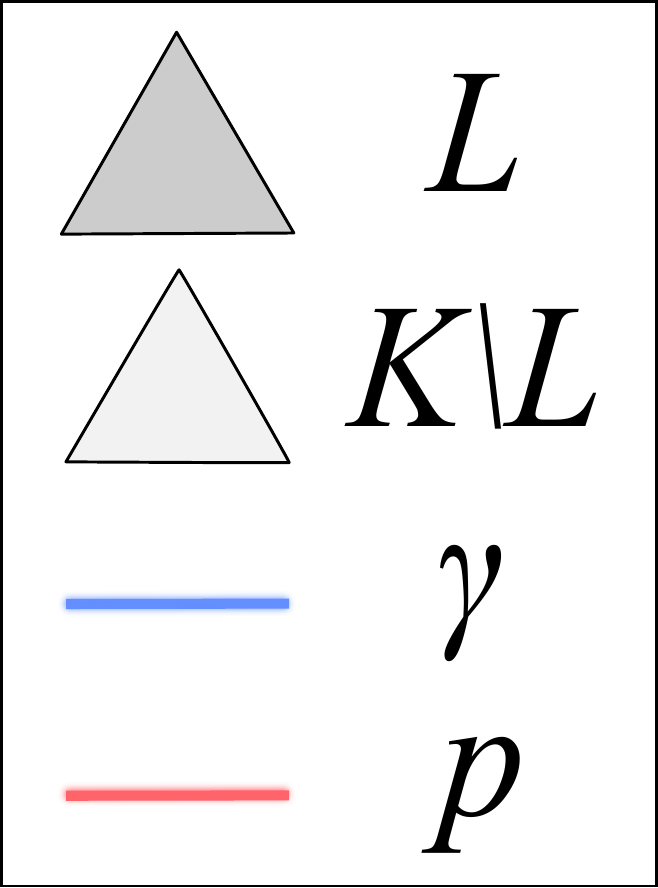}
        \vspace{0.25in}
    \end{subfigure}
    \begin{subfigure}{0.3\textwidth}
        \centering
        \includegraphics[height=1.5in]{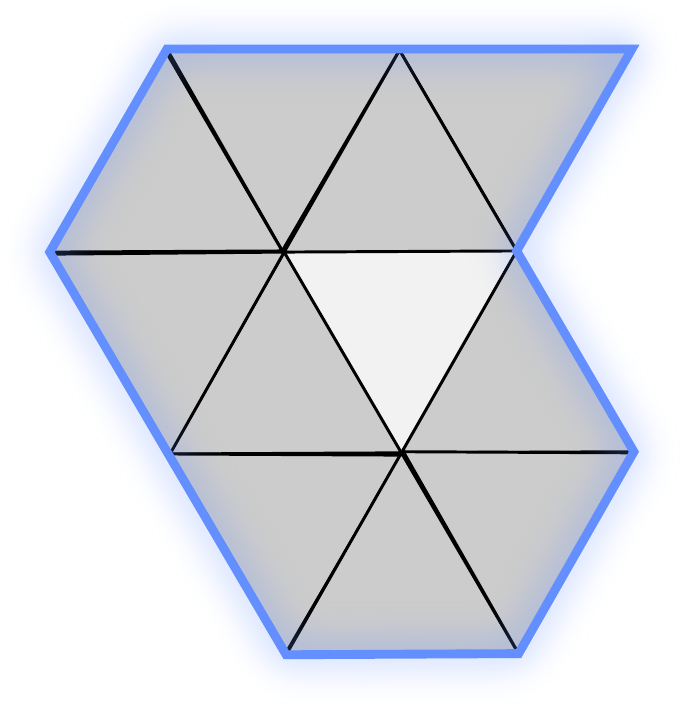}
    \end{subfigure}
    \begin{subfigure}{0.3\textwidth}
        \centering
        \includegraphics[height=1.5in]{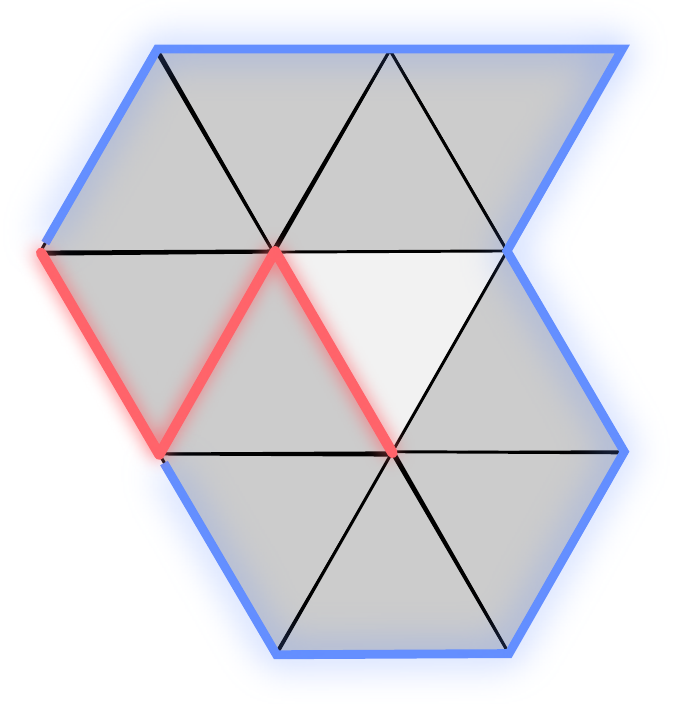}
    \end{subfigure}
    \caption{Left: A 1-cycle $\gamma$ with $\pm 1$ coefficients on the blue edges. Right: A unit $\gamma$-potential $p$ with $\pm 1$ coefficients on the red edges. If this complex is unweighted, then the potential energy of $p$ is $1$. It can be proved that $p$ is a minimal energy unit $\gamma$-potential \protect\footnotemark, so $\C_{\gamma}(\L, \K)=1$ }
    \label{fig:potential}
\end{figure}
\footnotetext{The proof that $p$ is the minimal energy unit $\gamma$-potential is analogous to the proof of \Cref{thm:capacitance_lower_bound}. The trick to proving this is to realize that because there is a single 2-simplex in $\K\setminus\L$, all unit $\gamma$-potentials have the same potential energy.}

\begin{definition}
Let $\L$ be a simplicial complex, and let $\gamma\in C_{d-1}(\L)$ be a $(d{-}1)$-cycle that is not null-homologous in $\L$. A \textbf{unit \boldmath$\gamma$-potential} in $\mathcal{L}$ is a $(d{-}1)$-chain $p$ such that $\coboundary_{d-1}[\L]p=0$ and $p^{T}\gamma=1$.
\end{definition}

\noindent
Figure \ref{fig:potential} shows a $\gamma$-potential in a simplicial complex.

\begin{definition}
Given simplicial complexes $\L \subset \K$ with weight function $w:C_d(\K) \to \R$ and a
$\gamma$-potential $p$ in $\L$, the \textbf{potential energy} of $p$ on $\K$
is \[\mathcal{J}(p) = \sum_{\sigma \in \K_d} \delta[\K]p(\sigma)^{2}
w(\sigma) = (\coboundary[\K]p)^{T} W (\coboundary[\K]p). \]
\end{definition}

\begin{definition}
Let $\L\subset\K$ be simplicial complexes, and let $\gamma\in C_{d-1}(\L)$ be a $(d-1)$-cycle that is null-homologous in $\K$. If $\gamma$ is not null-homologous in $\L$, the \textbf{effective capacitance} of $\gamma$ in $\L$ and $\K$ is 

$$
\C_\gamma(\L, \K) = 
\begin{cases}
\underset{p:\text{$p$ unit $\gamma$-potential}}{\min}\: \mathcal{J}(p) & \text{$\gamma$ is not null-homologous} \\
\infty & \text{$\gamma$ is null-homologous}
\end{cases}
$$ 
\end{definition}

Our definition of effective capacitance in simplicial complexes matches the definition of effective capacitance in graphs; however, this may not be obvious at first glance, as our definition of $st$-potential is more general. A function $p:V(G)\to\R$ is equivalent to a 0-chain $p\in C_0(H)$, and the requirement that $p(u) = p(v)$ for any two vertices $u,v$ in the same connected component is equivalent to saying $\coboundary_{0}[H]=0$; however, not all chains $p$ such that $p^{T}(s-t)=1$ satisfy $p(s)=1$ and $p(t)=0$. (For example, it could be the case that $p(s)=\frac{1}{2}$ and $p(t)=-\frac{1}{2}$.) This difference in the definition ends up not mattering though. This is because the all-1s vectors $1\in\ker\coboundary_{0}$ for any graph. Using this fact, we can see that for any $st$-potential $p$ under our definition, there is an $st$-potential $p'$ under the previous definition with the same potential energy, namely the potential $p'=p-p(t)1$. 
\par 
There is one small detail left to show. It is not obvious from the definition that a unit $\gamma$-potential will even exist for $\gamma$. We prove this in the following lemma.

\begin{lemma}
  Let $\L$ be a simplicial complex, and let $\gamma\in C_{d-1}(\L)$ be a cycle. Then there exists a unit $\gamma$-potential in $\L$ if and only if $\gamma$ is not null-homologous in $\L$.
\end{lemma}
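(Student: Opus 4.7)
The plan is to recast the question as a purely linear-algebraic one: a unit $\gamma$-potential is any $p\in\ker\coboundary_{d-1}[\L]$ with $\langle p,\gamma\rangle = 1$, so such a $p$ exists if and only if the linear functional $\langle\cdot,\gamma\rangle$ restricted to the subspace $\ker\coboundary_{d-1}[\L]$ is not identically zero. Equivalently, a unit $\gamma$-potential exists if and only if $\gamma\notin(\ker\coboundary_{d-1}[\L])^{\perp}$.

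Next I would identify this orthogonal complement using the relation $\coboundary_{d-1}=\boundary_d^{T}$. For any linear map $A$, one has $(\ker A)^{\perp}=\im A^{T}$, so
$$
  (\ker\coboundary_{d-1}[\L])^{\perp} \;=\; \im\coboundary_{d-1}[\L]^{T} \;=\; \im\boundary_{d}[\L].
$$
Thus a unit $\gamma$-potential in $\L$ exists if and only if $\gamma\notin\im\boundary_d[\L]$. Since $\gamma\in\ker\boundary_{d-1}[\L]$ by hypothesis, the condition $\gamma\in\im\boundary_d[\L]$ is exactly the statement that $\gamma$ is null-homologous in $\L$, which completes the equivalence.

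To make the two directions concrete: for the ``only if'' direction, if $\gamma=\boundary_d f$ then for every $p\in\ker\coboundary_{d-1}[\L]$ we have $\langle p,\gamma\rangle = \langle p,\boundary_d f\rangle = \langle\coboundary_{d-1}[\L]p,f\rangle = 0$, precluding $\langle p,\gamma\rangle = 1$. For the ``if'' direction, when $\gamma\notin\im\boundary_d[\L]$ write $\gamma=\gamma^{\parallel}+\gamma^{\perp}$ with $\gamma^{\parallel}\in\im\boundary_d[\L]$ and $\gamma^{\perp}\in\ker\coboundary_{d-1}[\L]$; then $\gamma^{\perp}\neq 0$, and $p\coloneqq\gamma^{\perp}/\|\gamma^{\perp}\|^{2}$ is a unit $\gamma$-potential.

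No single step is a serious obstacle: the entire argument is a one-line consequence of the Hodge-style orthogonal decomposition $C_{d-1}(\L)=\im\boundary_d[\L]\oplus\ker\coboundary_{d-1}[\L]$ already recalled in the preliminaries. The only subtlety worth stating explicitly is that the equality $\ker\coboundary_{d-1}[\L]\cap\ker\boundary_{d-1}[\L]$ is irrelevant here; we only need the orthogonal complement of $\ker\coboundary_{d-1}[\L]$, not its intersection with the cycle space, so harmonic chains play no special role and the argument is uniformly about boundaries.
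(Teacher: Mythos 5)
Your proof is correct and takes essentially the same approach as the paper: both identify $(\ker\coboundary_{d-1}[\L])^{\perp}=\im\boundary_d[\L]$ via $\coboundary_{d-1}=\boundary_d^T$, and both produce the witness by projecting $\gamma$ onto $\ker\coboundary_{d-1}[\L]$ and rescaling (your normalization $\gamma^{\perp}/\|\gamma^{\perp}\|^{2}$ coincides with the paper's $q/(\gamma^{T}q)$ since $\gamma^{T}q=\|\gamma^{\perp}\|^2$).
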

\begin{proof}
  Observe that $\ker\coboundary_{d-1}[\L] = (\im\boundary_d[\L])^\perp$ as $\coboundary_{d-1}[\L]=\boundary_{d}[\L]^T$. Assume there is a $\gamma$-potential $p$ in $\L$. As $\coboundary[\L]p=0$, then $p\in\ker\coboundary_{d-1}[\L] = (\im\boundary_d[\L])^\perp$. As $\gamma^{T}p=1$ we see that $\gamma$ has a non-zero component in $(\im\boundary_d[\L])^\perp$, so $\gamma\not\in\im\boundary_{d}[\L]$.
  \par
  Alternatively, suppose that $\gamma$ is not null-homologous in $\L$. Then $\gamma$ has a non-zero component in $(\im\boundary_d[\L])^\perp=\ker\coboundary_{d-1}[\L]$. Let $q=\Pi_{\ker\coboundary[\L]}\gamma$, where $\Pi_{\ker\coboundary[\L]}$ is the projection operator onto $\ker\coboundary_{d-1}[\L]$. Then $\gamma^{T}q\neq 0$ and $\coboundary_{d-1}[\L]q=0$. The vector $q$ is not necessarily a unit $\gamma$-potential as it is not necessarily the case that $\gamma^{T}q=1$, but the scaled vector $p=\frac{1}{\gamma^{T}q}q$ is a unit $\gamma$-potential.
\end{proof}

One interesting property of effective resistance and capacitance in graphs is that, in planar graphs, the effective resistance between certain pair of nodes in the dual graph equals the effective capacitance between certain pairs of nodes in the primal graph. In \Cref{sec:duality}, we show that an analogous property holds for higher-dimensional embedded simplicial complexes. 

\subsubsection{Effective Resistance and the Spectral Gap.}

In this section, we give a characterization of the spectral gap of the combinatorial Laplacian in terms of the effective resistance of a cycle. While the proof of this lemma follows from some simple linear algebra, the advantage of this theorem comes down to the fact that effective resistance is easier to work with than eigenvectors of the Laplacian (in our opinion). We first relate effective resistance to the spectral gap of the up Laplacian. We then show how this relates effective resistance to the spectral gap of the combinatorial Laplacian. We prove this relationship for unweighted simplicial complexes; however, the theorems also hold for weighted simplicial complexes.

\begin{lemma}
\label{lem:spectral_gap_effective_resistance}
    The spectral gaps of the up Laplacian $L^{up}_{d-1}$ and down Laplacian $L^{down}_d$ are 
    $$
        \lambda_{\min}(L^{down}_d) = \lambda_{\min}(L^{up}_{d-1}) = \min\{ \mathcal{R}^{-1}_{\gamma}(\K) : \gamma\in\im\boundary_d,\,\|\gamma\|=1 \}.
    $$
\end{lemma}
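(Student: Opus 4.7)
The plan is to prove the two equalities separately. The first equality, $\lambda_{\min}(L^{down}_d) = \lambda_{\min}(L^{up}_{d-1})$, is immediate from \Cref{lem:properties_spectrum_laplacian} Part 1, which gives $\specnz(L^{up}_{d-1}) = \specnz(L^{down}_{d})$. So the bulk of the proof reduces to showing the second equality.

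For the second equality, I would start by unwinding the definition: for any $\gamma \in \im \boundary_d$ the cycle $\gamma$ is null-homologous, so $\RE_\gamma(\K) = \gamma^{T}(L^{up}_{d-1})^{+}\gamma$ is finite, and the minimization problem is therefore equivalent to
\[
    \max\bigl\{\gamma^{T}(L^{up}_{d-1})^{+}\gamma : \gamma \in \im\boundary_d,\ \|\gamma\|=1\bigr\}^{-1}.
\]
So it suffices to show that this maximum equals $1/\lambda_{\min}(L^{up}_{d-1})$.

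The key observation is that $L^{up}_{d-1} = \boundary_d \coboundary_{d-1} = \boundary_d \boundary_d^{T}$, hence $\im L^{up}_{d-1} = \im \boundary_d$ by the identity $\im AA^{T} = \im A$ used in the Hodge decomposition. Taking an orthonormal eigendecomposition $L^{up}_{d-1} = \sum_i \lambda_i v_i v_i^{T}$ indexed so that the positive eigenvalues $\lambda_1 \geq \cdots \geq \lambda_r > 0$ correspond to eigenvectors $v_1,\dots,v_r$ spanning $\im \boundary_d$, the pseudoinverse satisfies $(L^{up}_{d-1})^{+} = \sum_{i=1}^{r} \lambda_i^{-1} v_i v_i^{T}$, which vanishes on $(\im \boundary_d)^{\perp}$. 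By the Courant–Fischer / Rayleigh principle applied to the symmetric operator $(L^{up}_{d-1})^{+}$ restricted to $\im \boundary_d$, the quadratic form $\gamma^{T}(L^{up}_{d-1})^{+}\gamma$ over unit $\gamma \in \im \boundary_d$ is maximized at the largest eigenvalue of $(L^{up}_{d-1})^{+}$ on this subspace, namely $\lambda_r^{-1} = 1/\lambda_{\min}(L^{up}_{d-1})$.

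Combining the steps,
\[
    \min\bigl\{\RE_\gamma^{-1}(\K) : \gamma \in \im \boundary_d,\ \|\gamma\|=1\bigr\} = \frac{1}{1/\lambda_{\min}(L^{up}_{d-1})} = \lambda_{\min}(L^{up}_{d-1}),
\]
which together with the first equality completes the proof. There is no real obstacle here—the main subtlety is just being careful that the Rayleigh maximum is taken over $\im \boundary_d$ (where the pseudoinverse is invertible), rather than over all of $C_{d-1}(\K)$, since on the orthogonal complement the form vanishes and the reciprocal blows up (correctly reflecting $\RE_\gamma(\K) = \infty$ for cycles that are not null-homologous).
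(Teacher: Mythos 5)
Your proof is correct and follows essentially the same path as the paper's: both reduce the problem via $\lambda_{\min}(L^{up}_{d-1}) = \lambda_{\max}^{-1}((L^{up}_{d-1})^{+})$, apply the Courant--Fischer/Rayleigh principle to the pseudoinverse, and observe that the maximizer lies in (so one may safely restrict to) $\im\boundary_d = \im L^{up}_{d-1}$, with the first equality supplied by \Cref{lem:properties_spectrum_laplacian} Part 1. The only cosmetic difference is that you restrict the Rayleigh maximization to $\im\boundary_d$ from the outset via the explicit spectral decomposition, whereas the paper maximizes over all unit vectors and then notes the maximizer already lies in $\im\boundary_d$; the content is the same.
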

\begin{proof}
    We first prove this is the case for the spectral gap of the up Laplacian $L^{up}_d$. We then show the equivalence of $\lambda_{\min}(L^{down}_d)$ and $\lambda_{\min}(L^{up}_{d-1})$.
    \par 
    The lemma follows from some standard facts about symmetric matrices. First, because $L^{up}_d$ is symmetric, a vector $x$ is an eigenvector of $L^{up}_d$  with non-zero eigenvalue $\lambda$ if and only if $x$ is an eigenvector of $(L_d^{up})^{+}$ with non-zero eigenvalue $\lambda^{-1}$. This follows from the fact that the singular values and vectors of a symmetric matrix are also its eigenvalues and eigenvectors. Therefore, the smallest non-zero eigenvalue of $L^{up}_d$ is the inverse of the largest non-zero eigenvalue of $(L^{up}_d)^{+}$, or $\lambda_{\min}(L_d^{up}) = \lambda^{-1}_{\max}((L_d^{up})^{+})$ for short.  
    \par 
    Next, we can characterize the eigenvalues of the symmetric matrix $(L_d^{up})^{+}$ with the \textit{\textbf{Courant-Fischer Theorem}}. We use a special case of the theorem, which says that $\lambda_{\max}((L_d^{up})^{+}) = \max\{x^{T}(L_d^{up})^{+}x : \|x\|=1\}$ and $x_{\max} = \arg\max\{x^{T}(L_d^{up})^{+}x : \|x\|=1\}$, where $x_{\max}$ is an eigenvector corresponding to $\lambda_{\max}((L_d^{up})^{+})$. The lemma follows from the fact that $x_{\max}\in\im L_d^{up} = \im\boundary_d$, which is the case because $x_{\max}$ is the eigenvector of a non-zero eigenvalue of $L_d^{up}$. 
    \par 
    Finally, $\lambda_{\min}(L^{up}_d) = \lambda_{\min}(L^{down}_{d-1})$ by \Cref{lem:properties_spectrum_laplacian} Part 1.
\end{proof}

\subsubsection{Effective Capacitance and the Spectral Gap.}

In the previous section, we saw that the effective resistance of a unit-length cycle is always bounded above by the inverse of the spectral gap of the combinatorial Laplacian. While we don't know such a bound for the effective capacitance of arbitrary cycles, we can prove such a bound for the effective capacitance for the boundaries of simplices. This is sufficient for our analysis of the incremental algorithm as the only cycles we consider are the boundaries of simplices.
\par 
 Before proving our upper bound on the effective capacitance of a cycle, we need to prove an upper bound on the largest singular value of the coboundary matrix.

\begin{lemma}
\label{lem:upper_bound_largest_singular_value_coboundary}
Let $\K$ be a simplicial complex with $n_0$ vertices. For any $d\geq 1$, the largest singular value of the coboundary matrix $\delta_{d-1}[\K]$ is $\sigma_\text{max}(\delta_{d-1}) = O(\sqrt{n_0})$.
\end{lemma}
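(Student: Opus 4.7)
The plan is to reduce the claim to the known upper bound on the largest eigenvalue of the combinatorial Laplacian stated in \Cref{thm:upper_bound_maximum_eigenvalue_laplacian}.

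First I would use the identity $\delta_{d-1}[\K] = \partial_d[\K]^T$, which appears in the Preliminaries. Singular values are invariant under transposition, so $\sigma_{\max}(\delta_{d-1}) = \sigma_{\max}(\partial_d)$. Next, I would recall the standard fact that the nonzero singular values of a matrix $A$ are the square roots of the nonzero eigenvalues of $AA^T$. Applying this with $A=\partial_d$ gives
\[
  \sigma_{\max}(\partial_d)^2 \;=\; \lambda_{\max}(\partial_d \partial_d^T) \;=\; \lambda_{\max}(L^{up}_{d-1}[\K]),
\]
where the last equality is just the definition $L^{up}_{d-1} = \partial_d \delta_{d-1} = \partial_d \partial_d^T$.

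Now I would invoke \Cref{thm:upper_bound_maximum_eigenvalue_laplacian} together with the intermediate fact used in its proof, namely that $\lambda_{\max}(L^{up}_{k}[\K]) \leq n_0$ for every $k$ and every subcomplex $\K$ of $\Delta_{n_0}$. Taking $k = d-1$ gives $\lambda_{\max}(L^{up}_{d-1}[\K]) \leq n_0$, and hence
\[
  \sigma_{\max}(\delta_{d-1}[\K]) \;=\; \sqrt{\lambda_{\max}(L^{up}_{d-1}[\K])} \;\leq\; \sqrt{n_0},
\]
which is $O(\sqrt{n_0})$ as required.

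I do not expect any real obstacle here: the entire argument is a two-line chain of standard identities (transpose invariance of singular values, $\sigma^2(A) = \lambda(AA^T)$, and the definition of the up Laplacian) together with one citation to the already-established eigenvalue bound. The only thing to be careful about is to phrase the reduction at the level of the up Laplacian $L^{up}_{d-1}$ rather than the full combinatorial Laplacian $L_{d-1}$, since it is $L^{up}_{d-1}$ whose spectrum captures the singular values of $\partial_d$, and it is precisely this quantity that the interlacing argument in the proof of \Cref{thm:upper_bound_maximum_eigenvalue_laplacian} bounds by $n_0$.
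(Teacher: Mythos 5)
Your proof is correct and follows essentially the same route as the paper's: both reduce $\sigma_{\max}(\delta_{d-1})^2$ to $\lambda_{\max}(L^{up}_{d-1})$ (the paper writes this as $\coboundary_{d-1}^T\coboundary_{d-1}=L^{up}_{d-1}$, you equivalently pass through $\partial_d\partial_d^T$) and then invoke \Cref{thm:upper_bound_maximum_eigenvalue_laplacian}. The small extra transposition step you include is harmless, and your remark about targeting $L^{up}_{d-1}$ rather than $L_{d-1}$ is exactly the right point of care.
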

\begin{proof}
This follows as the squared singular values of $\coboundary_{d-1}$ are the eigenvalues of the up Laplacian $\coboundary_{d-1}^T\coboundary_{d-1}=L_{d-1}^{up}$. (This is true for any matrix of the form $A^{T}A$.) The maximum eigenvalue of $L_{d-1}^{up}$ is known to be at most $n_0$ by \Cref{thm:upper_bound_maximum_eigenvalue_laplacian}. 
\end{proof}

\begin{theorem}\label{thm:cap_spectral_gap}
    Let $\L\subset\K$ be $d$-dimensional simplicial complexes. Let $\gamma\in C_{d-1}(\L)$ be a $(d-1)$-cycle that is null-homologous in $\K$ but not in $\L$. Assume that $\gamma=\boundary\sigma$ for a $d$-simplex $\sigma\notin \L$.\footnotemark The effective capacitance of $\gamma$ in $\K$ is bounded above by $\C_\gamma(\L,\K) = O \left( n_{0}\cdot\lambda_{\min}^{-1}(L_{d-1}[\L\cup\{\sigma\}])\right)$. 
\end{theorem}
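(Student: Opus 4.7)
The plan is to upper bound $\C_\gamma(\L,\K)$ by exhibiting one explicit unit $\gamma$-potential in $\L$ and estimating its energy on $\K$. The natural candidate is the normalized projection of $\gamma$ itself: set $q \coloneqq \Pi_{\ker\coboundary_{d-1}[\L]}\gamma$, which is nonzero because $\gamma$ is not null-homologous in $\L$, and define $p \coloneqq q/\|q\|^2$. Since $\ker\coboundary_{d-1}[\L] = (\im\partial_d[\L])^\perp$ implies $\gamma^T q = \|q\|^2$, the chain $p$ obeys $\coboundary[\L]p = 0$ and $p^T\gamma = 1$, so it is a valid unit $\gamma$-potential. Its energy on $\K$ is $\mathcal{J}(p) = \|\coboundary_{d-1}[\K]q\|^2/\|q\|^4$, and \Cref{lem:upper_bound_largest_singular_value_coboundary} gives $\|\coboundary_{d-1}[\K]q\|^2 \leq \sigma_{\max}(\coboundary_{d-1}[\K])^2\|q\|^2 = O(n_0\|q\|^2)$, hence $\mathcal{J}(p) = O(n_0/\|q\|^2)$.

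Write $\L' = \L\cup\{\sigma\}$. The remaining task is to establish the lower bound $\|q\|^2 \geq \lambda_{\min}(L_{d-1}[\L'])$. Decompose $\gamma = q + r$ with $r \in \im\partial_d[\L]$, and pick any $t$ with $\partial_d[\L]\,t = r$; then $q = \partial_d[\L'](\sigma - t)$, so $q \in \im\partial_d[\L'] = \im L_{d-1}^{up}[\L']$. Now compute $\coboundary_{d-1}[\L']\,q$ coordinate by coordinate: for each $d$-simplex $\tau \in \L'$, $(\coboundary_{d-1}[\L']\,q)(\tau) = q^T \partial\tau$, which vanishes whenever $\tau \in \L$ (since $q \perp \im\partial_d[\L]$) and equals $q^T\partial\sigma = q^T\gamma = \|q\|^2$ when $\tau = \sigma$. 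Therefore $\coboundary_{d-1}[\L']\,q = \|q\|^2\, e_\sigma$, which in particular gives $\|\coboundary[\L']\,q\|^2 = \|q\|^4$.

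Because $L_{d-1}^{up}[\L'] = \coboundary_{d-1}[\L']^T\coboundary_{d-1}[\L']$ is symmetric PSD and $q \in \im L_{d-1}^{up}[\L']$ is orthogonal to $\ker L_{d-1}^{up}[\L']$, Rayleigh's principle yields
\[
\|q\|^4 \;=\; q^T L_{d-1}^{up}[\L']\,q \;\geq\; \lambda_{\min}(L_{d-1}^{up}[\L'])\,\|q\|^2.
\]
Dividing by $\|q\|^2$ and invoking \Cref{lem:spectral_gap_combinatorial_up_down} gives $\|q\|^2 \geq \lambda_{\min}(L_{d-1}^{up}[\L']) \geq \lambda_{\min}(L_{d-1}[\L'])$. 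Combining with the earlier energy estimate yields $\C_\gamma(\L,\K) \leq \mathcal{J}(p) = O(n_0 \cdot \lambda_{\min}^{-1}(L_{d-1}[\L']))$, as claimed.

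The heart of the argument is the identity $\coboundary_{d-1}[\L']\,q = \|q\|^2\, e_\sigma$: it forces the \emph{same} quantity $\|q\|^2$ to play two roles, as the denominator in the potential's energy and as the quantity controlled by the spectral gap via Rayleigh. The hypothesis $\gamma = \partial\sigma$ with $\sigma\notin\L$ is essential in both places --- it supplies a preimage in $\L'$ that certifies $q \in \im\partial_d[\L']$, and it concentrates the coboundary of $q$ on the single new simplex $\sigma$. I anticipate no subtle obstacle beyond checking that the Rayleigh step really invokes the smallest \emph{nonzero} eigenvalue, which is justified by the observation that $q$ is orthogonal to $\ker L_{d-1}^{up}[\L']$.
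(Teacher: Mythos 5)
Your proof is correct and follows essentially the same route as the paper: you exhibit the same minimum-norm $\gamma$-potential (your $p = q/\|q\|^2$ equals the paper's $C^{+}b$ with $C = \coboundary[\L\cup\{\sigma\}]$ and $b = e_\sigma$) and obtain the identical norm bound $\|p\| = O\bigl(\lambda_{\min}^{-1/2}(L_{d-1}^{up}[\L\cup\{\sigma\}])\bigr)$, then invoke \Cref{lem:upper_bound_largest_singular_value_coboundary} exactly as the paper does. The only difference is presentational: you derive the norm bound constructively from the identity $\coboundary_{d-1}[\L\cup\{\sigma\}]\,q = \|q\|^2 e_\sigma$ and a Rayleigh-quotient argument, whereas the paper reads the same bound directly off the operator norm of the pseudoinverse of $\coboundary[\L\cup\{\sigma\}]$.
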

\footnotetext{The theorem holds whether or not $\sigma\in\K$.}
\begin{proof}
    We can express the constraints of a $\gamma$-potential $p$ in the following set of linear equations: 
    $$
      \begin{bmatrix} \phantom{0} \\ \coboundary[\L] \\ \phantom{\vdots} \\ \gamma^T \end{bmatrix} p = \begin{bmatrix} 0 \\ 0 \vphantom{\coboundary[\L]} \\ \vdots  \\ 1 \end{bmatrix}
    $$
    To simplify notation, let $C=\begin{bmatrix}  \coboundary[\L]^T & \gamma \end{bmatrix}^T$ and $b=\begin{bmatrix} 0 & 0 & \cdots & 1 \end{bmatrix}^T$
    \par 
    We consider the smallest vector $p$ which satisfies these equations, which is $p=C^{+}b$. Because $\gamma=\boundary\sigma$, we can see that $C=\coboundary[\L\cup\{\sigma\}]$. Therefore, $\| p \| = O(\|C^{+}b\|) = O(\sigma^{-1}_{\min}(\coboundary[\L\cup\{\sigma\}])$, where $\sigma_{\min}(\coboundary[\L\cup\{\sigma\}])$ is the smallest non-zero singular value of $\coboundary[\L\cup\{\sigma\}]$. However, we know that $\sigma_{\min} = \sqrt{\lambda_{\min}(L_{d-1}^{up}[\L\cup\{\sigma\}])} \in \Omega(\sqrt{\lambda_{\min}(L_{d-1}[\L\cup\{\sigma\}])})$. Therefore, $\|p\|\in O\left(\sqrt{\lambda_{\min}^{-1}(L_{d-1}[\L\cup\{\sigma\}])}\right)$. 
    \par 
    We now want to bound the potential energy of $p$. Using \Cref{lem:upper_bound_largest_singular_value_coboundary}, we can bound $\|\coboundary[\K]p\|^{2} \in O(n_0\cdot \lambda_{\min}^{-1}(L_{d-1}[\L\cup\{\sigma\}]))$.
\end{proof}

\subsubsection{Connecting Effective Resistance and Capacitance to Witness Sizes.}
\label{sec:res_and_cap_and_witness_sizes}

Given a string $x \in \{0,1\}^{n_d}$, we show in the following two lemmas that $w_+(x, \P_\K) = \mathcal{R}_\gamma(\K(x))$ and $w_-(x, \P_\K) = \C_\gamma(\K(x),\K)$.
The proofs are simple calculations following from the definitions of effective resistance and capacitance.

\begin{lemma}\label{lem:pos_inst}
Let $x\in\{0,1\}^{n_d}$ be a positive instance. There is a bijection between positive witnesses $\ket{w}$ for $x$ and unit $\gamma$-flows $f$ in $\K(x)$. Moreover, the positive witness size  is equal to the effective resistance of $\gamma$ in $\K(x)$; that is, $w_+(x, \P_\K) = \mathcal{R}_\gamma(\K(x))$.
\end{lemma}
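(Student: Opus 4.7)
The plan is to exhibit an explicit bijection between positive witnesses and unit $\gamma$-flows, and then check that this bijection is norm-preserving in the appropriate sense so that taking minima on both sides turns the statement into an invocation of \Cref{lem:effective_resistance_flows}.

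First I would define the map. Given a positive witness $\ket{w}\in\H(x)$ satisfying $A\ket{w}=\boundary_d\sqrt{W}\ket{w}=\gamma$, let $f=\sqrt{W}\ket{w}\in C_d(\K)$. Because $\H_{i,0}=\{0\}$, the vector $\ket{w}$ is supported on the $d$-simplices $\sigma_i$ with $x_i=1$, and since $\sqrt{W}$ is a diagonal (hence support-preserving) operator, $f$ is also supported on those $d$-simplices, so $f\in C_d(\K(x))$. Moreover $\boundary_d f=\boundary_d\sqrt{W}\ket{w}=\gamma$, so $f$ is a unit $\gamma$-flow in $\K(x)$. Conversely, given a unit $\gamma$-flow $f\in C_d(\K(x))$, define $\ket{w}=\sqrt{W}^{-1}f$; this vector is supported on the $d$-simplices of $\K(x)$, hence lies in $\H(x)$, and satisfies $A\ket{w}=\boundary_d\sqrt{W}\sqrt{W}^{-1}f=\boundary_d f=\gamma$, so it is a positive witness. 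The two maps are clearly mutually inverse, establishing the claimed bijection.

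Next I would compare norms. For a matched pair $(\ket{w},f)$ as above,
\[
\|\ket{w}\|^2 \;=\; \ket{w}^T\ket{w} \;=\; (\sqrt{W}^{-1}f)^T(\sqrt{W}^{-1}f) \;=\; f^T W^{-1} f \;=\; \mathsf{J}(f).
\]
Thus the bijection carries positive-witness squared-norm to flow energy exactly. Taking the minimum over both sides gives
\[
w_+(x,\P_\K)=\min_{\ket{w}\in\H(x),\, A\ket{w}=\gamma}\|\ket{w}\|^2 = \min_{f\in C_d(\K(x)),\,\boundary f=\gamma}\mathsf{J}(f),
\]
and the right-hand side equals $\mathcal{R}_\gamma(\K(x))$ by \Cref{lem:effective_resistance_flows} applied to the subcomplex $\K(x)$.

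There is essentially no obstacle here; the only thing one must take care of is making sure the support condition on $\ket{w}$ really is equivalent to $f$ being a chain in $\K(x)$ rather than in $\K$, which is immediate from $\H_{i,0}=\{0\}$ and the fact that $\sqrt{W}$ is diagonal and invertible. Once that observation is made, the statement reduces to rewriting $\|\ket{w}\|^2 = f^T W^{-1} f$ and citing the flow-energy characterization of effective resistance.
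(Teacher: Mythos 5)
Your proof is correct and follows essentially the same route as the paper's: construct the bijection $f = \sqrt{W}\ket{w}$, observe it carries witness norm to flow energy, and invoke \Cref{lem:effective_resistance_flows}. If anything you are slightly more careful than the published proof in explicitly verifying the support condition (that $f$ lands in $C_d(\K(x))$ rather than merely $C_d(\K)$) and in making the minimization over both sides of the bijection explicit.
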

\begin{proof}
Let $\ket{w_+} \in C_d(\K)$ be a positive witness for $x$, so $\partial_d \sqrt{W} \ket{w_+} = \gamma$.
We construct a unit $\gamma$-flow $f$ in $\K(x)$ by $f = \sqrt{W}\ket{w_+}$; $f$ is indeed a unit $\gamma$-flow as $\partial_d f = \partial_d \sqrt{W}\ket{w_+} = \gamma$. Moreover, $\ket{w_+}=W^{-1/2}\ket{f}$.
The flow energy of $\gamma$ is
\begin{align*}
\Jsf(f) &= \bra{f} W^{-1} \ket{f}\\
&= \braket{W^{-1/2}f}{W^{-1/2}f}\\
&= \braket{w_+}{w_+}\\
&= \|\ket{w_+}\|^2.
\end{align*}
Hence, the flow energy of $f$ equals the witness size of $x$.

Conversely, let $f$ be a unit $\gamma$-flow in $\K(x)$ and define the positive witness for $x$ as $\ket{w_+} = W^{-1/2}\ket{f}$. The same computation in the above paragraph shows that the flow energy of $f$ equals the positive witness size of $x$.
\end{proof}

\begin{lemma}\label{lem:neg_inst}
Let $x\in\{0,1\}^{n_d}$ be a negative instance. There is a bijection between negative witnesses $\bra{w_-}$ for $x$ and unit $\gamma$-potentials $p$ in $\K(x)$. Moreover, the negative witness size is equal to the effective capacitance of $\gamma$ in $\K(x)$; that is, $w_-(x, \P_\K) = C_\gamma(\K(x))$.
\end{lemma}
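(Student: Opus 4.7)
The proof will closely mirror the structure of \Cref{lem:pos_inst}: unpack the span program definition to show that the constraints on a negative witness are precisely the constraints defining a unit $\gamma$-potential, then compute $\|\bra{w}A\|^2$ to verify it equals the potential energy $\Jsf(p)$.

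First I would analyze the condition $\bra{w_-}A\Pi_{\H(x)}=0$. Since $A=\boundary_d\sqrt{W}$ and $\H(x)=\spn\{\ket{\sigma_i}\colon x_i=1\}$ corresponds exactly to the $d$-simplices in $\K(x)$, this condition says $\bra{w_-}\boundary_d\sqrt{W}\ket{\sigma_i}=0$ for every $\sigma_i\in\K(x)_d$. Pulling out the positive scalar $\sqrt{w(\sigma_i)}$, this simplifies to $\bra{w_-}\boundary_d[\K(x)]=0$, equivalently $\coboundary_{d-1}[\K(x)]\ket{w_-}=0$. Combined with the target constraint $\braket{w_-}{\tau}=\bra{w_-}\gamma=1$, we recover exactly the two defining conditions of a unit $\gamma$-potential in $\K(x)$. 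This gives the claimed bijection $\bra{w_-}\leftrightarrow p$ by setting $p=\ket{w_-}$ (there is no weight rescaling needed here, unlike in the positive case).

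Next I would compute the witness size. By definition,
\[
\|\bra{w_-}A\|^2 = \bra{w_-}\boundary_d\sqrt{W}\sqrt{W}\coboundary_{d-1}\ket{w_-} = \bra{w_-}\boundary_d[\K]\,W\,\coboundary_{d-1}[\K]\ket{w_-}.
\]
Writing this as a sum over $d$-simplices of $\K$ and identifying $\bra{w_-}\boundary_d\ket{\sigma}=(\coboundary_{d-1}[\K]w_-)(\sigma)$, we get
\[
\|\bra{w_-}A\|^2 = \sum_{\sigma\in\K_d} w(\sigma)\,(\coboundary_{d-1}[\K]w_-)(\sigma)^2 = \Jsf(p),
\]
which is precisely the potential energy of $p$ evaluated on the ambient complex $\K$. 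Minimizing over all valid negative witnesses, which by the bijection is the same as minimizing over unit $\gamma$-potentials in $\K(x)$, yields $w_-(x,\P_\K)=\C_\gamma(\K(x),\K)$.

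The proof is essentially a routine unpacking of definitions; the span program was designed so that this correspondence falls out cleanly. The only mild subtlety, and the one place I would be careful, is distinguishing between $\coboundary_{d-1}[\K(x)]$ (which appears in the witness-validity constraint, since the projection restricts to simplices of $\K(x)$) and $\coboundary_{d-1}[\K]$ (which appears in the witness-size computation, since $A$ uses the full boundary of $\K$). This asymmetry is exactly what makes the answer $\C_\gamma(\K(x),\K)$ rather than $\C_\gamma(\K(x),\K(x))$, matching the two-argument definition of effective capacitance given earlier.
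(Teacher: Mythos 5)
Your proof is correct and follows essentially the same route as the paper's: identify the negative witness with a $(d-1)$-chain $p$, use the constraint $\bra{w_-}A\Pi_{\H(x)}=0$ to recover $\coboundary_{d-1}[\K(x)]p=0$ and the target constraint to recover $p^T\gamma=1$, then compute $\|\bra{w_-}A\|^2 = (\coboundary[\K]p)^T W(\coboundary[\K]p) = \Jsf(p)$. Your closing remark correctly pinpoints the one subtlety---the constraint is on $\coboundary[\K(x)]$ but the energy is measured via $\coboundary[\K]$---which is exactly the reason the effective capacitance is the two-argument $\C_\gamma(\K(x),\K)$.
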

\begin{proof}
Let $\bra{w_-}$ be a negative witness for $x$. As $\bra{w}$ is a linear function from $C_{d-1}(\K)$ to $\R$ we may view it as a $(d-1)$-chain $p^{T} = \bra{w}$.
Since $\braket{w_{-}}{\gamma} = 1$, then $p^{T}\gamma = 1$.
To show that $p$ is a unit $\gamma$-potential we must show that the coboundary of $p$ is zero in $\K(x)$.
By the definition of a negative witness we have
\begin{align*}
0 &= \bra{w_-} \partial_d \sqrt{W} \Pi_{\K(x)} \\
&= \bra{p} \partial_d \sqrt{W} \Pi_{\K(x)} \\
&= \bra{\delta_d(p)} \sqrt{W} \Pi_{\K(x)}.
\end{align*}
Since $\sqrt{W}$ is a diagonal matrix and $\Pi_{\K(x)}$ restricts the coboundary to the subcomplex $\K(x)$ we see that $\braket{\delta_d(p)}{\sigma} = 0$ for any $\sigma \in \K(x)_d$.
To show that the witness size of $\bra{w_-}$ is equal to the potential energy of $p$ we have
\begin{align*}
\| \bra{w_-} \partial_d \sqrt{C}\| ^2 &= \braket{ p \partial_d \sqrt{W}}{p \partial_d \sqrt{W}}\\
&= \braket{\sqrt{W}\delta_d(p) }{\sqrt{W}\delta_d(p) } \\
&= \sum_{\sigma \in \K_d} \braket{\delta_d(p)}{\sigma}^2 w(\sigma) \\
&= \J(p).
\end{align*}

Conversely, let $p$ be a unit $\gamma$-potential for $\K(x)$ we construct a negative witness for $x$ by setting $\bra{w_-} \coloneqq p^{T}$. Since the coboundary of $p$ is zero in $\K(x)$ we have $\braket{\delta_p(p)}{\sigma} = 0$ for each $\sigma \in \K(x)_d$ which implies $\bra{w_-}\partial_d \sqrt{W} \Pi_{\K(x)} = 0$ by the reasoning in the previous paragraph.
Also by the previous paragraph we have that the potential energy of $p$ is equal to the negative witness size of $\bra{w_-}$ which concludes the proof.
\end{proof}

From these two lemmas we obtain the main theorem of the section, the quantum query complexity of $\gamma$.

\begin{theorem}\label{thm:querycomplexity}
    Given a $d$-dimensional simplicial complex $\K$, a $(d-1)$-dimensional cycle $\gamma$ that is null-homologous in $\K$, and a $d$-dimensional subcomplex $\K(x) \subseteq \K$, there exists a quantum algorithm deciding whether or not $\gamma$ is null-homologous in $\K(x)$ with quantum query complexity $O \left( \sqrt{\RE_{\max}(\gamma) \C_{\max}(\gamma)} \right)$, where $\RE_{\max}(\gamma)$ is the maximum finite effective resistance $\RE_{\gamma}(\L)$ in any subcomplex $\L\subset\K$, and $\C_{\max}(\gamma)$ is the maximum finite effective capacitance $\C_{\gamma}(\L,\K)$ in any subcomplex $\L\subset\K$.
\end{theorem}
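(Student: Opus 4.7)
The proof is essentially an assembly of the pieces developed throughout \Cref{sec:span_homology} and \Cref{sec:witness_sizes}. The plan is to instantiate the span program $\P_\K$ from \Cref{sec:span_homology} with target vector $\ket{\tau} = \gamma$, and then invoke \Cref{thm:reichardt} after bounding the maximum positive and negative witness sizes over all inputs $x\in\{0,1\}^{n_d}$.

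First I would recall that by construction, $\P_\K$ decides the boolean function $f:\{0,1\}^{n_d}\to\{0,1\}$ where $f(x)=1$ iff $\gamma$ is null-homologous in the subcomplex $\K(x)$, so in particular it decides null-homology of $\gamma$ in the specified $\K(x)\subseteq\K$. Now I need values of $W_+(f,\P_\K)=\max_{x\in f^{-1}(1)}w_+(x,\P_\K)$ and $W_-(f,\P_\K)=\max_{x\in f^{-1}(0)}w_-(x,\P_\K)$.

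For $W_+$, I would apply \Cref{lem:pos_inst}: for every positive instance $x$, $w_+(x,\P_\K)=\RE_\gamma(\K(x))$, which is finite precisely because $x$ is positive. Taking the maximum over positive instances, $W_+(f,\P_\K)\leq \RE_{\max}(\gamma)$ by the definition of $\RE_{\max}(\gamma)$ as the maximum finite effective resistance of $\gamma$ over all subcomplexes $\L\subset\K$. Symmetrically, for $W_-$, I would apply \Cref{lem:neg_inst}: for every negative instance $x$, $w_-(x,\P_\K)=\C_\gamma(\K(x),\K)$, which is finite because $x$ is negative, and taking the maximum gives $W_-(f,\P_\K)\leq \C_{\max}(\gamma)$.

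Finally I would plug these bounds into \Cref{thm:reichardt} to obtain a bounded-error quantum algorithm deciding $f$ with query complexity $O(\sqrt{W_+(f,\P_\K)W_-(f,\P_\K)})=O(\sqrt{\RE_{\max}(\gamma)\C_{\max}(\gamma)})$. There is no real obstacle here; the only subtlety worth flagging is that \Cref{thm:reichardt} requires witness sizes that are finite on their respective sides, which is exactly guaranteed by the fact that positive instances admit $\gamma$-flows and negative instances admit $\gamma$-potentials, so the max in the definition of $\RE_{\max}$ and $\C_{\max}$ (restricted to \emph{finite} values) is attained on the relevant instances.
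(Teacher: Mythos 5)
Your proof is correct and takes essentially the same route as the paper: instantiate the span program $\P_\K$, use \Cref{lem:pos_inst} and \Cref{lem:neg_inst} to identify the witness sizes with effective resistance and capacitance of $\gamma$ in $\K(x)$, and then apply \Cref{thm:reichardt} with $W_+(f,\P_\K)=\RE_{\max}(\gamma)$ and $W_-(f,\P_\K)=\C_{\max}(\gamma)$. Nothing is missing.
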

\begin{proof}
By Theorem~\ref{thm:reichardt}, the span program $\P_\K$ can be converted into a quantum algorithm whose query complexity is $O \left( \sqrt{W_+(f, \P_\K) W_-(f, \P_\K)} \right)$ where $W_+(f,\P_\K)=\max_{x \in f^{-1}(1)} \mathcal{R}_\gamma(\K(x))=\RE_{\max}(\gamma)$ and $W_-(f, \P_\K)=\max_{x \in f^{-1}(0)} \C_\gamma(\K(x),\K)=\C_{\max}(\gamma)$.
\end{proof}

\subsection{Time Efficient Implementations of the Span Program.}
\label{sec:time_complexity}

We have given bounds on the query complexity of null-homology testing; however, this does not imply a bound on the time complexity of evaluating this span program, as the query complexity does not account for the work outside of the oracle calls. In \Cref{sec:evaluate}, we describe the details of an implementation of this algorithm. For certain special cases, we are able to analyse the time complexity of the algorithm. We describe this special case below.
\par 
There are two obstacles to a time-efficient implementation of the span program: the weights and the input cycle $\gamma$. The weights on the $d$-simplices make it difficult to implement the matrix $\boundary\sqrt{W}$, as the weights on the simplices can be arbitrary real numbers. The input cycle $\gamma$ is difficult to create on a quantum computer as the entries of $\gamma$ can also be arbitrary real numbers.
\par
Accordingly, we can give a quantum algorithm of bounded time complexity in one particular instance: when $\K$ is unweighted and $\gamma$ is the boundary of a $d$-simplex.   (We do not require the $d$-simplex to actually appear in the complex.) While this is only a special case of the generic null-homology testing algorithm, this is the only case we need for the incremental algorithm for computing Betti numbers (\Cref{alg:incremental_algorithm}). The time complexity of this case is given in the following theorem.

\begin{restatable}{theorem}{thmtimecomplexity}\label{thm:null_homology_time_complexity}
  Let $\K$ be an unweighted simplicial complex with $n_0$ vertices, let $\gamma\in C_{d-1}(\K)$ a null-homologous cycle in $\K$, and $\K(x) \subset \K$ be a simplicial complex. Furthermore, assume that $\gamma$ is the boundary of a $d$-simplex and the complex is unweighted. There is a quantum algorithm for deciding if $\gamma$ is null-homologous in $\K(x)$ that runs in time
  $$
    \Tilde{O}\left( \sqrt{\frac{\RE_{\max}(\gamma)\C_{\max}(\gamma)}{\tilde{\lambda}_{\min}}} n_0 + \sqrt{\RE_{\gamma}(\K)n_0}\right),
  $$
  where $\RE_{\max}(\gamma)$ is the maximum finite effective resistance $\RE_{\gamma}(\L)$ of $\gamma$ in any subcomplex $\L\subset\K$, $\C_{\max}(\gamma)$ is the maximum finite effective capacitance $\C_{\gamma}(\L, \L)$ in any subcomplex $\K(x)$, and $\tilde{\lambda}_{\min}$ is the spectral gap of the normalized up-Laplacian $\tilde{L}_{d-1}^{up}[\K]$.
\end{restatable}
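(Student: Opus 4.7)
The plan is to instantiate the time-efficient span program evaluation framework of Reichardt, with the refinements of Belovs--Reichardt and Jeffery--Kimmel, on $\P_\K$ from \Cref{sec:span_homology}, and to carefully bound each primitive operation.

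First I would construct a block encoding of the span program matrix. Because $\K$ is unweighted, that matrix is $A=\partial_d$; each column has at most $d+1 \le n_0$ non-zero entries of value $\pm 1$ corresponding to the faces of the indexed $d$-simplex. Using the standard sparse-matrix block-encoding construction together with a sparse-access oracle returning the $j$-th face of a queried simplex in $\tilde{O}(\log n_0)$ time, I obtain a block encoding of $A/\sqrt{n_0}$ whose defining unitary is implementable in time $\tilde{O}(n_0)$; the $\sqrt{n_0}$ normalization is consistent with $\sigma_{\max}(\partial_d) = O(\sqrt{n_0})$ from \Cref{lem:upper_bound_largest_singular_value_coboundary}. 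Combining this with the input oracle $\oracle_x$ yields a block encoding of $A(x)/\sqrt{n_0}$ at the same asymptotic cost.

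Second, I would run the span program walk with phase estimation. Using the reflections $R_1 = 2\Pi_{\ker A(x)} - I$ and $R_2 = I - 2\ket{\tau}\bra{\tau}/\|\tau\|^2$, phase estimation on $U = R_1 R_2$ applied to $\ket{\tau}/\|\tau\|$ distinguishes positive from negative instances. The standard span program analysis gives that the eigenphase to be resolved has magnitude $\Omega(\sqrt{\tilde{\lambda}_{\min}/(W_+ W_-)})$, where the extra $\tilde{\lambda}_{\min}$ factor appears because the block encoding is rescaled by $\sqrt{n_0}$ so the walk inherits its gap from the smallest non-zero eigenvalue of $\tilde{L}_{d-1}^{up}[\K]$. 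This yields $\tilde{O}(\sqrt{W_+ W_-/\tilde{\lambda}_{\min}})$ applications of $U$ at $\tilde{O}(n_0)$ each; substituting $W_+ = \RE_{\max}(\gamma)$ and $W_- = \C_{\max}(\gamma)$ from \Cref{lem:pos_inst} and \Cref{lem:neg_inst} gives the first term.

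Third, I would account for preparation of the initial walk state. Because $\gamma = \partial\sigma$ is a signed sum of $d+1$ basis states, the normalized target $\ket{\gamma}/\|\gamma\|$ is preparable in $\tilde{O}(\log n_0)$ time. However, the walk algorithm requires embedding it as a subnormalized component of an initial state whose amplitude inside the walk-invariant subspace is $\Theta(1/\sqrt{\RE_\gamma(\K)})$, since the squared $\ell^2$-mass of the minimum-energy $\gamma$-flow in $\K$ equals $\RE_\gamma(\K)$ by \Cref{lem:effective_resistance_flows}. Amplitude amplification boosts this amplitude to $\Theta(1)$ in $\tilde{O}(\sqrt{\RE_\gamma(\K)})$ rounds, each invoking a single application of the block encoding restricted to the $d+1$ faces of $\sigma$ at cost $\tilde{O}(\sqrt{n_0})$, giving the additive $\tilde{O}(\sqrt{\RE_\gamma(\K)\,n_0})$ term.

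The main obstacle is the spectral gap analysis in the second step: one must argue that the eigenphase gap of $U$ is governed by the normalized Laplacian spectral gap $\tilde{\lambda}_{\min}$ of the ambient $\K$ rather than of $\K(x)$, invoking \Cref{lem:normalized_vs_unnormalized_spectral_gap} to translate between normalized and unnormalized Laplacian spectra and verifying that the $1/\sqrt{n_0}$ block encoding normalization is compatible with the singular value thresholds used by the span program evaluator. Once this spectral gap bound is in place, the stated time complexity follows directly by combining it with the cost bookkeeping from Steps 1 and 3.
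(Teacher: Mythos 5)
Your proposal takes a genuinely different route from the paper's, replacing the explicit Szegedy-walk implementation of $R_{\ker\boundary}$ with a block-encoding/QSVT-style primitive, and it does arrive at the right final expression. But the derivation has a real gap in how the $\tilde{\lambda}_{\min}$ factor is obtained. You assert that ``the eigenphase to be resolved has magnitude $\Omega(\sqrt{\tilde{\lambda}_{\min}/(W_+W_-)})$'' because ``the block encoding is rescaled by $\sqrt{n_0}$ so the walk inherits its gap from the smallest non-zero eigenvalue of $\tilde{L}_{d-1}^{up}[\K]$.'' This conflates two different things. In the paper's algorithm the \emph{outer} phase-estimation precision is the usual span-program $\Theta(1/\sqrt{W_+W_-})$, with no Laplacian dependence; the $\tilde{\lambda}_{\min}$ factor enters entirely through the \emph{cost of implementing each reflection} $R_{\ker\boundary}$. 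Concretely, the paper writes $R_{\ker\boundary} = M_B^\dagger R_{U^-} M_B$ where $U=R_C R_B$ is a Szegedy product, and the phase gap of $-U$ is $\Omega(\sqrt{\tilde{\lambda}_{\min}/(d+1)})$ precisely because $M_C^\dagger M_B$ is \emph{degree-normalized}: $M_C$ is built from $\sqrt{w(\sigma)/\deg(\sigma)}$ amplitudes, so $(M_C^\dagger M_B)(M_C^\dagger M_B)^\dagger = \tfrac{1}{d+1}D^{-1/2}L_{d-1}^{up}D^{-1/2}$ (\Cref{lem:ker_mcmb}, \Cref{lem:phase_gap_U}). A uniform $1/\sqrt{n_0}$ block-encoding normalization, as in your Step 1, does \emph{not} reproduce this: the natural threshold you would get is $\sigma_{\min}(\partial_d)/\sqrt{n_0}=\sqrt{\lambda_{\min}(L_{d-1}^{up})/n_0}$, which differs from $\sqrt{\tilde{\lambda}_{\min}}$ by a factor depending on the degree distribution via \Cref{lem:normalized_vs_unnormalized_spectral_gap}, and neither bound dominates the other in general. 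So the $\tilde{\lambda}_{\min}$ you wrote down is not justified by your construction; you either need the degree-normalized walk as in the paper, or you would end up with a differently-parameterized bound.

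Two further inconsistencies are worth flagging. Your Step 1 charges $\tilde{O}(n_0)$ per application of the block-encoding unitary (unexplained, and surprising for a $(d{+}1)$-column-sparse sign matrix), while Step 3 charges $\tilde{O}(\sqrt{n_0})$ for ``a single application of the block encoding''; these cannot both be right. The paper gets the $n_0$ factor as $\sqrt{d+1}\cdot\sqrt{n_0}$: $\sqrt{(d+1)/\tilde\lambda_{\min}}$ Szegedy-walk calls per $R_{\ker\boundary}$, each at cost $T_B+T_C=\tilde{O}(\sqrt{d}+\sqrt{n_0})$, with $T_C=\tilde{O}(\sqrt{n_0})$ coming from the up-incidence array and the maximum up-degree of a $(d{-}1)$-simplex (\Cref{lem:coboundary_gate_unweighted}). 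Finally, your state preparation keeps only the $\sqrt{\RE_\gamma(\K)}$ contribution; the paper's \Cref{thm:init} performs a \emph{nested} amplitude amplification and incurs both $\sqrt{\RE_\gamma(\K)}$ and $\sqrt{1/\RE_\gamma(\K)}$, with the latter then controlled by the lower bound $1/\RE_\gamma(\K)\le n_0$ from \Cref{lem:effective_resistance_lower_bound}. Your streamlined accounting happens to land on the right second term, but you should either justify why the $\sqrt{1/\RE_\gamma(\K)}$ component never appears in your construction, or include it and absorb it as the paper does.
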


\subsection{Runtime of the Quantum Incremental Algorithm}

In the previous section, we saw an implementation of an algorithm for testing if the boundary of a $d$-simplex was null-homologous. Combined with the framework of the Incremental Algorithm (\Cref{alg:incremental_algorithm}), this allows us to compute the $d$-Betti number.

\thmincrementalalgorithmruntime*

\begin{proof}
    The Incremental Algorithm (\Cref{alg:incremental_algorithm}) incrementally adds each $d$ and $(d+1)$-simplex $\sigma$ to the simplicial complex and checks if the cycle $\boundary\sigma$ is null-homologous. We can use the span-program algorithm of \Cref{thm:null_homology_time_complexity} to check if $\boundary\sigma$ is null-homologous. The theorem follows by using this algorithm for each of the $(n_d+n_{d+1})$ $d$- and $(d+1)$-simplices.
\end{proof}

\subsection{Comparison with Existing Algorithms.}
\label{sec:comparision_existing_algorithms}

In this section, we compare our algorithm to existing algorithms for QTDA. This presentation specifically compares our algorithm to the LGZ algorithm~\cite{Lloyd2016lgz}, but most of these ideas also hold for other existing QTDA algorithms.

\paragraph{Input.} 

Our algorithm makes different assumptions about how the simplicial complex is stored compared to previous algorithms. We assume we have a list of simplices in the simplicial complex; this is required for the incremental algorithm as we must iteratively add the simplices and test if their boundaries are null-homologous. Compare this to existing quantum algorithms, which assume we have a way of checking if a simplex is included in the simplicial complex. 
\par 
Our algorithm assumes we have a \textit{\textbf{list oracle}} that can return the simplices in the simplicial complex:
$$
    \mathcal{O}_{list}:\ket{i}\ket{0}\to\ket{i}\ket{\sigma_i},
$$
where $\sigma_i$ is the $i$th $d$-simplex of our simplicial complex. 
\par 
Compare this to the \textbf{\textit{membership oracle}} used in other QTDA algorithms that can check whether a simplex is in the simplicial complex: 
$$
    \mathcal{O}_{memb}:\ket{\sigma_i}\ket{j}\to\ket{\sigma_i}\ket{j\oplus b_i},
$$ 
where $b_i$ is a bit indicating if $\sigma_i\in\K_d$.
\par 
These oracles come with different trade-offs. The oracle $\mathcal{O}_{memb}$ does not require computing the set of simplices in advanced, while $\mathcal{O}_{list}$ does. However, algorithms that use the membership oracle $\mathcal{O}_{memb}$ pay for this in the time it takes to compute a uniform superposition of the $d$-simplices, a costly operation leading to a factor of $\zeta_d = n_d/\binom{n_{0}}{d+1}$ in the runtime. Thus, our algorithm is better suited for \textit{sparse} simplicial complexes---complexes where $n_d << \binom{n}{d+1}$ and where the list of simplices can be computed efficiently---a family of complexes where existing QTDA algorithms perform poorly; see the section ``Runtime'' below for more discussion. 

\paragraph{Output.}

The LGZ algorithm estimates the $d$th Betti number up to an additive factor by returning a value $\chi_{d}$ such that $\left| \chi_{d} - \frac{\beta_d}{\dim C_d(\K)} \right| \leq \epsilon$; the problem of computing $\chi_{d}$ has been deemed \textit{\textbf{Betti number estimation}}. Our algorithm instead returns the Betti number $\beta_k$.

\paragraph{Runtime.}

To compare our algorithm to existing quantum algorithms, we bound the runtime of our algorithm with respect to the spectral gap of the combinatorial Laplacian. Note that while we can bound the runtime of our algorithm by the inverse of the spectral gap, this bound is not necessarily tight.
\begin{corollary}
    Let $\K$ be a simplicial complex with $n_i$ $i$-simplices. There is a quantum algorithm for computing the $d$th Betti number $\beta_d$ in time 
    $$
        \Tilde{O}\left(\Lambda_{\min}^{-3/2} n_0^{5/2}\cdot(n_{d} + n_{d+1})\right)
    $$        
    where $\Lambda_{\min}$ is the minimum spectral gap of $L_{d}[\L]$ over all subcomplexes $\L\subset\K$.
\end{corollary}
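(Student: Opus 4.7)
The plan is to substitute upper bounds on $\RE_{\max}$, $\C_{\max}$, and $\tilde{\lambda}_{\min}^{-1}$ in terms of $\Lambda_{\min}^{-1}$ and $n_0$ into the runtime bound from \Cref{thm:incremental_algorithm_runtime}, each bound following from a result established earlier in the paper.

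First, I bound $\RE_{\max}$. The boundary of any $d$-simplex has squared norm $d+1 \leq n_0$, and effective resistance scales quadratically with the norm of the cycle. By \Cref{lem:spectral_gap_effective_resistance}, for any unit cycle $\gamma \in \im\boundary_d[\L]$ we have $\RE_\gamma(\L) \leq \lambda_{\min}^{-1}(L_{d-1}^{up}[\L])$, and by \Cref{lem:spectral_gap_combinatorial_up_down} this is at most $\lambda_{\min}^{-1}(L_{d-1}[\L]) \leq \Lambda_{\min}^{-1}$ (interpreting $\Lambda_{\min}$ as the minimum spectral gap of the relevant combinatorial Laplacians over subcomplexes). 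The same argument applied to boundaries of $(d{+}1)$-simplices gives $\RE_{\max} = O(n_0 \Lambda_{\min}^{-1})$. Next, \Cref{thm:cap_spectral_gap} directly yields $\C_{\boundary\sigma}(\L,\K) = O(n_0 \lambda_{\min}^{-1}(L_{d-1}[\L\cup\{\sigma\}])) = O(n_0 \Lambda_{\min}^{-1})$, so $\C_{\max} = O(n_0 \Lambda_{\min}^{-1})$. Finally, by \Cref{lem:normalized_vs_unnormalized_spectral_gap}, $\tilde{\lambda}_{\min} \geq \lambda_{\min}/d_{\max}$; since the degree of any $d$-simplex is at most the number of remaining vertices, $d_{\max} \leq n_0$, so $\tilde{\lambda}_{\min}^{-1} = O(n_0 \Lambda_{\min}^{-1})$.

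Substituting these three bounds into \Cref{thm:incremental_algorithm_runtime} gives
\[
  \sqrt{\frac{\RE_{\max}\C_{\max}}{\tilde{\lambda}_{\min}}} \cdot n_0 \;=\; \sqrt{n_0 \Lambda_{\min}^{-1} \cdot n_0 \Lambda_{\min}^{-1} \cdot n_0 \Lambda_{\min}^{-1}} \cdot n_0 \;=\; O\!\left(n_0^{5/2}\Lambda_{\min}^{-3/2}\right),
\]
and $\sqrt{\RE_{\max} n_0} = O(n_0 \Lambda_{\min}^{-1/2})$. Multiplying by $(n_d + n_{d+1})$ then yields the claimed bound, modulo showing the first term dominates.

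The only nontrivial step is verifying that the first term dominates the second, i.e., that $n_0 \Lambda_{\min}^{-1/2} = O(n_0^{5/2}\Lambda_{\min}^{-3/2})$. This is equivalent to $\Lambda_{\min} \leq n_0^{3/2}$, which follows from \Cref{thm:upper_bound_maximum_eigenvalue_laplacian}: since $\lambda_{\max}(L_d) \leq n_0$, we have $\Lambda_{\min} \leq n_0 \leq n_0^{3/2}$. This is the main (and only real) subtlety of the proof; everything else is direct substitution of earlier results.
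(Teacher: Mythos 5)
Your proof is correct and follows essentially the same route as the paper: substitute the bounds from \Cref{lem:spectral_gap_effective_resistance}, \Cref{thm:cap_spectral_gap}, and \Cref{lem:normalized_vs_unnormalized_spectral_gap} into the runtime of \Cref{thm:incremental_algorithm_runtime}, accounting for the $\|\boundary\sigma\|$ normalization. You go slightly further than the paper's proof in one useful respect: you explicitly verify that the second term $\sqrt{\RE_{\max}n_0}(n_d+n_{d+1})$ is absorbed into the first term, using $\Lambda_{\min}\leq n_0$ (which follows from \Cref{thm:upper_bound_maximum_eigenvalue_laplacian}); the paper silently drops this term, but your check is needed for the stated bound to be complete.
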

\begin{proof}
    This follows from \Cref{thm:incremental_algorithm_runtime} by applying the bounds of \Cref{lem:spectral_gap_effective_resistance}, \Cref{thm:cap_spectral_gap}, and \Cref{lem:normalized_vs_unnormalized_spectral_gap} to bound $\RE_{\max}$, $\C_{\max}$, and $\frac{1}{\tilde{\lambda}_{\min}}$ respectively. The bounds on the effective resistance of \Cref{lem:spectral_gap_effective_resistance} only apply to unit vectors, so one factor of $n_0$ is because of the fact that $\|\boundary\sigma\|=\sqrt{d}$, so $\RE_{\max}(\boundary\sigma)\leq d\Lambda_{\min}^{-1}$.
\end{proof}
  
Compare this to the LGZ algorithm, which runs in time 
\[
    O \left( \epsilon^{-2}\lambda_{\min}^{-1}n_{0}^4\sqrt{\zeta_d^{-1}} \right) 
\] 
where $n_{0}$ is the number of vertices, $\epsilon$ is the error term, $\lambda_{\min}$ is the spectral gap of the combinatorial Laplacian, and $\zeta_d$ is a density term given by 
\[
    \zeta_d = \frac{n_d}{\binom{n_{0}}{d+1}}.
\] 
The density term is the ratio of the number $k$-simplices in the $\K$ to number of $k$-simplices in the complete complex on $n_{0}$ vertices, which may be exponentially small. For example, when $\K$ is sparse, meaning that the number of $d$-simplices is polynomial in the number of vertices, the density may be exponentially small with respect to $d$. Specifically, if $\zeta_d = \Omega \left( n_{0}^{O(1)}/n_0^{d+1} \right)= \Omega \left( 1/n_0^{O(d+1)} \right)$, then the runtime of LGZ is 
\[
    O \left( \epsilon^{-2}\lambda_{\min}^{-1} n_0^{O(d+1)}\right).
\] 
Compare this to our algorithm, which in this case has runtime
$$
    \Tilde{O}\left(\Lambda_{\min}^{-3/2} n_0^{O(1)}\right).
$$        
In this case, our algorithm has a better asymptotic dependence on the size of the complex as it avoids the factor of $\binom{n_{0}}{d+1}$. This factor of $\binom{n_{0}}{d+1}$ shows up in many of the alternatives to the LGZ algorithm, so our algorithm has a more favorable dependence on $n_0$ compared to these algorithms as well. Additionally, we note that the term $\Lambda_{\min}$ in our algorithm and $\lambda_{\min}$ in the LGZ algorithm are similar but not directly comparable. See the following section.

\paragraph{Effective Resistance and Capacitance vs.~Spectral Gap.} Our algorithm is parameterized by the maximum effective resistance and capacitance of all subcomplexes of a simplicial complex and the square root of the inverse of spectral gap of the simplicial complex, whereas previous QTDA algorithm are only parameterized by the inverse of the spectral gap of the simplicial complex. Although for a \textit{fixed} complex, effective resistance and capacitance are upper bounded by the inverse of the spectral gap, the fact that our algorithm is parameterized by the maximum effective resistance and capacitance over \textit{all} subcomplexes means that the runtime of our algorithm is not entirely comparable to the runtime of existing QTDA algorithms. It is possible are complexes where the effective resistance and capacitance in subcomplexes are significantly lower than the spectral gap of the entire complex, and complexes where the effective resistance or capacitance of a cycle in a subcomplex is larger than the spectral gap of the entire complex. The complete complex is an example of the second case, as it has the maximal possible spectral gap of $n_0$.

\paragraph{Randomized Order for the Incremental Algorithm.} Building on the previous point, while our algorithm is parameterized by the maximum effective resistance and capacitance in various subcomplexes, our algorithm can also incrementally add the simplices \textit{in any order}. This is potentially beneficial as a simplex may have smaller effective resistance or capacitance in one order than another. Of course, we likely will not know in advance whether or not a particular order of the simplices results in less or greater resistance and capacitance. However, we still may able to use this fact to our advantage, as we could run our algorithm multiple times with different orders to gain confidence that our Betti number computations are accurate, which is not the case with previous QTDA algorithms.

\section{Bounds on Effective Resistance and Capacitance.}
\label{sec:bounds}

In this section, we provide upper bounds on the resistance and capacitance of a cycle $\gamma$ in an simplicial complex $\K$. Throughout this section, all simplicial complexes are \textbf{unweighted}.

Our upper bounds are polynomial in the number of $d$-simplices and the cardinality of the torsion subgroup of the relative homology groups. In particular, our bounds on resistance and capacitance are dependent on the maximum cardinality of the torsion subgroup of the relative homology group $H_{d-1}(\mathcal{L}, \mathcal{L}_0, \Z)$, where $\mathcal{L}\subset\K$ is a $d$-dimensional subcomplex and $\mathcal{L}_0 \subset \L$ is a $(d{-}1)$-dimensional subcomplex. In the worst case, our upper bounds are exponential.
\par
In Theorem~\ref{thm:resistance_lower_bound} we provide an example of a simplicial complex containing a cycle $\gamma$ whose effective resistance is exponential in the number of simplices in the complex.
It is important to reiterate that our bounds are in terms of the torsion of the relative homology groups, not the torsion of the (non-relative) homology groups. The simplicial complex we provide has no torsion in its homology groups, only torsion in its relative homology groups.

\subsection{Upper Bounds}

Our upper bounds rely on a change of basis on the boundary matrix called the Smith normal form which reveals information about the torsion subgroup of $H_{d-1}(\K, \Z)$. We state the normal form theorem below.

\begin{theorem}[Munkres, Chapter 1 Section 11~\cite{Munkres1984}]\label{thm:normalform}
There are bases for $C_d(\K)$ and $C_{d-1}(\K)$ such that the matrix for the boundary operator $\partial_d \colon C_d(\K, \Z) \rightarrow C_{d-1}(\K, \Z)$ is in \EMPH{Smith normal form}, i.e. \[\tilde{\partial}_d = \begin{bmatrix} D  & 0  \\ 0  & 0 \end{bmatrix} \]
where $D$ is a diagonal matrix with positive integer entries $d_1,\dots,d_m$ such that each $d_i$ divides $d_{i+1}$ and each $0$ is a zero matrix of appropriate dimensionality. The normal form of $\partial_d$ satisfies the following properties:
\begin{enumerate}
\item The entries $d_1,\dots,d_m$ correspond to the torsion coefficients of $H_{d-1}(\K, \Z) \cong \Z^{\beta_d} \oplus \Z_{d_1} \oplus \dots \oplus \Z_{d_m}$ $($where $\Z_1=0)$,

\item The number of zero columns is equal to the dimension of $\ker(\partial_d)$.
\end{enumerate}
Moreover, the boundary matrix $\boundary$ in the standard basis can be transformed to $\tilde{\boundary}$ by elementary row and column operations. If $\boundary$ is square, these operations multiply $\det\boundary$ by $\pm 1$.
\end{theorem}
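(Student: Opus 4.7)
The plan is to prove the existence of Smith normal form by an explicit algorithm, then derive the topological properties as consequences of the structure theorem for finitely generated abelian groups.

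First, I would give an algorithmic existence proof by induction on the size of the matrix. At each stage, locate the non-zero entry of smallest absolute value and use integer row and column swaps to move it to position $(1,1)$, call its value $a$. For every other entry $b$ in the first row or first column, apply the division algorithm $b = qa + r$ with $0 \le r < |a|$ and subtract $q$ times the row or column containing $a$. If any remainder $r$ is non-zero then $|r| < |a|$, and we restart with this new smallest entry; since the absolute value of the pivot strictly decreases, this terminates. Once the first row and column have been cleared except for $d_1 := a$, we need $d_1$ to divide every remaining entry. If some entry $c$ in the $(i,j)$ position of the submatrix is not divisible by $d_1$, add row $i$ to row $1$ and repeat the reduction; the new pivot is a proper divisor of $d_1$, so this process also terminates. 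Recursing on the submatrix produces the diagonal $d_1 \mid d_2 \mid \cdots \mid d_m$, completing the existence of the normal form. Throughout, every operation is either a swap (negates the determinant), addition of an integer multiple of one row or column to another (preserves the determinant), or negation of a row or column (negates the determinant), proving the final claim about $\det\boundary$.

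Next I would derive Property (2) directly from the normal form: in the new bases $\tilde\sigma_1,\dots,\tilde\sigma_{n_d}$ of $C_d$ and $\tilde\tau_1,\dots,\tilde\tau_{n_{d-1}}$ of $C_{d-1}$ we have $\partial_d \tilde\sigma_i = d_i \tilde\tau_i$ for $i \le m$ and $\partial_d \tilde\sigma_i = 0$ for $i > m$, so the zero columns $\tilde\sigma_{m+1},\dots,\tilde\sigma_{n_d}$ form a $\Z$-basis of $\ker\partial_d$.

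For Property (1), the Smith form exhibits $\im\partial_d$ as the subgroup of $C_{d-1}$ generated by $d_1 \tilde\tau_1,\dots,d_m \tilde\tau_m$, hence
\[
    C_{d-1}/\im\partial_d \;\cong\; \Z_{d_1} \oplus \cdots \oplus \Z_{d_m} \oplus \Z^{n_{d-1}-m}.
\]
To pass from this quotient to $H_{d-1}(\K,\Z) = \ker\partial_{d-1}/\im\partial_d$, I would use the short exact sequence
\[
    0 \longrightarrow \ker\partial_{d-1}/\im\partial_d \longrightarrow C_{d-1}/\im\partial_d \longrightarrow C_{d-1}/\ker\partial_{d-1} \longrightarrow 0.
\]
The right-hand group is isomorphic to $\im\partial_{d-1} \subseteq C_{d-2}$, which is free abelian as a subgroup of a free abelian group, so the sequence splits. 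Consequently the torsion subgroup of $C_{d-1}/\im\partial_d$ coincides with the torsion subgroup of $H_{d-1}$, yielding $\mathrm{Tor}(H_{d-1}(\K,\Z)) \cong \Z_{d_1}\oplus\cdots\oplus\Z_{d_m}$; the free rank is by definition the Betti number.

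The main obstacle is Property (1): the Smith normal form describes $\im\partial_d$ as a subgroup of the ambient $C_{d-1}$, but the homology group is a subquotient with denominator $\ker\partial_{d-1}$ rather than $C_{d-1}$. The splitting of the short exact sequence above, which is itself a consequence of the fact that subgroups of free abelian groups are free, is what lets us read the torsion coefficients of $H_{d-1}$ directly off the diagonal entries produced by the algorithm; everything else is bookkeeping over a Euclidean domain.
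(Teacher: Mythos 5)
Your proof is correct and, indeed, reproduces in outline the standard argument that appears in Munkres (Chapter~1, Section~11), which the paper cites without reproving: the Euclidean-algorithm reduction to diagonal form with the divisibility chain, followed by the structure-theorem bookkeeping via the short exact sequence $0 \to H_{d-1} \to C_{d-1}/\im\partial_d \to \im\partial_{d-1} \to 0$ that splits because $\im\partial_{d-1}$ is free. The one small loose end is that after the Euclidean reduction you should negate a row or column if necessary to make the pivot $d_1$ positive (the theorem requires positive integer entries); you list negation as an allowed elementary operation when tracking $\det\boundary$, so this is a bookkeeping omission rather than a gap. One other remark, not about your argument but about the paper's statement: the exponent on the free part is written as $\beta_d$, but the free rank of $H_{d-1}$ is $\beta_{d-1}$; your observation that "the free rank is by definition the Betti number" is the right reading, and the subscript in the statement is evidently a typo.
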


Using Theorem~\ref{thm:normalform}, we obtain an upper bound on the determinants of the square submatrices of the boundary matrix $\partial_d[\K]$ in terms of the \textit{relative homology groups} of $\K$.
Let $\mathcal{L}$ be $d$-dimensional subcomplex of $\K$, and let $\L_0$ be a $(d{-}1)$-dimensional subcomplex of $\K$. The \EMPH{relative boundary matrix} $\partial_d[\mathcal{L}, \mathcal{L}_0]$ is the submatrix of $\partial_d$ obtained by including the columns of the $d$-simplices in $\mathcal{L}$ and excluding the rows of the $(d{-}1)$-simplices in $\mathcal{L}_0$.
With the relative boundary matrices, one can define the \EMPH{relative homology groups} as $H_d(\mathcal{L}, \mathcal{L}_0, \Z) = \ker\boundary_d[\L,\L_0] / \im\boundary_{d+1}[\L,\L_0]$.
More information on the relative boundary matrix can be found in~\cite{Dey2011}.
We denote the cardinality of the torsion subgroup of the relative homology group $H_{d-1}(\mathcal{L}, \mathcal{L}_0, \Z)$ by $\mathcal{T}(\mathcal{L}, \mathcal{L}_0)$.
Similarly, we denote the maximum $\mathcal{T}(\mathcal{L}, \mathcal{L}_0)$ over all relative homology groups as $\mathcal{T}_{\max}(\K)$.

\begin{lemma}\label{lem:detbound}
Let $\partial_d[\mathcal{L}, \mathcal{L}_0]$ be a $k \times k$ square submatrix of $\partial_d$ constructed by including columns for the $d$-simplices in $\mathcal{L}$ and excluding rows for the $(d-1)$-simplices in $\mathcal{L}_0$.
The magnitude of the determinant of $\partial_d[\mathcal{L}, \mathcal{L}_0]$ is bounded above by the cardinality of the torsion subgroup of $H_{d-1}(\mathcal{L}, \mathcal{L}_0, \Z)$, i.e 
\[
    | \det \left( \partial_d[\mathcal{L}, \mathcal{L}_0] \right)| \leq \mathcal{T}(\mathcal{L}, \mathcal{L}_0).
\]
\end{lemma}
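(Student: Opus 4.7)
The plan is to apply the Smith normal form (Theorem~\ref{thm:normalform}) to the relative boundary matrix $\partial_d[\L,\L_0]$ viewed as a matrix of integers, and to use the resulting diagonal entries to control the determinant and the torsion of $H_{d-1}(\L,\L_0;\Z)$ simultaneously. The key point is that the diagonal entries of the Smith normal form determine both the magnitude of the determinant and (up to a free summand) the torsion coefficients of the relative homology group, and that the ``extra'' free part cannot contribute once we have committed to a $k \times k$ matrix.

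Concretely, I would obtain from Theorem~\ref{thm:normalform} a diagonal matrix $\tilde{\partial}_d[\L,\L_0]$ with diagonal entries $d_1 \mid d_2 \mid \cdots \mid d_m$ (followed possibly by zero rows and columns), reached from $\partial_d[\L,\L_0]$ by elementary integer row and column operations. Because the matrix is square, Theorem~\ref{thm:normalform} guarantees that these operations multiply the determinant by $\pm 1$, so $|\det \partial_d[\L,\L_0]| = |\det \tilde{\partial}_d[\L,\L_0]|$. I would then split into two cases. If $m < k$, the Smith normal form has a zero row, so $|\det \partial_d[\L,\L_0]| = 0 \le \mathcal{T}(\L,\L_0)$, and the bound is trivial. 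If $m = k$ (full rank), then $|\det \partial_d[\L,\L_0]| = \prod_{i=1}^k d_i$, and since factors $d_i = 1$ multiply trivially, $\prod_{i=1}^k d_i = \prod_{d_i > 1} d_i$.

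It remains to identify $\prod_{d_i > 1} d_i$ with $\mathcal{T}(\L, \L_0)$. Applying the argument of Theorem~\ref{thm:normalform} to the relative chain complex, the Smith normal form describes the cokernel $C_{d-1}(\L,\L_0;\Z)/\im \partial_d[\L,\L_0]$ as $\Z^{k-m} \oplus \Z_{d_1} \oplus \cdots \oplus \Z_{d_m}$, whose torsion subgroup has order $\prod_{d_i > 1} d_i$. To transfer this to $H_{d-1}(\L,\L_0;\Z) = \ker \partial_{d-1}[\L,\L_0] / \im \partial_d[\L,\L_0]$, I would use the short exact sequence
\[
    0 \to H_{d-1}(\L,\L_0;\Z) \to C_{d-1}(\L,\L_0;\Z)/\im \partial_d[\L,\L_0] \xrightarrow{\partial_{d-1}[\L,\L_0]} \im \partial_{d-1}[\L,\L_0] \to 0,
\]
whose rightmost term is a subgroup of the free abelian group $C_{d-2}(\L,\L_0;\Z)$ and is therefore free. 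A short exact sequence of abelian groups ending in a free group splits, so $H_{d-1}(\L,\L_0;\Z)$ and the cokernel share the same torsion subgroup, of order $\prod_{d_i > 1} d_i = \mathcal{T}(\L,\L_0)$. Combining with the determinant computation yields $|\det \partial_d[\L,\L_0]| \le \mathcal{T}(\L,\L_0)$, with equality in the full-rank case.

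The main obstacle, and what the splitting argument is there to handle, is the fact that Smith normal form of $\partial_d[\L,\L_0]$ a priori only describes the cokernel of this single matrix, whereas $\mathcal{T}(\L,\L_0)$ is the torsion of the quotient by $\im \partial_d[\L,\L_0]$ inside the smaller group $\ker \partial_{d-1}[\L,\L_0]$. Once one observes that $\im \partial_{d-1}[\L,\L_0]$ is automatically free (as a subgroup of a free abelian group), the splitting pins down that the free part of the cokernel corresponds exactly to the ``lost'' free part, and the torsion matches across both groups.
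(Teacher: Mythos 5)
Your proof is correct and follows the same route as the paper: pass to Smith normal form, dispose of the singular case, and identify the determinant with the product of the diagonal entries, which in turn equals the order of the torsion subgroup. The one place you go further than the paper is the final identification: the paper simply cites Theorem~\ref{thm:normalform} to read off the torsion of $H_{d-1}(\L,\L_0;\Z)$ from the diagonal entries, whereas you re-derive this from scratch by observing that the Smith normal form describes the cokernel $C_{d-1}(\L,\L_0;\Z)/\im\partial_d[\L,\L_0]$ and then using the split short exact sequence $0 \to H_{d-1} \to C_{d-1}/\im\partial_d \to \im\partial_{d-1} \to 0$ (splitting because the cokernel image is free) to transport the torsion computation from the cokernel to the homology group. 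This is a standard argument that is implicitly folded into the cited theorem, so your version is a more self-contained rendering of the same proof rather than a genuinely different one; it is, if anything, slightly more careful since Theorem~\ref{thm:normalform} as stated speaks of absolute rather than relative homology, and your short-exact-sequence step makes the transfer to the relative setting explicit.
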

\begin{proof}
Without loss of generality, we assume that $\det(\partial_d[\mathcal{L}, \mathcal{L}_0]) \neq 0$; if $\det(\partial_d[\mathcal{L}, \mathcal{L}_0]) = 0$, the bound is trivial. Since $\partial_d[\mathcal{L}, \mathcal{L}_0]$ is a non-singular square matrix, its normal form $\tilde{\partial}_d[\mathcal{L}, \mathcal{L}_0]$ is a diagonal matrix $D = \diag(d_1,\dots,d_k)$.
The determinant is equal to $\pm\prod_{i=1}^{k} d_i$ and by Theorem~\ref{thm:normalform} the torsion subgroup of $H_{d-1}(\mathcal{L}, \mathcal{L}_0)$ is $\Z_{d_1} \oplus \dots \oplus \Z_{d_k}$ which has cardinality $\mathcal{T}(\mathcal{L}, \mathcal{L}_0) = \prod_{i=1}^k d_i$.
\end{proof}

\subsubsection{Upper Bounds on Effective Resistance} We are now ready to upper bound the effective resistance of a cycle in a simplicial complex.

\begin{theorem}\label{thm:resistance_upper_bound_torsion}
Let $\K$ be a $d$-dimensional simplicial complex and $\gamma$ a unit-length null-homologous $(d{-}1)$-cycle in $\K$. Let $n=\min\{n_{d-1},n_d\}$. The effective resistance of $\gamma$ is bounded above by $\RE_\gamma(\K) \in O \left( n^2 \cdot \mathcal{T}_{\max}(\K)^2 \right)$.
\end{theorem}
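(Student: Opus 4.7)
The plan is to exhibit a unit $\gamma$-flow $f$ with $\|f\|^{2}\in O(n^{2}\cdot\mathcal{T}_{\max}(\K)^{2})$; by \Cref{lem:effective_resistance_flows}, this immediately upper bounds $\RE_{\gamma}(\K)$. The flow will be constructed by solving a square system extracted from $\partial_{d}$ via Cramer's rule and then bounding each entry using \Cref{lem:detbound}.

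First I would pick a maximal linearly independent set of columns of $\partial_{d}$, indexed by a $d$-dimensional subcomplex $\L\subset\K$, and then pick a maximal linearly independent set of rows of $\partial_{d}[\L]$, indexed by $\K_{d-1}\setminus\L_{0}$ for some $(d{-}1)$-dimensional subcomplex $\L_{0}\subset\K$. This gives a non-singular square matrix $B=\partial_{d}[\L,\L_{0}]$ of size $k\times k$ with $k\leq\min\{n_{d-1},n_{d}\}=n$. Since $\gamma\in\im\partial_{d}$ and the columns of $\partial_{d}[\L]$ span the column space, and since the rows of $\partial_{d}[\L]$ indexed by $\L_{0}$ are linear combinations of the other rows, any solution $f'$ of $Bf'=\gamma|_{\K_{d-1}\setminus\L_{0}}$ lifts to a chain $f\in C_{d}(\K)$ supported on $\L_{d}$ with $\partial f=\gamma$ and $\|f\|=\|f'\|$.

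Next I apply Cramer's rule: $f'_{i}=\det(B_{i})/\det(B)$, where $B_{i}$ is $B$ with its $i$-th column replaced by $\gamma':=\gamma|_{\K_{d-1}\setminus\L_{0}}$. Since $B$ is a non-singular integer matrix, $|\det(B)|\geq 1$. Expanding $\det(B_{i})$ along column $i$ gives
\[
|\det(B_{i})|\;\leq\;\sum_{j=1}^{k}|\gamma'_{j}|\cdot|\det(M_{ji})|,
\]
where each minor $M_{ji}$ is itself a square submatrix of $\partial_{d}$ associated with some subcomplexes $\L',\L_{0}'$, so by \Cref{lem:detbound} we have $|\det(M_{ji})|\leq\mathcal{T}(\L',\L_{0}')\leq\mathcal{T}_{\max}(\K)$. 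A Cauchy--Schwarz bound $\sum_{j=1}^{k}|\gamma'_{j}|\leq\sqrt{k}\,\|\gamma'\|\leq\sqrt{n}$ (using $\|\gamma\|=1$) then yields $|f'_{i}|\leq\sqrt{n}\,\mathcal{T}_{\max}(\K)$.

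Finally, summing over the $k\leq n$ coordinates,
\[
\|f\|^{2}=\|f'\|^{2}=\sum_{i=1}^{k}|f'_{i}|^{2}\leq k\cdot n\cdot\mathcal{T}_{\max}(\K)^{2}\leq n^{2}\,\mathcal{T}_{\max}(\K)^{2},
\]
which gives the theorem through \Cref{lem:effective_resistance_flows}. The main obstacle is step three: carefully verifying that solving the square subsystem $Bf'=\gamma'$ actually reconstructs $\gamma$ on all of $\K_{d-1}$ (not just outside $\L_{0}$), which relies on the maximality of both the chosen columns and rows and on $\gamma$ already lying in $\im\partial_{d}$. The remaining steps are a routine determinant expansion plus a single Cauchy--Schwarz, and the $n^{2}$ factor is tight because it arises as $k$ coordinates each of size $\sqrt{k}\,\mathcal{T}_{\max}(\K)$.
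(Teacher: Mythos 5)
Your proof is correct and follows essentially the same route as the paper's: reduce to a full-column-rank subcomplex $\L$, extract a non-singular square submatrix $B=\partial_d[\L,\L_0]$, apply Cramer's rule, bound each minor via \Cref{lem:detbound}, and use the $\ell_1$--$\ell_2$ inequality on the restricted cycle. The only minor difference is that you argue directly that the constructed chain (extended by zeros) is a unit $\gamma$-flow in $\K$ and invoke \Cref{lem:effective_resistance_flows}, whereas the paper invokes Rayleigh monotonicity (\Cref{thm:resistance_monoticity}) to pass from $\RE_\gamma(\K)$ to $\RE_\gamma(\L)$; both justifications are valid and equally short.
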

\begin{proof}
First, we remove $d$-simplices from $\K$ to create a new complex $\L$ such that $\ker(\partial_d[\L]) = 0$ and $\im\boundary_d[\K]=\im\boundary_d[\L]$.
Theorem \ref{thm:resistance_monoticity} proves that removing $d$-simplices only increases the effective resistance, so $\mathcal{R}_\gamma(\K) \leq \mathcal{R}_\gamma(\L)$. As $\ker(\boundary_d[\L])=0$, there is a unique unit $\gamma$-flow $f\in C_{d-1}(\L)$ which implies $\RE_\gamma(\L)=\|f\|^2$. Let $n\leq n_d$ denote the number of $d$-simplices in $\L$.
\par
The matrix $\partial_d[\L]$ has full column rank, so we can find a non-singular $n \times n$ square submatrix of $\partial_d[\L]$; call this submatrix $B$. 
Let $\mathcal{L}_0$ be the $(d-1)$-dimensional subcomplex that contains the $(d-1)$-simplices corresponding to rows excluded from $B$; $B$ is the relative boundary matrix $\boundary_d[\L,\L_0]$.
We have that $B f = c$, where $c$ is the restriction of $\gamma$ to the rows of $B$.
Observe that $\|c\|\leq \|\gamma\|=1$
 \par
We will apply Cramer's rule to upper bound the size of $f$. By Cramer's rule we have the equality
\[f(\sigma) = \frac{\det(B_{\sigma, c})}{\det(B)}\]
where $B_{\sigma, c}$ is the matrix obtained by replacing the column of $B$ indexed by $\sigma$ with the vector $c$. Since $\det(B)$ is integral, $|\det(B)|\geq 1$, so we drop the denominator and focus on the inequality $|f(\sigma)| \leq |\det(B_{\sigma, c})|$.
We bound $|\det(B_{\sigma, c})|$ by its cofactor expansion,
\begin{align*}
 |\det(B_{\sigma, c})| &= \left|\sum_{i=1}^{n_d} (-1)^i \cdot c_i \cdot \det(B_{\sigma, c}^{c, i})\right| \\
&\leq \sum_{i=1}^{n_d} |c_i| \cdot \mathcal{T}_{\max}(\K)\\
&= \|c\|_1 \cdot \mathcal{T}_{\max}(\K)\\
&= O \left( \sqrt{n} \cdot \mathcal{T}_{\max}(\K) \right)
\end{align*}
where $B_{\sigma, c}^{c, i}$ denotes the submatrix obtained by removing the column $c$ and removing the $i$th row and $c_i$ denotes the $i$th component of $c$.
The first inequality comes from Lemma~\ref{lem:detbound}, as $B_{\sigma, c}^{c, i}$ is the relative boundary matrix $\boundary_{d}[\L\setminus\{\sigma\}, \L_0\cup \sigma_i]$, where $\sigma_i$ is the $(d{-}1)$-simplex corresponding to the $i$th row of $B$. The factor of $\sqrt{n}$ comes from the fact that $\|c\|_1 \leq \sqrt{n} \|c\|_2$ and $\|c\|_2 \leq 1$.
Finally, we compute the flow energy of $f$ as
\begin{align*}
\Jsf(f) &= \sum_{\sigma \in \L_d} f(\sigma)^2\\
&\leq \sum_{i=1}^{n} n \cdot \mathcal{T}_{\max}(\K)^2\\
&= O\left( n^2 \cdot \mathcal{T}_{\max}(\K)^2 \right).
\end{align*}
The effective resistance of $\gamma$ is the flow energy of $f$, so the result follows.
\end{proof}

If $\L\subset\K$, then the boundary matrix $\boundary_d[\L]$ is a submatrix of $\boundary_d[\K]$. In particular, $\mathcal{T}_{\max}(\L)\leq\mathcal{T}_{\max}(\K)$. Therefore, the proof of \Cref{thm:resistance_upper_bound_torsion} gives an upper bound on the effective resistance for any subcomplex $\L\subset\K$.

\begin{corollary}\label{cor:subcomplex_res_bound}
Let $\L\subset\K$ be a $d$-dimensional simplicial complex and $\gamma$ a null-homologous $(d{-}1)$-cycle in $\L$. Let $n=\min\{n_{d-1}[\L],n_d[\L]\}$. The effective resistance of $\gamma$ in $\L$ is bounded above by $\mathcal{R}_\gamma(\L) = O \left( n^2 \cdot \mathcal{T}_{\max}(\K)^2 \right)$.
\end{corollary}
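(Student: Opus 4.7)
The plan is to reduce the corollary directly to Theorem \ref{thm:resistance_upper_bound_torsion} by applying it to $\L$ itself, rather than to $\K$. Since $\L$ is a $d$-dimensional simplicial complex and $\gamma$ is null-homologous in $\L$, Theorem \ref{thm:resistance_upper_bound_torsion} (taking $\L$ in the role of the ambient complex there) immediately gives
$$
    \RE_\gamma(\L) \in O\!\left( n^2 \cdot \mathcal{T}_{\max}(\L)^2 \right),
$$
where $n = \min\{n_{d-1}[\L], n_d[\L]\}$. Note that the proof of Theorem \ref{thm:resistance_upper_bound_torsion} never looks outside of $\L$ when $\L$ is treated as the ambient complex, so this step is immediate.

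The remaining step is to argue that $\mathcal{T}_{\max}(\L) \leq \mathcal{T}_{\max}(\K)$, which is the key observation flagged in the paragraph preceding the corollary. For any pair of subcomplexes $\mathcal{M}_0 \subset \mathcal{M} \subset \L$ (with $\dim\mathcal{M} = d$ and $\dim\mathcal{M}_0 = d-1$), the relative boundary matrix $\partial_d[\mathcal{M},\mathcal{M}_0]$ depends only on which columns and rows are selected from the full boundary matrix of whatever ambient complex contains $\mathcal{M}$; it is the same matrix whether we view it as a submatrix of $\partial_d[\L]$ or of $\partial_d[\K]$. Consequently, the Smith normal form of $\partial_d[\mathcal{M},\mathcal{M}_0]$ is the same in both ambient settings, so by Theorem \ref{thm:normalform} the cardinality $\mathcal{T}(\mathcal{M},\mathcal{M}_0)$ of the torsion subgroup of $H_{d-1}(\mathcal{M},\mathcal{M}_0,\Z)$ is independent of which ambient complex we consider. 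Since any such pair $\mathcal{M}_0 \subset \mathcal{M} \subset \L$ is also a valid pair of subcomplexes of $\K$, we have $\mathcal{T}(\mathcal{M},\mathcal{M}_0) \leq \mathcal{T}_{\max}(\K)$; taking the maximum over all such pairs in $\L$ gives $\mathcal{T}_{\max}(\L) \leq \mathcal{T}_{\max}(\K)$.

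Combining the two steps yields $\RE_\gamma(\L) \in O(n^2 \cdot \mathcal{T}_{\max}(\K)^2)$, as desired. There is no real obstacle here: the corollary is essentially a bookkeeping observation that the bound in Theorem \ref{thm:resistance_upper_bound_torsion} depends only on subcomplexes of the ambient complex, so passing to a larger ambient complex can only make $\mathcal{T}_{\max}$ larger while leaving the bound on $\RE_\gamma(\L)$ valid.
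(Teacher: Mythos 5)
Your proof is correct and takes essentially the same approach as the paper: the paper's own justification is a two-sentence remark preceding the corollary, observing that $\partial_d[\L]$ is a submatrix of $\partial_d[\K]$, so $\mathcal{T}_{\max}(\L)\leq\mathcal{T}_{\max}(\K)$, and then applying Theorem~\ref{thm:resistance_upper_bound_torsion} to $\L$. You have simply spelled out the inequality $\mathcal{T}_{\max}(\L)\leq\mathcal{T}_{\max}(\K)$ in more detail, which is fine. (One minor imprecision: strictly speaking, $\partial_d[\mathcal{M},\mathcal{M}_0]$ viewed inside $\K$ may have extra rows compared to the same relative boundary matrix viewed inside $\L$—namely, rows for $(d{-}1)$-simplices of $\K$ not in $\L$—but those extra rows are all zero, since every $(d{-}1)$-face of a $d$-simplex of $\mathcal{M}\subset\L$ lies in $\L$, so the Smith normal form's torsion coefficients, and hence $\mathcal{T}(\mathcal{M},\mathcal{M}_0)$, are unaffected.)
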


In \Cref{sec:upper_bound_relative_torsion}, we give an upper bound on relative torsion, which implies an upper bound on the effective resistance purely in terms of the size of the complex.

\subsubsection{Upper Bounds on Capacitance.}

We now provide an upper bound for the effective capacitance of a cycle. While upper bounds on the effective resistance only depended on the norm of the cycle, upper bounds on the capacitance of the cycle are not. Therefore, we consider the special case where $\gamma$ is the boundary of a $d$-simplex. This is a natural assumption as these are exactly the type of cycles considered in the incremental algorithm for computing Betti numbers (\Cref{alg:incremental_algorithm}). While we only prove this special case, we note that our proof could be adapted to bound the effective capacitance of a cycle whose entries have constant upper and lower bounds. 

\begin{theorem}\label{thm:cap_bound}
    Let $\L\subset\K$ be $d$-dimensional simplicial complexes. Let $\gamma\in C_{d-1}(\L)$ be a $(d-1)$-cycle that is null-homologous in $\K$ but not in $\L$. Let $n=\min\{n_{d-1},n_d\}$. Assume that $\gamma=\boundary\sigma$ for a $d$-simplex $\sigma\notin \L$.
    The effective capacitance of $\gamma$ in $\K$ is bounded above by $\C_\gamma(\L,\K) \in O \left( n \cdot n_0 \cdot \mathcal{T}_{\max}(\K)^2 \right)$.
\end{theorem}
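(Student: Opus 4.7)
The plan is to mirror the proof of Theorem~\ref{thm:resistance_upper_bound_torsion}, but working with the coboundary and a specific unit $\gamma$-potential instead of the boundary and a unit $\gamma$-flow. Since $\C_\gamma(\L,\K)$ is a minimum, it suffices to exhibit any unit $\gamma$-potential $p_0$ and bound $\J(p_0) = \|\coboundary[\K]p_0\|^2 \leq \sigma_{\max}(\coboundary[\K])^2 \|p_0\|^2 \leq n_0 \|p_0\|^2$ using Lemma~\ref{lem:upper_bound_largest_singular_value_coboundary}. The task therefore reduces to producing $p_0$ with $\|p_0\|^2 \in O(n\cdot\mathcal{T}_{\max}(\K)^2)$.

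As in the proof of Theorem~\ref{thm:cap_spectral_gap}, the conditions $\coboundary[\L]p = 0$ and $\gamma^T p = 1$, using $\gamma = \boundary\sigma$, collapse into the single linear system $\coboundary[\L']p = e_\sigma$, where $\L' = \L\cup\{\sigma\}$ and the row of $\coboundary[\L']$ indexed by $\sigma$ equals $\gamma^T$. Let $r = \text{rank}(\coboundary[\L'])$. I would pick a column set $J\subseteq\L'_{d-1}$ of size $r$ so that $\coboundary[\L']_J$ has full column rank and contains some $\rho\in\supp(\gamma)$; this is feasible because every $\rho\in\supp(\gamma)$ indexes a nonzero column of $\coboundary[\L']$ (its row-$\sigma$ entry is nonzero) and can be extended to a basis of the column space. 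Then I would pick rows $I\subseteq\L'_d$ of size $r$ with $\sigma\in I$ for which $M := \coboundary[\L']_{I,J}$ is nonsingular; this is feasible because the row $\sigma$ of $\coboundary[\L']_J$ equals $\gamma|_J^T \neq 0$ and therefore extends to a basis of its $r$-dimensional row space. Setting $p_0|_J = M^{-1}(e_\sigma)_I$ and $p_0 = 0$ off $J$ produces a valid unit $\gamma$-potential: the overdetermined system $\coboundary[\L']_J x = e_\sigma$ is consistent because $\text{image}(\coboundary[\L']_J) = \text{image}(\coboundary[\L'])$, so the unique solution of the restricted square system also satisfies the full system.

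To bound $\|p_0\|$, I would apply Cramer's rule to $M p_0|_J = (e_\sigma)_I$: each entry satisfies $|p_0(j)| = |\det(M_{j \to (e_\sigma)_I})| / |\det M|$. Since $M$ is a nonsingular integer matrix, $|\det M| \geq 1$; and since $(e_\sigma)_I$ has its only nonzero entry at row $\sigma$, cofactor expansion along the replaced column gives $|\det(M_{j \to (e_\sigma)_I})| = |\det M^{(\sigma,j)}|$, where $M^{(\sigma,j)}$ is the $(r-1)\times(r-1)$ minor obtained by deleting row $\sigma$ and column $j$. Transposing, $(M^{(\sigma,j)})^T$ is a square submatrix of $\boundary_d[\K]$ of the form $\boundary_d[\mathcal{N},\mathcal{N}_0]$ for suitable subcomplexes $\mathcal{N}\subseteq\K$ and $\mathcal{N}_0\subseteq\mathcal{N}$, so Lemma~\ref{lem:detbound} yields $|\det M^{(\sigma,j)}| \leq \mathcal{T}_{\max}(\K)$. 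Summing over the $r \leq n$ nonzero entries gives $\|p_0\|^2 \leq n\cdot\mathcal{T}_{\max}(\K)^2$, and combining with the opening reduction proves $\C_\gamma(\L,\K) \in O(n\cdot n_0\cdot\mathcal{T}_{\max}(\K)^2)$.

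The main technical obstacle is the careful row/column selection: one must simultaneously ensure that $\coboundary[\L']_J$ preserves $\text{image}(\coboundary[\L'])$ so the restricted system remains consistent, and that the chosen square minor $M$ contains the $\sigma$-row, so that $(e_\sigma)_I$ has exactly one nonzero entry and each Cramer cofactor reduces to a bounded-determinant submatrix of the boundary controlled by $\mathcal{T}_{\max}(\K)$.
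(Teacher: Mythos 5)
Your proof is correct and takes essentially the same approach as the paper's: reduce to a full-rank, integer, square subsystem of the coboundary matrix that contains the $\gamma^T$ (equivalently $\sigma$-) row, apply Cramer's rule with $|\det M|\geq 1$, cofactor-expand to obtain a square submatrix of $\boundary_d[\K]$ bounded by $\mathcal{T}_{\max}(\K)$ via Lemma~\ref{lem:detbound}, and finish with the $\sigma_{\max}(\coboundary[\K])^2 = O(n_0)$ bound from Lemma~\ref{lem:upper_bound_largest_singular_value_coboundary}. The only differences are presentational: you select the column set $J$ and row set $I$ in one step with explicit feasibility checks (a vertex of $\supp(\gamma)$ in $J$, $\sigma\in I$), whereas the paper first deletes dependent columns and then dependent rows and argues the $\gamma$-row must survive; and you work directly with $\coboundary[\L\cup\{\sigma\}]$ from the start (an identification the paper itself uses in the proof of Theorem~\ref{thm:cap_spectral_gap}) rather than with the stacked matrix of $\coboundary[\L]$ and $\gamma^T$.
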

\begin{proof}
Let $p$ be a $\gamma$-potential. We upper bound the potential energy of $p$. By definition, $\coboundary[\L]p=0$ and $\gamma^{T}p=1$. We can express these constraints as the linear system
$$
  \begin{bmatrix} \phantom{0} \\ \coboundary[\L] \\ \phantom{\vdots} \\ \gamma^T \end{bmatrix} p = \begin{bmatrix} 0 \\ 0 \vphantom{\coboundary[\L]} \\ \vdots  \\ 1 \end{bmatrix}
$$
We first remove linearly-dependent columns from this linear system until this system has full column rank. Columns of the matrix are indexed by $(d{-}1)$ simplices of $\L$, and rows are indexed by $d$-simplices of $\L$.
Removing columns from $\coboundary[\L]$ changes it to the relative coboundary matrix $\delta[\L, \L_0]$ where $\L_0$ is the $(d-1)$-subcomplex corresponding to the columns that were removed. Removing linearly-dependent columns does not change the image of the system of equation, so there is still a solution $r$, i.e.
$$
  \begin{bmatrix} \phantom{0} \\ \coboundary[\L, \L_0] \\ \phantom{\vdots} \\ c^{T} \end{bmatrix} r = \begin{bmatrix} 0 \\ 0 \vphantom{\coboundary[\L]} \\ \vdots  \\ 1 \end{bmatrix}
$$
where $c$ is the subvector of $\gamma$ after removing the columns. The vector $r$ is not a $\gamma$-potential as it is a vector in $C_{d-1}(\L, \L_0)$, not $C_{d-1}(\L)$. However, we can extend $r$ to be a $\gamma$-potential by adding zeros in the entries indexed by $\L_0$. Adding zero-valued entries preserves the length of $r$.
\par
 We now want to remove rows from this matrix so that it has full row rank. Topologically, removing rows corresponds to removing $d$-simplices from the complex $\L$ to create a new complex $\L_1$. Note that we must always include the row $c$ to have full row rank; otherwise, $r$ would be a non-zero vector in the kernel of this system, meaning the system does not have full rank.  Removing these rows gives the linear system
 $$
  \begin{bmatrix} \phantom{0} \\ \coboundary[\L_1, \L_0] \\ \phantom{\vdots} \\ c \end{bmatrix} r = \begin{bmatrix} 0 \\ 0 \vphantom{\coboundary[\L]} \\ \vdots  \\ 1 \end{bmatrix}.
 $$
 Let $C=\begin{bmatrix}  \coboundary[\L_1, \L_0]^T & c^T \end{bmatrix}^T$ and $b=\begin{bmatrix} 0 & 0 & \cdots & 1 \end{bmatrix}^T$. Note that $C$ is an square matrix of size (say) $m\times m$, where $m\leq n+1$.
 \par
  We use Cramer's rule to bound the size of $\| r \|$. By Cramer's rule, $r_i$, the $i$th entry of $r$, is
 $$
  r_i = \frac{\det(C_{i,b})}{\det(C)}.
 $$
 where $C_{i,b}$ is the matrix obtained by replacing the $i$th column with $b$.
 \par 
 We first lower bound $|\det{(C)}|$. As $C$ is a full-rank integral matrix, then $|\det{(C)}|\geq 1$. We now upper bound $|\det(C_{i,b})|$. We calculate $\det(C_{i,b})$ with the cofactor expansion on the column replaced by $b$. As $b$ has 1 in its last entry and 0s elsewhere, the cofactor expansion gives $\det(C_{i,b}) = \pm 1\cdot \det(C_{i,b}^{i,c})$ where $C_{i,b}^{i,c}$ is the matrix where we dropped the $i$th column and the row $c$ from $C_{i,b}$. The matrix $C_{i,b}^{i,c}$ is a square submatrix of $\coboundary[\K]$, so we can bound $|\det(C_{i,b})|\leq \mathcal{T}_{\max}(\K)$. Thus, $r_i = \det(C_{i,b}) / \det(C) \leq \mathcal{T}_{\max}(\K)$ and
 \begin{align*}
 \| r \| &= \sqrt{\sum_{i=1}^m r_i^2} \\
 &\leq \sqrt{m\cdot \mathcal{T}_{\max}(\K)^2} \\
 &\leq \sqrt{n\cdot \mathcal{T}_{\max}(\K)^2} \\
 &= \sqrt{n} \cdot \mathcal{T}_{\max}(\K)
 \end{align*}
 \par
The potential energy of $r$ is $\| \coboundary[\K]r \|^2$. We can use Lemma \ref{lem:upper_bound_largest_singular_value_coboundary} to obtain the bound $\|\delta[\K]r\|^2 =O \left( n\cdot n_0\cdot \mathcal{T}_{\max}(\K)^2 \right)$.
\end{proof}

\subsubsection{Upper Bound on Relative Torsion.}
\label{sec:upper_bound_relative_torsion}

\noindent
To conclude this section, we provide an upper bound on $\mathcal{T}_{\max}(\K)$. 

\begin{lemma}
\label{lem:bound_relative_torsion}
    Let $\K$ be a simplicial complex. Let $n = \min\{n_{d-1},n_d\}$. Then the maximum rank of any $(d-1)$-dimensional relative torsion group of $K$ is $\mathcal{T}_{\max}(\K)\in O((\sqrt{d+1})^{n})$.
\end{lemma}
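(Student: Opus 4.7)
The plan is to reduce the torsion bound to a determinant bound on square submatrices of the boundary matrix and then apply Hadamard's inequality.

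First, I fix an arbitrary $d$-subcomplex $\mathcal{L} \subset \K$ and $(d{-}1)$-subcomplex $\mathcal{L}_0 \subset \mathcal{L}$, and consider the relative boundary matrix $\partial_d[\mathcal{L}, \mathcal{L}_0]$. By Theorem~\ref{thm:normalform}, the cardinality $\mathcal{T}(\mathcal{L}, \mathcal{L}_0)$ of the torsion subgroup of $H_{d-1}(\mathcal{L}, \mathcal{L}_0, \Z)$ equals the product $d_1 \cdots d_m$ of the diagonal entries in the Smith normal form of $\partial_d[\mathcal{L}, \mathcal{L}_0]$, where $m$ is the rank of $\partial_d[\mathcal{L}, \mathcal{L}_0]$. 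Since Smith normal form is obtained by elementary row and column operations that preserve the GCD of all $k \times k$ minors, the product $d_1 \cdots d_m$ equals the GCD of all $m \times m$ minors of $\partial_d[\mathcal{L}, \mathcal{L}_0]$. In particular, $\mathcal{T}(\mathcal{L}, \mathcal{L}_0)$ divides, and is therefore bounded above by, the absolute value of any non-zero $m \times m$ minor. Because $\partial_d[\mathcal{L}, \mathcal{L}_0]$ has rank $m$, at least one such non-zero minor exists; call the corresponding square submatrix $M$.

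Next, I bound $|\det M|$ by Hadamard's inequality. Each column of $\partial_d$ corresponds to a $d$-simplex $\tau$ and has exactly $d+1$ non-zero entries, each equal to $\pm 1$ (one per facet of $\tau$), so each column has Euclidean norm exactly $\sqrt{d+1}$. The columns of $M$ are obtained by restricting columns of $\partial_d$ to a subset of rows, which can only decrease the norm; therefore each column of $M$ has norm at most $\sqrt{d+1}$. Hadamard's inequality then gives
\[
  |\det M| \;\leq\; \prod_{i=1}^{m} \|M_{\cdot, i}\| \;\leq\; (\sqrt{d+1})^{m}.
\]

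Finally, since $M$ is square of size $m$ and is a submatrix of $\partial_d[\mathcal{L}, \mathcal{L}_0]$, we have $m \leq \min\{n_{d-1}[\mathcal{L}],\, n_d[\mathcal{L}]\} \leq \min\{n_{d-1}, n_d\} = n$. Combining the estimates yields $\mathcal{T}(\mathcal{L}, \mathcal{L}_0) \leq (\sqrt{d+1})^{n}$, and maximizing over all subcomplex pairs $(\mathcal{L}, \mathcal{L}_0)$ gives $\mathcal{T}_{\max}(\K) \in O((\sqrt{d+1})^{n})$, as claimed. The only delicate step is justifying that $\mathcal{T}(\mathcal{L}, \mathcal{L}_0)$ is controlled by a single non-zero $m \times m$ minor (rather than the full relative boundary matrix), and this is handled cleanly by invoking the GCD-of-minors characterization of Smith invariants; the Hadamard step is then immediate from the structure of boundary matrices.
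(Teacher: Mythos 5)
Your proof is correct and takes essentially the same approach as the paper: reduce the torsion bound to the determinant of a square submatrix of $\partial_d$, observe each column has norm at most $\sqrt{d+1}$, and apply Hadamard's inequality with the matrix size bounded by $n = \min\{n_{d-1},n_d\}$. You are in fact slightly more careful than the paper in the reduction step, invoking the GCD-of-$m\times m$-minors characterization of the Smith invariants to handle the case where $\partial_d[\mathcal{L},\mathcal{L}_0]$ is not itself square and full-rank, whereas the paper simply asserts (via the proof of Lemma~\ref{lem:detbound}) that $\mathcal{T}_{\max}(\K)$ equals $|\det|$ of some square submatrix.
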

\begin{proof}
    By the proof of Lemma \ref{lem:detbound}, the quantity $\mathcal{T}_{\max}(\K)$ is the absolute value of the determinant of some submatrix of $\boundary_d$. We therefore bound the determinant of such a submatrix. We prove this bound using \textit{\textbf{Hadamard's Inequality}}: the determinant of an $m\times m$ matrix $B$ is upper-bounded by the product of the norms of its columns.
    \par
    Consider a square, $m\times m$ submatrix $B$ of $\boundary_{d}$. We know that $m\leq n$. Moreover, any column of $B$ has norm bounded above by $\sqrt{d+1}$. This bound follows from the fact that each column of $\boundary_d$ has norm exactly $\sqrt{d+1}$; each column has $d+1$ nonzero entries, each of which is $\pm 1$. The bound of the lemma follows by Hadamard's Inequality.
\end{proof}

\noindent
\Cref{lem:bound_relative_torsion} immediately implies the corollaries to \Cref{thm:resistance_upper_bound_torsion} and \Cref{thm:cap_bound}.

\begin{corollary}
\label{cor:resistance_upper_bound}
    Let $\K$ be a $d$-dimensional simplicial complex and $\gamma$ a unit-length null-homologous $(d{-}1)$-cycle in $\K$. Let $n=\min\{n_{d-1},n_d\}$. The effective resistance of $\gamma$ is bounded above by $\RE_\gamma(\K) \in O \left( n^2 (d+1)^{n} \right)$.
\end{corollary}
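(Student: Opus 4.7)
The plan is essentially a one-line substitution, so the proposal is short. I would start from Theorem \ref{thm:resistance_upper_bound_torsion}, which already gives the bound
\[
\RE_\gamma(\K) \in O\!\left( n^2 \cdot \mathcal{T}_{\max}(\K)^2 \right)
\]
for a unit-length null-homologous $(d{-}1)$-cycle $\gamma$, where $n = \min\{n_{d-1}, n_d\}$. Thus the only remaining task is to eliminate the torsion factor $\mathcal{T}_{\max}(\K)^2$ in favor of a quantity purely in the size of the complex.

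For that, I would plug in Lemma \ref{lem:bound_relative_torsion}, which states $\mathcal{T}_{\max}(\K) \in O\!\left((\sqrt{d+1})^{n}\right)$. Squaring this gives $\mathcal{T}_{\max}(\K)^2 \in O\!\left((d+1)^{n}\right)$, and substituting into the bound from Theorem \ref{thm:resistance_upper_bound_torsion} yields
\[
\RE_\gamma(\K) \in O\!\left( n^2 \cdot (d+1)^{n} \right),
\]
as claimed.

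There is no real obstacle here; the work has already been done in the preceding theorem (which handled the flow-energy / Cramer's rule argument) and in the Hadamard-inequality bound on relative torsion. The corollary is literally the composition of those two asymptotic bounds. The only thing to verify is that Theorem \ref{thm:resistance_upper_bound_torsion} applies to $\K$ itself (not just to a specially-chosen subcomplex), which it does by hypothesis, and that $\mathcal{T}_{\max}(\K)$ in the lemma refers to the same quantity used in the theorem, which it does by definition.
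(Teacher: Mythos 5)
Your proposal is correct and is exactly the argument the paper uses: the paper explicitly states that Lemma \ref{lem:bound_relative_torsion} (the Hadamard bound $\mathcal{T}_{\max}(\K)\in O((\sqrt{d+1})^{n})$) immediately implies this corollary from Theorem \ref{thm:resistance_upper_bound_torsion}, which is precisely the substitution you carried out. Nothing is missing.
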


\begin{corollary}
    \label{cor:cap_bound}
    Let $\L\subset\K$ be $d$-dimensional simplicial complexes. Let $\gamma\in C_{d-1}(\L)$ be a $(d-1)$-cycle that is null-homologous in $\K$ but not in $\L$. Let $n=\min\{n_{d-1},n_d\}$. Assume that $\gamma=\boundary\sigma$ for a $d$-simplex $\sigma\notin \L_d$.
    The effective capacitance of $\gamma$ in $\K$ is bounded above by $\C_\gamma(\L,\K) \in O \left( n \cdot n_0 \cdot (d+1)^{n} \right)$.
\end{corollary}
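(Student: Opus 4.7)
The plan is to prove this corollary by direct substitution, combining the two preceding results. Theorem \ref{thm:cap_bound} already gives the bound $\C_\gamma(\L,\K) \in O(n \cdot n_0 \cdot \mathcal{T}_{\max}(\K)^2)$ under precisely the hypotheses assumed here (namely $\gamma = \boundary\sigma$ for some $d$-simplex $\sigma \notin \L_d$, with $\gamma$ null-homologous in $\K$ but not in $\L$). All that remains is to eliminate the relative-torsion factor $\mathcal{T}_{\max}(\K)^2$ in favor of a quantity depending only on $n$ and $d$.

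For that step, I would invoke Lemma \ref{lem:bound_relative_torsion}, which gives $\mathcal{T}_{\max}(\K) \in O((\sqrt{d+1})^{n})$. Squaring this yields $\mathcal{T}_{\max}(\K)^{2} \in O((d+1)^{n})$, and substituting into the bound of Theorem \ref{thm:cap_bound} gives
\[
    \C_\gamma(\L,\K) \in O\bigl(n \cdot n_0 \cdot (d+1)^{n}\bigr),
\]
which is exactly the claim. No additional topological or spectral machinery is needed.

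Since this is a pure composition of two already-established results, there is no genuine obstacle. The only thing to be careful about is matching the definition of $n = \min\{n_{d-1}, n_d\}$ in the two results: Theorem \ref{thm:cap_bound} and Lemma \ref{lem:bound_relative_torsion} both use the same convention, so the substitution is immediate and the exponent in the bound on $\mathcal{T}_{\max}$ is indeed $n$ (not, say, some other count of simplices). Thus the proof is a one-line corollary.
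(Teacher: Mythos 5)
Your proposal is correct and matches the paper's approach exactly: the paper states that Lemma~\ref{lem:bound_relative_torsion} immediately implies Corollary~\ref{cor:cap_bound} from Theorem~\ref{thm:cap_bound}, which is precisely the substitution $\mathcal{T}_{\max}(\K)^2 \in O((d+1)^{n})$ that you carry out.
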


\subsection{Lower Bounds.} 
\label{sec:lower_bounds}

\subsubsection{Lower Bounds on Effective Resistance} At the end of the previous section, we gave an exponential upper bound on the effective resistance of a $(d-1)$-cycle in a simplicial complex (\Cref{cor:resistance_upper_bound}). In this section, we describe a $d$-dimensional simplicial complex $\mathcal{B}_d^{n}$ with a $(d-1)$-cycle $\gamma$ with exponentially-large effective resistance with respect to the size of the complex. 

\begin{restatable}{theorem}{thmlowerboundeffectiveresistance}
    \label{thm:resistance_lower_bound}
    Let $d$, $n$ be positive integers. There is a constant $c_d\geq 1$ that depends only on $d$ and a $d$-dimensional simplicial complex $\mathcal{B}_d^n$ with $n_d \in \Theta((d+1)^{3}n)$ $d$-simplices and a unit-length null-homologous cycle $\gamma\in C_{d-1}(\mathcal{B}_d^n)$ such that $\mathcal{R}_{\gamma}(\mathcal{B}_d^n)\in \Theta(c_d^{n_d})$. 
\end{restatable}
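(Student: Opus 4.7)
The strategy is to exhibit $\mathcal{B}_d^n$ as an $n$-fold concatenation of a fixed \emph{doubling gadget} $G_d$, a $d$-dimensional simplicial complex of size $\Theta((d+1)^3)$ that carries two distinguished $(d{-}1)$-cycles $\alpha_d$ (``input'') and $\beta_d$ (``output'') with the following forcing property: every $d$-chain $f$ in $G_d$ satisfying $\partial f = \alpha_d$ has the property that its boundary contribution along the output side equals $c_d \cdot \beta_d$ modulo cycles supported in the output cycle, for some constant $c_d > 1$ depending only on $d$. A natural instance for $d=2$ is a triangulated M\"obius band, in which the boundary circle is homologous to twice the core circle, so that any $2$-chain filling the boundary must double the flow passing through the core; for higher $d$ one can use a thickening of a torsion $(d{-}1)$-cycle in a complex whose relative homology contains a $\mathbb{Z}/c_d$ summand.

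Given the gadget, define $\mathcal{B}_d^n$ by taking copies $G_d^{(1)}, \ldots, G_d^{(n)}$ and identifying the output cycle of $G_d^{(i)}$ with the input cycle of $G_d^{(i+1)}$, then capping the final output with a $d$-ball so that the input cycle of $G_d^{(1)}$ becomes null-homologous in the whole complex. Let $\gamma$ be the input cycle of $G_d^{(1)}$ normalized to unit length. Since each gadget contributes $\Theta((d+1)^3)$ simplices and the cap is of constant size, we get $n_d \in \Theta((d+1)^3 n)$.

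For the lower bound, apply \Cref{lem:effective_resistance_flows}: let $f$ be any unit $\gamma$-flow and let $f_i$ denote its restriction to $G_d^{(i)}$. The forcing property of $G_d$, applied inductively on $i$, shows that the signed flow crossing the boundary between $G_d^{(i)}$ and $G_d^{(i+1)}$ (i.e.\ the coefficient of the shared cycle in $\partial f_i$) is multiplied by $c_d$ at every step, so it has magnitude at least $c_d^{\,i-1}/\|\alpha_d\|$. Since $G_d$ has constant size, this lower bound on the boundary coefficient translates into a lower bound $\|f_i\|^2 \geq \Omega(c_d^{\,2(i-1)})$ by a discrete Poincar\'e-type inequality restricted to $G_d$. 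Summing the geometric series gives $\mathsf{J}(f) = \|f\|^2 \geq \Omega(c_d^{\,2n}) = \Omega(c_d^{\,n_d})$ after absorbing constants into $c_d$. For the matching $O(c_d^{\,n_d})$ upper bound, construct an explicit flow that in each $G_d^{(i)}$ uses the minimum-norm $d$-chain satisfying the required input--output condition, and verify its energy sums to $O(c_d^{\,n_d})$.

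The main obstacle will be pinning down the base gadget $G_d$ so that the ``forcing'' is genuinely unavoidable: that is, showing no alternative flow re-routing within $G_d$ can dodge the $c_d$ multiplication. This reduces to establishing that the relative group $H_{d-1}(G_d, \beta_d; \mathbb{Z})$ contains a $\mathbb{Z}/c_d$ torsion summand representing $\alpha_d$, so that any integer chain witnessing $\partial f = \alpha_d$ is forced (over $\mathbb{R}$ by a continuity/rescaling argument) to move $c_d$ units through the designated output. Once $G_d$ is constructed with this rigidity, the inductive chain argument is routine; the exponential growth then propagates directly from the multiplicative action on successive output coefficients.
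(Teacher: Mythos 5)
Your proposal follows the same high-level strategy as the paper's proof: construct a ``doubling gadget'' that multiplies flow by a constant, concatenate $n$ copies, cap one end, and argue that the flow coefficient grows geometrically across the chain. The intuition and the construction (iterated building blocks with the cycle at one end and a $d$-ball cap at the other) match what the paper does. However, there is a genuine gap in your lower bound argument that the paper resolves by a different mechanism.

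You argue that the flow coefficient at the $i$th interface grows like $c_d^{\,i}$, relying on a ``forcing property'' that holds only \emph{modulo cycles supported on the output cycle}. The problem is that the restriction of $\partial f_i$ to the interface between $G_d^{(i)}$ and $G_d^{(i+1)}$ is an arbitrary $(d{-}1)$-chain supported on the interface simplices, not a scalar multiple of the designated cycle $\beta_d$. These ``residue'' terms could interact across stages, allowing a global flow to route around the multiplicative bottleneck or to cancel partially between neighboring gadgets. Your inductive argument needs to control these residues quantitatively, and you do not do so. The paper sidesteps the issue entirely: it proves (via an explicit collapsing sequence, \Cref{lem:Bdn_collapses} and \Cref{lem:Bdn_ker_boundary}) that $\ker\partial_d[\mathcal{B}_d^n]=0$, so the unit $\gamma$-flow is \emph{unique}, and the effective resistance equals the energy of the explicitly constructed chain $y_n$ by \Cref{lem:effective_resistance_flows}. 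Without uniqueness (or some equally strong global rigidity), the lower bound over \emph{all} unit $\gamma$-flows does not follow from the gadget-level forcing you describe. To repair your argument you would need to either prove that $\ker\partial_d$ is trivial in the total complex, or strengthen the forcing lemma to a quantitative statement about how the energy of $f_i$ depends on the full $(d{-}1)$-chain (not just its projection onto $\beta_d$) at each interface.

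A smaller point: your M\"obius band gadget for $d=2$ gives flow doubling (factor 2 per gadget), whereas the paper's identified stellar prism achieves flow multiplication by $d$, which gives a larger base $c_d$. Your suggestion of ``a thickening of a torsion $(d{-}1)$-cycle'' for $d\geq 3$ is not a construction; the paper's stellar prism supplies an explicit, dimension-uniform gadget with $\Theta((d+1)^3)$ simplices, which is essential for the claimed count $n_d\in\Theta((d+1)^3 n)$ and for the strength of the bound.
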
 

\paragraph{The building block.}

\begin{figure}
    \centering
    \begin{subfigure}{0.3\textwidth}
        \centering
        \vspace*{0.25in}
        \includegraphics[height=1in]{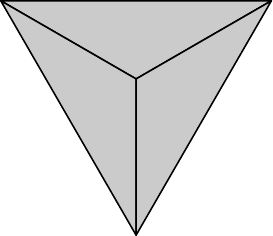}
        \vspace*{0.25in}
    \end{subfigure}
    \begin{subfigure}{0.3\textwidth}
        \centering
        \includegraphics[height=1.5in]{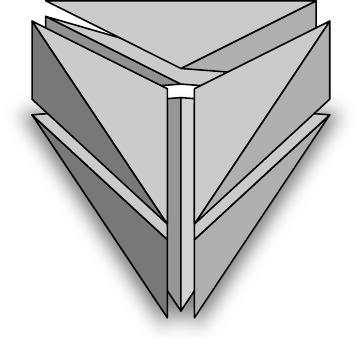}
    \end{subfigure}
    \begin{subfigure}{0.3\textwidth}
        \centering
        \includegraphics[height=1.5in]{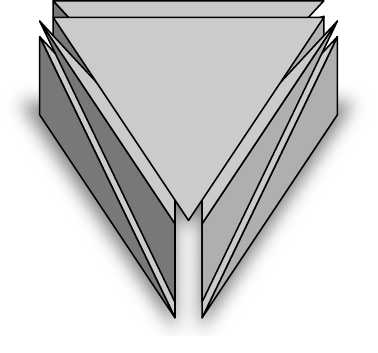}
    \end{subfigure}
    \caption{Left: Stellar subdivision of a triangle. Middle and Right: Top and bottom view of the stellar prism of a triangle with the tetrahedra pushed apart.}
    \label{fig:stellar_figure}
\end{figure}

Our simplicial complex will be obtained by gluing together multiple instances of the same ``building block'' $B_d$. A formal description of $B_d$ is given in Appendix \ref{sec:construction_building_block}; here, we give an intuitive description of the complex. Let $\Delta^{d}$ denote the closure of the $d$-simplex $\sigma=\{v_0,\ldots,v_d\}$, and let $\boundary\Delta^{d}$ denote the $(d-1)$-dimensional simplicial complex $\Delta^{d}\setminus\{\sigma\}$. Our construction starts with a triangulation of the space $\boundary\Delta^{d}\times[0,1]$ that we call the \textit{\textbf{stellar prism}}. The ``bottom copy'' $\boundary\Delta^{d}\times\{0\}$ is triangulated like the original complex $\boundary\Delta^{d}$, and the ``top copy'' $\boundary\Delta^{d}\times\{1\}$ is triangulated with the \textit{\textbf{stellar subdivision}}, the subdivision that adds a vertex to the center of each $(d-1)$-simplex. See Figure \ref{fig:stellar_figure}. The relevant property of this triangulation is that the bottom copy has $d+1$ $(d-1)$-simplices, and the top copy has $d\cdot(d+1)$ $(d-1)$-simplices. The \textit{\textbf{building block}} $B_d$ is obtained from this triangulation by identifying the vertex in the center of each $(d-1)$-simplex $\tau\times\{1\}$ with the unique vertex in $\sigma\times\{1\}\setminus\tau\times\{1\}$, the unique vertex in the \emph{set} $\sigma\times\{1\}$ that is not a vertex of the simplex $\tau\times\{1\}$. See \Cref{fig:complex_B_3}.
\par 
When we identify the vertices, each $(d-1)$-simplex in the top of the stellar prism is replaced by one of the $(d-1)$ faces of $\sigma\times\{1\}$. Moreover, this is replacement is $d$-to-1, meaning each face of $\sigma\times\{1\}$ replaces a $(d-1)$-simplex exactly $d$ times, or informally, each $(d-1)$-simplex ``appears $d$ times'' in the top copy of $B_{d}$. Of course, this is not literally true, as a simplicial complex can only contain a single copy of each simplex. However, something to this effect is true. Namely, there is a $d$-chain $f\in C_d(B_d)$ whose boundary assigns value $\pm 1$ to each $(d-1)$-simplex in the bottom copy and value $\pm d$ to each $(d-1)$-simplex in the top copy. The key properties of the building block $B_d$ are summarized in the following lemma. 

\begin{restatable}{lemma}{Bddchain}
\label{lem:Bd_d_chain}
    Let $\sigma=\{v_0,\ldots, v_d\}$ be a set. There is a $d$-dimensional simplicial complex $B_d$ with vertices $\sigma\times\{0,1\}$ such that 
    \begin{enumerate}
        \item $B_d$ has $\Theta((d+1)^{3})$ $d$-simplices.
        \item there is a $d$-chain $f\in C_{d}(B_{d})$ such that 
        \begin{enumerate}
            \item[(i)] $\boundary f = \boundary(\sigma\times\{0\}) + d\cdot\boundary(\sigma\times\{1\})$
            \item[(ii)] $\|f\|^{2} \in \Omega((d+1)^{3})$
        \end{enumerate}
    \end{enumerate}
\end{restatable}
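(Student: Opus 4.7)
The plan is to construct $B_d$ (with full details deferred to \Cref{sec:construction_building_block}) as a triangulation of the stellar prism $|\partial\Delta^d| \times [0,1]$---obtained by coning from each central vertex $c_{\tau_j}$ over the standard triangulation of the non-top boundary of the prism $\tau_j \times [0,1]$ (where $\tau_j = \sigma \setminus \{v_j\}$)---followed by the identifications $c_{\tau_j} \sim (v_j, 1)$. For claim (1), each of the $(d+1)$ cones contributes exactly $1 + d(d-1) = d^2 - d + 1$ $d$-simplices: one coning over the bottom face $\tau_j \times \{0\}$, and $d-1$ coning over each of the $d$ side prisms $\tau' \times [0,1]$ under the standard triangulation of a $(d{-}1)$-prism. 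Distinct cones contribute disjoint families of $d$-simplices even after identification---because the specific standard triangulations of the sides prevent a $d$-simplex of cone $\tau_j$ from being re-parameterized as a $d$-simplex of any other cone---giving a total of $(d+1)(d^2 - d + 1) \in \Theta((d+1)^3)$.

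For claim (2), I would take $f$ to be (up to an overall sign) the pushforward under the quotient map $\pi$ of the oriented fundamental class $[SP]$ of the stellar prism. Since $\pi$ is a bijection on $d$-simplices, $f$ assigns $\pm 1$ to each $d$-simplex of $B_d$, so $\|f\|^2$ equals the $d$-simplex count, giving $\Omega((d+1)^3)$ and therefore part (ii). For part (i), I would orient $[SP]$ so that $\boundary[SP] = \boundary(\sigma \times \{1\}) - \boundary(\sigma \times \{0\})$ in the pre-identified triangulation. The bottom is preserved verbatim by $\pi$, so it contributes $-\boundary(\sigma \times \{0\})$ directly. For the top, using the standard expression for a simplex as a chain in its stellar subdivision, $\tau_j = \sum_l (-1)^l [c_{\tau_j}, u_0, \ldots, \widehat{u_l}, \ldots, u_{d-1}]$, and then applying the identification and re-sorting the vertex list to place $(v_j, 1)$ in its correct position among $\tau_k = \sigma \setminus \{v_k\}$, I would show each face $\tau_k$ (with $k \neq j$) receives a coefficient $(-1)^{j+k+1}$. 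Summing over $j$ against the outer $(-1)^j$ from $\boundary(\sigma \times \{1\})$ then collapses to $\pi_*(\boundary(\sigma \times \{1\})) = -d \cdot \boundary(\sigma \times \{1\})$, and negating $f$ yields the desired $\boundary f = \boundary(\sigma \times \{0\}) + d \cdot \boundary(\sigma \times \{1\})$.

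The main obstacle is the sign accounting in the boundary computation for the top. A careless analysis can easily leave the two boundary summands with opposite signs---a situation that cannot be repaired by a single overall negation of $f$. Getting the signs to line up requires careful tracking of both the orientation convention of the stellar subdivision and the permutation sign arising when $(v_j, 1)$ is inserted into its sorted position in the vertex list of $\tau_k$. The two contributions combine cleanly so that the resulting coefficient depends only on $j + k$, which is precisely what allows the outer sum over $j$ to factor into a common $-d$ coefficient on every $\tau_k$.
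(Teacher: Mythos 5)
Your overall strategy---constructing $B_d$ as a quotient of a stellar prism, taking $f$ to be (up to sign) the pushforward of its fundamental class, and tracking orientations so the top cycle is multiplied by $d$---closely follows the paper's, and your sign accounting for the top (coefficient $(-1)^{j+k+1}$ of $\tau_k$ arising from the stellar subdivision of $\tau_j$) matches the paper's explicit computation via $S_*I_1$. However, the central claim that the quotient map $\pi$ is a bijection on $d$-simplices, i.e., that ``distinct cones contribute disjoint families of $d$-simplices even after identification,'' is false for $d\geq 3$. For example, when $d=3$ the cone over $\tau_2 = \sigma\setminus\{v_2\}$ contains the pre-identified $d$-simplex $\{(v_0,0),(v_0,1),(v_3,1),c_{\tau_2}\}$ while the cone over $\tau_3 = \sigma\setminus\{v_3\}$ contains $\{(v_0,0),(v_0,1),(v_2,1),c_{\tau_3}\}$; after the identifications $c_{\tau_2}\sim(v_2,1)$ and $c_{\tau_3}\sim(v_3,1)$ both map to the same simplex $\{(v_0,0),(v_0,1),(v_2,1),(v_3,1)\}$. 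The paper's own proof explicitly flags this collision (the ``$i>k$'' case in their notation).

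Because of this, you cannot conclude that the post-identification $d$-simplex count equals $(d+1)(d^2-d+1)$, nor that $f = \pi_*[SP]$ assigns $\pm 1$ to every $d$-simplex (contributions from several preimages may add or partially cancel on the collided simplices), so neither (1) nor (2ii) is actually established by your argument. The lemma is still true, but you need the extra case split that the paper performs: separate the cone-$i$ simplices into those where inserting the (identified) apex $(v_i,1)$ produces a vertex not already deleted from the ambient path $\sigma^k$ ($i<k$), which remain pairwise distinct, number $\Theta((d+1)^3)$, and carry coefficient exactly $\pm 1$ in $f$; versus those where $(v_i,1)$ merely refills a deleted slot ($i>k$), whose images land among only $O((d+1)^2)$ distinct simplices. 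Restricting the counting and the norm bound to the first class recovers both $\Theta((d+1)^3)$ $d$-simplices and $\|f\|^2 \in \Omega((d+1)^3)$; as written, your proof skips this and rests on a false injectivity claim.
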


\noindent Note that neither of the simplices $\sigma\times\{0\}$ or $\sigma\times\{1\}$ are in $B_d$; however, all of their faces are in the complex, so the boundary of these simplices are well-defined.

\paragraph{The total complex.}

\begin{figure}
    \centering
    \includegraphics[width=\linewidth]{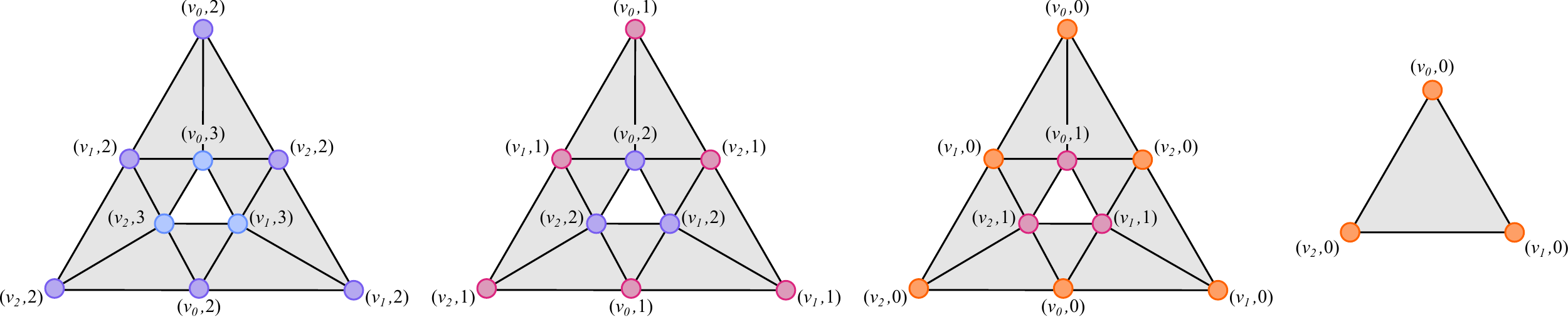}
    \caption{The complex $\mathcal{B}_{2}^{3}$. Recall that the vertices of $\mathcal{B}_{2}^{3}$ are of the form $(v_i,j)$ for $0\leq i\leq 2$ and $0\leq j\leq 3$; colors in the figure denote the second coordinate of the vertices. Vertices with the same label are identified.}
    \label{fig:complex_B_3}
\end{figure}

The complex $\mathcal{B}_d^{n}$ is obtained by gluing together $n$ copies of $B_d$. We describe this gluing inductively on $n$. The vertices of $\mathcal{B}_d^{n}$ are $\sigma\times\{0,\ldots,n\}$. The base case $\mathcal{B}_d^{0}$ is the complete complex on the vertices $\sigma\times\{0\}$. Inductively, the complex $\mathcal{B}_d^{n}$ is obtained from $\mathcal{B}_{d}^{n-1}$ by identifying the vertices $\sigma\times\{n-1\}$ of $\mathcal{B}_{d}^{n-1}$ with the vertices $\sigma\times\{1\}$ of a copy of $B_d$. We will denote this copy of $B_d$ as $B_d^{n}$ and the vertices $\sigma\times\{0\}$ of $B^{n}_d$ as $\sigma\times\{n\}$. See \Cref{fig:complex_B_3}.
\par 
The key property of $\mathcal{B}_d^{n}$ is that is has an exponentially-large chain with constant-sized boundary.

\begin{lemma}
\label{lem:Bdn_d_chain}
    Let $d\geq 1$ and $n\geq 1$. There is a $d$-chain $y_n\in C_d(\mathcal{B}_d^{n})$ such that 
    \begin{enumerate}
        \item $\|y_n\|^{2} \in \Omega(d^{2n}(d+1))$
        \item $\|\boundary y_n\|^{2} = d+1$
    \end{enumerate}
\end{lemma}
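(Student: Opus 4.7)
The plan is to build $y_n$ as a linear combination of the ``prism chains'' provided by \Cref{lem:Bd_d_chain} applied to each building block $B_d^i$, together with a multiple of the base $d$-simplex $\sigma\times\{0\}$, with coefficients chosen so that the boundary contributions at all intermediate vertex levels cancel telescopically.

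For each $i\in\{1,\ldots,n\}$, let $f_i\in C_d(B_d^i)\subset C_d(\mathcal{B}_d^n)$ be the chain from \Cref{lem:Bd_d_chain} applied to $B_d^i$. By the inductive gluing, the vertices $\sigma\times\{0\}$ and $\sigma\times\{1\}$ of $B_d$ are identified with $\sigma\times\{i\}$ and $\sigma\times\{i-1\}$ respectively in $\mathcal{B}_d^n$, so
$$
\boundary f_i \;=\; \boundary(\sigma\times\{i\}) \;+\; d\cdot \boundary(\sigma\times\{i-1\}).
$$
Since $\mathcal{B}_d^0$ is the complete complex on $\sigma\times\{0\}$, the $d$-simplex $\sigma\times\{0\}$ itself lies in $\mathcal{B}_d^n$. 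I then define
$$
y_n \;=\; \sum_{i=1}^{n} (-d)^{\,n-i}\, f_i \;+\; (-1)^n d^n \,\bigl(\sigma\times\{0\}\bigr).
$$
Grouping $\boundary y_n$ by the vertex level $j$, the coefficient of $\boundary(\sigma\times\{j\})$ for $1\le j\le n-1$ is $(-d)^{n-j}+d\cdot(-d)^{n-j-1}=0$, the coefficient at level $0$ is $d\cdot(-d)^{n-1}+(-1)^n d^n=0$, and the coefficient at level $n$ is $(-d)^0=1$. Hence $\boundary y_n=\boundary(\sigma\times\{n\})$, and since $\boundary\sigma$ is a sum of $d+1$ terms with coefficients $\pm 1$, we get $\|\boundary y_n\|^2=d+1$ as required.

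For the norm lower bound, the key point is that the $d$-simplex supports of $f_1,\ldots,f_n$ and of $\sigma\times\{0\}$ are pairwise disjoint. Consecutive blocks $B_d^i$ and $B_d^{i+1}$ share only the $d+1$ vertices $\sigma\times\{i\}$, so a common $d$-simplex would necessarily have all its $d+1$ vertices in $\sigma\times\{i\}$, i.e.\ it would be $\sigma\times\{i\}$ itself; by \Cref{lem:Bd_d_chain} this simplex is not in any $B_d^i$. Nonconsecutive blocks share no vertices at all, and $\sigma\times\{0\}$ is not a simplex of any $B_d^i$ for the same reason. Applying the Pythagorean identity,
$$
\|y_n\|^2 \;=\; \sum_{i=1}^{n} d^{2(n-i)}\|f_i\|^2 \;+\; d^{2n}\;\geq\; d^{2(n-1)}\,\|f_1\|^2 \;\in\; \Omega\!\bigl(d^{2n-2}(d+1)^3\bigr) \;=\; \Omega\!\bigl(d^{2n}(d+1)\bigr),
$$
where the last step uses $(d+1)^2\geq d^2$ for $d\ge 1$.

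The only nontrivial part of this argument is the support-disjointness claim; once one accepts the gluing description and the fact from \Cref{lem:Bd_d_chain} that $\sigma\times\{0\}$ and $\sigma\times\{1\}$ are not themselves simplices of $B_d$, everything reduces to a clean telescoping computation. The main obstacle is therefore purely bookkeeping: pinning down precisely which $d$-simplices lie in which block after all the identifications, and confirming that no $d$-simplex gets collapsed into a level set by the gluing.
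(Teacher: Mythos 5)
Your construction is correct and is exactly the unrolled, closed-form version of the paper's recursive definition $y_0=\sigma\times\{0\}$, $y_n=f_n-d\cdot y_{n-1}$: the telescoping cancellation, the support-disjointness argument (via the observation that neither $\sigma\times\{i\}$ is a simplex of any $B_d^j$), and the resulting Pythagorean norm estimate all match the paper's proof. The only cosmetic difference is that the paper verifies the boundary identity and the norm bound by induction while you verify them directly from the explicit sum.
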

\begin{proof}
     We construct this chain by induction on $n$. The chain $y_n$ will have $\boundary y_n = \boundary(\sigma\times\{n\})$. For the base case $n=0$, the chain $y_0 = \sigma\times\{0\}$ clearly has this property.
    \par 
    For the inductive case, recall from Lemma \ref{lem:Bd_d_chain} that there is a $d$-chain $f\in C_d(B_d)$ such that $\boundary f = \boundary(\sigma\times\{0\}) + d\cdot\boundary(\sigma\times\{1\})$. Let $f_n$ denote this chain in $B_d^{n}$. We define the chain $y_n := f_n - d\cdot y_{n-1}$. We now verify that $y_n$ has boundary $\boundary(\sigma\times\{n\})$.
    \begin{align*}
        \boundary y_n =& \boundary f_n - d\cdot\boundary y_{n-1} \\
        =& \boundary(\sigma\times\{n\}) + d\cdot\boundary(\sigma\times\{n-1\}) - d\cdot \boundary(\sigma\times\{n-1\} \\
        =& \boundary(\sigma\times\{n\})
    \end{align*}
    It is clear that $\|\boundary y_n\|^2 = d+1$, so we just need to lower bound $\|y_n\|^2$. We prove that $\|y_n\|^2\in\Omega(d^{2n}(d+1))$ by induction. For the base case of $n=1$, we know that $\|y_1\|^2 \in \Omega((d+1)^3)$ by Lemma \ref{lem:Bd_d_chain}. For the inductive case, we can see that $f_n\perp y_{n-1}$ as these chains are supported on different sets of simplices. Therefore, $\|y_n\|^2 = \|f_n\|^2 + d^{2}\cdot\|y_{n-1}\| \in \Omega(d^{2n}(d+1))$. 
\end{proof}

Lemma \ref{lem:Bdn_d_chain} shows that $\mathcal{B}_d^{n}$ has an exponentially-large $d$-chain $f$ with constant-sized boundary. If we can show that $\ker\boundary_d=0$, this will prove that $\boundary f$ has exponentially-large effective resistance, as $f$ will be the \textit{only} $d$-chain with boundary $\boundary f$.

\begin{corollary}
\label{lem:Bdn_ker_boundary}
    The kernel of the boundary matrix $\boundary_d[\mathcal{B}_d^{n}]$  is trivial, i.e. $\ker\boundary_d[\mathcal{B}_d^{n}]=0$.
\end{corollary}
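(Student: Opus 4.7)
The plan is to argue directly by decomposing any element of $\ker\boundary_d[\mathcal{B}_d^n]$ along the layered structure of $\mathcal{B}_d^n$ and invoking two structural facts about the building block $B_d$ established in the appendix.

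Since every $d$-simplex of a copy $B_d^k$ triangulates part of the prism $\partial\Delta^d\times[k{-}1,k]$ and therefore has vertices in both layers $\sigma\times\{k-1\}$ and $\sigma\times\{k\}$, the $d$-simplices of $\mathcal{B}_d^n$ partition disjointly as $\{\sigma\times\{0\}\}$ (the lone $d$-simplex of $\mathcal{B}_d^0$) together with the $d$-simplices of $B_d^k$ for $k=1,\ldots,n$. Any $c \in \ker\boundary_d[\mathcal{B}_d^n]$ thus decomposes uniquely as $c = \lambda_0\cdot(\sigma\times\{0\}) + \sum_{k=1}^{n} c_k$ with $c_k \in C_d(B_d^k)$. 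For a fixed $k$, every ``interior'' $(d-1)$-simplex of $B_d^k$ (one with vertices in both layers) belongs to no other part of $\mathcal{B}_d^n$, so its coefficient in $\boundary c$ equals its coefficient in $\boundary c_k$; the assumption $\boundary c = 0$ therefore forces $\boundary c_k$ to be supported only on the faces of $\sigma\times\{k-1\}$ and $\sigma\times\{k\}$. Since $\boundary c_k$ is automatically a $(d-1)$-cycle and the cycle space on $\partial\Delta^d \sqcup \partial\Delta^d$ is $\spn\{\boundary(\sigma\times\{k-1\}),\boundary(\sigma\times\{k\})\}$, we obtain $\boundary c_k$ in this 2-dimensional subspace.

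Two structural properties of $B_d$, both consequences of its explicit construction in the appendix, finish the argument: (i) $\ker\boundary_d[B_d] = 0$, and (ii) the subspace of $C_d(B_d)$ mapped by $\boundary_d$ into $\spn\{\boundary(\sigma\times\{0\}),\boundary(\sigma\times\{1\})\}$ is 1-dimensional, spanned by the chain $f$ of \Cref{lem:Bd_d_chain}. Property (ii) forces $c_k = \alpha_k f_k$ for some $\alpha_k \in \R$, where $f_k$ denotes the analogue of $f$ in $B_d^k$, and consequently $\boundary c_k = \alpha_k\bigl[\boundary(\sigma\times\{k\}) + d\cdot\boundary(\sigma\times\{k-1\})\bigr]$. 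Substituting into $\boundary c = 0$ and equating coefficients on each $\boundary(\sigma\times\{k\})$ (these are linearly independent since they are supported on pairwise-disjoint sets of $(d-1)$-simplices) yields the triangular system $\alpha_n = 0$, $\alpha_k + d\alpha_{k+1} = 0$ for $1 \le k \le n-1$, and $\lambda_0 + d\alpha_1 = 0$. Propagating from $\alpha_n = 0$ upward forces every $\alpha_k = 0$ and then $\lambda_0 = 0$, so $c = 0$.

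The main obstacle is establishing properties (i) and (ii). These amount to showing that the stellar-subdivision-and-identification procedure defining $B_d$ introduces no new $d$-cycles relative to the prism $\partial\Delta^d\times[0,1]$ (which has $H_d = 0$), and that the two boundary cycles of the top and bottom of $B_d$ satisfy only the single homological relation $[\boundary(\sigma\times\{0\})] = -d\cdot[\boundary(\sigma\times\{1\})]$ in $H_{d-1}(B_d)$ witnessed by $\boundary f$, rather than either cycle being a boundary on its own.
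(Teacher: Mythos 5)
Your approach is genuinely different from the paper's, which argues at the level of the whole complex: Lemma~\ref{lem:Bdn_collapses} shows $\mathcal{B}_d^n$ collapses to a $(d-1)$-dimensional subcomplex, hence $H_d(\mathcal{B}_d^n)=0$, and since $\mathcal{B}_d^n$ is $d$-dimensional and $\im\boundary_{d+1}=0$, this forces $\ker\boundary_d=0$. You instead decompose an arbitrary $d$-cycle block-by-block and propagate a triangular linear recursion. The linear-algebraic skeleton of your argument is sound: the $d$-simplices of the blocks are pairwise disjoint, the interior-face support argument correctly pins $\boundary c_k$ into the two-dimensional cycle space $\spn\{\boundary(\sigma\times\{k-1\}),\boundary(\sigma\times\{k\})\}$, the layer boundaries are linearly independent, and the recurrence $\alpha_n=0$, $\alpha_k=-d\,\alpha_{k+1}$, $\lambda_0=-d\,\alpha_1$ does kill everything.

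The gap is exactly where you flag it: properties (i) and (ii) of $B_d$ are asserted, not established, and they are not trivial. Property (i), $\ker\boundary_d[B_d]=0$, follows from Lemma~\ref{lem:Bd_collapses} by the same collapse $\Rightarrow H_d=0 \Rightarrow \ker\boundary_d=0$ reasoning the paper applies to $\mathcal{B}_d^n$. Property (ii) is subtler: given (i), it is equivalent to neither $\boundary(\sigma\times\{0\})$ nor $\boundary(\sigma\times\{1\})$ being individually null-homologous in $B_d$, and that requires the \emph{strengthened} clause in Lemma~\ref{lem:Bd_collapses} (the collapse never touches $(d-1)$-faces of $\sigma\times\{1\}$) together with Lemma~\ref{lem:collapsibility_and_null_homology}. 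Without it your system could fail to be triangular --- e.g.\ a nonzero $c_n$ with $\boundary c_n = \boundary(\sigma\times\{n\})$ would survive. So your route still bottoms out in the same appendix collapsibility machinery, just invoked twice at the level of a single block rather than once at the level of the whole complex. The calculation is instructive (it exhibits the geometric recurrence that makes the effective resistance blow up), but as a proof of the corollary it is longer and incomplete at the step that carries the real content.
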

\begin{proof}
    Lemma \ref{lem:Bdn_collapses} in Appendix \ref{apx:collapsibility} shows that $\mathcal{B}_d^{n}$ collapses to a $(d-1)$-dimensional subcomplex; call this subcomplex $L$. This implies that $\mathcal{B}_d^{n}$ is homotopy equivalent to $L$, so in particular, the $d$-dimensional homology $H_d(\mathcal{B}_d^{n})=0$. As $\mathcal{B}_d^{n}$ is $d$-dimensional, then $\im\boundary_{d+1}[\mathcal{B}_d^{n}]=0$, so the only way that $H_d(\mathcal{B}_d^{n})$ can equal 0 is if $\ker\boundary_d[\mathcal{B}_d^{n}]=0$.
\end{proof}

\begin{proof}[Proof of Theorem \ref{thm:resistance_lower_bound}]
     Our cycle is the normalized cycle $\gamma = \boundary y_n/\|\boundary y_n\|$ where $y_n$ is the $d$-chain from Lemma \ref{lem:Bdn_d_chain}. We know that $\|y_n\|^2 / \|\boundary y_n\|^2 \in\Omega(d^{2n})$. Moreover, we know that $\ker\boundary_d=0$ by Corollary \ref{lem:Bdn_ker_boundary}. Therefore, we conclude that effective resistance $\mathcal{R}_{\gamma}(\mathcal{B}_d^{n})\in \Theta(d^{2n})$. 
    \par 
    Now we need to restate this bound in terms of the number of $d$-simplices of $\mathcal{B}_d^{n}$. Each copy of $B_d$ has $\Theta((d+1)^{3})$ $d$-simplices. Therefore, the entire complex $\mathcal{B}_d^{n}$ has $n_d = \Theta(n\cdot (d+1)^{3})$ $d$-simplices. If we substitute $n_d$ into our bound, we find that the effective resistance is bound below by $O((d+1)^{2n_d/c\cdot(d+1)^{3}})$ for some constant $c$. Therefore, the constant in the theorem statement is $c_d=(d+1)^{2/c\cdot(d+1)^{3}}$.
\end{proof}

\paragraph{Related Work.}

Variants of this construction are sometimes called the \textit{Iterated Mapping Cylinder} and have been used as a worst-case construction for other topological properties like torsion~\cite{Newman2018}, homotopy~\cite{Filakovský2018}, or embeddability~\cite{brehm1992linear, Freedman2014}. However, ours is the first work showing these complexes have a cycle with exponentially-large effective resistance (or exponentially-small spectral gap, as we will see later.) We were specifically inspired by the work of Newman~\cite{Newman2018}.
\par 
However, our construction is more efficient than previous constructions of the Iterated Mapping Cylinder in terms of the number of $d$-simplices; this is why we dedicate several pages in the appendix to this construction. As an example, the building block in Newman's construction is the iterated suspension of the M\"obius band, and the number of $d$-simplices in the suspension grows exponentially with the dimension $d$. Comparatively, the number of $d$-simplices in our building block only grows polynomially with the dimension. Additionally, our construction has a cycle with $\pm 1$ coefficients that is homologous to a cycle with $\pm d$ coefficients; in contrast, previous works have a cycle with $\pm 1$ coefficients that is homologous to a cycle with $\pm 2$ coefficients. Both of these properties result in a larger constant $c_d$ in \Cref{thm:resistance_lower_bound}. It is an open question if there is a simplicial complex with a cycle whose effective resistance exactly matches the constant of the lower bound, i.e. $c_d\in\Theta(d+1)$. 

\subsubsection{Lower Bounds on Capacitance.}

In this section, we describe a pair of nested simplicial complexes that have a cycle with exponentially large effective capacitance. This complex will be built from the same building block as the complex for lower bounding effective resistance, but the simplicial complex will built by gluing together the building blocks in slightly different ways.
\par 
Recall that the building block is a simplicial complex $B_d$ with vertices $\sigma\times\{0,1\}$, where $\sigma=\{v_0,\ldots,v_d\}$ is a $d$-simplex. For a natural number $n$, we recursively construct the simplicial complex $\mathcal{Q}_d^{n}$. The simplicial complex $\mathcal{Q}_d^{n}$ is obtained by gluing together $n$ copies of $B_d$. The vertices of $\mathcal{Q}_d^{n}$ will be denoted $\sigma\times\{0,\ldots,n\}$. The base case $\mathcal{Q}_d^{n}$ is the complete complex on the vertices $\sigma\times\{0\}$. Inductively, the complex $\mathcal{Q}_d^{n}$ is obtained from $\mathcal{Q}_d^{n-1}$ by identifying the vertices $\sigma\times\{n-1\}$ of $\mathcal{Q}_d^{n-1}$ with the vertices $\sigma\times\{0\}$ of a copy of $B_d$.\footnotemark We will denote this copy of $B_d$ as $B_d^{n}$ and the vertices $\sigma\times\{1\}$ as $\sigma\times\{n\}$. The simplicial complex $\mathcal{P}_d^{n}$ is defined $Q_d^{n-1}$ minus the simplex $\sigma\times\{0\}$.
\footnotetext
{ 
    It is worth comparing $\mathcal{Q}_d^{n}$ to the simplicial complex $\mathcal{B}_d^{n}$ from the bounds on effective resistance. Both complexes are obtained by identifying vertices of copies of the building block $B_d$, but the identifications are made in different ways. When a building block is added to $\mathcal{B}_d^{i}$ to form $\mathcal{B}_d^{i+1}$, its ``top'' $\sigma\times\{1\}$ is identified with $\sigma\times\{i\}$. When a building block is added to $\mathcal{Q}_d^{i}$ to form $\mathcal{Q}_d^{i+1}$, its ``bottom'' $\sigma\times\{0\}$ is identified with $\sigma\times\{i\}$. It is easiest to see the difference between $\mathcal{B}_d^{n}$ and $\mathcal{Q}_d^{n}$ by considering the cycle $\boundary(\sigma\times\{n\})$ in both complexes. The unique $d$-chain $f\in C_d(\mathcal{B}_d^{n})$ with boundary $\boundary(\sigma\times\{n\})$ assigns exponentially-large coefficients to some $d$-simplices. In contrast, the unique $d$-chain $f\in C_d(\mathcal{Q}_d^{n})$ assigns exponentially-small coefficients to some $d$-simplices. (This is a corollary to the proof of~\Cref{lem:Pdn_Bdn_d_chain}.)  
}
\par 
We will prove that the cycle $\gamma = \boundary(\sigma\times\{n\}$ has exponentially large capacitance in $\mathcal{P}_d^{n}\subset \mathcal{Q}_d^{n}$. In order for $\gamma$ to have finite capacitance, the cycle $\gamma$ must be null-homologous in $\mathcal{Q}_d^{n}$ but not null-homologous in $\mathcal{P}_d^{n}$. We prove this in the following lemma.

\begin{lemma}
\label{lem:Pdn_Bdn_d_chain}
    Let $\mathcal{P}_d^{n}$ and $\mathcal{Q}_d^{n}$ as described above. Consider the chain $\boundary(\sigma\times\{n\})\in C_d(\mathcal{P}_d^{n})$. 
    \begin{enumerate}
        \item The $(d-1)$-chain $\boundary(\sigma\times\{n\}) + (-1)^{n-1}\frac{1}{d^{n}}\boundary(\sigma\times\{0\})$ is null-homologous in $\mathcal{P}_d^{n}$.
        \item The $(d-1)$-chain $\boundary(\sigma\times\{n\})$ is null-homologous in $Q_d^{n}$.
        \item The $(d-1)$-chain $\boundary(\sigma\times\{n\})$ is not null-homologous in $P_d^{n}$.
    \end{enumerate}
\end{lemma}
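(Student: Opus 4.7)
My plan is to establish Parts 1 and 2 by a direct iterative construction of bounding chains, and then to derive Part 3 from Part 2 together with the triviality of $\ker\boundary_d[\mathcal{Q}_d^n]$.

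For Parts 1 and 2, let $f_i\in C_d(B_d^i)$ denote the chain of \Cref{lem:Bd_d_chain} translated into $B_d^i$, so $\boundary f_i=\boundary(\sigma\times\{i-1\})+d\cdot\boundary(\sigma\times\{i\})$. Set $g_0:=\sigma\times\{0\}$ and recursively $g_i:=\tfrac{1}{d}(f_i-g_{i-1})$; a one-line induction yields $\boundary g_i=\boundary(\sigma\times\{i\})$, which proves Part~2 via the bounding chain $g_n\in C_d(\mathcal{Q}_d^n)$. Unrolling the recursion gives the closed form
\[
  g_n \;=\; \sum_{i=1}^{n}\frac{(-1)^{n-i}}{d^{\,n-i+1}}\,f_i \;+\; \frac{(-1)^n}{d^n}\,(\sigma\times\{0\}).
\]
For Part~1, set $h:=g_n-\tfrac{(-1)^n}{d^n}(\sigma\times\{0\})=\sum_{i=1}^{n}\tfrac{(-1)^{n-i}}{d^{n-i+1}}f_i$. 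Each $f_i$ is supported in $B_d^i\subset\mathcal{P}_d^n$, so $h\in C_d(\mathcal{P}_d^n)$, and a direct calculation gives $\boundary h=\boundary(\sigma\times\{n\})+(-1)^{n-1}\tfrac{1}{d^n}\boundary(\sigma\times\{0\})$, which is exactly the chain in Part~1.

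For Part~3, suppose toward contradiction that $\boundary(\sigma\times\{n\})=\boundary h'$ for some $h'\in C_d(\mathcal{P}_d^n)$. Then $g_n-h'\in\ker\boundary_d[\mathcal{Q}_d^n]$. If this kernel is trivial, then $g_n=h'$; but the $(\sigma\times\{0\})$-coefficient of $g_n$ is $\tfrac{(-1)^n}{d^n}\neq 0$, while $h'$, being supported on $\mathcal{P}_d^n$, has this coefficient equal to $0$---a contradiction. So Part~3 reduces to showing $\ker\boundary_d[\mathcal{Q}_d^n]=0$.

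I plan to prove $\ker\boundary_d[\mathcal{Q}_d^n]=0$ by induction on $n$. The base case $\mathcal{Q}_d^0$ is the closed $d$-simplex, where $\boundary_d$ consists of one nonzero column. For the inductive step, write $c\in\ker\boundary_d[\mathcal{Q}_d^n]$ as $c=c_1+c_2$ with $c_1\in C_d(\mathcal{Q}_d^{n-1})$ and $c_2\in C_d(B_d^n)$. Because $\mathcal{Q}_d^{n-1}$ and $B_d^n$ share only the $(d-1)$-simplices appearing in $\boundary(\sigma\times\{n-1\})$, the equation $\boundary c=0$ forces $\boundary c_2$ to be supported on $\boundary(\sigma\times\{n-1\})$; equivalently, $c_2$ represents a class in the relative homology $H_d(B_d^n,\boundary(\sigma\times\{n-1\}))$. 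The main obstacle is showing this relative homology vanishes. The cleanest route is the long exact sequence of the pair, using that $B_d\simeq S^{d-1}$ and that the inclusion $\boundary(\sigma\times\{n-1\})\hookrightarrow B_d^n$ is a homotopy equivalence on $H_{d-1}$, which together give $H_d(B_d^n,\boundary(\sigma\times\{n-1\}))=0$. An alternative route is to adapt the explicit collapsing argument of \Cref{lem:Bdn_collapses} to prove that $\mathcal{Q}_d^n$ collapses to a contractible subcomplex directly. Either way, once this is in hand, $c_2=0$ and the induction hypothesis forces $c=c_1=0$, completing the proof.
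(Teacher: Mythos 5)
Parts 1 and 2 of your proposal are correct and are essentially the paper's argument with the two chains built in the opposite order: your $g_n$ is the paper's $y_n+(-1)^n d^{-n}(\sigma\times\{0\})$ and your $h$ is the paper's $y_n$. Your Part-3 reduction is clean as far as it goes: if $\ker\boundary_d[\mathcal{Q}_d^n]=0$, then $g_n$ is the \emph{unique} bounding chain of $\boundary(\sigma\times\{n\})$ in $\mathcal{Q}_d^n$, and since its $(\sigma\times\{0\})$-coefficient is $(-1)^n/d^n\neq 0$, no bounding chain can live in $\mathcal{P}_d^n$.

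The gap is that $\ker\boundary_d[\mathcal{Q}_d^n]=0$ is of the same depth as Part 3 itself (a short LES shows it is equivalent to $\boundary(\sigma\times\{0\})$ not being null-homologous in $\mathcal{P}_d^n$, which by your Part 1 is equivalent to Part 3), and neither of your sketched routes actually establishes it. For the LES route: the inclusion $\partial\Delta^d\times\{n-1\}\hookrightarrow B_d^n$ is \emph{not} a homotopy equivalence on $H_{d-1}$ over $\Z$; by \Cref{lem:Bd_d_chain} it has degree $\pm d$. Over $\R$ it is still injective, which is all the LES needs, but proving injectivity requires knowing $[\boundary(\sigma\times\{1\})]\neq 0$ in $H_{d-1}(B_d)$ --- and that is precisely the building-block version of Part 3 (it can be derived from \Cref{lem:Bd_collapses} plus \Cref{lem:collapsibility_and_null_homology}, but then you have simply re-run the paper's argument one level down). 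The assertion ``$B_d\simeq S^{d-1}$'' is plausible (it is a mapping cylinder of a degree-$d$ self-map of $S^{d-1}$) but is not proved anywhere in the paper and needs its own argument. The collapsibility route is worse: the free faces that start the collapse of $B_d$ in \Cref{lem:Bd_collapses} are the bottom simplices $\sigma_i\times\{0\}$, and in $\mathcal{Q}_d^n$ these are also faces of the extra simplex $\sigma\times\{0\}$, so they are not free and the collapse cannot begin; $\mathcal{Q}_d^n$ does not admit the obvious analogue of \Cref{lem:Pdn_collapses}. The paper sidesteps all of this by never needing $\ker\boundary_d[\mathcal{Q}_d^n]$: it collapses $\mathcal{P}_d^n$ (not $\mathcal{Q}_d^n$) to a $(d-1)$-dimensional subcomplex $L$ that still contains the faces of $\sigma\times\{n\}$ --- which works exactly because \Cref{lem:Bd_collapses} never touches the top faces of a block --- and then \Cref{lem:collapsibility_and_null_homology} forces $\boundary(\sigma\times\{n\})$ to be non-null-homologous because $L$ has no $d$-simplices at all.
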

\begin{proof}
    \textit{Proof of Part (1)} We will prove that the cycle $\boundary(\sigma\times\{i\}) + \frac{1}{d^{i}}\boundary(\sigma\times\{0\})$ is null-homologous in $Q_d^{i}$ by induction on $i$. Specifically, we will find a $d$-chain $y_i$ such that $\boundary y_i = \boundary(\sigma\times\{i\}) + \frac{1}{d^{i}}\boundary(\sigma\times\{0\})$. For each copy of the building block $B_d^{i}$, let $f_i\in C_d(B_d^{i})$ be the $d$-chain guaranteed by Lemma~\ref{lem:Bd_d_chain} such that $\boundary f_i = \boundary(\sigma\times\{i-1\}) + d\cdot\boundary(\sigma\times\{i\})$. For the base case of $i=1$, the $d$-chain $y_1=\frac{1}{d} f_1$. Inductively, we define the chain $y_{i} = \frac{1}{d} f_i - \frac{1}{d} y_{i-1}$. We can verify that $y_i$ has the claimed boundary as 
    \begin{align*}
        \boundary y_i =& \frac{1}{d} \boundary f_i - \frac{1}{d} \boundary y_{i-1}  \\
        =& \boundary(\sigma\times\{i\}) + \frac{1}{d}\boundary(\sigma\times\{i-1\}) - \frac{1}{d}\boundary(\sigma\times\{i-1\}) + (-1)^{i-1}\frac{1}{d^{i}}\boundary(\sigma\times\{0\}) \\ 
        =& \boundary(\sigma\times\{i\}) + (-1)^{i-1}\frac{1}{d^{i}}\boundary(\sigma\times\{0\}) 
    \end{align*}
    \par 
    \textit{Proof of Part (2)} As the simplex $(\sigma\times\{0\})\in\mathcal{Q}_d^{n}$, then the chain $y_n + (-1)^{n}\frac{1}{d^{n}}\cdot(\sigma\times\{0\})$ obviously has boundary $\boundary(\sigma\times\{n\})$.
    \par
    \textit{Proof of Part (3)} To prove that $\boundary(\sigma\times\{n\})$ is not null-homologous in $\mathcal{P}_d^{n}$, we will use two lemmas we prove in the appendix. \Cref{lem:Pdn_collapses} shows that $\mathcal{P}_d^{n}$ collapses to a $(d-1)$-dimensional subcomplex---call it $L$---containing the support of $\boundary(\sigma\times\{n\})$. \Cref{lem:collapsibility_and_null_homology} shows that for nested complexes $L\subset K$ such that $K$ collapses to $L$, a cycle is null-homologous in $L$ if and only if it is null-homologous in $K$. As $\boundary(\sigma\times\{n\})$ is not null-homologous in $L$ ($L$ is $(d-1)$-dimensional, so $\im\boundary_d[L]=0$), then $\boundary(\sigma\times\{n\})$ is not null-homologous in $\mathcal{P}_d^{n}$ either.
\end{proof}

\begin{restatable}{theorem}{thmlowerboundeffectivecapacitance}
    \label{thm:capacitance_lower_bound}
    Let $d$, $n$ be positive integers. There is a pair of nested $d$-dimensional simplicial complexes $\mathcal{P}_d^n\subset\mathcal{Q}_d^{n}$ with $n_d \in \Theta((d+1)^{3}n)$ $d$-simplices, a unit-length null-homologous cycle $\gamma\in C_{d-1}(\mathcal{Q}_d^n)$, and a constant $c_d\geq 1$ that depends only on $d$ such that $\C_{\gamma}(\mathcal{P}_d^n, \mathcal{Q}_d^n)\in \Theta(c_d^{n_d})$. 
\end{restatable}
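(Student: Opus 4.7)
Take $\gamma = \boundary(\sigma\times\{n\})/\sqrt{d+1}$. By \Cref{lem:Pdn_Bdn_d_chain} parts (2) and (3), $\gamma$ is a unit-length $(d{-}1)$-cycle that is null-homologous in $\mathcal{Q}_d^n$ but not in $\mathcal{P}_d^n$, so $\C_\gamma(\mathcal{P}_d^n,\mathcal{Q}_d^n)$ is finite and well-defined.

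The key structural fact about the construction is that $\mathcal{Q}_d^n$ and $\mathcal{P}_d^n$ differ by exactly one $d$-simplex, namely $\sigma\times\{0\}$. Hence, for any unit $\gamma$-potential $p$ (which satisfies $\coboundary[\mathcal{P}_d^n] p = 0$), the potential energy collapses to a single term:
\[
\mathcal{J}(p) \;=\; \|\coboundary[\mathcal{Q}_d^n]p\|^2 \;=\; \sum_{\tau\in(\mathcal{Q}_d^n)_d}(p^T\boundary\tau)^2 \;=\; \bigl(p^T\boundary(\sigma\times\{0\})\bigr)^2,
\]
since every other summand vanishes by the cocycle condition on $\mathcal{P}_d^n$.

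To evaluate $p^T\boundary(\sigma\times\{0\})$, I would invoke \Cref{lem:Pdn_Bdn_d_chain} part (1), which supplies a chain $y_n\in C_d(\mathcal{P}_d^n)$ with $\boundary y_n = \boundary(\sigma\times\{n\}) + (-1)^{n-1} d^{-n}\,\boundary(\sigma\times\{0\})$. Because $p$ is a cocycle in $\mathcal{P}_d^n$ and $y_n$ is supported on the $d$-simplices of $\mathcal{P}_d^n$, we get $p^T\boundary y_n = 0$. Combined with the normalization $p^T\boundary(\sigma\times\{n\}) = \sqrt{d+1}\cdot p^T\gamma = \sqrt{d+1}$, this \emph{pins down} the remaining term:
\[
p^T\boundary(\sigma\times\{0\}) \;=\; (-1)^{n}\,\sqrt{d+1}\cdot d^{n},
\]
independently of the choice of $p$. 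Therefore every unit $\gamma$-potential has the same potential energy, so $\C_\gamma(\mathcal{P}_d^n,\mathcal{Q}_d^n) = (d+1)\cdot d^{2n}$.

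Finally, since $\mathcal{Q}_d^n$ is assembled from $n$ copies of $B_d$, \Cref{lem:Bd_d_chain} gives $n_d = \Theta(n(d+1)^3)$. Rewriting $d^{2n}$ in terms of $n_d$ yields $\C_\gamma(\mathcal{P}_d^n,\mathcal{Q}_d^n) \in \Theta(c_d^{n_d})$ with $c_d = d^{2/(\kappa (d+1)^3)} \geq 1$ for a constant $\kappa$ depending only on $d$. The only conceptual step is the observation that the single missing simplex $\sigma\times\{0\}$, together with the chain $y_n$ from part (1) of \Cref{lem:Pdn_Bdn_d_chain}, forces the value $p^T\boundary(\sigma\times\{0\})$ to be the same for every unit $\gamma$-potential; this is what makes the bound tight rather than merely a lower bound, and everything else is bookkeeping.
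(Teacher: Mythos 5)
Your proof is correct and follows essentially the same route as the paper: same cycle $\gamma = \boundary(\sigma\times\{n\})/\sqrt{d+1}$, same observation that $\mathcal{Q}_d^n\setminus\mathcal{P}_d^n$ is a single $d$-simplex so the potential energy collapses to $\bigl(p^T\boundary(\sigma\times\{0\})\bigr)^2$, and same use of \Cref{lem:Pdn_Bdn_d_chain} Part~1 plus the normalization $p^T\gamma=1$ to show this quantity is forced to equal $(d+1)\,d^{2n}$ for every unit $\gamma$-potential, which immediately gives the $\Theta$ bound after reparameterizing by $n_d$. Incidentally, your algebra is the more careful of the two: the paper writes $p^T\boundary(\sigma\times\{0\}) = (-1)^{n-1}d^n/\sqrt{d+1}$, whereas solving $\sqrt{d+1} + (-1)^{n-1}d^{-n}\,p^T\boundary(\sigma\times\{0\}) = 0$ actually gives $(-1)^{n}\sqrt{d+1}\,d^n$ as you state; the discrepancy is a constant factor of $d+1$ and has no effect on the asymptotics.
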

\begin{proof}
    The complexes $\mathcal{P}_{d}^{n}$ and $\mathcal{Q}_d^{n}$ are the complexes described in the preceding paragraphs. The cycle $\gamma=\boundary(\sigma\times\{n\})/\sqrt{d+1}$, where $\sqrt{d+1}$ is a normalization factor. We must show that any unit $\gamma$-potential $p$ has exponentially-large potential energy. 
    \par 
    Let $p$ be any unit $\gamma$-potential in $\mathcal{P}_d^n$. We know that $\gamma^{T}p=1$ and $\coboundary_{d-1}[\mathcal{P}_n^{d}]p=0$. As $\coboundary_{d-1}[\mathcal{P}_n^{d}] = \boundary_{d}^{T}[\mathcal{P}_n^{d}]$, the second condition is equivalent to saying that $p^{T}b=0$ for any vector $b\in\im\boundary_{d}[\mathcal{P}_d^{n}]$. \Cref{lem:Pdn_Bdn_d_chain} Part 1 proves that $\boundary(\sigma\times\{n\})+(-1)^{n-1}d^{-n}\boundary(\sigma\times\{0\})\in\im\boundary_d[\mathcal{P}_d^{n}]$, so the previous two facts imply $p^{T}\big(\boundary(\sigma\times\{n\})+(-1)^{n-1}d^{-n}(\boundary(\sigma\times\{0\}))\big)=0$. As $p^{T}\boundary(\sigma\times\{n\})=\sqrt{d+1}$ and $p^{T}\big(\boundary(\sigma\times\{n\})+(-1)^{n-1}d^{-n}(\boundary(\sigma\times\{0\}))\big)=0$, then we conclude that $p^{T}\boundary(\sigma\times\{0\}) =(-1)^{n-1}d^{n}/\sqrt{d+1}$. Moreover, as the only $d$-simplex in $\mathcal{Q}_d^{n}$ that is not in $\mathcal{P}_d^{n}$ is $\sigma\times\{0\}$, then the fact that $\coboundary_{d-1}[\mathcal{P}_d^{n}]p=0$ implies that the potential energy $\|\coboundary_{d-1}[\mathcal{Q}_d^{n}]p\|^{2} = \big(p^{T}\boundary(\sigma\times\{0\})\big)^{2} = \Omega(d^{2n})$. 
    \par 
    This shows that $p$ has exponentially-large potential energy with respect to $n$. By the same argument as in the proof of~\Cref{thm:resistance_lower_bound}, we can show that $p$ also has exponentially-large potential energy with respect to $n_d$, but for a different (but constant) base of the exponent $c_d$. 
\end{proof}

\section{Bounds on the Spectral Gap}
\label{sec:bounds_spectral_gap}

Our lower and upper bounds on effective resistance imply lower and upper bounds on the spectral gap of the combinatorial Laplacian. This is because \textit{the spectral gap is the inverse of the maximum effective resistance of all unit-length, null-homologous cycles}, a fact we proved in Lemma~\ref{lem:spectral_gap_effective_resistance}. Therefore, a corollary of Theorem \ref{thm:resistance_lower_bound} is that the spectral gap of the combinatorial Laplacian can be exponentially-small in the worst-case. This resolves one of the most important open questions in the field of Quantum Topological Data Analysis and shows that Betti Number Estimation algorithms must run for an exponentially-long time to exactly compute Betti numbers.

\subsection{Exponentially-small spectral gap.}

Based on this connection between the spectral gap and effective resistance of Lemmas~\ref{lem:spectral_gap_effective_resistance} and \ref{lem:spectral_gap_combinatorial_up_down}, we can derive lower and upper bounds on the spectral gap of the $d$-combinatorial Laplacian as corollaries of the upper and lower bounds on the effective resistance (\Cref{cor:resistance_upper_bound} and \Cref{thm:resistance_lower_bound}). While lower bounds on the spectral gap were previously known (see for example \cite[Proof of Theorem 1.2]{Freedman2014}), one advantage of our proof is that it provides a necessary condition for large spectral gap, namely the existence of a subcomplex with exponentially-large relative torsion.

\spectralgaplowerbound*

\begin{restatable}{theorem}{coronesmalleigenvalue}
    \label{thm:spectral_gap_upper_bound}
    Let $d$, $n\geq 1$. There is a $d$-dimensional simplicial complex $\mathcal{B}_d^n$ with $n_d \in \Theta((d+1)^{3}\cdot n)$ $d$-simplices and a constant $c_d\geq 1$ that depends only on $d$ such that the spectral gaps of $L_{d-1}[\mathcal{B}_d^{n}]$ and $L_d[\mathcal{B}_d^{n}]$ are $O(\frac{1}{c_d^{n_d}})$. 
\end{restatable}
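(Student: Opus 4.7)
The plan is to derive \Cref{thm:spectral_gap_upper_bound} as a direct corollary of the exponential lower bound on effective resistance in \Cref{thm:resistance_lower_bound} by invoking the characterization of the spectral gap given by \Cref{lem:spectral_gap_effective_resistance}. Concretely, I would take $\mathcal{B}_d^n$ to be exactly the complex constructed in \Cref{thm:resistance_lower_bound}, whose size guarantee $n_d \in \Theta((d+1)^3 n)$ is already established. That theorem produces a unit-length, null-homologous $(d-1)$-cycle $\gamma$ with $\mathcal{R}_\gamma(\mathcal{B}_d^n) \in \Theta(c_d^{n_d})$ for a constant $c_d > 1$.

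Next I would bound the spectral gap of the up Laplacian $L_{d-1}^{up}[\mathcal{B}_d^n]$ via \Cref{lem:spectral_gap_effective_resistance}, which states
\[
\lambda_{\min}(L_{d-1}^{up}[\mathcal{B}_d^n]) \;=\; \min\{\mathcal{R}_\eta^{-1}(\mathcal{B}_d^n) : \eta \in \im\boundary_d,\,\|\eta\|=1\}.
\]
Since $\gamma$ is null-homologous (so $\gamma \in \im\boundary_d$) and unit-length, it is a valid candidate in this minimization, yielding
\[
\lambda_{\min}(L_{d-1}^{up}[\mathcal{B}_d^n]) \;\leq\; \mathcal{R}_\gamma^{-1}(\mathcal{B}_d^n) \;\in\; O(1/c_d^{n_d}).
\]

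To transfer this bound to the combinatorial Laplacians $L_{d-1}[\mathcal{B}_d^n]$ and $L_d[\mathcal{B}_d^n]$, I would apply the standard duality between up and down spectra (\Cref{lem:properties_spectrum_laplacian} Part 1), which gives $\lambda_{\min}(L_d^{down}[\mathcal{B}_d^n]) = \lambda_{\min}(L_{d-1}^{up}[\mathcal{B}_d^n])$. Then \Cref{lem:spectral_gap_combinatorial_up_down} implies
\[
\lambda_{\min}(L_{d-1}[\mathcal{B}_d^n]) \leq \lambda_{\min}(L_{d-1}^{up}[\mathcal{B}_d^n]) \in O(1/c_d^{n_d})
\]
and
\[
\lambda_{\min}(L_d[\mathcal{B}_d^n]) \leq \lambda_{\min}(L_d^{down}[\mathcal{B}_d^n]) \in O(1/c_d^{n_d}),
\]
establishing both bounds with a single witness cycle.

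There is essentially no obstacle beyond correctly invoking the prior lemmas; the only subtle point is making sure the minimization in \Cref{lem:spectral_gap_effective_resistance} is restricted to $\im\boundary_d$ (it is), and that the cycle $\gamma$ from \Cref{thm:resistance_lower_bound} lies in that subspace (it does, since it is null-homologous by construction). The heavy lifting has been done in \Cref{thm:resistance_lower_bound} via the iterated-mapping-cylinder-style construction, so this theorem is a short corollary. If one wanted to also emphasize that the same base $c_d$ works here, one would simply inherit it from \Cref{thm:resistance_lower_bound} without modification.
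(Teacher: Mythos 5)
Your proposal is correct and takes essentially the same route as the paper: the theorem is stated there as a corollary of \Cref{lem:spectral_gap_effective_resistance}, \Cref{lem:spectral_gap_combinatorial_up_down}, and \Cref{thm:resistance_lower_bound}, which is precisely the chain of lemmas you invoke. Your application of \Cref{lem:properties_spectrum_laplacian} Part 1 to identify $\lambda_{\min}(L_d^{down})$ with $\lambda_{\min}(L_{d-1}^{up})$, and the observation that the single witness cycle $\gamma$ yields the upper bound for both Laplacians, fills in the details exactly as the paper intends.
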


We remark that we can also derive lower bound of the spectral gap in terms of relative torsion as a corollary to Theorem~\ref{thm:resistance_upper_bound_torsion}. This implies bounded spectral gap for simplicial complexes with no relative torsion, as remarked upon by Friedman~\cite[Theorem 7.2]{Friedman1998BettiNumbers} in the case of orientable $d$-manifolds.

\subsection{Many Small Eigenvalues.}

Corollary~\ref{thm:spectral_gap_upper_bound} shows there is a simplicial complex with a single small eigenvalue. It is natural to ask whether there is a bound on the number of very small eigenvalues a simplicial complex can have. This is relevant to QTDA algorithms that work by counting the number of eigenvalues smaller than a given threshold. Here, we provide a complex ${\mathcal M}_{d}^{n}$ with a polynomial number of exponentially-small eigenvalues. 

\begin{restatable}{corollary}{thmmanysmalleigenvalues}
\label{thm:many_small_eigenvalues}
    Let $n, d\geq 1$. There exists a simplicial complex ${\cal M}_{d}^{n}$ with $n_d\in \Theta((d+1)^{3}\cdot n)$ $d$-simplices and a constant $c_d>1$ that depends only on $d$ such that both $L_{d-1}[{\cal M}_{d}^{n}]$ and $L_{d}[{\cal M}_{d}^{n}]$ have $\Omega(\sqrt{n_d})$ eigenvalues of size $O(\frac{1}{c_d^{\sqrt{n_d}}})$.
\end{restatable}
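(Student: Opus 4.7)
The plan is to construct $\mathcal{M}_d^n$ as a disjoint union of many small copies of the complex $\mathcal{B}_d^m$ from \Cref{thm:spectral_gap_upper_bound}, balancing the number of copies against the size of each copy so that each contributes a sufficiently small eigenvalue while the total simplex count remains linear in $n$. Specifically, I would set $m := \lceil \sqrt{n}/(d+1)^{3/2}\rceil$ and take $k := \lceil (d+1)^{3/2}\sqrt{n}\rceil$ disjoint copies of $\mathcal{B}_d^m$ (on pairwise disjoint vertex sets). By \Cref{thm:spectral_gap_upper_bound}, each copy has $\Theta((d+1)^3 m)$ $d$-simplices, so the total is
\[
  n_d = k \cdot \Theta((d+1)^3 m) \;=\; \Theta\bigl((d+1)^{3/2}\sqrt{n}\bigr) \cdot \Theta\bigl((d+1)^{3/2}\sqrt{n}\bigr) \;=\; \Theta\bigl((d+1)^3 n\bigr),
\]
as required by the statement. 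In particular, $\sqrt{n_d} = \Theta((d+1)^{3/2}\sqrt{n})$, so $k = \Omega(\sqrt{n_d})$.

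Next I would apply Part~3 of \Cref{lem:properties_spectrum_laplacian}: because the $k$ copies of $\mathcal{B}_d^m$ live on disjoint vertex sets, they are the connected components of $\mathcal{M}_d^n$, and the nonzero spectrum of $L_d[\mathcal{M}_d^n]$ (respectively $L_{d-1}[\mathcal{M}_d^n]$) is the multiset union of the nonzero spectra of $L_d[\mathcal{B}_d^m]$ (respectively $L_{d-1}[\mathcal{B}_d^m]$) over the $k$ copies. \Cref{thm:spectral_gap_upper_bound} guarantees that each copy contributes a nonzero eigenvalue of magnitude $O(1/c_d^{(d+1)^3 m})$ in both $L_d$ and $L_{d-1}$. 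By the choice of $m$, this magnitude equals $O(1/c_d^{(d+1)^{3/2}\sqrt{n}}) = O(1/c_d^{\sqrt{n_d}/\kappa})$ for an absolute constant $\kappa$ depending only on the $\Theta$-constants hidden in the sizing, so after replacing $c_d$ with $c_d^{1/\kappa}$ we obtain $\Omega(\sqrt{n_d})$ eigenvalues of magnitude $O(1/c_d^{\sqrt{n_d}})$ in both Laplacians.

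The main subtlety, rather than a genuine obstacle, is bookkeeping the constants: the constants $c_d$ in \Cref{thm:spectral_gap_upper_bound} and the hidden $\Theta$-constants in $n_d^{copy} = \Theta((d+1)^3 m)$ must be tracked so that the exponent $(d+1)^3 m$ aligns asymptotically with $\sqrt{n_d}$; this is exactly what the choice $m = \Theta(\sqrt{n}/(d+1)^{3/2})$ accomplishes, and any slack is absorbed into the base $c_d$ of the exponential. A second minor point is to confirm that the eigenvalues produced are the nonzero ones (so that \Cref{lem:properties_spectrum_laplacian} Part~3 applies), which follows because \Cref{thm:spectral_gap_upper_bound} bounds the spectral gap, i.e., the smallest \emph{nonzero} eigenvalue. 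The construction automatically lifts to $L_{d-1}$ because \Cref{thm:spectral_gap_upper_bound} asserts both $\lambda_{\min}(L_{d-1}[\mathcal{B}_d^m])$ and $\lambda_{\min}(L_d[\mathcal{B}_d^m])$ are exponentially small, so the same $k$ copies simultaneously supply the required eigenvalues in both dimensions.
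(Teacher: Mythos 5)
Your proof is correct and takes essentially the same approach as the paper: form $\mathcal{M}_d^n$ from disjoint copies of the worst-case complex from Theorem~\ref{thm:spectral_gap_upper_bound} and invoke the connected-component decomposition of the spectrum (Part~3 of Lemma~\ref{lem:properties_spectrum_laplacian}, which you cite correctly; the paper mistakenly writes Part~2). Your explicit balancing of the copy size $m$ against the number of copies $k$ so that both scale as $\Theta(\sqrt{n_d})$ is in fact cleaner bookkeeping than the paper's, which takes $n_d$ copies of $\mathcal{B}_d^n$ and leaves the resulting quadratic blow-up in the simplex count to an implicit reparameterization of $n$.
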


\begin{proof}[Proof of Theorem \ref{thm:many_small_eigenvalues}]
    This follows from Part 2 of \Cref{lem:properties_spectrum_laplacian} relating the spectrum of a complex to the spectra of its connected components. We define the complex ${\cal M}_{d}^{n}$ to be $n_d$ disjoint copies of the ${\cal B}_{d}^{n}$ of Corollary \ref{thm:spectral_gap_upper_bound}. The complex ${\cal M}_{d}^{n}$ has $n_d^2$ $d$-simplices and $n_d$ eigenvalues of size $O(\frac{1}{c_d^{n_d}})$.
\end{proof}

\subsection{Clique-Dense Complexes.}

Existing QTDA algorithms perform best when the simplicial complex is clique-dense, meaning that the simplicial complex has close to the maximal number of $d$-simplices, i.e $n_{d}\sim \binom{n_0}{d+1}$; However, the simplicial complex we constructed in \Cref{thm:spectral_gap_upper_bound} is sparse: it only has $n_d\in\Theta((d+1)^{2} n_0)$ $d$-simplices. Therefore, \Cref{thm:spectral_gap_upper_bound} does not rule out the possibility that clique-dense complexes avoid worst-case spectral gap.
\par 
However, in this section, we show that we can extend the construction of \Cref{thm:spectral_gap_upper_bound} to clique-dense complexes (at the expense of making the constant $c_d$ of the exponent smaller.) To do this, we use a probabilistic coloring argument of Newman \cite{Newman2018} that reduces the number of vertices of a simplicial complex while preserving the number of $d$-simplices and the Laplacian.
\par 
We begin with definitions. A \textit{\textbf{coloring}} of a simplicial complex is a map on its vertices $c:\K_{0}\to\mathbb{N}$. The \textit{\textbf{pattern complex}} of a simplicial complex $K$ with coloring $c$ is the simplicial complex $\K_c=\{\{c(v) : v\in\sigma\}:\sigma\in \K \}$; intuitively, the pattern complex of $\K$ is the simplicial complex obtained by identify all vertices of $\K$ of the same color and identifying all simplices whose vertices have the same set of colors. While a simplex in $\K$ may be mapped to a lower-dimensional simplex in the pattern complex $\K_c$ if two of its vertices are the same color, we are only considered with colorings where this does not happen. A \textit{\textbf{proper coloring}} of a $d$-dimensional simplicial complex $\K$ is a coloring $c:\K_0\to \mathbb{N}$ such that (1) the endpoints of each edge in $\K$ are different colors and (2) $\{c(v):v\in\sigma_{1}\}\neq \{c(v):v\in\sigma_{2}\}$ for any distinct $(d-1)$-simplices $\sigma_{1},\sigma_{2}\in K$. Note that for a proper coloring, condition (1) guarantees that each simplex in $\K$ corresponds to a simplex of the same dimension in $\K_c$. Additionally, $\K$ and $\K_c$ have the same set of $(d-1)$ and $d$-simplices up to recoloring. Proper colorings are relevant to our paper as they preserve the spectral gap. 

\begin{lemma}
\label{lem:proper_coloring_preserves_laplacian}
    Let $\K$ be a $d$-dimensional simplicial complex and let $c$ be a proper coloring of $\K$. Then $\lambda_{\min}(L_{d-1}^{up}[\K])=\lambda_{\min}(L_{d-1}^{up}[\K_c])$
\end{lemma}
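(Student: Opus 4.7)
The plan is to show that the proper coloring induces signed permutation bijections on the basis elements of $C_{d-1}$ and $C_d$ under which the boundary matrices agree, so that $L_{d-1}^{up}[\K]$ and $L_{d-1}^{up}[\K_c]$ are related by an orthogonal similarity and hence share a spectrum.

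First I would establish that the coloring gives dimension-preserving bijections
$\Phi_{d-1}\colon \K_{d-1}\to(\K_c)_{d-1}$ and $\Phi_d\colon \K_d\to(\K_c)_d$.
Surjectivity is immediate from the definition of $\K_c$. For $\Phi_{d-1}$, injectivity is exactly condition (2) of a proper coloring. For $\Phi_d$, suppose distinct $\sigma_1,\sigma_2\in\K_d$ induce the same color pattern; then each $(d-1)$-face of $\sigma_1$ and the corresponding $(d-1)$-face of $\sigma_2$ have identical color patterns, forcing those faces to coincide by condition (2). Since a $d$-simplex is determined by its set of $(d-1)$-faces, this contradicts $\sigma_1\neq\sigma_2$. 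Condition (1) also guarantees that each simplex of $\K$ maps to one of the same dimension in $\K_c$, so the maps are well-defined.

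Next I would lift $\Phi_{d-1}$ and $\Phi_d$ to signed linear isomorphisms $\tilde\Phi_{d-1}\colon C_{d-1}(\K)\to C_{d-1}(\K_c)$ and $\tilde\Phi_d\colon C_d(\K)\to C_d(\K_c)$. Fix any total order on the colors. For an oriented simplex $\sigma = \{v_{i_0},\dots,v_{i_d}\}\in\K_d$ listed in the fixed vertex order, the images $c(v_{i_0}),\dots,c(v_{i_d})$ are distinct by condition (1); let $\pi_\sigma$ be the permutation that re-sorts them into color order and set $\tilde\Phi_d(\sigma) = \operatorname{sgn}(\pi_\sigma)\cdot \Phi_d(\sigma)$, and analogously for $\tilde\Phi_{d-1}$. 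By construction these are signed permutation matrices, hence orthogonal.

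The key verification is commutativity of
\[
\tilde\Phi_{d-1}\circ\boundary_d[\K] = \boundary_d[\K_c]\circ\tilde\Phi_d,
\]
which follows from the standard sign calculation that deleting the $j$-th vertex and then re-sorting yields the same signed term as re-sorting first and then deleting the corresponding vertex in the new order; the two permutations differ by an even number of transpositions, absorbed into the $(-1)^j$ boundary sign. Granting this identity,
\[
L_{d-1}^{up}[\K_c] = \boundary_d[\K_c]\,\boundary_d[\K_c]^T = \tilde\Phi_{d-1}\,L_{d-1}^{up}[\K]\,\tilde\Phi_{d-1}^{T},
\]
and since $\tilde\Phi_{d-1}$ is orthogonal this is a similarity transformation. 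Therefore $L_{d-1}^{up}[\K]$ and $L_{d-1}^{up}[\K_c]$ have identical spectra, and in particular identical smallest non-zero eigenvalues.

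The main obstacle is purely bookkeeping in the sign verification of the commuting square; the conceptual content is that proper colorings only identify vertices whose identification cannot merge distinct $(d-1)$- or $d$-simplices, so the boundary matrix is unchanged up to a signed relabeling of the basis. No work is needed on the spectral side once the signed isomorphism is in place.
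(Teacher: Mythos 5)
Your proof is correct and follows essentially the same route as the paper's: both reduce the claim to the fact that $\K$ and $\K_c$ have the same $(d-1)$- and $d$-simplices up to recoloring, so $\boundary_d[\K]$ and $\boundary_d[\K_c]$ differ only by a signed reindexing of the basis, which leaves the up-Laplacian spectrum unchanged. The only difference is one of detail: the paper simply cites Goldberg's theorem that reorienting simplices does not change the Laplacian spectrum, whereas you make the signed permutation isometries $\tilde\Phi_{d-1},\tilde\Phi_d$ explicit and deduce the orthogonal similarity directly, which is a fine and self-contained way to discharge the same step.
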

\begin{proof}
    This follows as $\K$ and $\K_c$ have the same set of $(d-1)$ and $d$-simplices up to recoloring, so the boundary maps $\boundary_d[\K]$ and $\boundary_d[\K_c]$ are the same up to the signs on the simplices; however, the spectrum of the up Laplacians $L_{d-1}^{up}[\K]$ and $L_{d-1}^{up}[\K_c]$ are unaffected by different orientations of the simplices (\cite[Theorem 4.1.1]{Goldberg2002}), so the lemma follows.
\end{proof}

Newman's method also requires bounds on a generalized notion of degree. For a $d$-dimensional simplicial complex $\K$ and natural numbers $0\leq i<j\leq d$, define $\Delta_{i,j}(\K)= \max_{\sigma\in\K_i}|\{\tau\in K_{j} : \sigma\subset\tau\}|$ and $\Delta(\K) = \max_{1\leq i < j\leq d}\Delta_{i,j}(\K)$. Newman used a probabilistic argument to show that for a $d$-dimensional simplicial complex $\K$ there was always a proper coloring with a bounded number of colors. 

\begin{lemma}[Lemma 3, Newman~\cite{Newman2018}]
\label{lem:existence_proper_coloring}
    Let $\K$ be a $d$-dimensional simplicial complex such that $\Delta(K)\geq 4$. Then there is a proper coloring $c$ of $\K$ with at most $18(\Delta(K)+1)^{6}d^{6}\sqrt[d]{n_0}$ colors.
\end{lemma}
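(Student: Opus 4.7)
The plan is a probabilistic argument: colour each vertex of $\K$ independently and uniformly at random from a palette $[N]$ with $N = 18(\Delta(\K){+}1)^{6}d^{6}\sqrt[d]{n_{0}}$, and show the probability that the resulting colouring is proper is strictly positive. I would organise the analysis around two families of bad events. Type~I: for each edge $e=\{u,v\}\in\K_{1}$, the event $A_{e}$ that $c(u)=c(v)$, for which $\Pr[A_{e}]=1/N$. Type~II: for each unordered pair of distinct $(d{-}1)$-simplices $\sigma_{1},\sigma_{2}\in\K_{d-1}$, the event $B_{\sigma_{1},\sigma_{2}}$ that $\{c(v):v\in\sigma_{1}\}=\{c(v):v\in\sigma_{2}\}$. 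A proper colouring is exactly a realisation in which no $A$ or $B$ event occurs.

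The next step is to bound $\Pr[B_{\sigma_{1},\sigma_{2}}]$ by conditioning on $k=|\sigma_{1}\cap\sigma_{2}|\in\{0,1,\dots,d-1\}$: once the $k$ shared vertices are coloured, the $d-k$ remaining vertices in each simplex must take the same multiset of $d-k$ colours, giving $\Pr[B_{\sigma_{1},\sigma_{2}}]\le(d-k)!/N^{d-k}$. To count pairs in stratum $k$, I would use the degree parameter: a common $(k{-}1)$-face lies in at most $\Delta_{k-1,d-1}(\K)\le\Delta(\K)^{d-k}$ many $(d{-}1)$-simplices (by telescoping $\Delta_{i,i+1}\le\Delta$), so the stratum contains at most $n_{k-1}\binom{\Delta(\K)^{d-k}}{2}$ pairs. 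Bounding $n_{k-1}\le n_{0}\Delta(\K)^{k-1}/k$ via incidence counting, the total Type~II contribution becomes a sum over $k$ of terms of the form (polynomial in $\Delta,d$) times $n_{0}/N^{d-k}$, with the dominant term coming from $k=0$ and scaling as $n_{0}^{2}\Delta(\K)^{2(d-1)}/N^{d}$.

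The main obstacle is making the union bound close across all strata simultaneously: the $k=0$ stratum has the smallest per-event probability but the largest pair count and is what forces the $\sqrt[d]{n_{0}}$ factor, while the $k=d-1$ stratum (and the edges) has large per-event probability $1/N$ but small dependency. The calibration $N=18(\Delta{+}1)^{6}d^{6}\sqrt[d]{n_{0}}$ is chosen so that, after summing, the total bad-event probability is below one. If a plain union bound across $k$ does not close with these constants, I would fall back to an asymmetric Lovász Local Lemma on the Type~I and high-$k$ Type~II events (whose dependency graph has degree $O(d\cdot\Delta(\K)^{2(d-1)})$ since each such event touches at most $2d$ vertices and each vertex belongs to $O(\Delta^{2(d-1)})$ near-coincident simplex pairs), and keep a plain union bound on the low-$k$ events where the factor $1/N^{d-k}$ already dominates the combinatorial count. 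Either way, the six-power factors in $\Delta{+}1$ and $d$ and the $1/d$ exponent on $n_{0}$ are exactly what is needed to absorb the worst stratum.
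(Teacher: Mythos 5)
The paper does not prove this statement---it is cited verbatim as Lemma~3 of Newman~\cite{Newman2018}, and the surrounding text says only that Newman's argument is probabilistic. So there is no in-paper proof to compare against; I can only evaluate your proposal on its own terms. Your framework (uniform random coloring, Type~I edge collisions, Type~II $(d{-}1)$-simplex color-set collisions stratified by overlap size $k$, per-pair probability roughly $N^{-(d-k)}$ with a $\poly(d)$ prefactor) is the right shape and almost certainly the shape of Newman's argument.

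The place where the plan fails is the $k=0$ stratum, and your fallback is oriented the wrong way. A plain union bound over all pairs of vertex-disjoint $(d{-}1)$-simplices contributes $\Theta\big(n_{d-1}^{2}\cdot d!/N^{d}\big)$; with the target $N^{d}=\Theta\big(n_{0}\cdot(\Delta{+}1)^{6d}d^{6d}\big)$ this is $\Theta\big(n_{d-1}^{2}\cdot d!/n_{0}\big)$ up to $\Delta,d$ factors, which is $\gg 1$ whenever $n_{d-1}\gg\sqrt{n_{0}}$. Even your own estimate of the dominant term, $n_{0}^{2}\Delta^{2(d-1)}/N^{d}$, scales like $n_{0}$ because $N^{d}$ is only linear in $n_{0}$; a union bound can deliver $N\gtrsim n_{0}^{2/d}$ but never $n_{0}^{1/d}$. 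Your stated backup is to run an asymmetric LLL on the \emph{high}-$k$ events while keeping a plain union bound on the \emph{low}-$k$ events, but this is exactly inverted: the disjoint-pair stratum is the one whose global count is prohibitive, and the whole point of the LLL here is to replace the global count $\binom{n_{d-1}}{2}$ with a local dependency degree. The $\sqrt[d]{n_{0}}$ exponent can only emerge from an LLL (or entropy-compression-style) treatment of the $k=0$ stratum, not from a union bound on it. Mixing a union bound on some events with an LLL on others also needs a justification you have not given; the LLL's conclusion is a positive-probability statement about its own event family, which does not compose freely with an independent union bound on the rest.

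There is a second, separate problem in the counting. Both your bound $n_{k-1}\le n_{0}\Delta^{k-1}/k$ and your bound $\Delta_{k-1,d-1}\le\Delta^{d-k}$ are obtained by telescoping through $\Delta_{0,1}$, and likewise your LLL dependency-degree estimate $O\big(d\,\Delta^{2(d-1)}\big)$ rests on a bound for $\Delta_{0,d-1}$. But the paper defines $\Delta(\K)=\max_{1\le i<j\le d}\Delta_{i,j}(\K)$ with $i\ge 1$, so none of the $\Delta_{0,j}$ quantities are controlled by $\Delta(\K)$: a single vertex can lie in arbitrarily many $(d{-}1)$-simplices while $\Delta(\K)$ stays constant. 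Either Newman's hypotheses implicitly control $\Delta_{0,j}$ (the complexes $\mathcal{B}_d^n$ this lemma is applied to do have $\Delta_{0,j}$ bounded by a function of $d$), or the argument must be arranged so that the local degree at $k=0$ depends only on $\Delta_{i,j}$ with $i\ge 1$; as written your proposal does neither, and the $k=0$ stratum is precisely where the factor $\sqrt[d]{n_{0}}$ is supposed to come from.
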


This implies the following corollary of our bound on the spectral gap.

\thmspectralgapupperbounddense*

\begin{proof}
    This is a corollary to \Cref{thm:spectral_gap_upper_bound}, \Cref{lem:proper_coloring_preserves_laplacian}, and \Cref{lem:existence_proper_coloring}. The simplicial complex $\mathcal{C}_d^{n}$ is a pattern complex of a coloring $c$ of the complex $\mathcal{B}_d^{n}$ from \Cref{thm:spectral_gap_upper_bound}. This coloring $c$ is the coloring guaranteed by \Cref{lem:existence_proper_coloring}. This coloring has $f(d)\sqrt[d]{n_0}$ colors for some function $f$; we can see this by bounding $\Delta(\mathcal{B}_d^{n})$ by a function of $d$. Examining its construction in \Cref{sec:lower_bounds}, each simplex in $\mathcal{B}_d^{n}$ is in at most two different building blocks. Each building block has $2(d+1)$ vertices, so for any $1\leq i<j\leq d$, an $i$-simplex in $\mathcal{B}_d^{n}$ is incident to at most $\binom{4d-i-1}{j-i}$ $j$-simplices. Thus, $\Delta(\mathcal{B}_d^{n})$ is at most some function of $d$. Therefore, there is a simplicial complex with $\tilde{n_0} = f(d)\sqrt[d]{n_0}$ vertices and $\theta\left(\frac{(d+1)^{3}}{f(d)^{d}}\tilde{n_0}^{d}\right)$ $d$-simplices. As $\binom{n_0}{d} \leq \left(\frac{e\tilde{n_0}}{d}\right)^{d}$, then there are $\Omega(\kappa_d\binom{\tilde{n_0}}{d})$ $d$-simplices in $\mathcal{C}_d^{n}$ for some appropriate function $f(d)=\Omega\left(\frac{(d+1)^{3}}{f(d)^{d}e^{d}}\right)$. 
\end{proof}

\subsection{Variants of the Laplacian.}

We finish this section by showing how our results imply upper and lower bounds on the spectral gap of several variants of the Laplacian.

\subsubsection{Boundary Matrix.}

The QTDA algorithm of McArdle, Gily\'{e}n, and Berta~\cite{mccardle2023streamlined} is not parameterized by the spectral gap of the combinatorial Laplacian; rather, it is parameterized by the spectral gap of the boundary matrices. However, the non-zero singular values of the $d$\textsuperscript{th} boundary matrix $\boundary_d$ are the square roots of the eigenvalues of the $(d-1)$\textsuperscript{st} up Laplacian $L_{d-1}^{up}$ as $L_{d-1}^{up} = \boundary_d\boundary_d^{T}$. Therefore, \Cref{thm:spectral_gap_lower_bound} and \Cref{thm:spectral_gap_upper_bound} imply exponential upper and lower bounds on the spectral gap of the boundary matrix.

\subsubsection{Normalized Laplacian.}

We now show that the normalized up Laplacian $\Tilde{L}_d^{up}$ can also have exponentially-small spectral gap. While the eigenvalues of the unnormalized $d$\textsuperscript{th} up Laplacian $L_d^{up}$ are in the range $[0,\,n_0]$, the eigenvalues of the normalized $d$\textsuperscript{th} up Laplacian are in the range $[0,\,d+2]$ \cite[Theorem 3.2.i]{Horak2013}. As the normalized up Laplacian has a constant upper bound on its eigenvalues, it is reasonable to suspect the normalized up Laplacian also has a constant lower bound on its eigenvalues. Theorem~\ref{thm:normalized_spectral_gap} shows this is not the case. 

\begin{restatable}{corollary}{thmNormalizedSpectralGap}
\label{thm:normalized_spectral_gap}
    Let $d$, $n\geq 1$. There is a $d$-dimensional simplicial complex $\mathcal{B}_d^n$ with $n_d \in \Theta(\poly(d)\cdot n)$ $d$-simplices and a constant $c_d\geq 1$ that depends only on $d$ such that the spectral gap the normalized up Laplacian is $\lambda_{\min}(\Tilde{L}_d^{up})\in O(\frac{1}{c_d^{n_d}})$.
\end{restatable}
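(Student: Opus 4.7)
The strategy is to transfer the exponential upper bound on the unnormalized up-Laplacian's spectral gap to the normalized up-Laplacian via Lemma~\ref{lem:normalized_vs_unnormalized_spectral_gap}. Because $\tilde{L}_d^{up}$ is nontrivial only when the complex has $(d{+}1)$-simplices, I would shift dimensions and take $\mathcal{B}_d^n$ to be the $(d{+}1)$-dimensional complex from the construction underlying Theorem~\ref{thm:resistance_lower_bound} and Theorem~\ref{thm:spectral_gap_upper_bound}, which by its building-block structure has $n_d, n_{d+1} \in \Theta(\poly(d)\cdot n)$.

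First, I would extract an upper bound on $\lambda_{\min}(L_d^{up}[\mathcal{B}_d^n])$ rather than on $\lambda_{\min}(L_d[\mathcal{B}_d^n])$, which is what Theorem~\ref{thm:spectral_gap_upper_bound} states directly. The construction in Theorem~\ref{thm:resistance_lower_bound} produces a unit-length null-homologous $d$-cycle $\gamma = \partial y_n / \|\partial y_n\|$ with $\gamma \in \im \partial_{d+1}$ and $\mathcal{R}_\gamma(\mathcal{B}_d^n) \in \Omega(c^{n_{d+1}})$. Lemma~\ref{lem:spectral_gap_effective_resistance} then yields
$$
\lambda_{\min}(L_d^{up}[\mathcal{B}_d^n]) \;\leq\; \mathcal{R}_\gamma(\mathcal{B}_d^n)^{-1} \;\in\; O\!\left(\tfrac{1}{c^{n_{d+1}}}\right).
$$

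Next, I would verify the hypothesis $d_{\min} \geq 1$ of Lemma~\ref{lem:normalized_vs_unnormalized_spectral_gap}, namely that every $d$-simplex of $\mathcal{B}_d^n$ is a face of some $(d{+}1)$-simplex. This follows from the structure of each building block $B_{d+1}$ as a triangulation of $\partial \Delta^{d+1} \times [0,1]$ (with stellar subdivisions and vertex identifications): in such a triangulated manifold-with-boundary every $d$-face is a face of a top-dimensional simplex, and the iterated gluings that form $\mathcal{B}_d^n$ only identify vertices and thus do not orphan any $d$-simplex.

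Finally, Lemma~\ref{lem:normalized_vs_unnormalized_spectral_gap} gives
$$
\lambda_{\min}(\tilde{L}_d^{up}) \;\leq\; \tfrac{1}{d_{\min}}\,\lambda_{\min}(L_d^{up}) \;\leq\; \lambda_{\min}(L_d^{up}[\mathcal{B}_d^n]) \;\in\; O\!\left(\tfrac{1}{c^{n_{d+1}}}\right).
$$
Since $n_{d+1}$ and $n_d$ both scale as $\Theta(\poly(d)\cdot n)$ in this family, rewriting $n_{d+1}$ in terms of $n_d$ only adjusts the base of the exponential by a $d$-dependent factor, which can be absorbed to yield $\lambda_{\min}(\tilde{L}_d^{up}) \in O(1/c_d^{n_d})$ for a suitable $c_d > 1$. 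The main obstacle is bookkeeping: Theorem~\ref{thm:spectral_gap_upper_bound} bounds the full combinatorial Laplacian, so to isolate the up-Laplacian I must revisit the effective-resistance proof and appeal to Lemma~\ref{lem:spectral_gap_effective_resistance} directly, and I must check that the re-indexed complex still satisfies the mild non-degeneracy condition $d_{\min} \geq 1$.
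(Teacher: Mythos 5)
Your proof is correct and follows the same overall strategy as the paper's: bound the unnormalized spectral gap by the exponential-resistance construction and transfer it to the normalized Laplacian via Lemma~\ref{lem:normalized_vs_unnormalized_spectral_gap}. The paper's proof is two sentences long and, read literally, has indexing problems that you correctly identified and repaired: the complex $\mathcal{B}_d^n$ as constructed is $d$-dimensional, so $\tilde{L}_d^{up}$ is degenerate and $d_{\min}$ (as the degree of $d$-simplices) would be zero, not one; the paper apparently intends either $\tilde{L}_{d-1}^{up}$ for the $d$-dimensional complex or, equivalently, a dimension shift of the kind you perform. You also refine the argument by routing through Lemma~\ref{lem:spectral_gap_effective_resistance} to isolate the up-Laplacian directly from the cycle $\gamma \in \im\partial_{d+1}$, rather than quoting the full-Laplacian statement of Theorem~\ref{thm:spectral_gap_upper_bound} and then splitting it; the paper's terse proof elides this step (it even cites the lower-bound theorem rather than the upper-bound theorem, presumably a typo). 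Your check that $d_{\min}\geq 1$ via the building-block structure is the kind of verification the paper asserts without argument. In short: same route, but you fill genuine gaps in the bookkeeping.
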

\begin{proof}
    This follows from \Cref{thm:spectral_gap_lower_bound} and \Cref{lem:normalized_vs_unnormalized_spectral_gap}. The statement follows as $d_{\min}[\mathcal{B}_d^{n}]= 1$.  
\end{proof}

\subsubsection{Persistent Laplacian.} 

Recently, Wang, Nguyen, and Wei~\cite{wang2020persistent} introduced the \textit{\textbf{persistent Laplacian}} of simplicial filtrations as a generalization of the combinatorial Laplacian. The spectral gap of the persistent Laplacian has since appeared as a parameter of quantum algorithms for computing persistent Betti number~\cite{Hayakawa2022quantumalgorithm}, so lower bounding it is also of interest to QTDA. M\'{e}moli, Wan, and Wang~\cite[Theorem SM5.8]{memoli2022persistent} prove that persistent Laplacians preserve effective resistance of cycles; therefore, the bounds on the spectral gap of the combinatorial Laplacian also apply to the persistent Laplacian by \Cref{lem:spectral_gap_effective_resistance}.

\section{Conclusion and Open Questions.}

In this paper, we propose a new span-program-based quantum algorithm for computing Betti numbers. This algorithm is a novel approach to QTDA that is more similar to classical incremental algorithms for computing Betti numbers than previous QTDA algorithms. Unfortunately, we show that, in the worse case, the span-program based algorithm takes exponential time due to cycles with exponentially-large effective resistance or effective capacitance. However, as a corollary to exponentially-large effective resistance, we prove that the spectral gap of the combinatorial Laplacian can be exponentially small. This proves that all known QTDA algorithms also require exponential time in the worst case. Below we discuss some of the questions left open by our work.

\paragraph{Incremental Quantum Algorithm for Persistent Betti numbers.}

Our algorithm incrementally computes the Betti number of a simplicial complex. While the classical algorithm for computing \textit{persistent} Betti numbers is incremental~\cite{zomorodian2004computing}, our algorithm is unable to perform persistent pairing. In other words, our algorithm can identify when a homology class dies, but it cannot identify when that homology class was born. It is an open question whether our algorithm can be adapted to compute the persistent Betti numbers of a simplicial complex. There are quantum algorithms for computing persistent Betti numbers~\cite{Hayakawa2022quantumalgorithm, mccardle2023streamlined}, but these algorithms are not incremental.

\paragraph{Lower Bounds or Expectation of the Spectral Gap.}

\Cref{thm:spectral_gap_upper_bound} shows that the spectral gap of the combinatorial Laplacian can be exponentially small. However, it is an open question how common these sorts of worst-case complexes are. While there are exact or expected lower bounds for certain families of simplicial complexes~\cite{beit2020spectral, Friedman1998BettiNumbers, gundert2016eigenvalues,  lew2020spectralb, lew2020spectral, lew2023garland, STEENBERGEN201456, yamada2019spectrum}, it is still unknown what the expected spectral gap is, or if there are lower bounds on the spectral gap, for all simplicial complexes, or for families of simplicial complexes of interest like Vietoris-Rips complexes.   

\paragraph{Cheeger Inequalities and Implications of Exponentially-Small Spectral Gaps.} 

The existence of simplicial complexes with exponentially-small spectral gap implies that existing QTDA algorithms cannot exactly compute Betti numbers without running for an exponentially long time; however, they can solve the related problem of \textit{\textbf{Approximate Betti Number Estimation}}~\cite{gyurik2022advantage} of counting the number of eigenvalues of the Laplacian smaller than a given threshold. However, it remains an open question how useful approximate Betti number estimation is in practice.
\par 
A potential interpretation for approximate Betti number estimation could come in the form of a higher-dimensional Cheeger inequality. The \textit{\textbf{Cheeger inequality}} in graphs relates the smallest eigenvalue(s) of the graph Laplacian to a value called the \textit{\textbf{Cheeger constant}} that measures the existence of (multi-way) sparse graph cuts~\cite{chung1996laplacians, lee2014multiway}. Intuitively, if the Hodge Theorem~(\Cref{thm:hodge}) says that the graph Laplacian has more than one zero eigenvalue if and only if the graph is disconnected, then the Cheeger inequality says that it has small non-zero eigenvalues if and only if it is ``almost'' disconnected.
\par 
However, higher-dimensional generalizations of the Cheeger inequality remain elusive. Ideally, a higher-dimensional Cheeger inequality would say something similar: a simplicial complex ``almost has non-trivial $d$-homology'' or ``has a sparse cut'' if and only the $d$\textsuperscript{th} combinatorial Laplacian has small non-zero eigenvalues. One hurdle is that it is not clear how to generalize the notion of ``sparse cut'' to higher dimensions. While there have been several definitions proposed for a Cheeger constant for higher-dimensional Laplacians~\cite{Gromov2010Expanders, LinialMeshulam2006HomConn, MeshulamWallach2009HomConn, ParzanchevskiEtal2016IsoperSimpComp}, one or both sides of a Cheeger inequality have failed for these constants~\cite{Gunder2015HighDimCheeger, gundert2016eigenvalues, ParzanchevskiEtal2016IsoperSimpComp, SteenBergen2014CheegerType}. Our work provides another counterexample to these Cheeger inequalities; the spectral gap of our worst-case complexes is exponentially small, but the proposed notions of Cheeger constant cannot be.
\par
A recent paper presents a two-sided Cheeger inequality~\cite{jost2023cheeger} that connects the spectral gap of the combinatorial Laplacian to a Cheeger constant based on the 1-norm of chains. This Cheeger inequality does not have the same interpretation as the graph Cheeger inequality of implying that simplicial complexes with small eigenvalues ``almost have non-trivial homology'' though. It remains an open question if such a higher-dimensional Cheeger inequality exists.

\section*{Acknowledgements.}
\noindent
This work was supported by NSF grants CCF-1816442 and CCF-1617951

\bibliographystyle{plain}
\bibliography{quantum}

\begin{thebibliography}{10}

\bibitem{Aaronson2015fineprint}
Scott Aaronson.
\newblock Read the fine print.
\newblock {\em Nature Physics}, 11(4):291--293, Apr 2015.

\bibitem{adamaszek2016complexity}
Micha{\l} Adamaszek and Juraj Stacho.
\newblock Complexity of simplicial homology and independence complexes of
  chordal graphs.
\newblock {\em Computational Geometry}, 57:8--18, 2016.

\bibitem{apers2022simple}
Simon Apers, Sayantan Sen, and D{\'a}niel Szab{\'o}.
\newblock A (simple) classical algorithm for estimating betti numbers.
\newblock {\em arXiv preprint arXiv:2211.09618}, 2022.

\bibitem{beit2020spectral}
Orr Beit-Aharon and Roy Meshulam.
\newblock Spectral expansion of random sum complexes.
\newblock {\em Journal of Topology and Analysis}, 12(04):989--1002, 2020.

\bibitem{Belovs2012}
Aleksandrs Belovs and Ben~W. Reichardt.
\newblock Span programs and quantum algorithms for $st$-connectivity and claw
  detection.
\newblock In {\em Algorithms {\textendash} {ESA} 2012}, pages 193--204.
  Springer Berlin Heidelberg, 2012.

\bibitem{bollobas1998modern}
B{\'e}la Bollob{\'a}s.
\newblock {\em Modern graph theory}, volume 184.
\newblock Springer Science \& Business Media, 1998.

\bibitem{brehm1992linear}
Ulrich Brehm and Karanbir~S Sarkaria.
\newblock Linear vs. piecewise-linear embeddability of simplicial complexes.
\newblock 1992.

\bibitem{cade2021complexity}
Chris Cade and P.~Marcos Crichigno.
\newblock Complexity of supersymmetric systems and the cohomology problem,
  2021.

\bibitem{cade}
Chris Cade, Ashley Montanaro, and Aleksandrs Belovs.
\newblock Time and space efficient quantum algorithms for detecting cycles and
  testing bipartiteness.
\newblock {\em Quantum Info. Comput.}, 18(1–2):18–50, February 2018.

\bibitem{carlsson2021topological}
Gunnar Carlsson and Mikael Vejdemo-Johansson.
\newblock {\em Topological data analysis with applications}.
\newblock Cambridge University Press, 2021.

\bibitem{chung1996laplacians}
Fan~RK Chung.
\newblock Laplacians of graphs and cheeger’s inequalities.
\newblock {\em Combinatorics, Paul Erdos is Eighty}, 2(157-172):13--2, 1996.

\bibitem{crichigno2022clique}
Marcos Crichigno and Tamara Kohler.
\newblock Clique homology is qma1-hard, 2022.

\bibitem{Delfinado1993}
Cecil Jose~A. Delfinado and Herbert Edelsbrunner.
\newblock An incremental algorithm for betti numbers of simplicial complexes.
\newblock In {\em Proceedings of the Ninth Annual Symposium on Computational
  Geometry}, SCG '93, page 232–239, New York, NY, USA, 1993. Association for
  Computing Machinery.

\bibitem{Dey2011}
Tamal~K. Dey, Anil~N. Hirani, and Bala Krishnamoorthy.
\newblock Optimal homologous cycles, total unimodularity, and linear
  programming.
\newblock {\em SIAM J. Comput.}, 40(4):1026--1044, July 2011.

\bibitem{Dey2020}
Tamal~K. Dey, Tao Hou, and Sayan Mandal.
\newblock Computing minimal persistent cycles: Polynomial and hard cases.
\newblock In {\em Proceedings of the Thirty-First Annual ACM-SIAM Symposium on
  Discrete Algorithms}, SODA '20, page 2587–2606, USA, 2020. Society for
  Industrial and Applied Mathematics.

\bibitem{dey2022computational}
Tamal~Krishna Dey and Yusu Wang.
\newblock {\em Computational topology for data analysis}.
\newblock Cambridge University Press, 2022.

\bibitem{Dunfield2011}
Nathan~M. Dunfield and Anil~N. Hirani.
\newblock The least spanning area of a knot and the optimal bounding chain
  problem.
\newblock In {\em Proceedings of the Twenty-Seventh Annual Symposium on
  Computational Geometry}, SoCG '11, page 135–144, New York, NY, USA, 2011.
  Association for Computing Machinery.

\bibitem{Drr2006}
Christoph D\"{u}rr, Mark Heiligman, Peter HOyer, and Mehdi Mhalla.
\newblock Quantum query complexity of some graph problems.
\newblock {\em {SIAM} Journal on Computing}, 35(6):1310--1328, January 2006.

\bibitem{Eckmann1944}
Beno Eckmann.
\newblock Harmonische funktionen und randwertaufgaben in einem komplex.
\newblock {\em Commentarii Mathematici Helvetici}, 17(1):240--255, Dec 1944.

\bibitem{EdelsbrunnerHarer}
Herbert Edelsbrunner and John Harer.
\newblock {\em Computational Topology - an Introduction.}
\newblock American Mathematical Society, 2010.

\bibitem{Filakovský2018}
Marek Filakovsk{\'y}, Peter Franek, Uli Wagner, and Stephan Zhechev.
\newblock Computing simplicial representatives of homotopy group elements.
\newblock {\em Journal of Applied and Computational Topology}, 2(3):177--231,
  Dec 2018.

\bibitem{Freedman2014}
Michael Freedman and Vyacheslav Krushkal.
\newblock Geometric complexity of embeddings in {$\R^{d}$}.
\newblock {\em Geometric and Functional Analysis}, 24(5):1406--1430, Oct 2014.

\bibitem{Friedman1998BettiNumbers}
J.~Friedman.
\newblock Computing betti numbers via combinatorial laplacians.
\newblock {\em Algorithmica}, 21(4):331--346, Aug 1998.

\bibitem{Goldberg2002}
Timothy Goldberg.
\newblock Combinatorial laplacians of simplicial complexes.
\newblock B.{S}. {T}hesis, Bard College, Annandale-on-Hudson, NY, 2002.

\bibitem{Gromov2010Expanders}
Mikhail Gromov.
\newblock Singularities, expanders and topology of maps. part 2: from
  combinatorics to topology via algebraic isoperimetry.
\newblock {\em Geometric and Functional Analysis}, 20(2):416--526, Aug 2010.

\bibitem{GundertThesis2013}
Anna Gundert.
\newblock {\em On expansion and spectral properties of simplicial complexes}.
\newblock PhD thesis, ETH Zurich, Zürich, 2013.

\bibitem{Gunder2015HighDimCheeger}
Anna Gundert and May Szedl{\'{a}}k.
\newblock Higher dimensional discrete cheeger inequalities.
\newblock {\em J. Comput. Geom.}, 6(2):54--71, 2015.

\bibitem{gundert2016eigenvalues}
Anna Gundert and Uli Wagner.
\newblock On eigenvalues of random complexes.
\newblock {\em Israel Journal of Mathematics}, 216:545--582, 2016.

\bibitem{gunn2019}
Sam Gunn and Niels Kornerup.
\newblock Review of a quantum algorithm for betti numbers, 2019.

\bibitem{gyurik2022advantage}
Casper Gyurik, Chris Cade, and Vedran Dunjko.
\newblock Towards quantum advantage via topological data analysis.
\newblock {\em {Quantum}}, 6:855, November 2022.

\bibitem{Hansen2019}
Jakob Hansen and Robert Ghrist.
\newblock Toward a spectral theory of cellular sheaves.
\newblock {\em Journal of Applied and Computational Topology}, 3(4):315–358,
  Aug 2019.

\bibitem{Hatcher}
Allen Hatcher.
\newblock {\em {Algebraic topology}}.
\newblock Cambridge Univ. Press, Cambridge, 2000.

\bibitem{Hayakawa2022quantumalgorithm}
Ryu Hayakawa.
\newblock Quantum algorithm for persistent {B}etti numbers and topological data
  analysis.
\newblock {\em {Quantum}}, 6:873, December 2022.

\bibitem{horak2013interlacing}
Danijela Horak and J{\"u}rgen Jost.
\newblock Interlacing inequalities for eigenvalues of discrete laplace
  operators.
\newblock {\em Annals of Global Analysis and Geometry}, 43:177--207, 2013.

\bibitem{Horak2013}
Danijela Horak and Jürgen Jost.
\newblock Spectra of combinatorial laplace operators on simplicial complexes.
\newblock {\em Advances in Mathematics}, 244:303--336, 2013.

\bibitem{Ito2018}
Tsuyoshi Ito and Stacey Jeffery.
\newblock Approximate span programs.
\newblock {\em Algorithmica}, 81(6):2158--2195, November 2018.

\bibitem{jarret_et_al:LIPIcs:2018:9512}
Michael Jarret, Stacey Jeffery, Shelby Kimmel, and Alvaro Piedrafita.
\newblock Quantum algorithms for connectivity and related problems.
\newblock In Yossi Azar, Hannah Bast, and Grzegorz Herman, editors, {\em 26th
  Annual European Symposium on Algorithms (ESA 2018)}, volume 112 of {\em
  Leibniz International Proceedings in Informatics (LIPIcs)}, pages
  49:1--49:13, Dagstuhl, Germany, 2018. Schloss Dagstuhl--Leibniz-Zentrum fuer
  Informatik.

\bibitem{Jeffery2017}
Stacey Jeffery and Shelby Kimmel.
\newblock Quantum algorithms for graph connectivity and formula evaluation.
\newblock {\em Quantum}, 1:26, August 2017.

\bibitem{jost2023cheeger}
Jürgen Jost and Dong Zhang.
\newblock Cheeger inequalities on simplicial complexes, 2023.

\bibitem{Karchmer}
M.~Karchmer and A.~Wigderson.
\newblock On span programs.
\newblock In {\em [1993] Proceedings of the Eigth Annual Structure in
  Complexity Theory Conference}. {IEEE} Comput. Soc. Press, 1993.

\bibitem{Kitaev1995}
A.~Yu. {Kitaev}.
\newblock {Quantum measurements and the Abelian Stabilizer Problem}.
\newblock {\em arXiv e-prints}, pages quant--ph/9511026, November 1995.

\bibitem{Kook2018}
Woong Kook and Kang-Ju Lee.
\newblock Simplicial networks and effective resistance.
\newblock {\em Advances in Applied Mathematics}, 100:71--86, 2018.

\bibitem{lee2014multiway}
James~R Lee, Shayan~Oveis Gharan, and Luca Trevisan.
\newblock Multiway spectral partitioning and higher-order cheeger inequalities.
\newblock {\em Journal of the ACM (JACM)}, 61(6):1--30, 2014.

\bibitem{lew2020spectralb}
Alan Lew.
\newblock Spectral gaps, missing faces and minimal degrees.
\newblock {\em Journal of Combinatorial Theory, Series A}, 169:105127, 2020.

\bibitem{lew2020spectral}
Alan Lew.
\newblock The spectral gaps of generalized flag complexes and a geometric
  hall-type theorem.
\newblock {\em International Mathematics Research Notices},
  2020(11):3364--3395, 2020.

\bibitem{lew2023garland}
Alan Lew.
\newblock Garland's method for token graphs.
\newblock {\em arXiv preprint arXiv:2305.02406}, 2023.

\bibitem{LinialMeshulam2006HomConn}
Nathan Linial and Roy Meshulam.
\newblock Homological connectivity of random 2-complexes.
\newblock {\em Combinatorica}, 26(4):475–487, aug 2006.

\bibitem{Lloyd2016lgz}
Seth Lloyd, Silvano Garnerone, and Paolo Zanardi.
\newblock Quantum algorithms for topological and geometric analysis of data.
\newblock {\em Nature Communications}, 7(1):10138, Jan 2016.

\bibitem{Magniez2011}
Frédéric Magniez, Ashwin Nayak, Jérémie Roland, and Miklos Santha.
\newblock Search via quantum walk.
\newblock {\em SIAM Journal on Computing}, 40(1):142–164, Jan 2011.

\bibitem{mccardle2023streamlined}
Sam McArdle, András Gilyén, and Mario Berta.
\newblock A streamlined quantum algorithm for topological data analysis with
  exponentially fewer qubits, 2022.

\bibitem{Meijer2019}
RHAJ Meijer.
\newblock Clustering using quantum persistent homology.
\newblock M.{S}. {T}hesis, Utrecht University, Utrecht, NL, 2019.

\bibitem{memoli2022persistent}
Facundo M{\'e}moli, Zhengchao Wan, and Yusu Wang.
\newblock Persistent laplacians: Properties, algorithms and implications.
\newblock {\em SIAM Journal on Mathematics of Data Science}, 4(2):858--884,
  2022.

\bibitem{MeshulamWallach2009HomConn}
R.~Meshulam and N.~Wallach.
\newblock Homological connectivity of random k-dimensional complexes.
\newblock {\em Random Struct. Algorithms}, 34(3):408–417, may 2009.

\bibitem{Munkres1984}
James~R. Munkres.
\newblock {\em Elements of Algebraic Topology}.
\newblock {CRC} Press, March 1984.

\bibitem{Neumann2019}
Niels Neumann and Sterre~den Breeijen.
\newblock Limitations of clustering using quantum persistent homology, 2019.

\bibitem{Newman2018}
Andrew Newman.
\newblock Small simplicial complexes with prescribed torsion in homology.
\newblock {\em Discrete {\&} Computational Geometry}, 62(2):433--460, March
  2018.

\bibitem{osting2020sparsification}
Braxton Osting, Sourabh Palande, and Bei Wang.
\newblock Spectral sparsification of simplicial complexes for clustering and
  label propagation.
\newblock {\em Journal of Computational Geometry}, 11(1):176--211, 2020.

\bibitem{ParzanchevskiEtal2016IsoperSimpComp}
Ori Parzanchevski, Ron Rosenthal, and Ran~J. Tessler.
\newblock Isoperimetric inequalities in simplicial complexes.
\newblock {\em Combinatorica}, 36(2):195–227, apr 2016.

\bibitem{Reichardt2012}
Ben Reichardt and Robert Spalek.
\newblock Span-program-based quantum algorithm for evaluating formulas.
\newblock {\em Theory of Computing}, 8(1):291–319, 2012.

\bibitem{Reichardt2009}
Ben~W. Reichardt.
\newblock Span programs and quantum query complexity: The general adversary
  bound is nearly tight for every boolean function.
\newblock In {\em 2009 50th Annual {IEEE} Symposium on Foundations of Computer
  Science}. {IEEE}, October 2009.

\bibitem{alex2022complexitytheoretic}
Alexander Schmidhuber and Seth Lloyd.
\newblock Complexity-theoretic limitations on quantum algorithms for
  topological data analysis, 2022.

\bibitem{Spielman2019}
Daniel Spielman.
\newblock {\em Spectral and Algebraic Graph Theory}.
\newblock 2019.

\bibitem{STEENBERGEN201456}
John Steenbergen, Caroline Klivans, and Sayan Mukherjee.
\newblock A cheeger-type inequality on simplicial complexes.
\newblock {\em Advances in Applied Mathematics}, 56:56--77, 2014.

\bibitem{SteenBergen2014CheegerType}
John Steenbergen, Caroline Klivans, and Sayan Mukherjee.
\newblock A cheeger-type inequality on simplicial complexes.
\newblock {\em Advances in Applied Mathematics}, 56:56--77, 2014.

\bibitem{SzegedyMarkov}
Mario Szegedy.
\newblock Quantum speed-up of markov chain based algorithms.
\newblock In {\em Proceedings of the Forty-Fifth Annual IEEE Symposium on
  Foundations of Computer Science, FOCS}, pages 32-- 41, 11 2004.

\bibitem{ubaru2021}
Shashanka Ubaru, Ismail~Yunus Akhalwaya, Mark~S. Squillante, Kenneth~L.
  Clarkson, and Lior Horesh.
\newblock Quantum topological data analysis with linear depth and exponential
  speedup, 2021.

\bibitem{wang2020persistent}
Rui Wang, Duc~Duy Nguyen, and Guo-Wei Wei.
\newblock Persistent spectral graph.
\newblock {\em International journal for numerical methods in biomedical
  engineering}, 36(9):e3376, 2020.

\bibitem{yamada2019spectrum}
Taiki Yamada.
\newblock Spectrum of the laplacian on simplicial complexes by the ricci
  curvature.
\newblock {\em arXiv preprint arXiv:1906.07404}, 2019.

\bibitem{zomorodian2004computing}
Afra Zomorodian and Gunnar Carlsson.
\newblock Computing persistent homology.
\newblock In {\em Proceedings of the twentieth annual symposium on
  Computational geometry}, pages 347--356, 2004.

\end{thebibliography}

\appendix

\section{Proof of \Cref{lem:normalized_vs_unnormalized_spectral_gap}}
\label{apx:normalized_vs_unnoramlized}

\lemnormalizedvsunnormalizedspectralgap*

\begin{proof}
    Following the same steps as in the proof of Lemma~\ref{lem:spectral_gap_effective_resistance}, we can see that $\lambda_{\min}(\tilde{L}_d^{up}) = \lambda_{\max}^{-1}((\tilde{L}_d^{up})^{+})$. Therefore, we will prove the follow equivalent statement:
    $$
        d_{\min} \lambda_{\max}((L_d^{up})^{+}) \leq \lambda_{\max}((\tilde{L}_d^{up})^{+}) \leq d_{\max} \lambda_{\max}((L_d^{up})^{+})
    $$
    Also from the proof of \Cref{lem:spectral_gap_effective_resistance}, we see that $\lambda_{\max}((\tilde{L}_d^{up})^{+}) = \max\{x^{T}(\tilde{L}_d^{up})^{+}x : \|x\|=1,\,x\in\im D^{-1/2}\boundary_{d+1} \}$. For a vector $x\in\im\tilde{L}_d^{up}$, following the steps of Lemma~\ref{lem:effective_resistance_flows}, we conclude that 
    $$
        x^{T}(\tilde{L}_d^{up})^{+}x = \min\{ \|y\|^{2} : D^{-1/2}\boundary_{d+1}y = x \}
    $$ 
    As $D^{1/2}$ is a bijection, then we further conclude that 
    \begin{align*}
        x^{T}(\tilde{L}_d^{up})^{+}x =& \min\{ \|y\|^{2} : D^{-1/2}\boundary_{d+1}y = x \} \\
        =& \min\{ \|y\|^{2} : \boundary_{d+1}y = D^{1/2}x \} \\
        =& R_{D^{1/2}x}(\K) 
    \end{align*}
     We have the bound $\RE_{D^{1/2}x}(\K) = \|D^{1/2}x\|^{2}\RE_{\overline{x}}(\K) \geq d_{\min}\RE_{\overline{x}}(\K)$. As the chain $D^{1/2}x\in\im\boundary_{d+1}$ if and only if $x\in\im D^{-1/2}\boundary_{d+1}$, we have the general bound:
    \begin{align*}
        \lambda_{\max}((\tilde{L}_d^{up})^{+}) =& \max\{x^{T}(\tilde{L}_d^{up})^{+}x : \|x\|=1,\,x\in\im D^{-1/2}\boundary_{d+1} \} \\
        =& \max\{ R_{D^{1/2}x}(\K): \|x\|=1,\,x\in\im D^{-1/2}\boundary_{d+1} \} \\
        \geq & d_{\min} \max\{ R_{x}(\K): \|\overline{x}\|=1,\,\overline{x}\in\im\boundary_{d+1} \} \\
        =& d_{\min} \lambda_{\max}((L_d^{up})^{+}) \tag{\Cref{lem:spectral_gap_effective_resistance}}
    \end{align*}
    To prove $\lambda_{\max}((\tilde{L}_d^{up})^{+}) \leq d_{\max} \lambda_{\max}(L_d^{up})^{+})$, we can similarly prove that 
    $$
    \lambda_{\max}((L_d^{up})^{+}) = \{(D^{-1/2}x)(\tilde{L}_d^{up})^{+}(D^{-1/2}x) : \|x\|=1,\, x\in\im\boundary_{d+1}\},$$
    and the rest of the proof follows similarly.
\end{proof}

\section{Properties of Effective Resistance: Parallel, Series, and Monotonicity Formulas.}
\label{sec:formulas}

 We now prove there are formulas for effective resistance in simplicial complexes analogous to the series and parallel formulas for effective resistance in graphs. These formulas not only are useful for calculating effective resistance, but they also provide intuition for effective resistance. In particular, they provide justification for the claim that effective resistance measures ``how null-homologous'' a cycle is in a complex.
\begin{theorem}[Series Formula]
\label{thm:resistance_complex_series}
  Let $\K_1$ and $\K_2$ be simplicial complexes with $\gamma\in C_{d-1}(\K_1)\cap C_{d-1}(\K_2)$, $C_d(\K_1)\cap C_d(\K_2)=\emptyset$, and $\gamma$ null-homologous in $\K_1$ and $\K_2$. Let $\K=\K_1\cup\K_2$. Then
  $$
    \mathcal{R}_\gamma(\K) \leq \mathcal{R}_{\gamma_1}(\K_1) + \mathcal{R}_{\gamma_2}(\K_2).
  $$
  Equality is achieved when $\gamma_1$ and $\gamma_2$ are the unique chains in $C_{d}(\K_1)$ and $C_d(\K_2)$ that sum to $\gamma$.
\end{theorem}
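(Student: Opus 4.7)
The proof is centered on \Cref{lem:effective_resistance_flows}, which recasts effective resistance as the minimum flow energy over unit flows. The key structural observation is that the hypothesis $C_d(\K_1)\cap C_d(\K_2)=\emptyset$, together with $\K=\K_1\cup\K_2$, yields the orthogonal direct-sum decomposition $C_d(\K)=C_d(\K_1)\oplus C_d(\K_2)$ (in the standard simplex basis). Consequently every $d$-chain $f\in C_d(\K)$ admits a unique expression $f=f_1+f_2$ with $f_i\in C_d(\K_i)$, and because the supports on $d$-simplices are disjoint the unweighted flow energy is additive: $\Jsf(f)=\Jsf(f_1)+\Jsf(f_2)$.

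For the upper bound, fix any decomposition $\gamma=\gamma_1+\gamma_2$ with $\gamma_i\in C_{d-1}(\K_i)$ null-homologous in $\K_i$. Let $f_i^*$ be a minimum-energy $\gamma_i$-flow in $\K_i$, so $\Jsf(f_i^*)=\RE_{\gamma_i}(\K_i)$ by \Cref{lem:effective_resistance_flows}. The chain $f=f_1^*+f_2^*\in C_d(\K)$ satisfies $\partial f=\gamma_1+\gamma_2=\gamma$, hence is a unit $\gamma$-flow in $\K$, and by disjoint-support additivity $\Jsf(f)=\RE_{\gamma_1}(\K_1)+\RE_{\gamma_2}(\K_2)$. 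Applying \Cref{lem:effective_resistance_flows} once more gives $\RE_\gamma(\K)\leq\Jsf(f)=\RE_{\gamma_1}(\K_1)+\RE_{\gamma_2}(\K_2)$, which is the claimed inequality.

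For the equality case, I would go in the reverse direction. Let $f^*$ be the minimum-energy $\gamma$-flow in $\K$, and write $f^*=f_1^*+f_2^*$ using the direct-sum decomposition; this choice of $f_1^*,f_2^*$ is the unique one with $f_i^*\in C_d(\K_i)$. Set $\gamma_i=\partial f_i^*\in C_{d-1}(\K_i)$; then $\gamma_1+\gamma_2=\partial f^*=\gamma$, and each $\gamma_i$ is null-homologous in $\K_i$ because $f_i^*$ witnesses it. An exchange argument shows that $f_i^*$ must itself be a minimum-energy $\gamma_i$-flow in $\K_i$: if some $g_i$ had strictly lower energy, then replacing $f_i^*$ by $g_i$ inside $f^*$ would, by disjoint-support additivity, produce a $\gamma$-flow of strictly smaller energy, contradicting the choice of $f^*$. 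Therefore $\Jsf(f_i^*)=\RE_{\gamma_i}(\K_i)$, and summing gives $\RE_\gamma(\K)=\Jsf(f^*)=\RE_{\gamma_1}(\K_1)+\RE_{\gamma_2}(\K_2)$.

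The ``hard'' part is not really computational but rather identifying the correct decomposition of $\gamma$: the $\gamma_i$ that attain equality are not chosen abstractly but are forced by projecting the minimum-energy $\gamma$-flow $f^*$ onto the two canonically-defined subspaces $C_d(\K_1)$ and $C_d(\K_2)$, which is where both existence and uniqueness come from. Everything else reduces to the single identity that flow energy, being a diagonal quadratic form in the unweighted case, is additive across chains with disjoint $d$-simplex supports.
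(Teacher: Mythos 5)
Your proof is correct and follows essentially the same strategy as the paper: the upper bound is obtained by summing minimum-energy flows $f_1^*+f_2^*$ and using additivity of flow energy across the disjoint $d$-simplex supports, exactly as in the paper. Your handling of the equality case is slightly different in presentation and arguably tighter: the paper argues abstractly that when the decomposition $\gamma=\gamma_1+\gamma_2$ is unique, any unit $\gamma$-flow decomposes as a $\gamma_1$-flow plus a $\gamma_2$-flow, so the minimum is achieved by minimizing each part independently; you instead start from the optimal $\gamma$-flow $f^*$, project onto $C_d(\K_1)\oplus C_d(\K_2)$ to manufacture the right $\gamma_i:=\partial f_i^*$, and then run a clean exchange argument to show each $f_i^*$ is itself optimal for $\gamma_i$. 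Your version makes the ``forced'' $\gamma_i$ explicit and gives a self-contained contradiction, which is a small improvement in rigor; it also incidentally shows that a decomposition achieving equality always exists, with uniqueness only needed to pin it down. One tiny quibble: you write ``the unweighted flow energy is additive,'' but since $\Jsf(f)=f^T W^{-1} f$ is a sum of per-simplex terms for any diagonal $W$, the additivity across disjoint supports holds in the weighted case too and does not need the unweighted restriction.
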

\begin{proof}
  Let $\gamma_1$ and $\gamma_2$ be null-homologous cycles in $\K_1$ and $\K_2$ respectively that sum to $\gamma$, and let $f_1$ and $f_2$ be the minimum-energy unit $\gamma_1$- and $\gamma_2$-flows, respectively. Then $f=f_1+f_2$ is a unit $\gamma$-flow, and we can bound $\mathcal{R}_\gamma(\K) \leq \mathsf{J}(f) = \mathsf{J}(f_1)+\mathsf{J}(f_2) = \mathcal{R}_{\gamma_1}(\K_1) + \mathcal{R}_{\gamma_2}(\K_2)$; the equality $\mathsf{J}(f) = \mathsf{J}(f_1)+\mathsf{J}(f_2)$ follows from the fact that $\K_1$ and $\K_2$ have disjoint sets of $d$-simplices.
  \par
  To prove the other direction, observe that $\gamma$ can always be written as the sum of two null-homologous chains $\gamma_1 \in C_{d-1}(\K_1)$ and $\gamma_2 \in C_{d-1}(\K_2)$. Any unit $\gamma$-flow $g$ defines null-homologous $(d-1)$-cycles $\gamma_1$ and $\gamma_2$ that sum to $\gamma$; namely, if $g_1$ and $g_2$ are the restriction of $g$ to $\K_1$ and $\K_2$ respectively, then $\gamma_1 = \boundary g_1$ and $\gamma_2 = \boundary g_2$.
  \par
   If $\gamma$ can be uniquely decomposed as $\gamma = \gamma_1 + \gamma_2$, then any unit $\gamma$-flow $f$ can be decomposed as a unit $\gamma_1$-flow $f_1$ and a unit $\gamma_2$-flow $f_2$. It follows that the energy of $f$ is minimized when the energy of $f_1$ and $f_2$ are both minimized. Hence, $\mathcal{R}_\gamma(\K) = \mathcal{R}_\gamma(\K_1) + \mathcal{R}_\gamma(\K_2)$.
\end{proof}

\begin{theorem}[Parallel Formula]
\label{thm:resistance_complex_parallel}
  Let $\K_1$ and $\K_2$ be simplicial complexes with $\gamma\in C_{d-1}(\K_1)\cap C_{d-1}(\K_2)$, $C_d(\K_1)\cap C_d(\K_2)=\emptyset$, and $\gamma$ null-homologous in $\K_1$ and $\K_2$. Let $\K=\K_1\cup\K_2$. Then
  $$
    \mathcal{R}_{\gamma}(\K) \leq \left( \frac{1}{\mathcal{R}_{\gamma}(\K_1)} + \frac{1}{\mathcal{R}_{\gamma}(\K_2)} \right)^{-1}
  $$
  Equality is achieved when $\im\boundary_{d}[\K_1]\cap\im\boundary_{d}[\K_2]=\spn\{\gamma\}$.
\end{theorem}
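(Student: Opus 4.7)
The strategy is to mimic the classical graph proof of the parallel law using the flow characterization of effective resistance (\Cref{lem:effective_resistance_flows}). Let $f_1^\ast \in C_d(\K_1)$ and $f_2^\ast \in C_d(\K_2)$ denote the minimum-energy unit $\gamma$-flows guaranteed by \Cref{lem:effective_resistance_flows}, so $\Jsf(f_i^\ast) = \mathcal{R}_\gamma(\K_i)=:R_i$. For any $\alpha\in\R$, the chain $f_\alpha := \alpha f_1^\ast + (1-\alpha) f_2^\ast \in C_d(\K)$ satisfies $\boundary f_\alpha = \alpha\gamma + (1-\alpha)\gamma = \gamma$, hence is a unit $\gamma$-flow in $\K$. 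Because $\K_1$ and $\K_2$ have disjoint sets of $d$-simplices, the supports of $f_1^\ast$ and $f_2^\ast$ are disjoint, so
\[
\Jsf(f_\alpha) = \alpha^2 \Jsf(f_1^\ast) + (1-\alpha)^2 \Jsf(f_2^\ast) = \alpha^2 R_1 + (1-\alpha)^2 R_2.
\]
Minimizing this quadratic in $\alpha$ gives $\alpha^\ast = R_2/(R_1+R_2)$ and minimum value $R_1 R_2/(R_1+R_2) = (1/R_1 + 1/R_2)^{-1}$. Applying \Cref{lem:effective_resistance_flows} to $\K$ yields $\mathcal{R}_\gamma(\K) \leq \Jsf(f_{\alpha^\ast}) = (1/R_1 + 1/R_2)^{-1}$, which is the desired inequality.

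For the equality case, assume $\im\boundary_d[\K_1]\cap\im\boundary_d[\K_2] = \spn\{\gamma\}$. Let $f$ be any unit $\gamma$-flow in $\K$. Since $C_d(\K) = C_d(\K_1)\oplus C_d(\K_2)$ as orthogonal subspaces (by the disjointness hypothesis), we may uniquely write $f = f_1 + f_2$ with $f_i\in C_d(\K_i)$. Then $\boundary f_1 + \boundary f_2 = \gamma$, and since $\gamma\in\im\boundary_d[\K_1]$ (as $\gamma$ is null-homologous in $\K_1$), we have $\boundary f_1 - \gamma \in \im\boundary_d[\K_1]$; likewise $\boundary f_1 - \gamma = -\boundary f_2 \in \im\boundary_d[\K_2]$. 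The hypothesis therefore forces $\boundary f_1 - \gamma \in \spn\{\gamma\}$, so there is some $\alpha\in\R$ with $\boundary f_1 = \alpha\gamma$ and $\boundary f_2 = (1-\alpha)\gamma$. For each $i$, $f_i$ differs from $\alpha_i f_i^\ast$ (with $\alpha_1=\alpha$, $\alpha_2 = 1-\alpha$) by an element of $\ker\boundary_d[\K_i]$, which is orthogonal to $f_i^\ast \in \im \coboundary_{d-1}[\K_i]$, so $\|f_i\|^2 \geq \alpha_i^2 R_i$. Hence
\[
\Jsf(f) = \|f_1\|^2 + \|f_2\|^2 \geq \alpha^2 R_1 + (1-\alpha)^2 R_2 \geq \frac{R_1 R_2}{R_1 + R_2},
\]
and taking the infimum over unit $\gamma$-flows $f$ gives the matching lower bound.

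The main obstacle is the equality case: one must argue that the assumption on the intersection of the boundary images rigidly constrains how any flow in $\K$ can split across $\K_1$ and $\K_2$, pinning the boundary contributions to the one-dimensional span of $\gamma$. Everything else is a convex-optimization computation plus the disjoint-support additivity of flow energy, both of which follow directly from \Cref{lem:effective_resistance_flows} and the hypothesis $C_d(\K_1)\cap C_d(\K_2)=\emptyset$.
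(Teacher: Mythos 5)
Your proof is correct and follows the same route as the paper's: form the convex combination $\alpha f_1^\ast + (1-\alpha)f_2^\ast$, optimize the quadratic, then for the equality case decompose an arbitrary unit $\gamma$-flow orthogonally and use the hypothesis on $\im\boundary_d[\K_1]\cap\im\boundary_d[\K_2]$ to pin $\boundary f_1$ and $\boundary f_2$ to scalar multiples of $\gamma$. Your orthogonality argument for $\|f_i\|^2 \geq \alpha_i^2 R_i$ spells out a step the paper leaves implicit (it simply asserts $g_t$ is the lowest-energy such combination), but it is the same underlying idea—one could equivalently note that $f_i/\alpha_i$ is a unit $\gamma$-flow in $\K_i$, hence has energy at least $R_i$.
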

\begin{proof}
  Let $f_1$ and $f_2$ be the minimum energy unit $\gamma$-flows in $\K_1$ and $\K_2$ resp. For any $t\in\R$, the chain $g_t=tf_1+(1-t)f_2$ is a unit $\gamma$-flow in $\K$. We can therefore bound the effective resistance over the minimum of these combinations as $\mathcal{R}_\gamma(\K)\leq \min_{t\in\R} \mathsf{J}(g_t)$.
  \par
   To get the tighest bound of $\RE_\gamma(\K)$, we now derive $t_\opt := \arg\min_{t\in\R} \mathsf{J}(g_t)$. Observe that $\mathsf{J}(g_t)=t^2 \mathsf{J}(f_1)+(1-t)^2 \mathsf{J}(f_2)=t^2 \mathcal{R}_\gamma(\K_1) + (1-t)^2 \mathcal{R}_\gamma(\K_2)$; this follows from the fact that $\K_1$ and $\K_2$ have disjoint sets of $d$-simplices. The quantity $\mathsf{J}(g_t)$ is a positive quadratic with respect to $t$, so $t_\opt$ is the value of $t$ where the derivative of $\mathsf{J}(g_t)$ is 0. Taking the derivative, we find that $t_\opt = \mathcal{R}_\gamma(\K_2)/(\mathcal{R}_\gamma(\K_1)+\mathcal{R}_\gamma(\K_2))$. Plugging $t_\opt$ into $\mathsf{J}(g_{t_\opt})$, we find that
  \begin{align*}
    \mathsf{J}(g_{t_\opt}) = & \left( \frac{\mathcal{R}_\gamma(\K_2)}{\mathcal{R}_\gamma(\K_1) + \mathcal{R}_\gamma(\K_2)} \right)^2 \mathcal{R}_\gamma(\K_1) + \left( 1 - \frac{\mathcal{R}_\gamma(\K_2)}{\mathcal{R}_\gamma(\K_1) + \mathcal{R}_\gamma(\K_2)} \right)^2 \mathcal{R}_\gamma(\K_2) \\
    = & \left( \frac{\mathcal{R}_\gamma(\K_2)}{\mathcal{R}_\gamma(\K_1) + \mathcal{R}_\gamma(\K_2)} \right)^2 \mathcal{R}_\gamma(\K_1) + \left( \frac{\mathcal{R}_\gamma(\K_1)}{\mathcal{R}_\gamma(\K_1) + \mathcal{R}_\gamma(\K_2)} \right)^2 \mathcal{R}_\gamma(\K_2) \\
    = & \left( \frac{1}{\mathcal{R}_\gamma(\K_1) + \mathcal{R}_\gamma(\K_2)} \right)^2 \Big(\mathcal{R}_\gamma(\K_1) + \mathcal{R}_\gamma(\K_2) \Big) \mathcal{R}_\gamma(\K_1) \mathcal{R}_\gamma(\K_2) \\
    = & \phantom{\Bigg(} \frac{\mathcal{R}_\gamma(\K_1) \mathcal{R}_\gamma(\K_2)}{\mathcal{R}_\gamma(\K_1) + \mathcal{R}_\gamma(\K_2)} \\
    = & \left( \frac{1}{\mathcal{R}_{\gamma}(\K_1)} + \frac{1}{\mathcal{R}_{\gamma}(\K_2)} \right)^{-1}
  \end{align*}
  This implies the upper bound on $\mathcal{R}_\gamma(\K)$ in the theorem statement. To get the lower bound, observe that any unit $\gamma$-flow $g$ in $\K$ can be orthogonally decomposed into chains $g_1\in C_d(\K_1)$ and $g_2\in C_d(\K_2)$, so $\boundary g_1+\boundary g_2=\gamma$. We claim that $\boundary g_1=t\gamma$ and $\boundary g_2=(1-t)\gamma$ for some value of $t$; if not, then $\boundary g_1=t\gamma+\eta$ and $\boundary g_2=(1-t)\gamma-\eta$ for some non-zero chain $\eta\not\in\spn\{\gamma\}$, which cannot be the case as $\im\boundary_{\K_1}\cap\im\boundary_{\K_2}=\spn\{\gamma\}$. This proves the chain $g$ is a linear combination of a unit $\gamma$-flow in $\K_1$ and a unit $\gamma$-flow in $\K_2$. The chain $g_t$ is the lowest energy such linear combination.
\end{proof}

\begin{figure}
    \centering
    \begin{subfigure}{0.49\textwidth}
        \centering
        \vspace{0.25in}
        \includegraphics[height=1in]{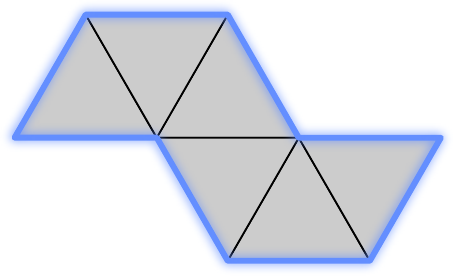}
        \vspace{0.25in}
    \end{subfigure}
    \begin{subfigure}{0.49\textwidth}
        \centering
        \includegraphics[height=1.5in]{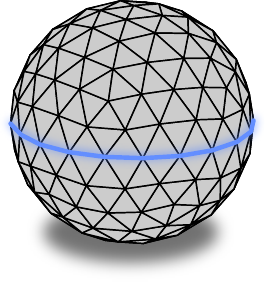}
    \end{subfigure}
    \caption{Left: The cycle $\gamma$ is the blue boundary with $\pm 1$ coefficients, and the unique unit $\gamma$-flow is the 6 triangles in series. If the complex is unweighted, then the effective resistance of $\gamma$ is 6. Right: The cycle $\gamma$ is the equator of the sphere, and the two hemispheres are two unit $\gamma$-flows in parallel. If each hemisphere has flow energy $c$, then the effective resistance of $\gamma$ is $\frac{c}{2}$.}
    \label{fig:series_parallel}
\end{figure}

Figure \ref{fig:series_parallel} shows examples of unit $\gamma$-flows in  series and parallel. These formulas justify the claim that the effective resistance of a null-homologous cycle $\gamma$ is a measure of \textit{how} null-homologous $\gamma$ is. The more chains with boundary $\gamma$, the smaller the effective resistance of $\gamma$ by the parallel formula. The smaller the chains bounding $\gamma$, the lower the effective resistance by the series formula.
\par
Another important property of effective resistance in graphs is \textit{\textbf{Rayleigh monotonicity}}. Rayleigh monotonicity says that adding edges to the graph can only decrease the effective resistance between any pair of vertices; this reinforces the notion that effective resistance measures how well-connected a pair of vertices are, as adding an edge can only make a pair of vertices better connected. We prove a similar result for simplicial complexes.

\begin{theorem}[Rayleigh Monotonicity]
\label{thm:resistance_monoticity}
  Let $\L\subset \K$ be simplicial complexes. Let $\gamma\in C_{d-1}(\K) \cap C_{d-1}(\L)$ be a cycle that is null-homologous in both complexes. Then $\RE_\gamma(\K) \leq \RE_\gamma(\L)$.
\end{theorem}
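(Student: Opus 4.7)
The plan is to invoke the flow characterization of effective resistance established in \Cref{lem:effective_resistance_flows}, which rephrases $\RE_\gamma(\K)$ as the minimum of the flow energy $\Jsf(f) = \sum_\sigma f(\sigma)^2/w(\sigma)$ over all unit $\gamma$-flows $f \in C_d(\K)$. From this variational description, monotonicity reduces to a containment of feasible sets: enlarging the complex can only enlarge the set of admissible flows, hence only lower the minimum energy.

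Concretely, I would first argue that any unit $\gamma$-flow $f \in C_d(\L)$ with $\partial f = \gamma$ lifts canonically to a unit $\gamma$-flow $\tilde f \in C_d(\K)$ by setting $\tilde f(\sigma) = f(\sigma)$ for $\sigma \in \L_d$ and $\tilde f(\sigma) = 0$ for $\sigma \in \K_d \setminus \L_d$. Because $\L \subset \K$ as simplicial complexes, every $(d{-}1)$-face appearing in $\partial\tilde f$ already lies in $\L$, so the boundary is unchanged and $\partial \tilde f = \partial f = \gamma$. Since $\tilde f$ and $f$ agree on their supports and the extra coordinates contribute $0$ to the sum, $\Jsf_\K(\tilde f) = \Jsf_\L(f)$.

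Taking the minimum over all unit $\gamma$-flows then yields
\[
\RE_\gamma(\K) \;=\; \min_{\partial g = \gamma,\, g \in C_d(\K)} \Jsf_\K(g) \;\leq\; \min_{\partial f = \gamma,\, f \in C_d(\L)} \Jsf_\K(\tilde f) \;=\; \min_{\partial f = \gamma,\, f \in C_d(\L)} \Jsf_\L(f) \;=\; \RE_\gamma(\L),
\]
where the inequality uses only that the set of lifted flows $\{\tilde f\}$ is a subset of the flows in $\K$. Both effective resistances are finite by the assumption that $\gamma$ is null-homologous in each of $\L$ and $\K$, so the comparison is meaningful.

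There is essentially no obstacle here beyond verifying that the extension-by-zero of a chain on $\L_d$ to $\K_d$ is well-defined and preserves boundaries; this is immediate because $\L$ is a subcomplex, so $\partial[\L]$ is literally the restriction of $\partial[\K]$ to the columns indexed by $\L_d$. The only mild care is in the weighted setting, where one should note that the lifted flow keeps exactly the same weights on its support, so the energy is preserved term by term. Thus the whole argument is a one-line containment-of-feasible-sets observation once the flow formulation is in hand.
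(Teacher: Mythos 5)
Your argument is correct and follows the same route as the paper's own proof: both invoke the flow characterization of effective resistance from \Cref{lem:effective_resistance_flows} and observe that the inclusion $C_d(\L)\subset C_d(\K)$ makes every unit $\gamma$-flow in $\L$ (extended by zero) a unit $\gamma$-flow in $\K$ with the same energy, so the minimum over the larger feasible set can only decrease. The paper states this containment in one line; you simply spell out the verification that the extension is well-defined and boundary- and energy-preserving.
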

\begin{proof}
  As $C_d(\L) \subset C_d(\K)$, then any unit $\gamma$-flow in $\L$ is also a unit $\gamma$-flow in $\K$. As the effective resistance is the minimum energy of a unit $\gamma$-flow, then clearly $\RE_\gamma(\K) \leq \RE_\gamma(\L)$.
\end{proof}

\section{Duality of Resistance and Capacitance in Embedded Complexes.}
\label{sec:duality}

In this section, we consider the effective capacitance of a cycle in the special case when $\K$ is a $d$-dimensional simplicial complex with a given embedding into $\R^{d+1}$. In this case, $\K$ is called an \EMPH{embedded complex}.
 Embedded complexes serve as a high-dimensional generalization of planar graphs and naturally admit a dual graph.
 We will show that the effective capacitance of certain $(d-1)$-cycles $\gamma$ in $\K$ are equal to the effective resistance between a pair of vertices in the dual graph that are \say{dual} to $\gamma$. This theorem generalizes the analysis of capacitance in planar graphs given by Jeffery and Kimmel~\cite{Jeffery2017}.
Hence, we can parameterize the quantum algorithm deciding if $\gamma$ is null-homologous (\Cref{thm:querycomplexity}) in terms of the effective resistance of $\gamma$ in $\K$ and the effective resistance between the pair of vertices in the dual graph.
 Specifically, we will generalize the special case of planar graphs for which the vertices $s$ and $t$ appear on the boundary of the same face.
Throughout this section we assume we are given the embedding as input. Computing the dual graph from an embedding can be done in polynomial time~\cite{Dey2020}. 

\subsection{Duality in Embedded Complexes.} The \textit{\textbf{Alexander Duality theorem}}~\cite[Corollary 3.45]{Hatcher} states that for a $d$-dimensional simplicial complex $\K$ with an embedding into $\R^{d+1}$, the complement $\R^{d+1} \setminus \K$ consists of $\beta_d[\K] + 1$ connected components.
We call these connected components \EMPH{voids}. Exactly one of these voids is unbounded.
We denote the bounded voids as $V_i$ for $1 \leq i \leq \beta_d$ and the unbounded void as $V_\infty$. Moreover, the boundaries of the bounded voids generate the homology group $H_d(\K)$.
The embedding implies that each $d$-simplex is contained on the boundary of at most two voids, and we make the assumption that the $d$-simplices are oriented consistently with respect to the voids.
That is, if a $d$-simplex is on the boundary of two voids it is oriented positively on one void, and negatively on the other.
We have a boundary matrix $\partial_{d+1}$ whose columns are the voids and whose rows are the $d$-simplices.
From the embedding and the consistent orientation we see that $\partial_{d+1}$ is the edge-vertex incident matrix of the \EMPH{directed dual graph}: the directed graph whose vertices are in bijection with the voids and whose edges are in bijection with the $d$-simplices of $\K$. The direction of the edges are inherited from the orientations of the $d$-simplices.
For a $d$-simplex $\sigma$ on the boundary of voids $V_1$ and $V_2$ we denote the dual edge by $\sigma^* = \{v_1^*, v_2^*\}$ and we define the dual weight function by $w^*(\sigma^*) = 1/w(\sigma)$.
\par 
We construct an additional chain group $C_{d+1}(\K)$ with the bounded voids as basis elements. This is a purely algebraic construction, meaning basis elements of $C_{d+1}(\K)$ do \textit{not} correspond to $(d+1)$-simplices. We can also define a boundary map $\boundary_{d+1}:C_{d+1}(\K)\to C_{d}(\K)$, where $\boundary V_i$ is the boundary of the void $V_i$ as described above. This gives rise to a new chain complex 
\[\cdots 
    0 \rightarrow C_{d+1}(\K) \xrightarrow{\partial_{d+1}} C_d(\K) \xrightarrow{\partial_{d}} \cdots \xrightarrow{\partial_1} C_0(\K).
\]
Since the boundaries of the voids generate the $d^{\text{th}}$ homology group of $\K$ and $C_{d+1}(\K)$ is generated by these voids we obtain a valid chain complex. Moreover, we have that $\dim H_d(\K) = 0$ in our new chain complex.
\par 
In addition to a dual graph, we define the \EMPH{dual complex} of $\K$, denoted $\K^*$, by defining the dual chain groups $C_k(\K^{*})$ via the isomorphism $C_{d - k + 1}(\K) \cong C_k(\K)$. Importantly, note that only $C_0(\K^{*})$ and $C_1(\K^{*})$ correspond to the chain groups of a simplicial complex (the dual graph), while the higher chain groups $C_{i}(\K)$ for $i\geq 2$ are a purely algebraic construction. Moreover, we define the dual boundary operator $\partial^*_k \colon C_k(\K^*) \rightarrow C_{k-1}(\K^*)$ to be the coboundary operator $\delta_{d - k + 1} \colon C_{d-k + 1}(\K) \rightarrow C_{d-k}(\K)$ of $\K$, and the dual coboundary operator $\delta^*_k \colon C_{k-1}(\K^*) \rightarrow C_{k}(\K^*)$ to be the boundary operator $\partial_{d - k + 1} \colon C_{d-k+1}(\K) \rightarrow C_{d-k}(\K)$ of $\K$. In other words the (co)boundary operators commute with the duality isomorphism. We summarize the construction in the following commutative diagram.

\begin{equation*}
    \begin{tikzcd}
    C_{d+1}(\K)
        \arrow[d, leftrightarrow, "\cong"]
        \arrow[r, "\partial_{d+1}"]
  & C_{d}(\K)
        \arrow[d, leftrightarrow, "\cong"]
        \arrow[r, "\partial_{d}"]
        \arrow[l, "\delta_{d}"]
  & \cdots
        \arrow[r, "\partial_1"]
        \arrow[l, "\delta_{d-1}"]
  & C_0(\K)
        \arrow[l, "\delta_1"]
        \arrow[d, leftrightarrow, "\cong"] \\
    C_0(\K^*)
        \arrow[r, "\delta^*_0"]
  & C_1(\K^*)
        \arrow[l, "\partial^*_1"]
        \arrow[r, "\delta^*_1"]
  & \cdots
        \arrow[l, "\partial^*_2"]
        \arrow[r, "\delta^*_{d}"]
  & C_{d+1}(\K^*)
        \arrow[l, "\partial_{d+1}^*"]
    \end{tikzcd}
\end{equation*}

We need to make one additional assumption on the location of the input $(d-1)$-dimensional cycle $\gamma$ which makes our setup a generalization of a planar graph with two vertices $s$ and $t$ appearing on the same face.
We assume that there exists a void $V_i$ with two unit $\gamma$-flows $\Gamma_1$ and $\Gamma_2$ such that $\supp(\Gamma_1) \cap \supp(\Gamma_2) = \emptyset$ and $\supp(\Gamma_1) \cup \supp(\Gamma_2) = \supp(\partial_{d+1} V_i)$.
That is, there exist two unit $\gamma$-flows whose supports partition the boundary of the void $V_i$. For example, the support of the cycle $\gamma$ could be the equator of a sphere, and the support of the cycles $\Gamma_1$ and $\Gamma_2$ could be the north and south hemispheres.
This generalizes the fact in planar graphs that when $s$ and $t$ are on the same face we can find two $st$-paths which partition the boundary of the face.
In planar graphs we are guaranteed to find two such paths, however for an arbitrary $(d-1)$-cycle $\gamma$ we are not guaranteed to find two unit $\gamma$-flows partitioning the boundary of some void.
More specifically, we take $\Gamma_2$ to be a unit $(-\gamma)$-flow so that $\partial_d \Gamma_2 = - \gamma$. In the planar graph analogy this is equivalent as viewing $\Gamma_1$ as a path from $s$ to $t$ and viewing $\Gamma_2$ as a path from $t$ to $s$.
We add an additional basis element $\Sigma$ to $C_d(\K)$ such that $\partial_d \Sigma = - \gamma$.
In planar graphs this is equivalent to adding an edge directed from $t$ to $s$.
The addition of this edge splits the face containing $s$ and $t$ into two.
In higher dimensions, the geometry of adding another $d$-simplex to fill $\gamma$ is more complicated, but the addition of $\Sigma$ to $C_d(\K)$ allows us to perform a purely algebraic operation to our chain complex that behaves as if $V_i$ has been split into two; namely, we remove $V_i$ from $C_{d+1}(\K)$ and replace it with two new basis elements $V_s$ and $V_t$.
Next, we extend the boundary operator to $V_s$ and $V_t$ in the following way: $\partial_{d+1} V_s = \Gamma_1 - \Sigma$ and $\partial_{d+1} V_t = \Gamma_2 + \Sigma$. In the dual complex, the vertices dual to $V_s$ and $V_t$ are denoted $s^*$ and $t^*$, and the edge dual to $\Sigma$ is denoted $\Sigma^* = \{t^*, s^*\}$.

\subsection{Effective capacitance is Dual to Effective Resistance.}

In this next section, we will show that the effective capacitance of $\gamma$ in a subcomplex $\L\subset\K$ is equal to the effective resistance between $s^*$ and $t^*$ in the dual complex $\L^{*}\subset\K^{*}$, where $\L^{*}$ is defined as being the subgraph of $\K^{*}$ with all of the vertices of $\K^*$ but only the edges dual to the $d$-simplices \textbf{not} in $\L$.
\par 
The effective resistance between $s^*$ and $t^*$ in $\L^{*}$ is determined by the unit $s^*t^*$-flows in $\L^{*}$. However, it will be convenient to work with \EMPH{circulations} instead of flows. A unit $s^*t^*$-circulation $f$ is a 1-cycle such that $f(\Sigma^*) = 1$. Recall that $\Sigma^*$ is the edge directed from $t^*$ to $s^*$, so a unit $s^*t^*$-circulation is just a unit $s^*t^*$-flow with additional flow on the edge $\Sigma^*$ to the cycle.
Clearly, there is a bijection between unit $s^*t^*$-flows and unit $s^*t^*$-circulations. We define the flow energy of a circulation to be equal to the flow energy of its corresponding flow.

\begin{theorem}
Let $\K$ be a $d$-dimensional simplicial complex embedded into $\R^{d+1}$, and let $\L\subset\K$ be a subcomplex. Let $\gamma\in C_{d-1}(\K)$ be a $(d-1)$-cycle such that there exist two unit $\gamma$-flows $\Gamma_1,\,\Gamma_2\in C_{d}(\K)$ with supports that partition the boundary of some void $V_i$. Then the effective capacitance $\C_\gamma(\L,\K)$ is equal to the effective resistance $\mathcal{R}_{s^*t^*}(\L^{*})$.
\end{theorem}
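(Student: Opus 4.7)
The plan is to exhibit an energy-preserving correspondence (up to a global sign) between unit $\gamma$-potentials $p$ in $\L$ and unit $s^*t^*$-circulations $f$ in $\L^*$, given by $f = \delta_{d-1}[\K]p$ viewed as a $1$-chain on $\K^*$ via the duality isomorphism $C_d(\K)\cong C_1(\K^*)$. The equality $\C_\gamma(\L,\K)=\RE_{s^*t^*}(\L^*)$ will then follow by minimizing over both sides.

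For the forward direction I would take a unit $\gamma$-potential $p$, set $f \coloneqq \delta_{d-1}[\K]p$, and verify the three defining properties of an $s^*t^*$-circulation in $\L^*$. First, the condition $\delta_{d-1}[\L]p = 0$ forces $f(\sigma)=0$ for every $\sigma\in\L_d$, so $f$ is supported on the dual edges of $\L^*$. Second, the cochain identity $\delta_d\circ\delta_{d-1}=0$ gives $\delta_d f = 0$; under the duality $\partial_1^* = \delta_d$, this says $f$ is a $1$-cycle in $\K^*$ and hence in $\L^*$. Third, $f(\Sigma) = \langle\partial\Sigma, p\rangle = \pm p^T\gamma = \pm 1$, so up to a global sign $f(\Sigma^*) = 1$ as required.

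The reverse direction---showing every unit $s^*t^*$-circulation arises this way---rests on the vanishing $H_d(\tilde\K)=0$ in the augmented complex $\tilde\K$ (with the split voids $V_s, V_t$ replacing $V_i$ and the basis element $\Sigma$ adjoined to $C_d$). I would verify this by a direct dimension count: by Alexander duality the original $\partial V_j$'s span $\ker\partial_d[\K]$, and after the splitting both $\ker\partial_d[\tilde\K]$ and $\im\partial_{d+1}[\tilde\K]$ gain exactly one new dimension involving $\Sigma$, whose generators coincide once $\partial\circ\partial=0$ is used to pin down the signs of $\partial V_s$ and $\partial V_t$. Dualizing via $\partial_1^*=\delta_d$ and $\partial_2^*=\delta_{d-1}$ then yields $\ker\partial_1^* = \im\partial_2^*$, so every dual $1$-cycle is a $1$-boundary; writing $f = \delta_{d-1}[\K]p$, the support and $\Sigma^*$ conditions on $f$ translate back to $\delta_{d-1}[\L]p=0$ and $p^T\gamma=1$, making $p$ a unit $\gamma$-potential.

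Energy preservation is then a term-by-term check using $w^*(\sigma^*) = 1/w(\sigma)$: the identity $(\delta p)(\sigma)^2\,w(\sigma) = f(\sigma^*)^2/w^*(\sigma^*)$ matches the $\sigma\in\K_d\setminus\L_d$ contributions to $\mathcal{J}(p)$ with those to $\Jsf(f)$, the $\sigma\in\L_d$ terms of $\mathcal{J}(p)$ vanish by $\delta_{d-1}[\L]p=0$, and the $\Sigma^*$ term is excluded from $\Jsf(f)$ by the definition of the circulation's energy. The hard part I expect is the cohomological calculation $H_d(\tilde\K)=0$: although the splitting construction is designed precisely to trivialize the class of $\gamma$, the sign bookkeeping for $\partial V_s$ and $\partial V_t$ is delicate and must be carried out carefully before the generator count closes; a secondary care is verifying that the bijection respects the subtle distinction between circulations (which carry a unit flow on the added edge $\Sigma^*$) and honest coboundaries of $(d-1)$-chains in $\K$.
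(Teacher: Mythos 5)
Your proposal follows essentially the same route as the paper's proof: map unit $\gamma$-potentials $p$ to dual circulations via $f = \delta_{d-1}[\K]p$, check the cycle and $\Sigma^*$-normalization conditions, invert using $H_1(\L^*)=0$, and match energies term by term with the $w^* = 1/w$ convention. The one place you are more careful is the reverse direction: the paper simply invokes $\dim H_d(\tilde\K)=0$ ``by the assumptions outlined in the previous section,'' whereas you supply the dimension-count argument showing that adjoining $\Sigma$ and splitting $V_i$ into $V_s,V_t$ adds exactly one dimension to both $\ker\partial_d$ and $\im\partial_{d+1}$ so that $H_d$ stays zero; this is a genuine gap-fill rather than a different method (and you are right that the signs in $\partial V_s=\Gamma_1\pm\Sigma$, $\partial V_t=\Gamma_2\mp\Sigma$ must be fixed by $\partial\circ\partial=0$ — as written, the paper's choice of signs does not annihilate under $\partial$).
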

\begin{proof}

Let $p$ be a unit $\gamma$-potential in $\L$, and define $f\in C_1(\L^{*})$ to be the image of $\delta_{d-1} p$ under the duality isomorphism; that is, $f = \partial_2^*p^*$. The 1-chain $f$ is a circulation in the 1-skeleton of $\L^{*}$. Further, since $\Sigma^* = \{t^*,s^*\}$ the circulation $ f$ corresponds to a unit $s^*t^*$-flow by the following calculation: 
\[
    f^{T}\Sigma^* = p^{T}\partial_d \Sigma = p^{T}\gamma = 1.
\] 
Next, we calculate the flow energy of $f$ and show it is equal to the potential energy of $p$.
\begin{align*}
\mathsf{J}(f) &= \sum_{e^* \in \L^*_1} \frac{f(e^*)^2}{w^*(e^*)} \\
&= \sum_{e^* \in \L^*_1} f(\sigma^*)^2 w(e)\\
&= \sum_{\sigma \in \K_d\setminus \L_d} \delta p (\sigma)^2 w(\sigma)\\
&= \mathcal{J}(p)
\end{align*}

Conversely, let $f^*$ be the minimal-energy unit $s^*t^*$-circulation in $\L^{*}$.
By the assumptions outlined in the previous section, we have $\dim H_d(\K) = 0$, which in turn gives us $\dim H_1(\K^*) = 0$.
Hence, $f^* \in \im \partial_2^*$.
Let $p^*\in C_2(\L^{*})$ with $\partial^*_2 p^* = f^*$; we will show that $p$ is the unit $\gamma$-potential in $\L$ in bijection with $f^*$.
To see that $p$ is a unit $\gamma$-potential we compute its value on $\gamma$:
\[ 
p^{T}\gamma = p^{T}\partial_d \Sigma = (\delta_{d-1}p)^{T}\Sigma = (\partial^*_2p^*)^{T}\Sigma^* = (f^*)^{T}\Sigma^* = 1.
\]
Moreover, as $\boundary_2^{*}p^{*}=f^{*}$ and $f^{*}$ has its support on $\L_1^{*}$ (which is defined to only have edges dual to the $d$-simplices in $\K\setminus\L$), then $\boundary_{d-1}[\L]p=0$, so $p$ is a unit $\gamma$-potential in $\L$.
\par 
It remains to show that the potential energy of $p$ is equal to the flow energy of $f^*$. We have the following calculation:
\begin{align*}
\mathcal{J}(p) &= \sum_{\sigma \in \K_d\setminus\L_d} \delta_{d-1}p(\sigma)^2 w(\sigma)\\
&= \sum_{\sigma \in \K_d\setminus\L_d} \frac{\delta_{d-1}p(\sigma)^2}{w^*(\sigma^*)}\\
&= \sum_{\sigma^{*} \in \L_1^{*}} \frac{\partial^*_2 p^*(\sigma^*)^2}{w^*(\sigma^*)}\\
&= \sum_{\sigma^{*} \in \L_1^{*}} \frac{f^*(\sigma^{*})^2}{w^*(\sigma^*)}\\
&= \mathsf{J}(f^*).
\end{align*}
\end{proof}

\section{Evaluating the span program for null-homology testing.}
\label{sec:evaluate}
\newcommand{\RCperp}{R_{C^\perp}}
\newcommand{\RU}{R_{U^{-}}}
\newcommand{\Rker}{R_{\ker\boundary}}

In this section, we give a quantum algorithm for evaluating the null-homology span program. Our algorithm is inspired by and generalizes the quantum algorithm for evaluating $st$-connectivity span program in graphs. The first quantum algorithm for evaluating the $st$-connectivity span program was given by Belovsz and Reichardt~\cite{Belovs2012}; however, we follow the slightly different algorithm introduced by Ito and Jeffery~\cite{Ito2018}. We are also greatly indebted to the presentation of this algorithm given by Jeffery and Kimmel~\cite{Jeffery2017}, from which our algorithm is adapted.
\par
 The algorithm for evaluating a general span program $\P=(\H,\U,\ket{\tau},A)$ is to perform phase estimation of the vector $\ket{w_0}:=A^+\ket{\tau}$ on the unitary operator $U=R_{\H(x)}R_{\ker A}$ where the notation $R_S$ denotes the reflection about the subspace $S$.
(The unitary $R_S = 2\Pi_S - I$, where $\Pi_S$ is the projection onto $S$.)
Intuitively, if $x$ is a positive instance, then $\ket{w_0}$ will be close to an eigenvector of $U$ with phase $0$. If $x$ is a negative instance, then $\ket{w_0}$ will be far from any eigenvector of $U$ of phase 0. If we want to evaluate the function $f:D\to\{0,1\}$, we need to perform phase estimation to precision $O\left(1/\sqrt{W_-(f,\P)W_+(f,\P)}\right)$. The algorithm for phase estimation of a unitary $U$ to precision $O(\delta)$ performs $O(1/\delta)$ implementations of the unitary $U$ \cite{Kitaev1995}, so the algorithm for evaluating the span program $\P=(\H, \U, \ket{\tau}, A)$ requires $O\left(\sqrt{W_-(f,\P)W_+(f,\P)}\right)$ implementations of $U$. We now analyze the time complexity of implementing the unitary $U$.
\par
The reflection $R_{\H(X)}$ can be implemented with two queries to $\oracle_x$. 
This reflection is the same as the reflection across the good states in Grover's Algorithm. The rest of this section is devoted to an implementation of $R_{\ker\boundary}$.
\par
 Recall that $\ker\boundary_d\subset C_d(\K)$. The idea behind the implementation of $R_{\ker\boundary}$ is that instead of reflecting across $\ker\boundary_d$ directly, we can embed $C_d(\K)$ into $C_{d-1}(\K)\tensor C_{d}(\K)$ by sending $\ket{\tau}\to c\ket{\boundary\tau}\ket{\tau}$ (where $c$ is a normalization constant). We can then implement the reflection $R_{\ker\boundary}$ by implementing a series of ``local reflections" on the basis $\ket{\boundary\tau}\ket{\tau}$.
\par
We consider two subspaces $B$ and $C$ of $C_{d-1}(K)\tensor C_{d}(K)$. The spaces $B$ and $C$ are defined:
$$
  B = \spn \left\{ \ket{b_\tau}:=\frac{1}{\sqrt{d+1}}\ket{\boundary\tau}\ket{\tau} : \tau\in \K_{d} \right\}
$$
and
$$
  C = \spn\left\{ \ket{c_\sigma}:=\sum_{\sigma\subset\tau}\sqrt{\frac{w(\sigma)}{\deg(\sigma)}}\ket{\sigma}\ket{\tau} : \sigma\in \K_{d-1} \right\}.
$$

The space $C_{d-1}(\K)\tensor C_d(\K)$ has basis $\{\ket\sigma\ket\tau \mid \sigma\in K_{d-1},\, \tau\in\K_d\}$. The vector $\ket{b_\tau}$ is non-zero on a basis element $\ket\sigma\ket\tau$ if and only if $\sigma$ is in the support of the boundary of $\tau$. Similarly, a component of $\ket{c_\sigma}$ is non-zero on $\ket\sigma\ket\tau$ if and only if $\tau$ is in the support of the coboundary of $\sigma$. The vector $\ket{b_\tau}$ can be thought of as being \textit{like} the boundary of $\tau$, with the additional property that the set $\{\ket{b_\tau}\mid\tau\in C_d(\K)\}$ is orthonormal. Similarly, the vector $\ket{c_\sigma}$ is \textit{like} the coboundary of $\sigma$ but orthonormal.
\par
We also define operators that embed $C_{d}(\K)$ and $C_{d-1}(\K)$ into $B$ and $C$ respectively. We define linear operators $M_B \colon C_{d}(\K) \rightarrow B$ and $M_C \colon C_{d-1}(\K) \rightarrow C$ as follows:
  $$
    M_B:=\sum_{\tau\in \K_d}\ket{b_\tau}\bra{\tau},
  $$
  and
  $$
    M_C:=\sum_{\sigma\in \K_{d-1}}\ket{c_\sigma}\bra{\sigma}.
  $$
 As the columns of $M_B$ and $M_C$ are orthonormal, both operators are isometries.
 \par
 We introduce the matrices $M_C$ and $M_B$ as they have the property that $\ker M_C^\dagger M_B=\ker\boundary$, which we prove in the follow lemma. This fact will give us a way to implement $R_{\ker\boundary}$.
 \begin{lemma}
  \label{lem:ker_mcmb}
  $\ker M_C^\dagger M_B = \ker\boundary$.
 \end{lemma}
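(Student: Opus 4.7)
My plan is to compute $M_C^{\dagger} M_B$ directly in the standard basis and show that it factors as an injective diagonal rescaling composed with the boundary map, from which the equality of kernels is immediate.

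First I would expand the operator. Since $M_C = \sum_\sigma \ket{c_\sigma}\bra{\sigma}$, its adjoint is $M_C^{\dagger} = \sum_\sigma \ket{\sigma}\bra{c_\sigma}$. Applied to a basis element $\ket{\tau}$, we get $M_C^{\dagger} M_B \ket{\tau} = M_C^{\dagger}\ket{b_\tau} = \sum_\sigma \braket{c_\sigma}{b_\tau}\ket{\sigma}$, so the task reduces to computing the matrix entries $\braket{c_\sigma}{b_\tau}$.

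Next I would evaluate this inner product by expanding both kets. The tensor-product structure of $\ket{b_\tau} = \frac{1}{\sqrt{d+1}}\sum_{\sigma' \subset \tau}[\tau:\sigma']\ket{\sigma'}\ket{\tau}$ and $\ket{c_\sigma} = \sum_{\tau' \supset \sigma}\sqrt{w(\tau')/\deg(\sigma)}\ket{\sigma}\ket{\tau'}$ means the inner product is a double sum whose summands carry factors $\braket{\sigma}{\sigma'}\braket{\tau'}{\tau}$. Basis orthonormality forces $\sigma' = \sigma$ and $\tau' = \tau$, which collapses both sums to a single term that survives iff $\sigma \subset \tau$, yielding
\[
  \braket{c_\sigma}{b_\tau} \;=\; \frac{1}{\sqrt{d+1}}\,[\tau:\sigma]\,\sqrt{\tfrac{w(\tau)}{\deg(\sigma)}}
\]
when $\sigma \subset \tau$ and $0$ otherwise. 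The only real subtlety here is bookkeeping the orientation sign $[\tau:\sigma]$ inherited from $\boundary\tau$ and confirming that the weight ratio inside $\ket{c_\sigma}$ is what makes the sum over $\tau' \supset \sigma$ collapse cleanly against $\ket{b_\tau}$.

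Finally I would reassemble these entries into matrix form to obtain
\[
  M_C^{\dagger} M_B \;=\; \tfrac{1}{\sqrt{d+1}}\, D^{-1/2}\, \boundary\, \sqrt{W},
\]
where $D$ is the diagonal matrix on $C_{d-1}(\K)$ with entries $\deg(\sigma)$ and $W$ is the diagonal weight matrix on $C_d(\K)$. I would adopt the convention that $(D^{-1/2})_{\sigma,\sigma} := 0$ whenever $\deg(\sigma)=0$; this is harmless because any such $\sigma$ lies in no $d$-simplex, so the $\sigma$-row of $\boundary$ vanishes and $\sigma$ contributes nothing to $\im(\boundary\sqrt{W})$. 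Consequently $D^{-1/2}$ is injective on $\im(\boundary\sqrt{W})$, so the kernel of the composition equals $\ker(\boundary\sqrt{W})$, which is precisely $\ker A = \ker\boundary$ under the convention of the span program (and literally $\ker\boundary$ in the unweighted case where $\sqrt{W}=I$). The main ``obstacle'' is therefore purely bookkeeping: correctly identifying the weighting factor in $\ket{c_\sigma}$ as the one that turns $M_C^{\dagger} M_B$ into a nondegenerate rescaling of $\boundary$, after which the lemma follows without further work.
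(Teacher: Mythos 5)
Your proposal is correct and follows essentially the same route as the paper: compute the matrix entries $\braket{c_\sigma}{b_\tau}$ directly, recognize the factorization $M_C^\dagger M_B = \tfrac{1}{\sqrt{d+1}}D^{-1/2}\boundary\sqrt{W}$, and observe that the left diagonal factor preserves the kernel. The two extra remarks you add (the convention for $\deg(\sigma)=0$, and the careful distinction between $\ker(\boundary\sqrt{W})$ and the literal $\ker\boundary$) are sound and slightly more explicit than the paper's treatment, which elides these points; they do not change the argument.
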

 \begin{proof}
   We first calculate the matrix $M_C^{\dagger}M_B$. We then argue that $\ker M_C^\dagger M_B=\ker\boundary$. For a $(d{-}1)$-simplex $\sigma$ and a $d$-simplex $\tau$, we have that
     $$
       \braket{c_\sigma}{b_\tau}=\sum_{\tau'\in\K_d : \sigma\subset\tau'} \frac{w(\tau)}{\sqrt{\deg(\sigma)}}\braket{\sigma}{\boundary\tau}\braket{\tau'}{\tau}=
       \begin{cases}
         \sqrt{\frac{w(\tau)}{(d+1)\deg(\sigma)}} \braket{\sigma}{\boundary\tau} & \text{if $\sigma\subset\tau$} \\
         0 & \text{otherwise}
       \end{cases}.
     $$
     So $\braket{c_\sigma}{b_\tau}$ is non-zero if and only if $\sigma$ is in the boundary of $\tau.$ We use this to calculate the product $M_C^\dagger M_B:$
     \begin{align*}
       M_C^\dagger M_B =& \sum_{\sigma\in \K_{d-1}} \sum_{\tau\in \K_d} \ket{\sigma}\braket{c_\sigma}{b_\tau}\bra{\tau}\\
       =&\frac{1}{\sqrt{(d+1)}} \sum_{\sigma\subset\tau} \frac{\sqrt{w(\tau)}\braket{\sigma}{\boundary\tau}}{\sqrt{\deg(\sigma)}} \ket{\sigma}\bra{\tau}\\
       =&\frac{1}{\sqrt{(d+1)}} \left(\sum_{\sigma\in \K_{d-1}(\K)} \frac{\ket{\sigma}\bra{\sigma}}{\sqrt{\deg(\sigma)}}\right) \sum_{\tau\in \K_d}\sqrt{w(\tau)}\ket{\boundary\tau}\bra{\tau} \\
       =&\frac{1}{\sqrt{(d+1)}} \left(\sum_{\sigma\in \K_{d-1}(\K)} \frac{\ket{\sigma}\bra{\sigma}}{\sqrt{\deg(\sigma)}}\right) \boundary\sqrt{W} =:\hat{\boundary}.
     \end{align*}
   The term $\frac{\ket{\sigma}\bra{\sigma}}{\sqrt{\deg(\sigma)}}$ is all-zeros matrix except for the $(\sigma, \sigma)$-entry, which is $\frac{1}{\sqrt{\deg(\sigma)}}$. The sum $\sum_{\sigma\in \K_{d-1}} \frac{\ket{\sigma}\bra{\sigma}}{\sqrt{\deg(\sigma)}}$ is a diagonal matrix.
   Accordingly, the matrix $\hat{\boundary}$ is $\boundary\sqrt{W}$ with each row scaled.
   Scaling the rows of a matrix does not change its row space or kernel, so $\ker M_C^{\dagger}M_B=\ker\boundary$.
 \end{proof}
 The spaces $B$ and $C$ and the matrices $M_B$ and $M_C$ are inspired by the follow lemma of Szegedy which is necessary for implementing $R_{\ker \partial}$.
 \begin{lemma}[Szegedy \cite{SzegedyMarkov}, Theorem 1]
    \label{lem:spectral_gap}
    Let $M_B$ and $M_C$ be matrices with the same number of rows and orthonormal columns, and let $B=\spn M_B$ and $C=\spn M_C$. The matrix $M_C^\dagger M_B$ has singular values at most 1. Let $\cos\theta_1,\ldots,\cos\theta_k$ be the singular values of $M_C^\dagger M_B$ in the range $(0,1)$. Let $U=R_CR_B$. We can decompose the eigenspaces of $U$ as
    \begin{itemize}
      \item The (+1)-eigenspace of $U$ is $(B\cap C)\oplus(B^\perp \cap C^\perp)$.
      \item The (-1)-eigenspace of $U$ is $(B\cap C^\perp)\oplus(B^\perp\cap C)$.
      \item The remaining eigenvalues of $U$ are $e^{\pm 2 i\theta_j}$ for $1\leq j\leq k$.
    \end{itemize}
  \end{lemma}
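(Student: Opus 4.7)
The plan is to prove the lemma in three stages: a direct operator-norm bound for the singular values, a reduction (via Jordan's lemma on two projections) to invariant blocks of dimension one or two, and a case analysis within each block. The bulk of the work, and the main technical ingredient, is Jordan's lemma.

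First I would dispose of the singular-value bound. Because $M_B$ has orthonormal columns, $M_B^\dagger M_B = I$, so $M_B$ is an isometry and $\|M_B v\| = \|v\|$. The adjoint $M_C^\dagger$ factors as the orthogonal projection onto $C$ composed with the inverse of the isometry $M_C \colon \text{domain} \to C$, so $\|M_C^\dagger w\| \le \|w\|$. Composing gives $\|M_C^\dagger M_B v\| \le \|v\|$, hence every singular value is at most $1$.

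Next I would invoke Jordan's lemma for the pair of orthogonal projections $\Pi_B$ and $\Pi_C$ on the ambient space $\H$: the space decomposes as an orthogonal direct sum of subspaces jointly invariant under $\Pi_B$ and $\Pi_C$, each of dimension one or two. The one-dimensional invariant subspaces are precisely those lying in one of the four pieces $B\cap C$, $B^\perp\cap C^\perp$, $B\cap C^\perp$, $B^\perp\cap C$. On each two-dimensional block there exist unit vectors $\ket{b_j}\in B$ and $\ket{c_j}\in C$ with $\braket{b_j}{c_j}=\cos\theta_j\in(0,1)$, and these principal-angle pairs are in bijection with the singular triples of $M_C^\dagger M_B$ whose singular value lies strictly between $0$ and $1$: indeed, the self-adjoint operator $\Pi_B\Pi_C\Pi_B$ restricted to $B$ has eigenvalues $\cos^2\theta_j$ with eigenvectors $M_B\ket{v_j}$, where $\ket{v_j}$ are the right singular vectors of $M_C^\dagger M_B$.

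It then suffices to compute $U=R_CR_B=(2\Pi_C-I)(2\Pi_B-I)$ on each block. For one-dimensional blocks the four cases are immediate: if $x\in B\cap C$ then $R_Bx=x$ and $R_Cx=x$, so $Ux=x$; if $x\in B^\perp\cap C^\perp$ then $R_Bx=-x$ and $R_C(-x)=x$, so again $Ux=x$; if $x\in B\cap C^\perp$ then $R_Bx=x$ and $R_Cx=-x$, giving $Ux=-x$; symmetrically $x\in B^\perp\cap C$ gives $Ux=-x$. On a two-dimensional block with principal angle $\theta_j$, $R_B$ and $R_C$ are reflections of the plane about the lines spanned by $\ket{b_j}$ and $\ket{c_j}$, whose angle is $\theta_j$; the composition of two planar reflections about lines meeting at angle $\theta_j$ is rotation by $2\theta_j$, whose eigenvalues are $e^{\pm 2i\theta_j}$. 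Collecting the contributions across all blocks yields the three bullets of the lemma.

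The hard part is Jordan's lemma itself, which I would establish by diagonalizing the self-adjoint operator $\Pi_B\Pi_C\Pi_B$: its eigenspaces refine the decomposition of $B$ into the one-dimensional pieces $B\cap C$ (eigenvalue $1$) and $B\cap C^\perp$ (eigenvalue $0$) together with two-dimensional blocks spanned by an eigenvector $\ket{b_j}\in B$ with eigenvalue $\cos^2\theta_j\in(0,1)$ and the companion vector $\ket{c_j}\propto\Pi_C\ket{b_j}$ in $C$. The same argument applied to $\Pi_C\Pi_B\Pi_C$ on $C^\perp$ and the orthogonal complement of all these blocks accounts for the remaining one-dimensional pieces $B^\perp\cap C$ and $B^\perp\cap C^\perp$, completing the reduction and hence the proof.
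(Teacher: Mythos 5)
The paper does not prove this lemma; it simply cites it as Theorem~1 of Szegedy, so there is no ``paper's proof'' to compare against.

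Your argument is the standard (and essentially the only) proof of this statement, and it is substantively correct. The opening observation that $M_B,M_C$ are isometries, so $M_C^\dagger M_B$ is a contraction, handles the singular-value bound cleanly. Identifying the nontrivial blocks of Jordan's two-projection decomposition with the eigenvectors of $\Pi_B\Pi_C\Pi_B|_B$ of eigenvalue $\cos^2\theta_j\in(0,1)$, and tying these to the singular triples of $M_C^\dagger M_B$ via $\Pi_B\Pi_C\Pi_B|_B\cong (M_C^\dagger M_B)^\dagger(M_C^\dagger M_B)$, is exactly right. The block computation of $U$ is correct as well, including the four $1$-dimensional cases and the $2$-dimensional rotation-by-$2\theta_j$ case (the geometric ``two reflections make a rotation'' picture is literally a real-plane statement, but the same $2\times 2$ matrix computation goes through over $\mathbb{C}$ and gives eigenvalues $e^{\pm 2i\theta_j}$).

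The one blemish is in your final sentence sketching the proof of Jordan's lemma: you propose diagonalizing $\Pi_C\Pi_B\Pi_C$ \emph{on $C^\perp$}, but $\Pi_C\Pi_B\Pi_C$ vanishes identically on $C^\perp$, so that restriction carries no information. What you want is to diagonalize $\Pi_C\Pi_B\Pi_C$ on $C$ (recovering $B^\perp\cap C$ as the kernel, $B\cap C$ as the eigenvalue-$1$ part, and the same $2$-dimensional blocks as before), and then observe that the orthogonal complement of $B$, $C$, and the $2$-dimensional blocks is $B^\perp\cap C^\perp$; alternatively, once the blocks and $B\cap C$, $B\cap C^\perp$ are removed, both projections restrict to the residual space with $\Pi_B=0$, and diagonalizing $\Pi_C$ there yields $B^\perp\cap C$ and $B^\perp\cap C^\perp$ directly. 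This is a small slip in a sketch, not a gap in the overall plan.
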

The following lemma gives us a way to implment the $R_{\ker\boundary}$.
 Let $\RU$ be the rotation about $(-1)$-eigenspace of $U$, and let $V=M_B^\dagger \RU M_B$. The matrix $V$ embeds $C_d(\K)$ into $B$ with $M_B$, performs a reflection on $B$ about the $(-1)$-eigenspace of $U$, and unembeds with $M_B^\dagger$. The following lemma proves that $V=R_{\ker\boundary}$.
 \begin{lemma}
 \label{lem:V_Rker}
  The matrix $V= M_B^\dagger R_{U^-} M_B$ satisfies the equality $V = R_{\ker\boundary}$.
 \end{lemma}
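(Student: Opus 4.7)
The plan is to use Lemma~\ref{lem:ker_mcmb} to identify $\ker\boundary$ with a distinguished subspace of $B$, then invoke Szegedy's spectral lemma (Lemma~\ref{lem:spectral_gap}) to read off how $R_{U^-}$ acts on that subspace and its orthogonal complement in $B$.

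First I would observe that $M_B$ and $M_C$ both have orthonormal columns, so $M_B$ is an isometry onto $B$ and $M_C^\dagger$ acts on $C$ as the coordinate map given by the $\ket{c_\sigma}$ basis; in particular, for any $v \in C_d(\K)$, the vector $M_C^\dagger M_B v$ vanishes if and only if $M_B v$ is orthogonal to every $\ket{c_\sigma}$, i.e., $M_B v \in C^\perp$. Combined with Lemma~\ref{lem:ker_mcmb}, this gives the key identification
\[
    v \in \ker\boundary \iff M_B v \in B \cap C^\perp.
\]
Since $M_B$ is an isometry onto $B$, this means $M_B$ restricts to an isometric bijection $\ker\boundary \xrightarrow{\sim} B \cap C^\perp$ and to an isometric bijection $(\ker\boundary)^\perp \xrightarrow{\sim} B \ominus (B \cap C^\perp)$, where $B \ominus (B \cap C^\perp)$ denotes the orthogonal complement of $B \cap C^\perp$ inside $B$.

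Next I would apply Szegedy's lemma. It tells us that the $(-1)$-eigenspace of $U = R_C R_B$ is $(B \cap C^\perp) \oplus (B^\perp \cap C)$. Write any $v \in C_d(\K)$ as $v = v_0 + v_1$ with $v_0 \in \ker\boundary$ and $v_1 \in (\ker\boundary)^\perp$. Then $M_B v_0 \in B \cap C^\perp$, which is contained in the $(-1)$-eigenspace, so $R_{U^-} M_B v_0 = M_B v_0$ (the reflection fixes its defining subspace). On the other hand, $M_B v_1$ lies in $B$, so it is orthogonal to $B^\perp \cap C$; and by the isometric bijection above, $M_B v_1$ is also orthogonal to $B \cap C^\perp$ inside $B$. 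Hence $M_B v_1$ is orthogonal to the entire $(-1)$-eigenspace of $U$, so $R_{U^-} M_B v_1 = -M_B v_1$.

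Putting the two pieces together and using $M_B^\dagger M_B = I$ (since $M_B$ is an isometry), we obtain
\[
    V v = M_B^\dagger R_{U^-} M_B (v_0 + v_1) = M_B^\dagger (M_B v_0 - M_B v_1) = v_0 - v_1 = R_{\ker\boundary} v,
\]
which proves $V = R_{\ker\boundary}$. The only delicate step is the orthogonality argument that $M_B v_1$ is orthogonal to all of $(B \cap C^\perp) \oplus (B^\perp \cap C)$, but this reduces cleanly to the fact that $M_B v_1 \in B$ together with the isometry property $M_B^\dagger M_B = I$ applied to the characterization of $\ker\boundary$ coming from Lemma~\ref{lem:ker_mcmb}.
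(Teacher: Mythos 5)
Your proof is correct, and it uses the same two key ingredients as the paper's proof: Lemma~\ref{lem:ker_mcmb} to identify $\ker\boundary$ with $\ker(M_C^\dagger M_B)$, and the Szegedy decomposition of Lemma~\ref{lem:spectral_gap} to characterize the $(-1)$-eigenspace of $U$ as $(B\cap C^\perp)\oplus(B^\perp\cap C)$. The organization differs: the paper first argues abstractly that $V$ is a reflection (by claiming $R_{U^-}$ restricted to $B$ has eigenvalues $\pm 1$) and then identifies the reflecting subspace via the equivalence $V\ket{\psi}=\ket{\psi}\iff M_B\ket{\psi}\in C^\perp$; you instead decompose $v = v_0 + v_1$ along $\ker\boundary \oplus (\ker\boundary)^\perp$ and compute the action of $R_{U^-}$ on the two images under $M_B$ directly.

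Your version is in fact the cleaner route, because it sidesteps an intermediate claim in the paper that does not hold in general. The paper asserts that the subspaces $B\cap C$ and $B\cap C^\perp$ together span $B$. By the very structure theorem being invoked (Lemma~\ref{lem:spectral_gap}), this fails whenever $M_C^\dagger M_B$ has singular values strictly between $0$ and $1$ --- equivalently, whenever the normalized up-Laplacian has eigenvalues strictly between $0$ and $d+1$, which is the generic situation --- since the Jordan-angle blocks then intersect $B$ nontrivially outside $(B\cap C)\oplus(B\cap C^\perp)$. Your argument never needs that decomposition of $B$. The decisive point, which you make correctly, is that any vector of $B$ is automatically orthogonal to the $B^\perp\cap C$ summand of the $(-1)$-eigenspace of $U$; so it suffices to track the component along $B\cap C^\perp$, and the isometry $M_B$ carries $\ker\boundary$ exactly onto $B\cap C^\perp$ by Lemma~\ref{lem:ker_mcmb}. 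Together with $M_B^\dagger M_B = I$, this gives $Vv = v_0 - v_1 = R_{\ker\boundary}v$ directly, with no gaps.
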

 \begin{proof}
   We first verify that $V$ is a reflection; that is, we show the eigenvalues  of $V$ are 1 and -1. The matrices $M_B$ and $M_C$ have orthonormal columns, so we can use Lemma \ref{lem:spectral_gap} to characterize the eigenspaces of $U$. The $(-1)$-eigenspace of $U$ is $(B\cap C^{\perp})\oplus(B^{\perp}\cap C)$ and the $(+1)$-eigenspace of $U$ is $(B\cap C)\cap (B^\perp\cap C^\perp)$. As the spaces $(B\cap C)$ and $(B\cap C^\perp)$ span $B$, then $R_{U^{-}}$ restricted to $B$ has eigenvalues 1 and -1. As $B=\im M_B$ and $V=M_B^\dagger \RU M_B$, then we conclude that $V$ has eigenvalues 1 and -1 as well.
   \par
   Now that we have determined that $V$ is a reflection, we need to determine which subspace $V$ reflects across. A corollary of the previous paragraph is that a vector $\ket{\psi}\in C_d(\K)$ is in the $(+1)$-eigenspace of $V$ if and only if $M_B\ket{\psi}$ is in the $(-1)$-eigenspace of $U$. Specifically, a vector $\ket{\psi}$ is in the $(+1)$-eigenspace of $V$ if and only if $M_B\ket{\psi}\in C^{\perp}$. As $C^{\perp}=\ker M_C^{\dagger}$, the vector $\ket{\psi}$ is in the $(+1)$-eigenspace of $V$ if and only if $\ket{\psi}\in\ker{M_C^{\dagger}M_B}$. We proved in Lemma \ref{lem:ker_mcmb} that $\ker{M_C^{\dagger}M_B}=\ker\boundary$, so we conclude that $V=\Rker$
 \end{proof}

  We have a matrix $V$ that implements $\Rker$; next, we analyze the complexity of implementing $V$. We start by analyzing the complexity of implementing $\RU$, the reflection across the $(-1)$-eigenspace of $U$.
  \par
  We implement the reflection around the $(-1)$-eigenspace of $U$ using phase estimation, an algorithm introduced by Magniez et. al.~\cite{Magniez2011}. The algorithm is as follows. We first estimate the phase of $U$ to some degree of accuracy to be specified shortly. Intuitively, we need to estimate the phase of $U$ to high enough accuracy to distinguish between $-1$ eigenvalues of $U$ and eigenvalues of $U$ close to -1. We then perform a reflection controlled on the estimated phase.
  \par
   The \textit{\textbf{phase gap}} of a unitary $U$ with eigenvalues $\{e^{i\theta_1},\ldots, e^{i\theta_k}\}$ is $\min\{|\theta_i|:\theta_i\neq 0\}$. The following lemma shows that the phase gap determines the complexity of reflecting across the 1-eigenspace of $U$.

   \begin{lemma}[Magniez et. al.~\cite{Magniez2011}, Paraphrase of Theorem 6]
   \label{lem:Magniez_reflection}
    Let $U$ be a unitary with phase gap $\theta$. A reflection around the 1-eigenspace of $U$ can be performed to constant precision with $O\left( \frac{1}{\theta}\right)$ applications of $U$.
   \end{lemma}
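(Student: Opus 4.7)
The plan is to implement the reflection $R_{U^{-1}}$ approximately by using quantum phase estimation to measure the eigenphase of $U$ on the input register, using the outcome to conditionally apply a $\pm 1$ phase, and then uncomputing. Write the input state in the eigenbasis of $U$ as $\ket{\psi}=\sum_{j}\alpha_{j}\ket{v_{j}}$, where $U\ket{v_{j}}=e^{i\phi_{j}}\ket{v_{j}}$. Every nonzero $\phi_{j}$ satisfies $|\phi_{j}|\geq \theta$ by assumption on the phase gap, so the two cases ``$\phi_{j}=0$'' and ``$|\phi_{j}|\geq \theta$'' are separated by a gap of size $\theta$.

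The procedure itself has three stages. First, run phase estimation of $U$ with precision $\epsilon = c\theta$ for a suitably small constant $c<1$, writing the estimate onto an auxiliary register of $O(\log(1/\theta))$ qubits; this uses $O(1/\epsilon)=O(1/\theta)$ controlled applications of $U$. Second, apply a conditional phase: multiply the state by $-1$ whenever the auxiliary register holds a phase estimate of magnitude at least $\theta/2$, and leave it unchanged otherwise; this is a classical function of the estimate and costs no extra $U$ calls. Third, uncompute the phase estimation by running it in reverse, again using $O(1/\theta)$ applications of $U$. On an eigenvector $\ket{v_{j}}$ with $\phi_{j}=0$ phase estimation deterministically outputs $0$, so this vector picks up coefficient $+1$, while on an eigenvector with $|\phi_{j}|\geq \theta$ the estimate lies within the ``flip'' region with probability close to $1$, so the vector effectively picks up coefficient $-1$. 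This is exactly the action of $R_{U^{-1}}=I-2\Pi_{\ker(U-I)}$ up to a small error.

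The main obstacle is controlling this error: phase estimation with $O(1/\theta)$ applications of $U$ only identifies the phase correctly with constant probability (not probability $1$), and it outputs a superposition rather than a definite value, so the conditional flip leaves a small residual component in the wrong part of the output. To obtain constant operator-norm accuracy one must either (i) boost the success probability by using $O(\log(1/\delta))$ parallel copies of phase estimation and taking the median of the estimates, which preserves the $O(1/\theta)$ dependence on $\theta$ up to a constant, or (ii) argue as Magniez et al.~\cite{Magniez2011} do that the amplitudes of the incorrect-phase outcomes decay sufficiently fast that the implemented operator is already within constant distance of $R_{U^{-1}}$; the careful step is bounding the cross terms arising from the uncomputation stage, which one handles by noting that phase estimation to precision $\Theta(\theta)$ with $O(1/\theta)$ queries already gives an $\ell_{2}$-error that is bounded by a small constant uniformly over all eigenvectors.

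Combining the three stages, the total number of applications of $U$ is $O(1/\theta)$, as claimed; all other overhead (the controlled phase, the ancilla manipulations) is independent of $\theta$. The final statement follows by invoking this construction as a black box and noting that the implemented unitary is within constant distance of the exact reflection $R_{U^{-1}}$ in operator norm.
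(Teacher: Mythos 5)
The paper does not prove this lemma; it is cited as a paraphrase of Theorem~6 of Magniez et al.~\cite{Magniez2011} and invoked as a black box. Your sketch faithfully reconstructs the construction from that reference---phase estimation of $U$ to precision $\Theta(\theta)$ onto an ancilla, a conditional phase flip keyed on the estimate, and uncomputation---and you correctly isolate the real technical issue, namely that a single run of phase estimation with $O(1/\theta)$ queries succeeds only with constant probability, so one needs parallel repetition with a majority/median check (as Magniez et al.\ do) or a direct amplitude-decay argument to bound the operator-norm error by a constant. So the proposal is correct and takes essentially the same route as the cited source; there is simply no in-paper proof to compare it against.
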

   The phase gap measures gap between the 1-eigenspace of a unitary and all other eigenvalues. We are interested in the gap in phase between the $(-1)$-eigenspace of $U$ and the other eigenvalues of $U$. This is precisely the phase gap of $-U$. The following lemma analyzes the phase gap of $-U$ and gives the complexity of reflecting about the $(-1)$-eigenspace of $U$.

  \begin{lemma}
  \label{lem:phase_gap_U}
    We can implement $\RU$ with $O\left( \sqrt{\frac{d+1}{\tilde{\lambda}_{\min}}} \right)$ calls to $U$, where $\tilde{\lambda}_{\min}$ is the smallest non-zero eigenvalue of the normalized up-Laplacian of $\K$.
  \end{lemma}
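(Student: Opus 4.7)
The plan is to reduce to the reflection lemma of Magniez et al.\ (\Cref{lem:Magniez_reflection}) applied to $-U$, since the $(-1)$-eigenspace of $U$ is the $(+1)$-eigenspace of $-U$. The lemma implementes such a reflection using $O(1/\theta)$ calls to $-U$ (equivalently $U$), where $\theta$ is the phase gap of $-U$. So the whole task reduces to lower-bounding the phase gap of $-U$ in terms of $\tilde\lambda_{\min}$ and $d+1$.

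To do this, I would first invoke Szegedy's lemma (\Cref{lem:spectral_gap}) to enumerate the eigenvalues of $U=R_C R_B$: they are $+1$, $-1$, and $e^{\pm 2i\theta_j}$ where $\cos\theta_j \in (0,1)$ runs over the nontrivial singular values of $M_C^\dagger M_B$. Multiplying by $-1$ shifts every phase by $\pi$, so the non-$(+1)$ eigenvalues of $-U$ sit at phases $\pi$ and $\pm(\pi - 2\theta_j)$ (taken in $(-\pi,\pi]$). Since every $\theta_j\in(0,\pi/2)$, the phase gap of $-U$ is exactly
\[
\pi - 2\theta_{\max}, \qquad \theta_{\max} := \max_j \theta_j.
\]
Using the identity $\pi - 2\arccos(x) = 2\arcsin(x)$ together with $\arcsin(x)\geq x$ on $[0,1]$, this phase gap is at least $2\cos\theta_{\max}$, i.e.\ twice the smallest nontrivial singular value of $M_C^\dagger M_B$.

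Next, I would connect this singular value to $\tilde\lambda_{\min}$ using the computation already carried out in the proof of \Cref{lem:ker_mcmb}, which showed $M_C^\dagger M_B = \hat\boundary = \tfrac{1}{\sqrt{d+1}}\, D^{-1/2}\boundary_d\sqrt{W}$. From this,
\[
\hat\boundary\, \hat\boundary^\dagger = \frac{1}{d+1}\, D^{-1/2}\boundary_d W \coboundary_{d-1} D^{-1/2} = \frac{1}{d+1}\,\tilde L_{d-1}^{up}.
\]
Therefore the smallest nonzero eigenvalue of $\hat\boundary\hat\boundary^\dagger$ is $\tilde\lambda_{\min}/(d+1)$, meaning $\cos^2\theta_{\max} = \tilde\lambda_{\min}/(d+1)$ and hence the phase gap of $-U$ satisfies $\pi - 2\theta_{\max} \geq 2\sqrt{\tilde\lambda_{\min}/(d+1)}$. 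Plugging this into \Cref{lem:Magniez_reflection} applied to $-U$ gives the claimed $O(\sqrt{(d+1)/\tilde\lambda_{\min}})$ bound on the number of calls to $U$.

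The main obstacle I expect is Step 2, the phase gap bookkeeping: one has to be careful to verify that the potential eigenvalues arising from $B\cap C$, $B^\perp\cap C^\perp$, $B\cap C^\perp$, $B^\perp\cap C$ really only contribute eigenvalues $\pm 1$ of $U$ (equivalently, phases $0$ or $\pi$ of $-U$) and hence are already subsumed by the $\pi - 2\theta_{\max}$ bound, and to justify that the bound $\arcsin(x)\geq x$ is the right concavity estimate (this is where the specific form of the final bound, square-root rather than linear, is actually born). The rest is a direct substitution into Magniez et al.'s theorem.
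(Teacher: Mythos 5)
Your proposal matches the paper's proof essentially step for step: reduce to Magniez et al.\ applied to $-U$, use Szegedy's lemma to read off the eigenphases, observe the phase gap of $-U$ is governed by $\pi - 2\theta_{\max} \geq 2\cos\theta_{\max}$, and then identify $(M_C^\dagger M_B)(M_C^\dagger M_B)^\dagger = \frac{1}{d+1}\tilde{L}_{d-1}^{up}$ via the computation from \Cref{lem:ker_mcmb}. The only cosmetic difference is that you derive the trigonometric bound from $\pi - 2\arccos x = 2\arcsin x \geq 2x$ whereas the paper states it directly as $\pi/2 - \theta \geq \cos\theta$; these are the same inequality.
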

  \begin{proof}
    We need to calculate the phase gap of $-U$ to determine the precision to which we need to estimate the phase of $U$. Observe that if $\theta$ is the phase of an eigenvalue of $U$, then $\theta+\pi$ is the phase of an eigenvalue of $-U$. We can bound the phase gap of $-U$ using Lemma \ref{lem:spectral_gap}. The non-zero eigenvalues of $U$ are $\{e^{\pm i2 \theta_j}\}_j$, where $\{\cos\theta_j\}_j$ were the singular values of $M_C^\dagger M_B$. Therefore, the phases of $-U$ are
    $\{\pm |\pi - 2\theta_j|\}_j$. Using the inequality that $\pi/2 - \theta_j\geq \cos\theta_j$ for $\theta_j\in[0,\pi/2]$, then the phase gap of $-U$ is bounded below by
    $$
      |\pi - 2\theta_j| \geq 2\cos\theta_j \geq 2\cdot\sigma_{\min}(M_C^\dagger M_B)
    $$
    where $\sigma_{\min}(M_C^\dagger M_B)$ is the smallest singular value of $M_C^\dagger M_B$.
    \par
    We can actually relate the smallest singular value of $M_C^\dagger M_B$ to something more meaningful. By the proof of Lemma \ref{lem:ker_mcmb}, the matrix $M_C^\dagger M_B=\frac{1}{\sqrt{d+1}} D^{-1/2}\boundary\sqrt{W}$, where $D$ is the diagonal matrix with the degrees of the $(d{-}1)$-simplices on the diagonal. Thus, $(M_C^\dagger M_B)(M_C^\dagger M_B)^\dagger = \frac{1}{d+1} D^{-1/2}\boundary W \coboundary D^{-1/2} = \frac{1}{d+1} D^{-1/2} L D^{-1/2}.$ Recall from Section \ref{sec:prelim} that the matrix $ D^{-1/2} L D^{-1/2}$ is the normalied up-Laplacian. The singular values of a matrix $A$ are the square roots of the eigenvalues of $AA^T$. Thus, the smallest singular value of $M_C^\dagger M_B$, and the phase gap of $-U$, is $\Omega\left( \sqrt{\frac{\tilde{\lambda}_{\min}}{d+1}} \right)$, where $\tilde{\lambda}_{\min}$ is the smallest eigenvalue of $D^{-1/2}LD^{-1/2}$. Therefore, by Lemma \ref{lem:Magniez_reflection}, we can implement $\RU$ with $O\left( \sqrt{\frac{d+1}{\tilde{\lambda}_{\min}}} \right)$ calls to $U$.
  \end{proof}

  We are almost ready to give the running time for $V=\Rker$, but first, we need to make a delicate distinction. The matrices $M_B$ and $M_C$ have orthonormal columns, but they are \textit{not} unitary. We can see this as $\ker M_B^\dagger\neq 0$ and $\ker M_C^\dagger\neq 0$. As $M_B$ and $M_C$ are not unitary, they cannot be implemented on a quantum computer. Fortunately, it suffices to implement unitaries $U_B$ and $U_C$ such that $U_B|_{C_d(\K)}=M_B$ and $U_C|_{C_{d-1}(\K)}=M_C$. With this in mind, we can give the running time for $V=\Rker$.

 \begin{lemma}
  There is an algorithm to perform $R_{\ker\boundary}$ in time $\tilde{O}\left( \sqrt{\frac{d+1}{\tilde{\lambda}_{\min}}} (T_B+T_C) \right)$, where $T_B$ and $T_C$ are the times to perform the unitaries $U_B$ and $U_C$ respectively.
 \end{lemma}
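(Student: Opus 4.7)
The plan is to combine Lemma \ref{lem:V_Rker}, which reduces $R_{\ker\boundary}$ to $M_B^\dagger R_{U^-} M_B$, with Lemma \ref{lem:phase_gap_U}, which gives the cost of $R_{U^-}$ in applications of $U = R_C R_B$. The only thing left is to account for the time (rather than query) cost of each application of $U$ in terms of $T_B$ and $T_C$.

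First, I would observe that $V = M_B^\dagger R_{U^-} M_B$ can be realized on a quantum computer by replacing the isometries $M_B, M_B^\dagger$ by their unitary extensions $U_B, U_B^\dagger$: since $V$ only needs to be applied to inputs in $C_d(\K)$, and $U_B$ agrees with $M_B$ on $C_d(\K)$, the actions coincide on the relevant subspace. This contributes $T_B$ from $U_B$ and $T_B$ from $U_B^\dagger$ at the outer layer.

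Next, I would implement the reflections $R_B$ and $R_C$ using $U_B$ and $U_C$. The standard construction is to take the unitary extension $U_B$ so that $U_B$ sends a canonical ``input subspace'' (e.g., $\ket{0}\ket{\tau}$) onto $B$; then
\[
 R_B = U_B\,(2\Pi_0 - I)\,U_B^\dagger,
\]
where $\Pi_0$ projects onto the input subspace and can be implemented by a controlled phase flip on the ancilla register in $O(1)$ gates. Hence $R_B$ costs $O(T_B)$, and analogously $R_C$ costs $O(T_C)$. Consequently each application of $U = R_C R_B$ costs $O(T_B + T_C)$.

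Finally, I would combine the pieces. By Lemma \ref{lem:phase_gap_U}, $R_{U^-}$ uses $O\!\left(\sqrt{(d+1)/\tilde\lambda_{\min}}\right)$ applications of $U$ (with polylogarithmic overhead absorbed into $\tilde O$ coming from the phase-estimation routine of Magniez et al.~used in Lemma \ref{lem:Magniez_reflection}). Thus the total time to perform $V = R_{\ker\boundary}$ is
\[
 2T_B + \tilde{O}\!\left(\sqrt{\tfrac{d+1}{\tilde\lambda_{\min}}}\right) \cdot O(T_B + T_C) \;=\; \tilde{O}\!\left(\sqrt{\tfrac{d+1}{\tilde\lambda_{\min}}}\,(T_B + T_C)\right),
\]
as claimed. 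The only subtle point, and the main thing to verify carefully, is that the unitary extensions $U_B, U_C$ really do allow $R_B, R_C$ to be implemented in time $O(T_B), O(T_C)$ respectively; once that bookkeeping is set up, the bound follows by multiplying the Szegedy-style query count of Lemma \ref{lem:phase_gap_U} by the per-application cost.
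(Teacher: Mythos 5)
Your proposal is correct and follows essentially the same route as the paper: reduce $R_{\ker\boundary}$ to $U_B^\dagger R_{U^-} U_B$ via Lemma~\ref{lem:V_Rker}, bound the number of $U$-applications via Lemma~\ref{lem:phase_gap_U}, and implement each $U = R_C R_B$ by conjugating a cheap reflection by $U_B$ (resp.\ $U_C$). The only minor difference is that the paper's inner reflection is $R_{\K_d}$, which reflects across $\{\ket{0}\ket{\sigma} : \sigma\in\K_d\}$ and therefore also tests membership in $\K_d$ (costing $O(\log n_d)$ gates rather than $O(1)$), but both are absorbed into the $\tilde{O}$.
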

 \begin{proof}
    Lemma \ref{lem:V_Rker} shows that $R_{\ker\boundary}=V=M_B^\dagger \RU M_B$. We can equivalently run $U_B^\dagger\RU U_B$. As $U_B$ takes $T_B$ by definition, we only need to show we can implement $\RU$ in $\Tilde{O}( \sqrt{(d+1)/\tilde{\lambda}_{\min}}\, (T_B+T_C) )$ time. Lemma \ref{lem:phase_gap_U} shows we can implement $\RU$ with $O(\sqrt{(d+1)/\tilde{\lambda}_{\min}})$ calls to $U$, so we need to show we can implment $U$ in $\Tilde{O}(T_B+T_C)$. The unitary $U=R_C R_B$, and we claim we can implement $R_C$ and $R_B$ in $\Tilde{O}(T_B)$ and $\Tilde{O}(T_C)$ respectively. We can implement $R_B$ as $U_B R_{\K_d} U_B^\dagger$, where $R_{\K_d}$ reflects across the basis states $\{\ket{0}\ket{\sigma}\mid\sigma\in\K_d\}$. We can check if a quantum state is of the form $\ket{0}\ket{\sigma}$ in $O(\log n_d)$ gates (specifically, by checking if the basis state is within a certain range), so the unitary $R_{\K_d}$ takes $O(\log n_d)$ gates, and $R_B$ takes $\Tilde{O}(T_B)$ time. The unitary $R_C$ takes $\Tilde{O}(T_C)$ time by the same argument.
 \end{proof}

The running time $T_B$ is dependent on how the boundary maps are loaded into the quantum algorithm. We propose a method of storing the boundary maps in a quantum computer called the \textit{\textbf{incidence array}}. The incidence array is adapted from the \textit{adjacency array} introduced by Durr et al.~\cite{Drr2006} to store the adjacency between pairs of vertices in a graph.
\par
For a $d$-simplex $\tau=\{v_0,\ldots,v_{d}\}$, the \textit{\textbf{down-incidence array}} is the function $g:\ket{\tau}\ket{j}\ket{0}\to\ket{\tau}\ket{j}\ket{\tau\setminus\{v_j\}}$ for $0\leq j\leq d$. The simplices in the boundary of $\tau$ have alternating sign. To address this, we also perform a negation conditioned on the parity of $\ket{j}$ to compute $(-1)^j\ket{\tau}\ket{j}\ket{\tau\setminus\{v_j\}}$.
\par
Durr et al.~\cite{Drr2006} claim that queries to the incidence array can be performed in logarithmic time. As the down-incidence array is identical to the adjacency array\footnotemark, queries to the down-incidence also take logarithmic time. We can compute the state $\ket{\boundary\tau}\ket{\tau}$ with the down-incidence array and the following lemma.
\footnotetext{The down-incidence array is actually an adjacency array of a graph related to simplicial complexes, namely, the incidence graph between the $(d{-}1)$- and $d$-simplices.}

\begin{lemma}[Cade, Montanaro, Belovs \cite{cade}, Implicit in the proof of Lemma 2]
\label{lem:cade_lemma}
  Let $f:[m]\to[k]$ be a function, and let $\oracle_f$ be an oracle that computes $\oracle_f:\ket{i}\ket{0}\to\ket{i}\ket{f(i)}$. The state $\frac{1}{\sqrt{m}}\sum_{i=1}^{m}\ket{f(i)}$ can be computed with $O(\sqrt{m})$ queries to $\oracle_f$ and $O(\polylog(m))$ additional gates.
\end{lemma}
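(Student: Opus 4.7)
The plan is to combine a single oracle query with quantum amplitude amplification, using a standard ``prepare-then-erase-index'' construction. First, apply $O(\log m)$ Hadamard gates to the first register of $\ket{0}\ket{0}$ to produce the uniform superposition $\frac{1}{\sqrt{m}} \sum_{i=1}^{m} \ket{i}\ket{0}$, and then one query to $\oracle_f$ to obtain $\ket{\psi} = \frac{1}{\sqrt{m}} \sum_{i=1}^{m} \ket{i}\ket{f(i)}$. Call the composite preparation unitary $U$; it uses one oracle call plus $O(\log m)$ gates.

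Next, apply Hadamard gates to the first register again. A direct calculation shows that the component of the resulting state with $\ket{0}$ in the first register is exactly $\ket{0}\otimes \frac{1}{m}\sum_{i=1}^{m}\ket{f(i)}$, which is a scaling of the desired target vector. Its squared norm is $p = \frac{1}{m^2}\sum_{y\in\operatorname{im} f} |f^{-1}(y)|^2$, and by Cauchy--Schwarz $p\geq 1/m$ with equality when $f$ is injective. Hence a measurement of the first register would succeed with probability at least $1/m$, and on outcome $\ket{0}$ the second register collapses to the target state.

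To avoid measurement and amplify this component deterministically, apply quantum amplitude amplification to $(H\otimes I)U$ with the ``good subspace'' being those states whose first register equals $\ket{0}$. This requires $O(1/\sqrt{p}) = O(\sqrt{m})$ Grover-style iterations, each consisting of a reflection about the good subspace (a conditional phase flip on the first register, costing $O(\log m)$ gates) and a reflection about the initial state (two applications of $(H\otimes I)U$ plus $O(\log m)$ gates). Each iteration uses only $O(1)$ queries to $\oracle_f$, giving the claimed $O(\sqrt{m})$ total queries.

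The main obstacle I foresee is that $p$ is unknown in advance, because it depends on the collision structure of $f$. Plain amplitude amplification needs $p$ to schedule the correct number of iterations; this can be handled either by fixed-point amplitude amplification or by the standard exponential-search trick that tries iteration counts $2^0, 2^1, 2^2, \dots$ up to $O(\sqrt{m})$, both of which preserve the $O(\sqrt{m})$ query bound. The remaining bookkeeping is routine: the non-query gates per iteration are $O(\polylog m)$, which is the form reflected in the statement of the lemma.
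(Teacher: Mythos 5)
Your proposal is correct and it is the standard "prepare the index--value superposition, Hadamard away the index, then amplitude-amplify the $\ket{0}$ component" argument that Cade, Montanaro, and Belovs use, so it matches the paper's intended proof. One small remark: your worry about scheduling amplitude amplification without knowing $p$ is only relevant for non-injective $f$, but the target state $\frac{1}{\sqrt m}\sum_i\ket{f(i)}$ in the lemma is only a unit vector when $f$ is injective (which is the situation in the paper's actual uses, the faces or cofaces of a fixed simplex), in which case $p = 1/m$ exactly and plain amplitude amplification suffices; your fixed-point or exponential-search fallback correctly covers the general case where one produces the normalized version of that vector.
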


\begin{corollary}
  The unitary $U_B$ can be implemented in $O(\sqrt{d})$ queries to the up-incidence array and $\Tilde{O}(\sqrt{d})$ time.
\end{corollary}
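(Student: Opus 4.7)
The plan is to apply \Cref{lem:cade_lemma} to the boundary function of a $d$-simplex, treating the up-incidence array as the oracle $\oracle_f$. For each $d$-simplex $\tau = \{v_0, \ldots, v_d\}$, define $f_\tau : \{0, 1, \ldots, d\} \to \K_{d-1}$ by $f_\tau(j) = \tau \setminus \{v_j\}$. One query to the up-incidence array on input $\ket{\tau}\ket{j}\ket{0}$ produces $\ket{\tau}\ket{j}\ket{f_\tau(j)}$, and composing this with a phase flip of $(-1)^j$ controlled on the parity of the $\ket{j}$ register realizes the signed oracle needed to encode $\boundary\tau$.

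First I would apply \Cref{lem:cade_lemma} with $m = d+1$, using this sign-adjusted oracle, to prepare the state $\frac{1}{\sqrt{d+1}} \sum_{j=0}^{d} (-1)^j \ket{\tau \setminus \{v_j\}}$ in $O(\sqrt{d+1}) = O(\sqrt{d})$ queries to the up-incidence array and $\polylog(d)$ additional gates per query, for total time $\tilde{O}(\sqrt{d})$. Because this procedure is coherently controlled on the register holding $\tau$, tensoring the output with $\ket{\tau}$ yields the state $\ket{b_\tau} = \frac{1}{\sqrt{d+1}} \ket{\boundary\tau}\ket{\tau}$. By linearity, on a superposition $\sum_\tau \alpha_\tau \ket{0}\ket{\tau}$ the procedure outputs $\sum_\tau \alpha_\tau \ket{b_\tau}$, which is exactly $M_B$ applied to $\sum_\tau \alpha_\tau \ket{\tau}$ under the natural identification of $\ket{0}\ket{\tau}$ with $\ket{\tau} \in C_d(\K)$.

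To promote this partial isometry to a unitary $U_B$ on all of $C_{d-1}(\K) \otimes C_d(\K)$, I would rely on the fact that $M_B$ has orthonormal columns, so the partial isometry extends to a unitary by any standard dilation at no additional query cost. The only requirement imposed by the span-program algorithm is that $U_B|_{C_d(\K)} = M_B$, which is satisfied on the input subspace spanned by $\{\ket{0}\ket{\tau} : \tau \in \K_d\}$.

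The main step that requires care is the sign-handling and uncomputation inside \Cref{lem:cade_lemma}: the lemma is stated for the unsigned superposition $\frac{1}{\sqrt{m}}\sum_i \ket{f(i)}$, so I must verify that injecting a $(-1)^j$ phase via a single-qubit $Z$-gate on the low bit of $\ket{j}$ before the amplitude-concentration step yields the signed superposition at the same query cost, and that the index register $\ket{j}$ and all intermediate ancillas are cleanly uncomputed so that no garbage remains entangled with the output $(d-1)$-simplex register. Both points are standard for amplitude-based state-preparation subroutines, so I expect no fundamental obstacle beyond bookkeeping.
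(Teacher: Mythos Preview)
Your proposal is correct and follows exactly the approach the paper intends: the corollary is stated immediately after \Cref{lem:cade_lemma} with no separate proof, and your argument---apply that lemma with $m=d+1$ to the incidence oracle, insert the $(-1)^j$ phase, and extend the resulting isometry to a unitary---is precisely the implicit reasoning. Your remarks on sign handling and uncomputation are appropriate bookkeeping; note only that the oracle you describe (mapping $\ket{\tau}\ket{j}\ket{0}\mapsto\ket{\tau}\ket{j}\ket{\tau\setminus\{v_j\}}$) is what the paper earlier calls the \emph{down}-incidence array, so the word ``up'' in the corollary statement appears to be a typo in the paper itself rather than an issue with your argument.
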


It is harder to produce a generic implementation of $U_C$ than $U_B$. The $d$-simplices can have arbitrary weights, so constructing the states $\ket{c_\sigma}$ in general requires constructing arbitrary quantum states with real coefficients. However, the weights on the simplices do not affect whether or not a cycle is null-homologous. Therefore, we can always run our null-homology test on the unweighted complex; the trade-off is that the effective resistance or effective capacitance might be higher in the unweighted complex. We analyze the running time of the unweighted case in Section \ref{sec:special_cases}.

We now analyze the complexity of constructing the initial state $\ket{w_0}/\|\ket{w_0}\|$.  To construct the dummy state, we start by adding an additional ``$d$-cell'' $\ket{\emptyset}$ to the complex with boundary $\ket{\gamma}$ (really, we just add $\ket{\gamma}$ as a column to $\boundary$.) The new cell will have non-trivial overlap with $\ket{w_0}$, so we can construct $\ket{w_0}$ by amplifying this component of $\ket{\emptyset}$. We outline this method in the proof of Theorem \ref{thm:init}, but first, we state Lemma~\ref{lem:general_parallel} which is a generalization of the parallel formula for effective resistance; its proof is nearly identical to the proof of Theorem \ref{thm:resistance_complex_parallel}.

\begin{lemma}
\label{lem:general_parallel}
    Let $V=V_1 \oplus V_2$ be a vector space. Let $A:V\to U$ be a linear map, and let $A_1:U_1\to V$ and $A_2:U_2\to V$ be the restriction of $A$ to $U_1$ and $U_2$. Let $\ket{t}\in \im A_1\cap \im A_2\subset U$. If $\ket{s}=A^+\ket{t}$, $\ket{s_1}=A_1^+\ket{t}$, and $\ket{s_2}=A_2^+\ket{t}$, then
    $$
      \|s\|^2 \leq \left( \frac{1}{\|s_1\|^2} + \frac{1}{\|s_2\|^2} \right)^{-1}
    $$
    Equality is achieved when $\im A_1 \cap \im A_2 = \spn\{\ket{t}\}$. In this case, $\ket{s} = t\ket{s_1} + (1-t)\ket{s_2}$ where $t=\|s_2\|^2/(\|s_1\|^2+\|s_2\|^2)$.
\end{lemma}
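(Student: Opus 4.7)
The plan is to mimic the proof of the parallel formula for effective resistance (Theorem~\ref{thm:resistance_complex_parallel}), replacing the appeal to unit $\gamma$-flows with the characterization of the pseudoinverse from \Cref{lem:properties_pseudoinverse}~Part~2: for $\ket{t}\in\im A$, the vector $A^{+}\ket{t}$ is the unique minimum-norm preimage of $\ket{t}$ under $A$, so $\|\ket{s}\|^{2}=\min\{\|\ket{x}\|^{2} : A\ket{x}=\ket{t}\}$, and analogously $\|\ket{s_i}\|^{2}=\min\{\|\ket{x_i}\|^{2} : A_i\ket{x_i}=\ket{t}\}$ for $i=1,2$.

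First, I would establish the upper bound. For any scalar $r\in\R$, embedding $\ket{s_i}\in V_i$ into $V=V_1\oplus V_2$ gives a vector $r\ket{s_1}+(1-r)\ket{s_2}\in V$, and since $A|_{V_i}=A_i$ we have $A\bigl(r\ket{s_1}+(1-r)\ket{s_2}\bigr)=r\ket{t}+(1-r)\ket{t}=\ket{t}$. Treating $V=V_1\oplus V_2$ as an orthogonal direct sum, the Pythagorean theorem gives $\|r\ket{s_1}+(1-r)\ket{s_2}\|^{2}=r^{2}\|s_1\|^{2}+(1-r)^{2}\|s_2\|^{2}$, so $\|s\|^{2}\le r^{2}\|s_1\|^{2}+(1-r)^{2}\|s_2\|^{2}$ for every $r$. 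Minimizing this positive quadratic by setting its derivative to zero gives $r_{\mathrm{opt}}=\|s_2\|^{2}/(\|s_1\|^{2}+\|s_2\|^{2})$, and plugging back in yields exactly $\bigl(1/\|s_1\|^{2}+1/\|s_2\|^{2}\bigr)^{-1}$, by the same arithmetic simplification carried out in the proof of \Cref{thm:resistance_complex_parallel}.

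Next, I would handle the equality case. Assume $\im A_1\cap\im A_2=\spn\{\ket{t}\}$ and let $\ket{x}$ be any preimage of $\ket{t}$ under $A$, decomposed uniquely as $\ket{x}=\ket{x_1}+\ket{x_2}$ with $\ket{x_i}\in V_i$. Write $A_1\ket{x_1}=\alpha\ket{t}+\ket{r}$ so that $A_2\ket{x_2}=(1-\alpha)\ket{t}-\ket{r}$. Since $\ket{r}=A_1\ket{x_1}-\alpha\ket{t}\in\im A_1$ and $-\ket{r}=A_2\ket{x_2}-(1-\alpha)\ket{t}\in\im A_2$, we get $\ket{r}\in\im A_1\cap\im A_2=\spn\{\ket{t}\}$; hence $\ket{r}=\beta\ket{t}$ for some scalar $\beta$. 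Setting $r=\alpha+\beta$, this shows $A_1\ket{x_1}=r\ket{t}$ and $A_2\ket{x_2}=(1-r)\ket{t}$. Applying the minimum-norm property of the pseudoinverse separately in each summand gives $\|x_i\|^{2}\ge r^{2}\|s_1\|^{2}$ and $(1-r)^{2}\|s_2\|^{2}$ respectively, so $\|x\|^{2}=\|x_1\|^{2}+\|x_2\|^{2}\ge r^{2}\|s_1\|^{2}+(1-r)^{2}\|s_2\|^{2}\ge \bigl(1/\|s_1\|^{2}+1/\|s_2\|^{2}\bigr)^{-1}$, which matches the upper bound; the minimum is achieved exactly at $\ket{x}=r_{\mathrm{opt}}\ket{s_1}+(1-r_{\mathrm{opt}})\ket{s_2}$.

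The main obstacle I anticipate is the equality case: showing that the intersection hypothesis $\im A_1\cap\im A_2=\spn\{\ket{t}\}$ forces \emph{every} preimage of $\ket{t}$ (not just the obvious family $r\ket{s_1}+(1-r)\ket{s_2}$) to split across $V_1\oplus V_2$ with components mapping to scalar multiples of $\ket{t}$. The argument above takes care of this via the ``residual'' $\ket{r}$, which must lie in both images and therefore must be proportional to $\ket{t}$. A secondary point to check is that the direct sum $V=V_1\oplus V_2$ is indeed orthogonal, as the norm-squared additivity is essential; the intended application in \Cref{thm:init} will be the case where $V_1$ and $V_2$ are complementary orthogonal subspaces corresponding to the original complex and the appended dummy cell, so this holds naturally.
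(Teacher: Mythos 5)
Your proof is correct and follows the same approach the paper intends: the paper explicitly says the proof of this lemma is ``nearly identical to the proof of Theorem~\ref{thm:resistance_complex_parallel},'' and you faithfully translate that proof into the abstract pseudoinverse setting using the minimum-norm characterization from Lemma~\ref{lem:properties_pseudoinverse}. Your treatment of the equality case, introducing the residual $\ket{r}$ and using the intersection hypothesis to force it into $\spn\{\ket{t}\}$, is in fact a more careful rendering of the paper's corresponding (slightly informal) step in Theorem~\ref{thm:resistance_complex_parallel}, and your observation that $\oplus$ must be an orthogonal direct sum is a correct reading of what the lemma needs and of the intended application.
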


\begin{theorem}
\label{thm:init}
  Let $\oracle_\gamma$ be the oracle that takes $\oracle_\gamma:\ket{0}\to\ket{\gamma}$. Let $T_\gamma$ be the time it takes to implement $\oracle_\gamma$. The state $\ket{w_0}=\boundary^{+}\ket{\gamma}$ can be created in $\Tilde{O}((\sqrt{1/\mathcal{R}_\gamma(\K)} + \sqrt{\mathcal{R}_\gamma(\K)})(T_B+T_C+T_\gamma))$ time.
\end{theorem}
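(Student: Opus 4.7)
The plan is to embed the problem into a slightly larger complex in which $\gamma$ is trivially null-homologous and then extract the $\ket{w_0}$ component by nested amplitude amplification. Adjoin a formal generator $\ket{\emptyset}$ to $C_d(\K)$ with $\boundary\ket{\emptyset} := \ket{\gamma}$, giving the augmented boundary $\boundary' = [\,\boundary \mid \ket{\gamma}\,]$. Applying the generalized parallel formula (\Cref{lem:general_parallel}) with $A = \boundary'$, $A_1 = \boundary$, and $A_2$ the rank-one map sending the $\ket{\emptyset}$-coordinate to $\ket{\gamma}$, together with $\|A_1^{+}\ket{\gamma}\|^{2} = \mathcal{R}_\gamma(\K)$ and $\|A_2^{+}\ket{\gamma}\|^{2} = 1$, yields
\[
\ket{w_0'} := (\boundary')^{+}\ket{\gamma} = \tfrac{1}{\mathcal{R}_\gamma(\K)+1}\ket{w_0} + \tfrac{\mathcal{R}_\gamma(\K)}{\mathcal{R}_\gamma(\K)+1}\ket{\emptyset},
\]
with $\|w_0'\|^{2} = \mathcal{R}_\gamma(\K)/(\mathcal{R}_\gamma(\K)+1)$. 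Two consequences drive the cost analysis: the overlap $|\braket{\emptyset}{w_0'/\|w_0'\|}|^{2} = \mathcal{R}_\gamma(\K)/(\mathcal{R}_\gamma(\K)+1)$, and the squared amplitude of the normalized state $\ket{w_0'}/\|w_0'\|$ on $C_d(\K) \subset C_d(\K) \oplus \spn\{\ket{\emptyset}\}$ equals $1/(\mathcal{R}_\gamma(\K)+1)$.

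I would first prepare $\ket{w_0'}/\|w_0'\|$ from $\ket{\emptyset}$ by amplitude amplification using the reflections $R_{\ket{\emptyset}}$ and $R_{\ker \boundary'}$, noting that $\ket{w_0'} = \Pi_{(\ker \boundary')^{\perp}}\ket{\emptyset}$; the overlap above forces $\Tilde{O}(\sqrt{1 + 1/\mathcal{R}_\gamma(\K)})$ iterations. The reflection $R_{\ker \boundary'}$ is built by the Szegedy-style construction of the preceding section (\Cref{lem:V_Rker}) applied to $\K'$, the only new ingredient being that one column of $\boundary'$ equals $\ket{\gamma}$, so extending $U_B$ to $\K'$ requires a single call to $\oracle_\gamma$ and each reflection costs $\Tilde{O}(T_B + T_C + T_\gamma)$. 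I would then extract $\ket{w_0}/\|w_0\|$ by a second, outer round of amplitude amplification whose ``good'' projector $\Pi_{C_d(\K)}$ simply flips the sign of the $\ket{\emptyset}$ coordinate; the success amplitude is $1/\sqrt{\mathcal{R}_\gamma(\K)+1}$, so $\Tilde{O}(\sqrt{\mathcal{R}_\gamma(\K)+1})$ outer iterations suffice, each invoking the inner preparation subroutine. Multiplying the two amplification depths gives a total reflection count
\[
\Tilde{O}\left(\sqrt{(\mathcal{R}_\gamma(\K)+1)(1 + 1/\mathcal{R}_\gamma(\K))}\right) = \Tilde{O}\left(\sqrt{\mathcal{R}_\gamma(\K)} + \sqrt{1/\mathcal{R}_\gamma(\K)}\right),
\]
which combined with the $\Tilde{O}(T_B + T_C + T_\gamma)$ cost per reflection gives the stated bound.

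The main obstacle will be implementing $R_{\ker \boundary'}$ in $\Tilde{O}(T_B + T_C + T_\gamma)$ time without picking up the $\sqrt{(d+1)/\tilde{\lambda}_{\min}}$ factor that appeared in the standalone $R_{\ker \boundary}$ analysis of the previous section; the plan is to arrange the Szegedy phase estimation so that its cost is either absorbed into the amplitude-amplification iterations or avoided by using the underlying $R_B$ and $R_C$ reflections directly inside the amplification. A secondary subtlety is that the inner amplification only approximates $\ket{w_0'}/\|w_0'\|$, so its error tolerance must be tightened by roughly $\sqrt{\mathcal{R}_\gamma(\K)+1}$ relative to the outer amplification to avoid contaminating it, which accounts for the logarithmic factors hidden inside $\Tilde{O}$.
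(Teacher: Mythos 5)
Your proposal matches the paper's proof essentially step for step: same augmentation of $\partial$ with a formal column equal to $\ket{\gamma}$ (the paper writes $\hat\partial$ and $\ket{\emptyset}$), same invocation of Lemma~\ref{lem:general_parallel} to get $\|\ket{w_0'}\|^2 = \mathcal{R}_\gamma(\K)/(\mathcal{R}_\gamma(\K)+1)$ and the weight $t = 1/(\mathcal{R}_\gamma(\K)+1)$, and the same nested amplitude amplification with inner depth $O(\sqrt{1+1/\mathcal{R}_\gamma(\K)})$ and outer depth $O(\sqrt{\mathcal{R}_\gamma(\K)+1})$ multiplying to the stated $\tilde O(\sqrt{\mathcal{R}_\gamma}+\sqrt{1/\mathcal{R}_\gamma})$. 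The ``main obstacle'' you flag is real and is not actually resolved in the paper either: the paper's proof asserts $R_{\ker\hat\partial}$ costs $O(T_B+T_C+T_\gamma)$ by noting $U_{\hat B}$ and $U_{\hat C}$ are cheap to implement, but it never explains why the $O(\sqrt{(d+1)/\tilde\lambda_{\min}})$ phase-estimation overhead from Lemma~\ref{lem:phase_gap_U} (which appears in the standalone $R_{\ker\partial}$ cost) should vanish when the target is state preparation rather than span-program evaluation; your instinct that one must work with the primitive reflections $R_{\hat B}, R_{\hat C}$ directly inside the amplification, or otherwise absorb the phase-estimation cost, is exactly the missing justification.
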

\begin{proof}
   We append $\ket{\gamma}$ as a column to $\boundary$ to create a new matrix $\hat{\boundary}$. Let $\ket{\emptyset}$ be index of the new column, so $\hat{\boundary} = \boundary + \ket{\gamma}\bra{\emptyset}$. Let $\ket{w_0'}=\hat{\boundary}^+\ket{\gamma}$. We conclude that $\ket{\emptyset}=\ket{w_0'}+\ket{w_0'^\perp}$ where $\ket{w_0'^\perp}\in\ker\hat{\boundary}$, as the projection $\Pi_{\ker\hat{\boundary}^\perp}\ket{\emptyset}=\hat{\boundary}^+ \hat{\boundary}\ket{\emptyset}=\hat{\boundary}^+ \ket{\gamma}=\ket{\hat{w_0}}$.
   \par
   We construct $\ket{w_0}/\|\ket{w_0}\|$ in two steps. First, we use amplitude amplification to amplify the $\ket{w_0'}$ component of $\ket{\emptyset}$. We then use a second amplitude amplification to amplify the $\ket{w_0}$ component of $\ket{w_0'}$. These amplitude amplifications are nested, as we need to perform the first to create the initial state for the second.
   \par
    If we perform constant time phase estimation of $\ket{\emptyset}$ on the unitary $R_{\ker\hat{\boundary}}$, then we can map $\ket{\emptyset}$ to $\ket{0}\ket{w_0'}+\ket{1}\ket{w_0^\perp}$.
    We can then amplify the amplitude of $\ket{0}\ket{w_0}$ part arbitrarily close to $\ket{w_0}/\|\ket{w_0}\|$ using $O(\|\ket{w_0}\|^{-1})$ calls to $R_{\ker\hat{\boundary}}$.
   \par
   We calculate $\|\ket{w_0}\|$ using the formula from the lemma. The vector $\ket{\emptyset}$ has length 1, so Lemma \ref{lem:general_parallel} shows that
   $$
    \|\ket{w_0'}\|^2 = \left( \frac{1}{\mathcal{R}_\gamma(\K)} + 1 \right)^{-1} = \frac{\mathcal{R}_\gamma(\K)}{\mathcal{R}_\gamma(\K)+1}.
   $$
   Thus, we need to perform the reflection $R_{\ker\hat{\boundary}}$ a total of $O(\|\ket{w_0}\|^{-1})=O(\sqrt{(\mathcal{R}_\gamma(\K)+1)/\mathcal{R}_\gamma(\K)})$ times to create $\ket{\hat{w_0}}/\|\ket{\hat{w_0}}\|$.
   \par
   The next step in our algorithm is to amplify the $\ket{w_0}/\|\ket{w_0}\|$ component of $\ket{w_0'}/\|\ket{w_0'}\|$. By Lemma \ref{lem:general_parallel}, the state $\|\ket{w_0'}\|=t\|\ket{w_0}\|+(1-t)\|\ket{\emptyset}\|$ for $t=1/(\mathcal{R}_\gamma(\K)+1)$. Therefore, the $\ket{w_0}/\|\ket{w_0}\|$ component of $\ket{w_0'}/\|\ket{w_0'}\|$ has norm
   \begin{align*}
   t\|\ket{w_0}\|/\|\ket{w_0'}\| &= \frac{1}{\mathcal{R}_\gamma(\K)+1}\sqrt{\mathcal{R}_\gamma(\K)}\sqrt{\frac{\mathcal{R}_\gamma(\K)+1}{\mathcal{R}_\gamma(\K)}} \\
   &= \sqrt{\frac{1}{\mathcal{R}_\gamma(\K)+1}}
   \end{align*}
    To return the state $\ket{w_0}/\|\ket{w_0}\|$, we need to perform amplitude amplification again. We can create the state $\ket{w_0'}$ using the amplitude amplification from the previous two paragraphs with $O(\sqrt{(\mathcal{R}_\gamma(\K)+1)/\mathcal{R}_\gamma(\K)})$ applications of $R_{\ker\boundary'}$, and we can reflect across $\ket{\emptyset}$  in constant time as it is a basis state. To create $\ket{w_0}/\|\ket{w_0}\|$, we need
    $$
      O\left(\sqrt{(\mathcal{R}_\gamma(\K)+1)/\mathcal{R}_\gamma(\K)}\sqrt{(\mathcal{R}_\gamma(\K)+1)}\right) = O\left(\sqrt{\mathcal{R}_\gamma(\K)}+\sqrt{\frac{1}{\mathcal{R}_\gamma(\K)}}\right)
    $$
    applications of $R_{\ker\hat\boundary}$.
  \par
  We now argue that we can compute $R_{\ker\hat{\boundary}}$ in $O(T_C+T_B+T_\gamma)$ time. As was the case with $R_{\ker\boundary}$, we decompose $R_{\ker\hat{\boundary}}=M_{\hat{B}}^\dagger R_{\hat{U}^-} M_{\hat{B}}$ for space $\hat{B}$ and $\hat{C}$ defined
  $$
    \hat{B} = B \cup \{\ket{b_\emptyset}:=\ket{\gamma}\}
  $$

  $$
    \hat{C} = \spn\left\{ \ket{c_\sigma}:=\frac{1}{\sqrt{\deg(\sigma)+1}}\ket{\sigma}\ket{\emptyset}+\sum_{\sigma\subset\tau}\frac{w(\sigma)}{\sqrt{\deg(\sigma)+1}}\ket{\sigma}\ket{\tau} : \sigma\in \K_{d-1} \right\}.
  $$
  The unitaries $M_{\hat{B}}$ $M_{\hat{C}}$, and $\R_{\hat{U}^-}$ are defined analogously to $M_B$ and $M_C$. We can implement the unitary version of these matrices $U_{\hat{B}}$ in $O(T_B+T_\gamma)$ and $U_{\hat{C}}$ in $O(T_C)$.
\end{proof}

We now summarize this section in the following theorem.

\begin{theorem}
\label{thm:running_time_w_oracles}
  Let $\K$ be a simplicial complex, $\gamma\in C_{d-1}(\K)$ a null-homologous cycle, and $\K(x) \subset \K$ be a simplicial complex. There is a quantum algorithm for deciding if $\gamma$ is null-homologous in $\K(x)$ that runs in time
  $$
    \Tilde{O}\left(\sqrt{\frac{(d+1)\RE_{\max}(\gamma)\C_{\max}(\gamma)}{\tilde{\lambda}_{\min}}}(\sqrt{d} + T_C) + \left(\sqrt{\frac{1}{\RE_{\gamma}(\K)}} + \sqrt{\RE_{\gamma}(\K)}\right)(\sqrt{d}+T_C+T_\gamma) \right)
  $$
  where $\RE_{\max}$ is the maximum effective resistance of $\gamma$ in any subcomplex $\K(y)$ and $\C_{\max}(\gamma)$ is the maximum effective capacitance $\gamma$ in any subcomplex $\K(y)$ and $\tilde{\lambda}_{\min}$ is the smallest eigenvalue of the normalized $d$ up-Laplacian.
\end{theorem}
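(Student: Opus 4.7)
The plan is to assemble the pieces already developed in Section~\ref{sec:evaluate} into the claimed end-to-end runtime. The overall algorithm is the standard span-program evaluator: perform phase estimation of the initial state $\ket{w_0}/\|\ket{w_0}\|$ (where $\ket{w_0}=(\boundary\sqrt{W})^{+}\ket{\gamma}$) with respect to the unitary $U = R_{\H(x)}\,R_{\ker \boundary\sqrt{W}}$ to sufficient precision to distinguish positive from negative instances. By \Cref{thm:reichardt} together with \Cref{thm:querycomplexity} (which identifies witness sizes with effective resistance and capacitance), the required precision is $O\!\left(1/\sqrt{\RE_{\max}(\gamma)\C_{\max}(\gamma)}\right)$, so the phase-estimation subroutine of Kitaev invokes $U$ a total of $O\!\left(\sqrt{\RE_{\max}(\gamma)\C_{\max}(\gamma)}\right)$ times.

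Next I would cost a single application of $U$. The reflection $R_{\H(x)}$ can be implemented with two queries to the input oracle $\oracle_x$ in $O(\log n_d)$ additional gates, since it is the standard Grover-style reflection across the coordinate subspace selected by $x$. The reflection $R_{\ker\boundary\sqrt{W}}$ is the nontrivial ingredient: by \Cref{lem:V_Rker} it equals $M_B^{\dagger}R_{U^{-}}M_B$, and \Cref{lem:phase_gap_U} shows $R_{U^{-}}$ can be realized using $O\!\left(\sqrt{(d+1)/\tilde\lambda_{\min}}\right)$ calls to $U=R_CR_B$. Each of $R_B$ and $R_C$ costs $\tilde O(T_B)$ and $\tilde O(T_C)$ respectively (wrap $U_B$ or $U_C$ around the easy reflection about $\{\ket{0}\ket{\sigma}\}$). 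Loading the boundary via the down-incidence array and applying \Cref{lem:cade_lemma} yields $T_B = \tilde O(\sqrt{d})$. Therefore a single $U$ takes $\tilde O\!\left(\sqrt{(d+1)/\tilde\lambda_{\min}}\,(\sqrt{d}+T_C)\right)$ time, and the cost of the phase-estimation loop is
\[
  \tilde O\!\left(\sqrt{\frac{(d+1)\RE_{\max}(\gamma)\C_{\max}(\gamma)}{\tilde\lambda_{\min}}}\,(\sqrt{d}+T_C)\right),
\]
which is the first summand in the theorem.

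The remaining ingredient is the one-time preparation of $\ket{w_0}/\|\ket{w_0}\|$, for which \Cref{thm:init} supplies exactly the bound $\tilde O\!\left((\sqrt{1/\RE_\gamma(\K)}+\sqrt{\RE_\gamma(\K)})(T_B+T_C+T_\gamma)\right)$ via nested amplitude amplification starting from the ``dummy cell'' state $\ket{\emptyset}$. Substituting $T_B=\tilde O(\sqrt{d})$ yields the second summand. Adding the two contributions gives the stated runtime.

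I do not expect a serious obstacle here, since every subcomponent has been established; the main care points are (i) keeping the precision of the inner and outer phase estimations compatible so that a standard median-of-estimates or logarithmic-overhead argument absorbs all polylogarithmic factors into the $\tilde O$, (ii) correctly using unitary dilations $U_B, U_C$ of the isometries $M_B, M_C$ so that conjugation $U_B^\dagger R_{U^{-}} U_B$ genuinely realizes $R_{\ker\boundary\sqrt{W}}$ on the embedded subspace, and (iii) noting that the state-preparation cost is additive (not multiplicative) with the phase-estimation cost, since it is performed only once at the start of the algorithm.
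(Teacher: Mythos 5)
Your proposal is correct and matches the paper's (implicit) proof: the paper states this theorem as a summary of the appendix, and your argument simply makes explicit the assembly of Lemma~\ref{lem:V_Rker}, Lemma~\ref{lem:phase_gap_U}, the corollary $T_B=\tilde O(\sqrt d)$, and Theorem~\ref{thm:init} into the two additive runtime terms, with the outer phase estimation calibrated by the witness sizes from Theorem~\ref{thm:reichardt} and Theorem~\ref{thm:querycomplexity}. The care points you flag at the end (unitary dilations $U_B,U_C$ rather than the raw isometries, and the state preparation being additive since it runs once) are exactly the right things to watch.
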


\subsection{Special cases.}
\label{sec:special_cases}

We now consider a few special cases of the null-homology span program. These special cases will allow us to replace the terms $T_B$ and $T_\gamma$ in Theorem \ref{thm:running_time_w_oracles} with concrete running times.

\paragraph{Unweighted simplicial complexes}
We now consider the case where there are no weights on the $d$-simplices, or equivalently, when $w(\tau)=1$ for each $d$-simplex $\tau$. While computing $M_C$ is hard in general, in the unweighted case, we can implement the unitary $M_C$ using a straightforward oracle. For a $(d{-}1)$-simplex $\sigma$ that is incident to the $d$-simplices $\{ \tau_1,\ldots, \tau_m \}$, the \EMPH{up-incidence array} is the oracle is the function that maps $h:\ket{\tau}\ket{j}\ket{0} \to \ket{\tau}\ket{j}\ket{0}$. By Lemma \ref{lem:cade_lemma}, the up-incidence array can be used to compute $\ket{c_\tau} = \frac{1}{\sqrt{m}}\ket{\tau}\ket{\coboundary\tau}$ in $O(\sqrt{m})$ time. The unitary $M_C$ computes the state $\ket{c_\sigma}$ in parallel, so computing $M_C$ will take $\sqrt{d_{\max}}$ queries, where $d_{\max} = \max_{\tau\in C_{d-1}(\K)}\deg(\tau)$ time. This is summarized in the following lemma.

\begin{lemma}
\label{lem:coboundary_gate_unweighted}
  If $\K$ is an unweighted simplicial complex, the unitary $U_C$ can be implemented in $O(\sqrt{d_{\max}} )$ queries to the up incidence array and $T_C=\Tilde{O}(\sqrt{d_{\max}}) = \Tilde{O}(\sqrt{n_0})$ time.
\end{lemma}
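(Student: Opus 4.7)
The plan is to reduce the construction of $U_C$ in the unweighted case to a straightforward application of \Cref{lem:cade_lemma}. First I would specialize the definition of $\ket{c_\sigma}$ to $w\equiv 1$: here $\deg(\sigma)=|\{\tau\in\K_d:\sigma\subset\tau\}|$ and
\[
 \ket{c_\sigma}\;=\;\frac{1}{\sqrt{\deg(\sigma)}}\sum_{\sigma\subset\tau}\ket{\sigma}\ket{\tau}.
\]
Thus for each fixed $\sigma$, the second register of $\ket{c_\sigma}$ is simply a \emph{uniform} superposition over the $\deg(\sigma)$ many $d$-simplices incident to $\sigma$. This removes the main obstacle present in the weighted case (preparing arbitrary real-valued amplitudes), since a uniform superposition is exactly what \Cref{lem:cade_lemma} produces.

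Next I would show how the up-incidence array $h$ plays the role of the oracle $\mathcal{O}_f$ from \Cref{lem:cade_lemma}. For a fixed $(d-1)$-simplex $\sigma$ with $m=\deg(\sigma)$, let $f_\sigma\colon[m]\to\K_d$ enumerate the $d$-simplices containing $\sigma$; then a single query to $h$ on input $\ket{\sigma}\ket{j}\ket{0}$ returns $\ket{\sigma}\ket{j}\ket{f_\sigma(j)}$, which is a controlled version of $\mathcal{O}_{f_\sigma}$. \Cref{lem:cade_lemma} then produces the state $\frac{1}{\sqrt{m}}\sum_{j=1}^m\ket{f_\sigma(j)}$ in the third register using $O(\sqrt{m})$ queries to $h$ and $\tilde O(1)$ additional gates; uncomputing the index register $\ket{j}$ after preparation yields exactly $\ket{c_\sigma}$ (tensored with $\ket\sigma$). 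Running this circuit \emph{coherently}, controlled on the $\sigma$-register holding an arbitrary basis state, implements $U_C$ on that basis state, and hence by linearity on all of $C_{d-1}(\K)$.

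For the query count, the number of queries produced by \Cref{lem:cade_lemma} depends only on $m=\deg(\sigma)$, so the worst case over all $\sigma\in\K_{d-1}$ gives $O(\sqrt{d_{\max}})$ queries. Each query to $h$ can be carried out in $\polylog(n_0)$ time, as can the $\tilde O(1)$ additional gates, giving total time $\tilde O(\sqrt{d_{\max}})$. The final bound $d_{\max}\le n_0$ follows because any $d$-simplex containing a fixed $(d-1)$-simplex $\sigma$ is obtained by adjoining a single new vertex to $\sigma$, so $\deg(\sigma)\le n_0-d\le n_0$. I do not expect any serious obstacle here: the unweighted assumption removes the amplitude-preparation bottleneck, and the rest is just invoking \Cref{lem:cade_lemma} and uncomputing a junk register. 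The main thing to verify carefully is that the preparation and uncomputation of the index register $\ket{j}$ are both controlled on $\ket\sigma$ so that $U_C$ is well-defined as a unitary on the full space rather than just on individual basis vectors.
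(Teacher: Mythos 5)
Your proposal matches the paper's argument essentially verbatim: both use the up-incidence array as the oracle $\mathcal{O}_f$ in Lemma~\ref{lem:cade_lemma} to prepare the uniform superposition $\ket{c_\sigma}$ (possible precisely because $w\equiv 1$ makes all amplitudes equal), and both read off the $O(\sqrt{d_{\max}})$ query bound. The only additions over the paper's version are the spelled-out justification that $d_{\max}\le n_0$ and the remarks about coherence and uncomputation, which the paper leaves implicit.
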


Additionally, if $\K$ is an unweighted complex, we can upper bound the quantity $\frac{1}{\mathcal{R}_\gamma(\mathcal{K})}$. 

\begin{lemma}
\label{lem:effective_resistance_lower_bound}
    Let $\K$ be an unweighted simplicial complex with $n_{0}$ vertices, and let $\gamma$ be a unit-length null-homologous $(d-1)$-cycle in $\K$. Then $\frac{1}{\RE_{\gamma}(\K)}\leq n_{0}$.
\end{lemma}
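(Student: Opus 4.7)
The plan is to use the spectral decomposition of the up Laplacian $L^{up}_{d-1}$ together with the known upper bound on its largest eigenvalue. Since $\gamma$ is null-homologous, it lies in $\im \boundary_d = \im L^{up}_{d-1}$, so $\gamma$ can be written in an orthonormal eigenbasis $\{v_i\}$ of $L^{up}_{d-1}$ corresponding to the nonzero eigenvalues $\lambda_1, \dots, \lambda_k > 0$. Writing $\gamma = \sum_i c_i v_i$, the unit-length condition gives $\sum_i c_i^2 = 1$.

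Next I would unfold the definition of effective resistance: applying the pseudoinverse in this eigenbasis yields
\[
\RE_\gamma(\K) \;=\; \gamma^T (L^{up}_{d-1})^+ \gamma \;=\; \sum_{i=1}^{k} \frac{c_i^2}{\lambda_i}.
\]
The key step is to bound each eigenvalue from above. By \Cref{thm:upper_bound_maximum_eigenvalue_laplacian} applied to $L^{up}_{d-1}$, we have $\lambda_i \leq \lambda_{\max}(L^{up}_{d-1}) \leq n_0$ for every $i$. Substituting into the sum,
\[
\RE_\gamma(\K) \;\geq\; \sum_{i=1}^{k} \frac{c_i^2}{n_0} \;=\; \frac{1}{n_0},
\]
and inverting this inequality yields the claimed bound $1/\RE_\gamma(\K) \leq n_0$.

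There is no real obstacle here: the argument is essentially a one-line consequence of the Courant–Fischer characterization already invoked in the proof of \Cref{lem:spectral_gap_effective_resistance}, combined with the eigenvalue upper bound from \Cref{thm:upper_bound_maximum_eigenvalue_laplacian}. The only point requiring care is confirming that $\gamma \in \im L^{up}_{d-1}$ (not merely in $C_{d-1}(\K)$), so that the pseudoinverse acts on $\gamma$ exactly as the inverse does on the nonzero spectrum; this is immediate from $\gamma \in \im \boundary_d = \im L^{up}_{d-1}$.
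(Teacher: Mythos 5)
Your proof is correct and follows essentially the same route as the paper's: both reduce the claim to the fact that the nonzero eigenvalues of $(L^{up}_{d-1})^{+}$ are bounded below by $1/n_0$ (via \Cref{thm:upper_bound_maximum_eigenvalue_laplacian}) and that $\gamma \in \im L^{up}_{d-1}$. The only stylistic difference is that you unfold the eigendecomposition explicitly, whereas the paper states the Rayleigh-quotient bound directly; the content is the same.
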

\begin{proof}
    Recall that in an unweighted simplicial complex that $\RE_{\gamma}(\K) = \gamma^{T}(L^{up}_{d-1})^{+}\gamma$. As the eigenvalues of $L^{up}_{d-1}$ are bounded above by $n_0$~(\Cref{thm:upper_bound_maximum_eigenvalue_laplacian}), then the non-zero eigenvalues of $(L^{up}_{d-1})^{+}$ are bounded below by $\frac{1}{n_0}$. As $\gamma\in\im L_{d-1}^{up}$, then $\gamma^{T}(L^{up}_{d-1})^{+}\gamma$ is bounded below by the smallest non-zero eigenvalue of $(L^{up}_{d-1})^{+}$, i.e $\frac{1}{n_0}$. Therefore, $\frac{1}{\RE_{\gamma}(\K)}\leq n_{0}$
\end{proof}

\paragraph{Cycle is the boundary of a $d$-simplex.}

We now consider the case that the input cycle $\gamma$ is the boundary of a $d$-simplex. In this case, we can implement the oracle $\oracle_\gamma$ with the down incidence array used to implement $M_B$. We get the same running time for $T_\gamma$ as $T_B$.

\begin{lemma}
\label{lem:cycle_gate_boundary_d_simplex}
  If $\gamma$ is the boundary of a $d$-simplex, there is a quantum algorithm implementing $\oracle_\gamma$ in $O(\sqrt{d})$ queries to the down incidence array and $T_{\gamma} \in \Tilde{O}(\sqrt{d}) \subset \Tilde{O}(\sqrt{n_0})$ time.
\end{lemma}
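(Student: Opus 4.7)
The plan is to apply \Cref{lem:cade_lemma} (Cade--Montanaro--Belovs) almost verbatim, using the down-incidence array as the oracle $\oracle_f$. Fix the $d$-simplex $\sigma = \{v_0, \ldots, v_d\}$ whose boundary is $\gamma$. Then, after normalization,
\[
  \ket{\gamma} \;=\; \frac{1}{\sqrt{d+1}} \sum_{j=0}^{d} (-1)^j \ket{\sigma \setminus \{v_j\}},
\]
which has exactly the form $\frac{1}{\sqrt{m}}\sum_{i} \ket{f(i)}$ (up to phases) required by \Cref{lem:cade_lemma}, with $m = d+1$ and $f(j) = \sigma \setminus \{v_j\}$.

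First, I would identify the function oracle $\oracle_f \colon \ket{j}\ket{0} \to \ket{j}\ket{f(j)}$ with a restriction of the down-incidence array $g \colon \ket{\sigma}\ket{j}\ket{0} \to \ket{\sigma}\ket{j}\ket{\sigma \setminus \{v_j\}}$: since $\sigma$ is hard-coded (the cycle is fixed input), the first register can be prepared in the classical basis state $\ket{\sigma}$ at the cost of $O(\log n_0)$ elementary gates, and then each query to $\oracle_f$ is implemented by one query to the down-incidence array. Second, I would handle the alternating signs: recall from the definition of the down-incidence array that the paper already performs a negation conditioned on the parity of $\ket{j}$, producing the state $(-1)^{j}\ket{\sigma}\ket{j}\ket{\sigma\setminus\{v_j\}}$; this phase kick-back costs $O(1)$ gates per query and introduces exactly the $(-1)^j$ factors appearing in $\ket{\gamma}$.

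Third, I would plug into \Cref{lem:cade_lemma} with $m = d+1$: the lemma constructs the uniform (signed) superposition $\frac{1}{\sqrt{d+1}}\sum_j (-1)^j \ket{f(j)} = \ket{\gamma}$ using $O(\sqrt{m}) = O(\sqrt{d+1}) = O(\sqrt{d})$ queries to $\oracle_f$, together with $O(\polylog(d))$ additional gates. Combining with the preparation of $\ket{\sigma}$ and the parity-controlled phase, the total query cost is $O(\sqrt{d})$ and the time cost is $\tilde{O}(\sqrt{d})$. The final bound $\tilde{O}(\sqrt{d}) \subset \tilde{O}(\sqrt{n_0})$ is immediate since $d \le n_0$.

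There is no real obstacle here: the argument is essentially a corollary of \Cref{lem:cade_lemma}, and the only small bookkeeping item is the alternating sign, which is already baked into the definition of the down-incidence array used for $U_B$. The same oracle that implements $U_B$ thus also implements $\oracle_\gamma$ with the same asymptotic cost, as desired.
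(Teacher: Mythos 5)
Your proposal is correct and takes essentially the same approach the paper intends: hard-code $\sigma$ in the first register of the down-incidence array, use the built-in parity-controlled phase for the $(-1)^j$ factors, and apply \Cref{lem:cade_lemma} with $m=d+1$ to prepare $\ket{\gamma}$ in $O(\sqrt{d})$ queries. The paper states this lemma without an explicit proof (it simply remarks that $\oracle_\gamma$ can be implemented "with the down incidence array used to implement $M_B$"), so your write-up supplies the same reasoning at a finer level of detail, with no gaps.
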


\paragraph{Summing Up.}

If we combine our bounds for the case where our complex is unweighted (\Cref{lem:coboundary_gate_unweighted} and \Cref{lem:effective_resistance_lower_bound}) and the cycle is the boundary of $d$-simplex (\Cref{lem:cycle_gate_boundary_d_simplex}) with the bound of the time complexity (\Cref{thm:running_time_w_oracles}), we get the following bound on the time complexity.

\thmtimecomplexity*

\subsection{Space Complexity.}

We now comment on the space complexity of our algorithm. The inner product space $C_{d-1}(\K)\oplus C_{d}(\K)$ has dimension $n_{d-1}n_d\in O(\binom{n_0}{d}\binom{n_0}{d+1})$, so vectors in this space can be represented with $O(\log(\binom{n_0}{d}\binom{n_0}{d+1}))=O(dn)$ qubits. Additionally, the phase estimation step from \Cref{lem:phase_gap_U} takes $O( \log(1/\tilde{\lambda}_{\min}))$ ancillary qubits, which is $O(dn_{d})$ qubits by the bounds on the normalized spectral gap~(\Cref{lem:normalized_vs_unnormalized_spectral_gap} and \Cref{thm:spectral_gap_lower_bound}). Finally, the outer phase estimation of $R_{\H(x)}R_{\ker A}$ takes $O(\log\left(\RE_{\max}(\gamma)\C_{\max}(\gamma)\right))$ qubits. Other gates require polylogarithmic ancillary qubits. The space complexity of our algorithm is comparable to some other QTDA algorithms, as other QTDA algorithms require $O\left(\log(1/\lambda_{\min})\right)$ to perform phase estimation of the combinatorial Laplacian~\cite{gunn2019}.

\section{Construction of the Building Block.}
\label{sec:construction_building_block}

\begin{figure}[ht]
    \centering
    \begin{subfigure}{0.3\textwidth}
        \centering
        \vspace{0.25in}
        \includegraphics[height=1in]{Figures/stellar_triangle.pdf}
        \vspace{0.25in}
    \end{subfigure}
    \begin{subfigure}{0.3\textwidth}
        \centering
        \includegraphics[height=1.5in]{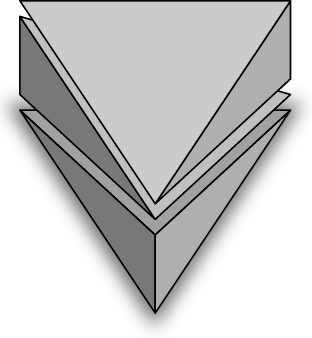}
    \end{subfigure}
    \begin{subfigure}{0.3\textwidth}
        \centering
        \includegraphics[height=1.5in]{Figures/stellar_prism_top_shadow.pdf}
    \end{subfigure}
    \caption{Left to Right: Stellar subdivision, prism, and stellar prism of a triangle.}
    \label{fig:different_constructions}
\end{figure}

In this section, we formally describe the ``building block'' $B_d$ that we use to construct the simplicial complexes $\mathcal{B}_d^{n}$, $\mathcal{P}_d^{n}$, and $\mathcal{Q}_d^{n}$ with exponentially-large effective resistance and capacitance in Section \ref{sec:lower_bounds}. A more intuitive, but informal, description of the building block can be found in Section \ref{sec:lower_bounds}
\par
The building block $B_d$ is constructed using two constructions from algebraic topology: the stellar subdivision and the prism. We will combine these two constructions to make a new construction we call the stellar prism. The building block $B_d$ is then a quotient of the stellar prism. This section will first describe the stellar subdivision, prism, and stellar prism. We then present the construction of $B_d$ and prove some of its relevant properties.

\subsection{Stellar Subdivision.} Let $K$ be a $d$-dimensional simplicial complex. The \textit{\textbf{stellar subdivision}} of $K$ is the $d$-dimensional simplicial complex $\star{K}$ that is the union of the $(d-1)$-skeleton of $K$ and, for each $d$-simplex $\sigma\in K_d$, the set of simplices $\{\tau\cup \{v_{\sigma}\} : \tau\subsetneq\sigma \}$, where $v_\sigma$ is a new vertex. See Figure \ref{fig:different_constructions}. The property of the stellar subdivision that is key to our construction is that it increases the number of $d$-simplices in the simplicial complex. 

\begin{observation}
    Let $K$ be a simplicial complex with $n_{d}$ $d$-simplices. Then $\star{K}$ has $d\cdot n_d$ $d$-simplices. 
\end{observation}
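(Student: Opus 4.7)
The plan is to prove the claim by a direct count of the $d$-simplices of $\star{K}$, working straight from the definition. Recall that $\star{K}$ is the union of the $(d-1)$-skeleton $K^{d-1}$ together with, for each $d$-simplex $\sigma \in K_d$, the set $N_\sigma = \{\tau \cup \{v_\sigma\} : \tau \subsetneq \sigma\}$, where $v_\sigma$ is a fresh vertex not in $K$.

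First I would establish that the $d$-simplices of $\star{K}$ are partitioned by the sets $N_\sigma$. The $(d-1)$-skeleton $K^{d-1}$ contains no $d$-simplex by definition, so every $d$-simplex of $\star{K}$ lies in some $N_\sigma$. Moreover, each simplex in $N_\sigma$ contains the fresh vertex $v_\sigma$, which by construction appears only in simplices of $N_\sigma$; hence the originating $\sigma$ is uniquely determined, and the $N_\sigma$ are pairwise disjoint on their $d$-dimensional elements. Thus the count reduces to summing $|N_\sigma \cap \star{K}_d|$ over all $\sigma \in K_d$.

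Second, I would count the $d$-simplices inside a single $N_\sigma$. A simplex $\tau \cup \{v_\sigma\}$ has dimension $d$ precisely when $|\tau \cup \{v_\sigma\}| = d+1$, i.e., $|\tau| = d$; since we also require $\tau \subsetneq \sigma$, this forces $\tau$ to be a $(d-1)$-face of $\sigma$. Because $\sigma$ has $d+1$ vertices, it has exactly $\binom{d+1}{d} = d+1$ such faces. Summing over all $n_d$ top-dimensional simplices of $K$ yields a total of $(d+1)\, n_d$ $d$-simplices in $\star{K}$.

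The only real ``obstacle'' is that this calculation gives the factor $d+1$, not $d$, so I believe the observation as stated contains an off-by-one typo: each $d$-simplex of $K$ contributes $d+1$ new $d$-simplices to $\star{K}$, not $d$. The corrected formula $(d+1)\, n_d$ is consistent with the paper's own earlier intuitive computation, which states that the stellar subdivision of $\boundary\Delta^d$ (a $(d-1)$-dimensional complex with $d+1$ top simplices) has $d\cdot(d+1)$ top simplices; applying the formula $(k+1)\, n_k$ to the $(d-1)$-dimensional complex $\boundary\Delta^d$ yields precisely this number.
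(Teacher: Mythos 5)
Your count is correct, and you have identified a genuine off-by-one error in the paper. The paper states this as an unproved Observation, so there is no proof in the paper to compare against, but a direct count from the definition goes exactly as you argue: the $(d-1)$-skeleton contributes no $d$-simplices; each new vertex $v_\sigma$ lies only in simplices arising from its own $\sigma$, so the contributions are disjoint; and within the set $\{\tau\cup\{v_\sigma\}:\tau\subsetneq\sigma\}$ a $d$-simplex requires $|\tau|=d$, i.e.\ $\tau$ is one of the $\binom{d+1}{d}=d+1$ facets of $\sigma$. Hence $\star K$ has $(d+1)n_d$ $d$-simplices, not $d\cdot n_d$. The sanity checks you ran also hold up: for $d=1$, subdividing an edge gives two edges, which is $(d+1)n_d=2n_1$; and the paper's own later computation that the stellar subdivision of $\boundary\Delta^d$ (a $(d-1)$-dimensional complex with $d+1$ top simplices) has $d(d+1)$ top simplices matches $((d-1)+1)\cdot(d+1)$, i.e.\ the corrected $(k+1)n_k$ formula in dimension $k=d-1$. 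Most plausibly the authors were thinking of the net increase $(d+1)n_d - n_d = d\cdot n_d$ (since the original $d$-simplices are removed), but as stated ``$\star K$ has $d\cdot n_d$ $d$-simplices'' is a typo. The error appears to be harmless downstream, since the subsequent constructions rely on the concrete $d(d+1)$ count for $\boundary\Delta^d$, which the paper gets right.
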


We can map chains from our original complex $K$ to chains in its stellar subdivision. For each $d$-simplex $\sigma\in K_d$ and any integer $k$, we define a map $b_{\sigma}:C_{k}(K)\to C_{k+1}(\star{K})$. Let $\tau\subsetneq\sigma$, and suppose that $v_{\sigma}$ is the $(i+1)$st element of $\tau\cup \{v_{\sigma}\}$ with respect to the ordering on the vertices\footnotemark; that is, if $\tau=\{v_{0},\ldots,v_k\}$, then $\tau\cup\{v_{\sigma}\}=\{v_{0},\ldots,v_{i-1},v_{\sigma},v_{i},\ldots,v_{k}\}$. We define $b_\sigma(\tau) = (-1)^{i} \cdot (\tau\cup\{v_{\sigma}\})$ for all $k$-simplices of the form $\tau\subsetneq\sigma$, and $b_\sigma(\varphi)=0$ for any other $k$-simplex $\varphi$.  

\footnotetext{It is worth noting that it is arbitrary where $v_\sigma$ is in the ordering of the vertices, so $i$ can be any number in the range $0\leq i\leq k$. Lemmas \ref{lem:chain_join} and \ref{lem:stellar_map} summarize the relevant properties of the stellar subdivision for our paper, and as we will see, these lemmas will hold wherever $v_\sigma$ is in the vertex order, so long as the map $b_\sigma$ is defined appropriately.}

\begin{lemma}
\label{lem:chain_join}
    Consider $b_{\sigma}:C_{k}(K)\to C_{k+1}(\star{K})$ as defined above. The map $b_{\sigma}$ satisfies $\boundary\circ b_{\sigma} = \id - b_{\sigma}\circ\boundary$.
\end{lemma}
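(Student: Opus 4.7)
The identity is the standard chain-contraction formula for the cone operator $b_\sigma$. The proof is a direct sign-tracked computation performed one $k$-simplex at a time. By $\mathbb{R}$-linearity it is enough to verify the formula on each $k$-simplex $\tau\subsetneq\sigma$, since $b_\sigma$ vanishes on the other basis elements and the identity is then being applied to chains supported on $\partial\sigma$ (which is the only setting in which it is ever invoked in the paper).

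\textbf{Setup.} Fix $\tau=\{v_{j_0}<\cdots<v_{j_k}\}\subsetneq\sigma$ and let $i\in\{0,\dots,k+1\}$ be the unique index such that $v_\sigma$ occupies position $i$ (counting from $0$) in the sorted list of $T:=\tau\cup\{v_\sigma\}$. Write the sorted vertices of $T$ as $w_0<\cdots<w_{k+1}$, so that $w_i=v_\sigma$, $w_l=v_{j_l}$ for $l<i$, and $w_l=v_{j_{l-1}}$ for $l>i$. By definition $b_\sigma(\tau)=(-1)^i\,T$.

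\textbf{Key computation.} Expand $\partial T=\sum_{l=0}^{k+1}(-1)^l(T\setminus\{w_l\})$ and separate the $l=i$ term, which removes $v_\sigma$ and recovers $\tau$ with sign $(-1)^i$. For $l\neq i$, the simplex $T\setminus\{w_l\}$ still contains $v_\sigma$, hence equals $b_\sigma$ applied to the corresponding facet of $\tau$ up to sign. Carefully, when $l<i$ removing $w_l$ shifts $v_\sigma$ from position $i$ to position $i-1$, so $T\setminus\{w_l\}=(-1)^{i-1}b_\sigma(\tau\setminus\{v_{j_l}\})$; when $l>i$, $v_\sigma$ stays at position $i$, so $T\setminus\{w_l\}=(-1)^{i}b_\sigma(\tau\setminus\{v_{j_{l-1}}\})$. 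Substituting these expressions, re-indexing the right-hand sum via $m=l-1$, and collecting powers of $-1$ yields
\[
 \partial T=(-1)^{i}\tau\;-\;(-1)^{i}\sum_{l=0}^{k}(-1)^l\,b_\sigma(\tau\setminus\{v_{j_l}\}).
\]
Multiplying both sides by $(-1)^i$ and recognizing the remaining sum as $b_\sigma(\partial\tau)$ (by the standard boundary formula for $\tau$) gives $\partial b_\sigma(\tau)=\tau-b_\sigma(\partial\tau)$, as required.

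\textbf{Main obstacle.} There is no conceptual difficulty; the whole argument is a signed bookkeeping exercise. The one place a slip could occur is in tracking the shift of $v_\sigma$ from position $i$ to position $i-1$ when a vertex to its left is removed, which introduces the crucial sign discrepancy between the $l<i$ and $l>i$ subsums. I would cross-check this sign on a small example (e.g. $\sigma$ a $2$-simplex, $\tau$ a single edge of $\sigma$, taking $v_\sigma$ respectively smaller, larger, and intermediate in the vertex order) before writing the general calculation to make sure all three placements of $v_\sigma$ produce the same clean identity.
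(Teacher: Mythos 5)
Your proposal is correct and follows essentially the same route as the paper: expand $\partial(\tau\cup\{v_\sigma\})$, isolate the term where $v_\sigma$ itself is dropped (giving $(-1)^i\tau$), and track the sign shift of $v_\sigma$'s position when a vertex to its left versus its right is removed, recognizing the remaining sum as $-(-1)^i b_\sigma(\partial\tau)$. Your explicit observation that the identity is only invoked on chains supported on proper faces of $\sigma$ is in fact slightly more careful than the paper's phrasing, which nominally claims an equality of maps while the computation silently assumes $\tau\subsetneq\sigma$.
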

\begin{proof}
    As $b_\sigma$ is a linear map, it suffices to prove this for the basis $k$-chains $\tau\in C_{k}(K)$. The boundary of $b_{\sigma}(\tau)$ is
    \begin{align*}
        \boundary(b_\sigma(\tau)) =& (-1)^{i}\cdot \boundary(\tau\cup\{v_{\sigma}\}) \\
        =& (-1)^{i}\left[ \sum_{j=0}^{i-1} (-1)^{j}\cdot(\tau\cup\{v_\sigma\}\setminus\{v_j\}) + (-1)^{i}\cdot\tau + \sum_{j=i}^{k} (-1)^{j+1}\cdot(\tau\cup\{v_\sigma\}\setminus\{v_{j}\}) \right]\\ 
        =& \tau + (-1)^{i}\left[ \sum_{j=0}^{i-1} (-1)^{j}\cdot(\tau\cup\{v_\sigma\}\setminus\{v_j\}) + \sum_{j=i}^{k} (-1)^{j+1}\cdot(\tau\cup\{v_\sigma\}\setminus\{v_{j}\}) \right]
    \end{align*}
    We claim that the second term in this sum equals $(-1)\cdot b_{\sigma}(\boundary\tau)$. Observe that for $j<i$, $v_\sigma$ will be the $i$th vertex of $\tau\cup\{v_{\sigma}\}\setminus\{v_j\}$. Likewise, for $i\geq j$, $v_\sigma$ will be the $(i+1)$st vertex of $\tau\cup\{v_{\sigma}\}\setminus\{v_j\}$. Therefore, we have that  
    \begin{align*}
        & (-1)^{i}\Big[ \sum_{j=0}^{i-1} (-1)^{j}\cdot(\tau\cup\{v_\sigma\}\setminus\{v_j\}) + \sum_{j=i}^{k} (-1)^{j+1}\cdot(\tau\cup\{v_\sigma\}\setminus\{v_{j}\}) \big] \\
        =& (-1)^{i}\Big[ \sum_{j=0}^{i-1} (-1)^{j}(-1)^{i-1} b_{\sigma}(\tau\setminus\{v_j\}) +\sum_{j=i}^{k} (-1)^{j+1}(-1)^{i} b_{\sigma}(\tau\setminus\{v_{j}\}) \big] \\
        =& (-1)\Big[ \sum_{j=0}^{k} (-1)^{j} b_{\sigma}(\tau\setminus\{v_j\}) \big] = (-1)\cdot b_{\sigma}(\boundary\tau)
    \end{align*}
\end{proof}

Using the maps $b_\sigma$, we now define another map $S_{*}:C_{k}(K)\to C_{k}(\star{K})$. For $k<d$, we define the map $S_{*}:C_{k}(K)\to C_{k}(\star{K})$ just to be the inclusion map. For $k=d$, define $S:C_{d}(K)\to C_{d}(\star{K})$ on each $d$-simplex $\sigma = \{v_0,\ldots,v_d\}$ as $S_{*}(\sigma) = b_{\sigma}(\boundary\sigma)$. 

\begin{lemma}
\label{lem:stellar_map}
    Consider $S_{*}:C_{k}(K)\to C_{k}(\star{K})$ as defined above. The map $S_{*}$ satisfies $\boundary\circ S_{*} = S_{*}\circ\boundary$. 
\end{lemma}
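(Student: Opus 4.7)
The plan is to reduce to the only genuinely non-trivial case, which is $k=d$, and handle the lower-dimensional cases by observing that $S_*$ is just an inclusion there.

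First I would dispose of the easy cases. For $k<d$, the map $S_*\colon C_k(K)\to C_k(\star K)$ is defined to be the inclusion induced by $K^{d-1}\subset \star K$ (the two complexes share the same $(d-1)$-skeleton). For $k\le d-2$, both $\boundary\circ S_*$ and $S_*\circ \boundary$ land in $C_{k-1}(K)\subset C_{k-1}(\star K)$ and agree with the usual boundary on $K$, so the identity is immediate. For $k=d-1$, the image of $\boundary$ still lies in $C_{d-2}(K)$, and again inclusion commutes with the boundary operator, giving $\boundary\circ S_* = S_*\circ \boundary$.

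The substantive case is $k=d$. By linearity it suffices to verify the identity on a single $d$-simplex $\sigma\in K_d$. On this basis element,
\[
    \boundary\bigl(S_*(\sigma)\bigr) = \boundary\bigl(b_\sigma(\boundary\sigma)\bigr).
\]
Now I would invoke \Cref{lem:chain_join}, which says $\boundary\circ b_\sigma = \id - b_\sigma\circ\boundary$, applied to the $(d-1)$-chain $\boundary\sigma$. This yields
\[
    \boundary\bigl(b_\sigma(\boundary\sigma)\bigr) = \boundary\sigma - b_\sigma(\boundary\boundary\sigma) = \boundary\sigma,
\]
since $\boundary\circ\boundary=0$. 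On the other side, since $\boundary\sigma\in C_{d-1}(K)$ and $S_*$ is the inclusion on $(d-1)$-chains, $S_*(\boundary\sigma)=\boundary\sigma$ (viewed inside $C_{d-1}(\star K)$). Both sides therefore equal $\boundary\sigma$, completing the proof.

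I do not expect any real obstacle: the work has already been done in \Cref{lem:chain_join}, and what remains is the clean algebraic cancellation made possible by $\boundary^2=0$. The only thing to be careful about is making sure the sign conventions built into the definition of $b_\sigma$ (which were chosen to make \Cref{lem:chain_join} hold regardless of where $v_\sigma$ is inserted in the vertex order) are used consistently, but once \Cref{lem:chain_join} is in hand this is automatic.
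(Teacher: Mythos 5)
Your proof is correct and mirrors the paper's own argument exactly: dispose of $k<d$ by noting $S_*$ is inclusion, then for $k=d$ apply \Cref{lem:chain_join} to $\boundary\sigma$ and use $\boundary\circ\boundary=0$. No differences of substance.
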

\begin{proof}
    For $i<d$, this is obvious as $S_{*}$ is the inclusion map. For $i=d$, this follows from Lemma \ref{lem:chain_join} as $\boundary S_{*}(\sigma) = \boundary b_{\sigma}(\boundary\sigma) = \boundary\sigma - b_{\sigma}\boundary\boundary(\sigma)= \boundary\sigma = S_{*}\boundary\sigma$.
\end{proof}

\subsection{Prisms.} The \textit{\textbf{prism}} of a simplicial complex $K$ is a triangulation of the space $K\times[0,1]$. To define the prism of $K$, we will define the prism of a single simplex $\sigma\in K$;  the prism of the entire complex $K$ is then the union of the prism of its simplices, i.e. $\prism{K} = \cup_{\sigma\in K} P{\sigma}$. See Figure \ref{fig:different_constructions}.
\par 
Assume that $\sigma = \{v_0,\ldots,v_d\}$. The \textit{\textbf{prism}} of $\sigma$ is a $(d+1)$-dimensional simplicial complex $\prism{\sigma}$ with vertices $\sigma\times\{0,1\}$. The prism contains all simplices of the form $\tau\times\{0\}$ and $\tau\times\{1\}$ for $\tau\subset\sigma$. The remaining simplices of $P{\sigma}$ are the closure of the $(d+1)$-simplices $\{(v_0,0),\ldots,(v_j,0),(v_j,1),\ldots,(v_d,1)\}$ for $0 \leq j\leq d$. We denote such a simplex 
$$
    \sigma^{j} := \{(v_0,0),\ldots,(v_j,0),(v_j,1),\ldots,(v_d,1)\}.
$$
Note that for a paired of nested simplices $\tau\subset\sigma$, their prisms are also nested, i.e. $\prism{\tau}\subset\prism{\sigma}$. 
\par 
As was the case for stellar subdivision, we will define several maps between chains in $K$ and chains in $\prism{K}$. We abuse notation and define $K\times\{1\} := \{\sigma\times\{1\} : \sigma\in K\}$. The first map $I_{1}:C_{k}(K)\to C_{k}(\prism{K})$ maps chains in $K$ to chains in $K\times\{1\}$. Specifically, for a $d$-simplex $\sigma\in K_d$, we define $I_{1}(\sigma) = \sigma\times\{1\}$. We define a map $I_{0}$ analogously. The following lemma is obvious.

\begin{lemma}
\label{lem:prism_inclusion_maps}
    Consider $I_{i}:C_{k}(K)\to C_{k}(\prism{K})$ as defined above for $i=0,1$. The map $I_{i}$ satisfy $\boundary\circ I_{i} = I_{i}\circ\boundary$ for $i=0,1$.
\end{lemma}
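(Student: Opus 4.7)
The plan is as follows. Since both $I_{0}$ and $I_{1}$ are linear, it suffices to verify $\boundary \circ I_{i} = I_{i} \circ \boundary$ on basis elements, i.e.~on an arbitrary $k$-simplex $\sigma = \{v_{0}, \ldots, v_{k}\} \in K_{k}$. This reduces the lemma to a direct computation of two $(k-1)$-chains in $\prism{K}$ and a check that they agree.

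First I would fix the ordering convention. Recall that the vertices of $\prism{K}$ are $V \times \{0,1\}$; I will use the lexicographic order that puts $(v,0)$ before $(w,1)$ for all $v,w$, and agrees with the vertex order on $V$ within each "copy". With this convention, the simplex $\sigma \times \{1\} = \{(v_{0},1), \ldots, (v_{k},1)\}$ lists its vertices in the same order that $\sigma$ does, and likewise for $\sigma \times \{0\}$; in particular $I_{i}(\tau \setminus \{v_{j}\}) = (\sigma \setminus \{v_{j}\}) \times \{i\}$ is the $j$-th face of $I_{i}(\sigma)$ for each $j$.

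Next I would compute both sides. Using the boundary formula,
\begin{align*}
\boundary I_{1}(\sigma) &= \boundary(\sigma \times \{1\}) = \sum_{j=0}^{k} (-1)^{j}\, (\sigma \times \{1\}) \setminus \{(v_{j},1)\} \\
&= \sum_{j=0}^{k} (-1)^{j}\, (\sigma \setminus \{v_{j}\}) \times \{1\} = I_{1}\!\left(\sum_{j=0}^{k} (-1)^{j}\, \sigma \setminus \{v_{j}\}\right) = I_{1}(\boundary \sigma),
\end{align*}
where the second equality uses the ordering convention above so that $(v_{j},1)$ is the $j$-th vertex of $\sigma \times \{1\}$. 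The argument for $I_{0}$ is identical. Extending linearly completes the proof.

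There is no real obstacle here: the only point that requires care is the compatibility of the vertex order on $V$ with the vertex order on $V \times \{0,1\}$ used to define signs, and this is a bookkeeping choice rather than a substantive issue. The lemma is essentially a restatement of the fact that $I_{0}$ and $I_{1}$ are chain-level inclusions of $K$ as the bottom and top copies of the prism.
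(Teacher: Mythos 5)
Your proof is correct, and it spells out exactly the routine verification that the paper deems ``obvious'' and omits: since $I_i$ is just a vertex relabeling that preserves the chosen order, it commutes with the boundary operator on basis simplices and hence by linearity. No issues.
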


We now define a map $P_{*}:C_{k}(K)\to C_{k+1}(\prism{K})$. Specifically, for a $k$-simplex $\sigma\in K$ with $\sigma=\{v_0,\ldots,v_k\}$, the corresponding $(k+1)$-chain is defined $P_{*}(\sigma) = \sum_{i=0}^{k} (-1)^{i}\sigma^{i}$. 

\begin{lemma}
\label{lem:prism_map}
    Consider $P_{*}:C_{k}(K)\to C_{k+1}(\prism{K})$ as defined above. The map $P_{*}$ satisfies $P_{*}\circ\boundary + \boundary\circ P_{*} = I_{1} - I_{0}$.
\end{lemma}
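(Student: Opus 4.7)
The plan is to verify the identity on each basis simplex $\sigma=\{v_0,\ldots,v_k\}$ and use linearity. The computation is standard in algebraic topology (this is the classical prism operator furnishing the chain homotopy between the two inclusions $I_0,I_1$), but the bookkeeping of signs is the main task.

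First, I would expand $\boundary P_*(\sigma)=\sum_{i=0}^k(-1)^i\boundary\sigma^i$, where $\sigma^i$ has $k+2$ vertices $\{(v_0,0),\ldots,(v_i,0),(v_i,1),\ldots,(v_k,1)\}$. Taking the simplicial boundary of each $\sigma^i$ yields $k+2$ faces, which I would classify into four families according to which vertex is deleted: (A) delete $(v_j,0)$ for some $j<i$, (B) delete $(v_i,0)$, (C) delete $(v_i,1)$, (D) delete $(v_j,1)$ for some $j>i$. Families (A) and (D) produce faces of the form $(\sigma\setminus\{v_j\})^{i'}$ for an appropriate index $i'$, so after summing over $i$ they should reassemble into $P_*(\boundary\sigma)=\sum_j(-1)^j P_*(\sigma\setminus\{v_j\})$ up to sign; families (B) and (C) produce faces where no $v_\ell$ appears in both copies.

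Next, I would show that almost all terms of type (B) and (C) cancel in a telescoping manner. Specifically, for $1\le i\le k$, the face obtained from $\sigma^i$ by deleting $(v_i,0)$ is the same simplex as the face obtained from $\sigma^{i-1}$ by deleting $(v_{i-1},1)$; by tracking the overall signs, including the $(-1)^i$ and $(-1)^{i-1}$ from $P_*(\sigma)$ and the Koszul signs in the boundary formula, these pairs cancel. The only surviving (B)/(C) terms are the ``extremes'': from $i=0$ deleting $(v_0,0)$ we obtain $\sigma\times\{1\}=I_1(\sigma)$ with coefficient $+1$, and from $i=k$ deleting $(v_k,1)$ we obtain $\sigma\times\{0\}=I_0(\sigma)$ with coefficient $-1$. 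The remaining (A) and (D) terms must combine into $-P_*(\boundary\sigma)$, which I would verify by re-indexing: a face from (A) has the form $(\sigma\setminus\{v_j\})^{i-1}$ with sign $(-1)^i(-1)^j$, while the corresponding term in $P_*\boundary\sigma$ is $(-1)^j(-1)^{i-1}(\sigma\setminus\{v_j\})^{i-1}$ (letting $i-1$ play the role of the prism index), giving exactly the required sign flip, and similarly for (D).

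The main obstacle is purely clerical: getting every sign right when $v_\ell$ is deleted from a simplex where both $(v_\ell,0)$ and $(v_\ell,1)$ may appear, since the position of the surviving copy shifts. I would handle this by writing $\sigma^i$ as an ordered tuple with $v_i$ appearing twice (as $(v_i,0)$ in position $i$ and $(v_i,1)$ in position $i+1$) and carefully tracking how the Koszul sign changes when one of the two is removed. Once the signs are matched up, rearranging gives $\boundary P_*(\sigma)=I_1(\sigma)-I_0(\sigma)-P_*(\boundary\sigma)$, which is the claimed identity.
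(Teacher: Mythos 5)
Your proof is correct and takes essentially the same approach as the paper: the paper simply cites Hatcher's proof of Theorem 2.10, and your argument is precisely that classical prism-operator computation, with the telescoping of the interior terms and the re-indexing for the boundary faces spelled out. The sign bookkeeping you describe (the $(-1)^i(-1)^j$ versus $(-1)^{i-1}(-1)^j$ flip for family (A), and the cancellation of $\sigma^i\setminus\{(v_i,0)\}$ against $\sigma^{i-1}\setminus\{(v_{i-1},1)\}$) is exactly what makes the identity close.
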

\begin{proof}
    The key points of this proof are essentially identical to those provided by Hatcher \cite{Hatcher} in the proof of Theorem 2.10.
\end{proof}

\subsection{Stellar Prisms.} Now we propose a way of combining stellar subdivisions and prisms that we call the \textit{\textbf{stellar prism}}. Intuitively, the stellar prism is a triangulation of the space $K\times[0,1]$ where the bottom copy $K\times\{0\}$ is triangulated the same way as $K$ and the top copy $K\times\{1\}$ is triangulated using the stellar subdivision. See Figure \ref{fig:different_constructions}.
\par 
We first define the stellar prism of a $d$-simplex; the stellar prism of a $d$-dimensional simplicial complex $K$ is the union of the prism of the $(d-1)$-skeleton and the stellar prisms of the $d$-simplices, i.e. $\sp{K} = \prism{K^{d-1}}\cup_{\sigma\in K_d} \sp{\sigma}$. 
\par 
Let $\sigma$ be a $d$-simplex. The stellar prism of $\sigma$ is the $(d+1)$-dimensional simplicial complex $\sp{\sigma}$ described as follows. The vertices of $\sp{\sigma}$ are $\left(\sigma\times\{0,1\}\right)\cup\{v_\sigma\}$, where $v_\sigma$ is a new vertex. For any simplex $\tau\subsetneq\sigma$, $\sp{\sigma}$ contains both the complex $P\tau$ and all simplices of the form $\{\varphi\cup\{v_\sigma\} : \varphi\in P\tau\}$. Additionally, $\sp{\sigma}$ contains the $d$-simplex $\sigma\times\{0\}$ and the $(d+1)$ simplex $(\sigma\times\{0\})\cup\{v_\sigma\}$. If we let $v_{\sigma\times\{1\}} = v_{\sigma}$, note that $SP\sigma$ contains the subdivision $S(\sigma\times\{1\})$. 
\par 
We now define a linear map between $SP_{*}:C_i(K)\to C_{i+1}(\sp{K})$. For $i<d$, we simply define $SP_{*} = P_{*}$. For $i=d$, we define $SP_{*}(\sigma) = -b_{\sigma} I_{0}(\sigma)  - b_{\sigma} P_{*}\boundary(\sigma)$

\begin{lemma} 
\label{lem:stellar_prism_map}
    Consider $SP_{*}:C_i(K)\to C_{i+1}(\sp{K})$ as defined above. The map $SP_{*}$ satisfies $\boundary SP_{*} + SP_{*}\boundary = S_{*}I_{1} - I_{0}$
\end{lemma}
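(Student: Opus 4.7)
The plan is to verify the identity $\boundary SP_{*} + SP_{*}\boundary = S_{*}I_{1} - I_{0}$ by case analysis on the dimension $i$ of the input chain, since $SP_{*}$ is defined differently for $i<d$ and $i=d$. Because each operator is linear, it suffices to check the identity on basis simplices.

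For the case $i<d$, the definition gives $SP_{*} = P_{*}$, so by Lemma~\ref{lem:prism_map} we immediately obtain $\boundary SP_{*} + SP_{*}\boundary = I_{1} - I_{0}$. It then remains to observe that when the input chain has dimension strictly less than $d$, the image of $I_{1}$ lies in the $(d-1)$-skeleton of $\K\times\{1\}$, on which $S_{*}$ acts as the inclusion map; hence $S_{*}I_{1} = I_{1}$ in this range, and the desired identity follows.

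For the case $i=d$, the main computation is to expand
$$ \boundary SP_{*}(\sigma) = -\boundary b_{\sigma} I_{0}(\sigma) - \boundary b_{\sigma} P_{*}\boundary(\sigma) $$
and apply Lemma~\ref{lem:chain_join} to rewrite each $\boundary b_{\sigma}$ as $\id - b_{\sigma}\boundary$. The first term becomes $-I_{0}(\sigma) + b_{\sigma}I_{0}\boundary(\sigma)$ (using Lemma~\ref{lem:prism_inclusion_maps} to commute $\boundary$ past $I_{0}$). For the second term, commuting $\boundary$ inside gives a factor $\boundary P_{*}\boundary(\sigma)$, which by Lemma~\ref{lem:prism_map} equals $I_{1}\boundary(\sigma) - I_{0}\boundary(\sigma)$ since $\boundary\boundary\sigma=0$. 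Combining with $SP_{*}\boundary(\sigma) = P_{*}\boundary(\sigma)$ (because $\boundary\sigma$ is a $(d{-}1)$-chain), the $P_{*}\boundary(\sigma)$ and $b_{\sigma}I_{0}\boundary(\sigma)$ contributions cancel, leaving
$$ \boundary SP_{*}(\sigma) + SP_{*}\boundary(\sigma) = -I_{0}(\sigma) + b_{\sigma}I_{1}\boundary(\sigma). $$
The final step is to identify the remaining term with $S_{*}I_{1}(\sigma)$: by definition $S_{*}(\sigma\times\{1\}) = b_{\sigma\times\{1\}}\boundary(\sigma\times\{1\}) = b_{\sigma\times\{1\}}I_{1}\boundary\sigma$, and since $v_{\sigma\times\{1\}} = v_{\sigma}$, the maps $b_{\sigma\times\{1\}}$ and $b_{\sigma}$ coincide, giving $S_{*}I_{1}(\sigma) = b_{\sigma}I_{1}\boundary(\sigma)$ as required.

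The bookkeeping with signs in $b_{\sigma}$ and $P_{*}$ is the only delicate point — the identity $\boundary b_{\sigma} = \id - b_{\sigma}\boundary$ (rather than the $+$ version one might naively write for a cone) is crucial and already handled in Lemma~\ref{lem:chain_join}, so the main obstacle is simply keeping the signs consistent while expanding. Once Lemmas~\ref{lem:chain_join},~\ref{lem:prism_inclusion_maps}, and~\ref{lem:prism_map} are applied in sequence, the cancellations are automatic and the identity drops out.
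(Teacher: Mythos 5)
Your proof is correct and follows essentially the same route as the paper's: the $i<d$ case reduces to Lemma~\ref{lem:prism_map} since $SP_{*}=P_{*}$ and $S_{*}$ is inclusion, and the $i=d$ case expands $\boundary SP_{*}(\sigma)$ using Lemma~\ref{lem:chain_join}, Lemma~\ref{lem:prism_inclusion_maps}, and Lemma~\ref{lem:prism_map}, then identifies $b_{\sigma}I_{1}\boundary(\sigma)$ with $S_{*}I_{1}(\sigma)$. The one thing you make explicit that the paper leaves implicit is the identification $b_{\sigma}=b_{\sigma\times\{1\}}$ (via $v_{\sigma}=v_{\sigma\times\{1\}}$), which is a helpful clarification of the notational shorthand in the definition of $SP_{*}$.
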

\begin{proof}
    For $i<d$, this follows from Lemma \ref{lem:prism_map} and the fact that $SP_{*} = P_{*}$ and $S_{*} = \id$. We now verify this for $i=d$.
    \par 
    We will analyse the boundary of $\boundary SP_{*}(\sigma)$. We find that
    \begin{equation}
    \label{eqn:boundary_stellar_prism}
        \boundary SP_{*}(\sigma) = -\boundary b_{\sigma} I_{0}(\sigma) - \boundary b_{\sigma} P_{*}\boundary(\sigma).
    \end{equation}
    We analyse the two terms in this sum. Using Lemma \ref{lem:chain_join} and \ref{lem:prism_inclusion_maps}, the first term of Equation (\ref{eqn:boundary_stellar_prism}) evaluates to 
    \begin{align*}
        \boundary b_{\sigma} I_{0}(\sigma) &= I_{0}(\sigma) - b_{\sigma}\boundary I_{0}(\sigma) \\
        &= I_{0}(\sigma) - b_{\sigma}I_{0}\boundary(\sigma) 
    \end{align*}
    Using Lemma \ref{lem:prism_map}, the second term of Equation (\ref{eqn:boundary_stellar_prism}) evaluates to 
    \begin{align*}
        \boundary b_{\sigma} P_{*}\boundary(\sigma) =& P_{*}\boundary(\sigma) - b_{\sigma}\boundary P_{*}\boundary(\sigma) \\
        =& P_{*}\boundary(\sigma) - b_{\sigma}(I_{1} - I_{0} -P_{*}\boundary)\boundary(\sigma)\\
       =& P_{*}\boundary(\sigma) - b_{\sigma}I_{1}\boundary(\sigma) + b_{\sigma}I_{0}\boundary(\sigma) + b_{\sigma}P_{*}\boundary\boundary(\sigma)\\
       =& P_{*}\boundary(\sigma) - b_{\sigma}I_{1}\boundary(\sigma) + b_{\sigma}I_{0}\boundary(\sigma)
    \end{align*}
    The first term in this sum is $P_{*}\boundary(\sigma) = SP_{*}\boundary(\sigma)$ as $P_{*} = SP_{*}$ for all dimensions less than $d$. The term $b_{\sigma}I_{1}\boundary(\sigma) = b_{\sigma}\boundary I_{1}(\sigma) = S_{*} I_{1}(\sigma)$. Combining the two terms of Equation (\ref{eqn:boundary_stellar_prism}), we find that 
    \begin{align*}
        \boundary SP_{*}(\sigma) =& -\boundary b_{\sigma} I_{0}(\sigma) - \boundary b_{\sigma} P_{*}\boundary(\sigma) \\
        =& -(I_{0}(\sigma) - b_{\sigma}I_{0}\boundary(\sigma)) - (P_{*}\boundary(\sigma) - S_{*} I_{1}(\sigma) + b_{\sigma}I_{0}\boundary(\sigma)) \\
        =& -I_{0}(\sigma) - P_{*}\boundary(\sigma) + S_{*} I_{1}(\sigma) 
    \end{align*}
\end{proof}

\subsection{Building Block.}

We now describe the building block $B_d$. Let $\Delta^{d}$ be the closure of the $d$-simplex $\sigma=\{v_0,\ldots,v_d\}$, and let $\boundary\Delta^{d}$ denote the $(d-1)$-dimensional simplicial complex $\Delta^{d}\setminus\{\sigma\}$. The building block $B_d$ is derived from the stellar prism $\sp{(\boundary\Delta^{d})}$. Denote the $(d-1)$-simplex $\sigma\times\{1\}\setminus\{(v_i,1)\}$ as $\sigma_{i}$. Observe that the vertices of $\sp{(\boundary\Delta^{d})}$ are $(\sigma\times\{0,1\})\cup\{v_{\sigma_{i}} : 0\leq i\leq d\}$. The building block $B_{d}$ is the simplicial complex obtained by replacing each vertex $v_{\sigma_{i}}$ in $\sp{(\boundary\Delta^{d})}$ with the vertex $(v_i,1)$. See the main body of the text for a more intuitive description (but less formal) of $B_d$.
\par 
We now prove the relevant properties of $B_d$. 

\Bddchain*

\begin{proof}[Proof of Lemma \ref{lem:Bd_d_chain}, Part 1]
    To keep track of the simplices, we introduce some notation. Recall that $\sigma_{i} = \sigma\setminus\{v_i\}$. Additionally, we will denote $\sigma_{ij}=\sigma\setminus\{v_i,v_j\}$. For $0\leq k\leq d$, denote the set 
    $$
    \sigma^{k} := \{(v_0,0),\ldots,(v_k,0),(v_k,1),\ldots,(v_d,1).
    $$
    Note that $\sigma^{k}$ is not a simplex in any of the complexes we consider, but some of its subsets will be. For distinct values $i,j\neq k$, denote the simplex $\sigma_{ij}^{k} := \sigma_{k}\setminus\{(v_i,0),(v_i,1),(v_j,0),(v_j,1)\}$. Note that the subtraction in this definition is redundant, as only one of the vertices $(v_i,0)$ or $(v_i,1)$ will be contained in $\sigma^{k}$. 
    \par 
    First, we count the number of $d$-simplices in $SP(\boundary\Delta^{d})$. Fix a $(d-1)$-simplex $\sigma_i$, and consider a $(d-2)$-simplex $\sigma_{ij}$ in its boundary. The $(d-1)$-simplices in $P\sigma_{ij}$ are those of the form $\sigma_{ij}^{k}$ for $0\leq k\leq d$, $k\neq i,j$. The $d$-simplices in $SP\sigma_{i}$ are those simplices of the form $\sigma^{k}_{ij}\cup\{v_{\sigma_i}\}$. Additionally, $SP(\boundary\Delta^{d})$ contains those simplices $(\sigma_{i}\times\{0\})\cup\{v_{\sigma_i}\}$. Therefore, the number of $d$-simplices in $SP(\boundary\Delta^{d})$ equals the number of choices of $i$, $j$, and $k$, plus $d+1$ for the $d$-simplices of the form $(\sigma_i\times\{0\})\cup\{v_{\sigma_i}\}$. In total, there are $(d+1)\cdot d\cdot (d-1) + (d+1) \in \Theta((d+1)^{3})$ $d$-simplices in $SP(\boundary\Delta^{d})$
    \par 
    Now we count the number of $d$-simplices in $B_d$. As we will see, replacing the vertex $v_{\sigma_i}$ will reduce the number of $d$-simplices, but only by a constant factor. In the case that $i>k$, then $\sigma^{k}_{ij}\cup\{(v_i,1)\} = \sigma^{k}_{j}$, but in the case that $i<k$, no such simplification is possible. Therefore, the $(d+1)$-simplices in $B_d$ fall into two sets: $\{\sigma_{j}^{k} : 0\leq k< d, j\neq k\}$ and $\{\sigma_{ij}^{k}\cup\{(v_i,1)\} : 0\leq k\leq d, j\neq k, i<k\}$. Note that the simplices $\sigma^{k}_{j}$ may arise from different stellar prisms $SP\sigma_{i}$ for different values of $i$, whereas the simplices of the form $\sigma_{ij}^{k}\cup\{v_i\}$ only arise from the stellar prism $SP\sigma_{i}$. However, as there are $O((d+1)^{2})$ simplices of the form $\sigma_{j}^{k}$, $O((d+1)^{3})$ simplices of the form $\sigma_{ij}^{k}\cup\{(v_i,1)\}$, and $d+1$ simplices of the form $(\sigma_i\times\{0\})\cup\{(v_i,1)\}$, we conclude that there are $\Theta((d+1)^{3})$ $d$-simplices in $B_d$.
\end{proof}

\begin{proof}[Proof of Lemma \ref{lem:Bd_d_chain}, Part 2]
    The simplicial complex $B_d$ is the one described in the paragraphs preceding this proof, which is derived from the simplicial complex $SP(\boundary\Delta^{d})$ by replacing some vertices. We first describe the chain $f$ in the complex $SP(\boundary\Delta^{d})$. We then analyse this chain after we replace the vertices.
    \par 
    Let $\sigma=\{v_0,\ldots,v_d\}$. The chain $f$ is defined  $f:=SP_{*}(\boundary\sigma).$ It is important to note that while the simplex $\sigma$ is not contained in $\boundary\Delta^{d}$, its boundary $\boundary\sigma$ is still a well-defined $(d-1)$-chain in $C_{d-1}(\boundary\Delta^{d})$. By Lemma \ref{lem:stellar_prism_map}, we know that 
    \begin{align*}
        \boundary f = \boundary SP_{*}(\boundary\sigma) =& S_{*}I_{1}\boundary(\sigma) - I_{0}\boundary(\sigma) - SP_{*}\boundary\boundary(\sigma) \\
        =& S_{*}I_{1}\boundary(\sigma) - I_{0}\boundary(\sigma)
    \end{align*}
    We know $I_{0}(\boundary\sigma) = \boundary(\sigma\times\{0\})$ by the definition of $I_0$, so to finish the proof, we only need to verify that $S_{*}I_{1}\boundary(\sigma) = - d\cdot\boundary(\sigma\times\{1\})$ after we replace the vertices.
    \par 
    For this, we separately consider the value of $f$ on each $d$-simplex in its support.  Recall the notation $\sigma_i:=\sigma\setminus\{v_i\}$. We can expand this chain using the definition of the boundary map as
    \begin{align*}
        S_{*}I_{1}(\boundary\sigma) = \sum_{i=0}^{d} (-1)^{i} S_{*}I_{1}(\sigma_i).
    \end{align*}
     Let us now investigate the terms $S_{*}I_{1}(\sigma_i)$.
    \par 
    Recall that the operator $b_{\sigma_{i}}$ assigns different signs to a simplex depending on where $v_{\sigma_{i}}$ is in the order of the simplex's vertices, so assume the vertex $v_{\sigma_{i}}$ is ordered between $(v_{i-1},1)$ and $(v_{i+1},1)$ in the ordering of the simplices. In this case, $v_{\sigma_{i}}$ is the $i$th vertex in the simplex $I_1(\sigma_{ij})\cup\{v_{\sigma_i}\}$ for $j<i$ and in the $(i+1)$st position for $j>i$. Therefore 
    \begin{align*}
        S_{*}I_1(\sigma_{i}) =& b_{\sigma_{i}}\boundary I_1(\sigma_{i}) \\
        =& b_{\sigma_{i}}\left(\sum_{j=0}^{i-1} (-1)^{j} I_{1}(\sigma_{ij}) + \sum_{j=i+1}^{d} (-1)^{j-1} I_{1}(\sigma_{ij}) \right) \\
        =& \sum_{j=0}^{i-1} (-1)^{j}(-1)^{i-1} I_{1}(\sigma_{ij})\cup\{v_{\sigma_i}\} + \sum_{j=i+1}^{d} (-1)^{j-1}(-1)^{i} I_{1}(\sigma_{ij})\cup\{v_{\sigma_{i}}\} \\
        =& (-1)^{i-1}\left[\sum_{j=0}^{i-1} (-1)^{j} I_{1}(\sigma_{ij})\cup\{v_{\sigma_i}\} + \sum_{j=i+1}^{d} (-1)^{j} I_{1}(\sigma_{ij})\cup\{v_{\sigma_{i}}\} \right]
    \end{align*}
    Therefore, when we replace the vertex $v_{\sigma_{i}}$ with $(v_{i},1)$, we find that
    \begin{align*}
        S_{*}I_{1}(\boundary\{v_0,\ldots,v_d\}) = \sum_{i=0}^{d}(-1)\left[\sum_{j=0}^{i-1} (-1)^{j} I_{1}(\sigma_{j}) + \sum_{j=i+1}^{d} (-1)^{j} I_{1}(\sigma_{j}) \right]
    \end{align*}
    Note that each term $I_{1}(\sigma_{j})$ appears $d$ times in this sum, once for each $i\neq j$. Moreover, each time the term appears, it has sign $(-1)^{j}$. We conclude that $S_{*}I_{1}(\boundary\{v_0,\ldots,v_d\}) = -d\cdot\boundary(\sigma\times\{1\})$, as claimed
    \par 
    We now analyze the size of $f$. We know that $f = SP_{*}(\boundary\sigma)$. We can analyze this sum linearly as 
    \begin{align*}
        f = SP_{*}(\boundary\sigma) =& \sum_{i=0}^{d} (-1)^{i} SP_{*}(\sigma_{i}) \\
        =& \sum_{i=0}^{d} (-1)^{i} (-b_{\sigma_{i}}I_0(\sigma_{i}) - b_{\sigma_{i}}P_{*}\boundary(\sigma_{i})) 
    \end{align*}
    By the first term in the sum, the chain $f$ assigns value $\pm 1$ to each $d$-simplex $(\sigma_{i}\times\{0\}) \cup \{v_i\}$. Likewise, were we to expand the second sum, we would find that $f$ assigns value $\pm 1$ to all simplices of the form $\sigma^{k}_{ij}\cup\{(v_i,1)\}$. By Part $1$ of the lemma, we know there are $\Theta((d+1)^{3})$ such simplices. We conclude that the squared norm $\|f\|^{2}$ is $\Omega(({d+1})^{3})$. 
\end{proof}

\section{Collapsibility of the complex.}
\label{apx:collapsibility}

In this section, we will prove that the complexes $\mathcal{B}_d^{n}$ and $\mathcal{P}_{d}^{n}$ from Section~\ref{sec:lower_bounds} collapse to $(d-1)$-dimensional subcomplexes. This will imply that $\ker\boundary_d[\mathcal{B}_d^{n}]=0$ and $\ker\boundary_d[\mathcal{P}_d^{n}]=0$.

\subsection{Preliminaries.}
\noindent
In this section, we introduce the necessary background on collapsiblity.

A $(d-1)$-simplex $\tau$ is a \EMPH{face} of a $d$-simplex $\sigma$ iff $\tau\subset\sigma$.
The pair $(\sigma, \tau)$ is a \EMPH{collapse pair} in $K$ iff (i) $\tau$ is a face of $\sigma$, and (ii) $\tau$ is not the face of any other simplex in $K$.
The complex $K$ \EMPH{collapses} to the complex $K\backslash\{\sigma, \tau\}$ if $(\sigma, \tau)$ is a collapse pair in $K$. More generally, a complex $K$ collapses into a complex $K''$ if there exists a complex $K'$ such that $K$ collapses to $K'$ and $K'$ collapses to $K''$.  A complex $K$ is \EMPH{collapsible} if it collapses into a single vertex.

By the inductive definition of collapsibilty, whenever a complex $K$ collapse into a complex $L$, there exists a sequence of complexes $K = K_0\supset K_1\supset\ldots\supset K_k = L$ such that for any $0<i\leq k$, $K_{i} = K_{i-1}\backslash\{\sigma_i, \tau_i\}$ where $(\sigma_i, \tau_i)$ is a collapse pair in $K_{i-1}$.  This sequence is called a \EMPH{collapsing sequence}.

We now give two facts about consequences of collapsibility we use in this paper. \Cref{lem:collapsibility_implies_homotopy_equivalence} is a standard fact about collapsibility we state without proof. \Cref{lem:collapsibility_and_null_homology} is a non-standard fact about collapsibility, but one that will come as no surprise to those familiar with collapsibility.

\begin{lemma}
\label{lem:collapsibility_implies_homotopy_equivalence}
    Let $L\subset K$ be simplicial complexes such that $K$ collapses to $L$. Then $L$ is a deformation retract of $K$. Consequently, $L$ and $K$ have isomorphic homology groups in all dimensions.
\end{lemma}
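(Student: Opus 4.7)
The plan is to proceed by induction on the length of the collapsing sequence, which reduces everything to the single-step (elementary) collapse case. Specifically, if $K = K_0 \supset K_1 \supset \ldots \supset K_k = L$ is a collapsing sequence and each $K_{i-1}$ deformation retracts onto $K_i$, then composing these retractions (concatenated in time and reparametrized on $[0,1]$) produces a deformation retraction of $K_0$ onto $K_k$, since deformation retracts compose. So the only substantive content is the single-collapse case: if $K = L \cup \{\sigma, \tau\}$ with $(\sigma, \tau)$ a collapse pair, then $|L|$ is a deformation retract of $|K|$.

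For the single-step case, I would work with the underlying geometric realization. The free-face condition (that $\tau$ is a face of $\sigma$ and of no other simplex of $K$) means that $|\sigma| \cup |\tau|$ meets $|L|$ exactly along the union of the faces of $\sigma$ other than $\tau$; this union is homeomorphic to a closed half-sphere $S^{d-1}_{-}$, while $\mathrm{int}|\sigma| \cup \mathrm{int}|\tau|$ together form an open half-disk attached along $S^{d-1}_{-}$. Using a straight-line homotopy in the affine $d$-simplex $|\sigma|$, I would write down an explicit retraction that pushes every point of $|\sigma| \cup |\tau|$ radially away from the barycenter of $\tau$ onto the union of the other faces, fixing $|L|$ pointwise and varying continuously with a parameter $t \in [0,1]$. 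Continuity on $|K|$ follows from the pasting lemma since the homotopy is already the identity on the intersection $|\sigma| \cap |L|$.

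The second conclusion, that $L$ and $K$ have isomorphic homology groups in every dimension, follows at once from homotopy invariance of simplicial (equivalently singular) homology: a deformation retraction is a homotopy equivalence, and homotopy-equivalent spaces have isomorphic homology with any coefficient ring. I would cite this as a standard result from any algebraic topology reference (e.g., Hatcher, Theorem 2.10) rather than reproving it.

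The main ``obstacle'' is essentially notational: one must be careful when writing down the explicit affine homotopy on $|\sigma|$ so that it restricts to the identity on the faces of $\sigma$ lying in $L$ and extends continuously by the identity on the rest of $|K|$. Since the statement is standard and is invoked only to deduce $\ker \partial_d = 0$ for the complexes $\mathcal{B}_d^n$ and $\mathcal{P}_d^n$ via their $(d{-}1)$-dimensional deformation retracts, no additional quantitative control is required beyond the plain homotopy equivalence.
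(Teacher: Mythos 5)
The paper itself does not prove this lemma; it explicitly states it ``without proof'' as a standard fact about collapsibility. Your proposal is therefore supplying the standard textbook argument, and the overall structure is right: reduce to a single elementary collapse by composing deformation retractions, build an explicit retraction on $|\sigma|$, and then invoke homotopy invariance of homology.

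There is, however, a genuine slip in the single-collapse step. Radial projection \emph{from the barycenter $b_\tau$ of $\tau$} does not give a deformation retraction: $b_\tau$ itself is a fixed point of that radial push (it has no direction in which to move), so it is never carried into $|L|$, and the map is in fact discontinuous at $b_\tau$ --- nearby points of $\mathrm{int}\,\tau$ get sent to $\partial\tau$ while nearby points of $\mathrm{int}\,\sigma$ get sent to the opposite faces. The correct construction projects from a point $c$ strictly \emph{beyond} $\tau$, i.e., on the opposite side of the affine hyperplane spanned by $\tau$ from $\sigma$ and outside $|\sigma|$; then for every $x\in|\sigma|$ the ray from $c$ through $x$ exits $|\sigma|$ in the closed hemisphere $\partial\sigma\setminus\mathrm{int}\,\tau$, the resulting $\pi$ is continuous on all of $|\sigma|$, and $h_t(x)=(1-t)x+t\,\pi(x)$ is the desired strong deformation retraction. (Equivalently, one can use the product structure $|\sigma|\cong|\tau|\times[0,1]$ and retract onto $|\partial\tau|\times[0,1]\cup|\tau|\times\{1\}$.) With that one correction the argument is complete; the pasting-lemma step and the appeal to homotopy invariance are fine as stated.
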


\begin{lemma}
\label{lem:collapsibility_and_null_homology}
    Let $L\subset K$ be simplicial complexes such that $K$ collapses to $L$. Let $\gamma$ be a $(d-1)$-cycle in $L$. Then $\gamma$ is null-homologous in $L$ if and only if $\gamma$ is null-homologous in $K$.
\end{lemma}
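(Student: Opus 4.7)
The plan is to prove the two implications separately, relying on the preceding lemma for the nontrivial direction. For the forward implication, if $\gamma$ is null-homologous in $L$, then there is a $d$-chain $f \in C_d(L)$ with $\boundary_d[L] f = \gamma$. Since $L \subseteq K$ we have $C_d(L) \subseteq C_d(K)$, and the boundary map of $K$ restricted to $C_d(L)$ coincides with that of $L$, so $f$ itself witnesses that $\gamma$ is null-homologous in $K$ as well. No use of the collapsibility hypothesis is needed for this direction.

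For the reverse implication, I would invoke Lemma~\ref{lem:collapsibility_implies_homotopy_equivalence}, which says that collapsibility of $K$ onto $L$ makes $L$ a deformation retract of $K$, and in particular makes the inclusion $\iota : L \hookrightarrow K$ induce an isomorphism $\iota_* : H_{d-1}(L) \xrightarrow{\cong} H_{d-1}(K)$ on $(d{-}1)$-homology. Because this isomorphism is induced by the inclusion of chain complexes, it sends the homology class $[\gamma]_L$ represented by the cycle $\gamma$ in $L$ to the class $[\gamma]_K$ represented by the same cycle in $K$. If $\gamma$ is null-homologous in $K$, then $[\gamma]_K = 0$, and injectivity of $\iota_*$ forces $[\gamma]_L = 0$; equivalently, $\gamma \in \im\boundary_d[L]$.

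The only subtle point, and thus the main thing to verify carefully, is that the isomorphism $\iota_*$ supplied by Lemma~\ref{lem:collapsibility_implies_homotopy_equivalence} is indeed the map induced by the simplicial inclusion at the level of cycles, so that $[\gamma]_L \mapsto [\gamma]_K$. This is standard: a deformation retract is realized by inclusion and retraction, and the induced maps on simplicial homology are computed from the corresponding simplicial chain maps; on the cycle $\gamma$, which lies in $C_{d-1}(L)$ already, the inclusion map acts as the identity. Once this identification is in hand, no further analysis of the collapse sequence is required, and the argument is complete.
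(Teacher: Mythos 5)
Your proof is correct, but it takes a genuinely different route from the paper's. The paper proves the nontrivial direction by an explicit chain-level induction over the collapsing sequence $K = K_0 \supset K_1 \supset \cdots \supset K_k = L$: starting from a $d$-chain $f_K \in C_d(K)$ with $\boundary f_K = \gamma$, it modifies the chain at each elementary collapse $(\sigma_i, \tau_i)$ (subtracting an appropriate multiple of $\boundary\sigma_i$ when a $d$-simplex $\tau_i$ is removed, or arguing the coefficient was already zero when a $(d-1)$-simplex $\tau_i$ is removed) to produce a chain $f_{K_i} \in C_d(K_i)$ with the same boundary, ultimately landing in $C_d(L)$. You instead invoke Lemma~\ref{lem:collapsibility_implies_homotopy_equivalence} to get that the inclusion $\iota : L \hookrightarrow K$ induces an isomorphism $\iota_* : H_{d-1}(L) \to H_{d-1}(K)$, and then use injectivity of $\iota_*$ together with the fact that $\iota_*[\gamma]_L = [\gamma]_K$. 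You are right to flag that the lemma as stated only asserts abstract isomorphism and not that the isomorphism is the one induced by inclusion, and your resolution of that point (a deformation retraction pair $r \circ \iota = \mathrm{id}_L$, $\iota \circ r \simeq \mathrm{id}_K$ forces $\iota_*$ to be an isomorphism) is the standard and correct one. The trade-off: your argument is shorter and conceptually cleaner, but leans on the homotopy-theoretic black box; the paper's argument is longer but entirely elementary and self-contained at the chain level, which matches the explicit, constructive flavor of the surrounding appendix.
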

\begin{proof}[Proof of \Cref{lem:collapsibility_and_null_homology}]
    Suppose that $\gamma$ is null-homologous in $L$, and let $f_L$ be a $d$-chain in $C_d(L)$ such that $\boundary f_L = \gamma$. We can extend $f$ to a $d$-chain $f_K$ in $K$ by setting $f_K(\sigma)=0$ for any $d$-simplex $\sigma\in K_d\setminus L_d$. It is straightforward to see that $\boundary f_K=\gamma$ by the linearity of the boundary map, so $\gamma$ is null-homologous in $K$ 
    \par 
    Conversely, suppose that $\gamma$ is null-homologous in $K$, and let $f_K$ be a $d$-chain in $C_d(K)$ such that $\boundary f_K = \gamma$. Consider a collapsing sequence $K = K_0\supset K_1\supset\ldots\supset K_k = L$ with collapses pairs $(\sigma_i,\tau_i)$. We will prove by induction on $i$ that there is a chain $f_{K_i}\in C_d(K_i)$ for $0\leq i\leq k$ such that $\boundary f_{K_i} = \gamma$. This will imply that $\gamma$ is null-homologous in $L$.
    \par 
    For the base case of $i=0$, the chain $f_{K_0} = f_{K}$ satisfies the claim by assumption. Now suppose there is a chain $f_{K_i}\in C_d(K_i)$ such that $\boundary f_{k_i} = \gamma$. Consider the collapse pair $(\sigma_i, \tau_i)$. In the case that neither $\sigma_i$ nor $\tau_i$ are $d$-simplices, then $K_i$ and $K_{i+1}$ have the same set of $d$-simplices, so $f_{K_i}$ is a valid $d$-chain in $C_d(K_i)$ and we set $f_{K_{i+1}} = f_{K_i}$. In the case that $\sigma_i$ is a $d$-simplex, then $\tau$ is a $(d-1)$-simplex. As $\tau\not\in L_{d-1}$, then $\gamma(\tau)=0$. This implies that it must be the case that $f_{K_i}(\sigma_i)=0$. If not, then $\boundary f_{K_i}(\sigma_i) = \pm f_{K_i}(\sigma_i)$ (a contradiction) as the only $d$-simplex incident to $\tau_i$ is $\sigma_i$. Therefore, we can set $f_{K_{i+1}}=f_{K_i}$. Finally, in the case that $\tau_i$ is a $d$-simplex, then $\sigma_i$ is a $(d+1)$-simplex. In this case, we set $f_{K_{i+1}} = f_{K_{i}} - (\boundary\sigma_i(\tau_i))\cdot(f_{K_{i}}(\tau_i))\cdot \boundary\sigma_i$. (Note that in this expression, $\boundary\sigma_i(\tau_i)=\pm 1$, $f_{K_{i}}(\sigma_i)$ is a scalar, and $\boundary\sigma_i$ is a $d$-chain.) We need to verify two things about $f_{K_{i+1}}$: (1) $\boundary f_{K_{i+1}} = \gamma$ and (2) $f_{K_{i+1}}(\tau_i)=0$. Condition (1) is easy to verify as 
    \begin{align*}
        \boundary f_{K_{i+1}} &= \boundary f_{K_{i}} - (\boundary\sigma_i(\tau_i))\cdot(f_{K_{i}}(\sigma_i))\cdot \boundary\boundary\sigma_i \\
         &= \boundary(f_{K_{i}}) \tag{as $\boundary\boundary=0$} \\
          &= \gamma \tag{Induction Hypothesis} 
    \end{align*}
    Condition (2) is also straightforward to verify.
    \begin{align*}
        f_{K_{i+1}}(\tau_i) &= f_{K_{i}}(\tau_i) - (\boundary\sigma_i(\tau_i))\cdot(f_{K_{i}}(\tau_i))\cdot \boundary\sigma_i(\tau_i) \\ 
         &= f_{K_{i}}(\tau_i) - f_{K_{i}}(\tau_i) \tag{as $(\boundary\sigma_i(\tau_i))^{2}=1$} \\ 
         &= 0 
    \end{align*}
    Therefore, $f_{K_{i+1}}$ is a $d$-chain in $C_d(K_{i+1})$ with boundary $\gamma$. 
    \par 
    In all three cases, we can find a $d$-chain $f_{K_{i+1}}$ such that $\boundary f_{K_{i+1}} \gamma$. This proves that $\gamma$ is null-homologous in $L$.
\end{proof}

\subsection{The main lemma.}

In this section, we will prove the following lemmas about the complexes $\mathcal{B}_d^{n}$ and $\mathcal{P}_d^{n}$ from Section \ref{sec:lower_bounds}.

\begin{restatable}{lemma}{Bdncollapses}
\label{lem:Bdn_collapses}
    The simplicial complex $\mathcal{B}_d^{n}$ collapses to a $(d-1)$-dimensional subcomplex.
\end{restatable}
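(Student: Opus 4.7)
The plan is to proceed by induction on $n$. For the base case $n=0$, the complex $\mathcal{B}_d^0$ is the closed $d$-simplex on vertices $\sigma\times\{0\}$; it collapses to a single vertex via standard elementary collapses (remove a $(d-1)$-face of the top $d$-simplex together with the top simplex, then inductively collapse the resulting $(d-1)$-disk). In particular, it collapses to a $(d-1)$-dimensional subcomplex for $d\geq 1$.

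For the inductive step, I would use the gluing description $\mathcal{B}_d^n = \mathcal{B}_d^{n-1} \cup B_d^n$, where the intersection $\mathcal{B}_d^{n-1}\cap B_d^n$ equals $\boundary(\sigma\times\{n-1\})$ for $n\geq 2$ (and equals the closed $d$-simplex $\sigma\times\{0\}$ for $n=1$). It then suffices to prove the key subclaim that $B_d$ admits an elementary collapsing sequence to its ``top face'' $\boundary(\sigma\times\{1\})$ in which no collapsed simplex belongs to $\boundary(\sigma\times\{1\})$. Granting the subclaim, applying the same sequence to $B_d^n$ (with its top identified with part of $\mathcal{B}_d^{n-1}$) collapses $\mathcal{B}_d^n$ down to $\mathcal{B}_d^{n-1}$: every collapsed simplex lies in $B_d^n\setminus \mathcal{B}_d^{n-1}$, so its freeness in $B_d^n$ automatically implies freeness in $\mathcal{B}_d^n$. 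The induction hypothesis then finishes the proof.

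To establish the subclaim, I would exhibit the collapsing sequence explicitly. The first round collapses each ``floor'' $d$-simplex $(\sigma_i\times\{0\})\cup\{(v_i,1)\}$ (for $0\leq i\leq d$) via its free $(d-1)$-face $\sigma_i\times\{0\}$; the construction of $B_d$ in \Cref{sec:construction_building_block} guarantees that each such bottom $(d-1)$-face is contained in exactly one $d$-simplex. The remaining ``wall'' $d$-simplices of the form $\sigma^k_{ij}\cup\{(v_i,1)\}$ are then processed in a carefully indexed order (roughly, in decreasing order of the height index $k$). At each stage, the previously-executed collapses expose a free $(d-1)$-face that contains at least one vertex from $\sigma\times\{0\}$ and so lies outside $\boundary(\sigma\times\{1\})$. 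Once all $d$-simplices are removed, any lower-dimensional simplices not in $\boundary(\sigma\times\{1\})$ are cleaned up by further elementary collapses.

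The main obstacle is the combinatorial bookkeeping for this explicit collapsing sequence in general dimension $d$: verifying, at each step, both that the prescribed $(d-1)$-face is currently free in the reduced subcomplex and that the collapse never touches $\boundary(\sigma\times\{1\})$. The indexing must account for the fact that a single $(d-1)$-face can be shared between multiple wall simplices coming from different stellar prisms $SP\sigma_i$. The guiding intuition is topological: $B_d$ is a simplicial realization of the mapping cylinder of a degree-$d$ self-map of $\boundary\Delta^d$, and such cylinders deformation-retract onto their codomain; the collapsing sequence is a simplicial implementation of this retraction.
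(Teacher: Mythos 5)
Your overall strategy — peel off the outermost building block $B_d^n$ using a collapsing sequence internal to $B_d^n$ that avoids the shared boundary $\boundary(\sigma\times\{1\})$, then recurse — is the same approach the paper takes. The key subclaim you identify (a collapsing sequence inside $B_d$ that stays off $\boundary(\sigma\times\{1\})$) is indeed what the paper establishes as \Cref{lem:Bd_collapses} and spells out combinatorially in \Cref{apx:collapsibility}, so you have correctly located where the real work lies.

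However, you have made the subclaim strictly stronger than what is needed (and than what the paper proves), and this introduces a genuine gap. You assert that $B_d$ collapses \emph{all the way to} $\boundary(\sigma\times\{1\})$, including a ``cleanup'' phase that eliminates every leftover $(d-1)$- and lower-dimensional simplex not lying in $\boundary(\sigma\times\{1\})$. The paper's \Cref{lem:Bd_collapses} only asserts that a collapsing sequence exists that removes all $d$-simplices without involving any $(d-1)$-face of $\sigma\times\{1\}$; it says nothing about the $(d-1)$-dimensional residue. The cleanup step you invoke is not obviously possible: even though $B_d$ is homotopy equivalent to $\boldsymbol\partial\Delta^d$ (as a mapping-cylinder argument shows), homotopy equivalence or deformation retraction does not imply the existence of a simplicial collapse to that particular subcomplex, and nothing in the construction guarantees it. Moreover, this extra strength is not required: the target statement only asks for a collapse to \emph{some} $(d-1)$-dimensional subcomplex, so stopping once all $d$-simplices of all blocks are gone already suffices. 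If you insist on a clean induction, you should also strengthen the inductive hypothesis to say that $\mathcal B_d^{n-1}$ admits a collapsing sequence all of whose collapse pairs $(\sigma_i,\tau_i)$ have $\dim\sigma_i=d$; otherwise the $(d-1)$-dimensional residue $L'$ left over from $B_d^n$ could block lower-dimensional collapse pairs used by the inductive hypothesis. A small additional slip: for $n=1$, the intersection $\mathcal B_d^0\cap B_d^1$ is $\boundary(\sigma\times\{0\})$, not the closed $d$-simplex $\sigma\times\{0\}$, because $B_d$ contains all proper faces of $\sigma\times\{1\}$ but not the $d$-simplex itself.
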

\begin{restatable}{lemma}{Pdncollapses}
\label{lem:Pdn_collapses}
    The simplicial complex $\mathcal{P}_d^{n}$ collapses to a $(d-1)$-dimensional subcomplex. 
\end{restatable}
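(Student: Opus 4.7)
The plan is to mirror the approach for \Cref{lem:Bdn_collapses} but run the collapses ``bottom-up'' rather than ``top-down'', exploiting the fact that removing the $d$-simplex $\sigma\times\{0\}$ from $\mathcal{Q}_d^n$ frees up the entire bottom $(d-1)$-skeleton $\boundary(\sigma\times\{0\})$ of the lowest building block $B_d^1$. Recall that $\mathcal{P}_d^n$ decomposes as a chain $B_d^1 \cup B_d^2 \cup\cdots\cup B_d^n$ of building blocks glued along their shared vertex sets $\sigma\times\{i\}$, together with the bottom $(d-1)$-skeleton on $\sigma\times\{0\}$. The proof proceeds by inducting on $i = 1,\ldots,n$, collapsing each $B_d^i$ in turn; the surviving subcomplex contains $\boundary(\sigma\times\{n\})$ since we never touch the topmost $(d-1)$-skeleton of $B_d^n$.

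The technical core is a \emph{building-block collapse sublemma}: $B_d$ admits a collapsing sequence that removes every $d$-simplex of $B_d$, preserves the top skeleton $\boundary(\sigma\times\{1\})$ in its entirety, and uses only the bottom $(d-1)$-faces $\{\sigma_j\times\{0\}\}$ and certain interior $(d-1)$-faces as free faces. Given this sublemma, the main proof is telescoping: the bottom $(d-1)$-simplices of $B_d^1$ are free in $\mathcal{P}_d^n$ precisely because $\sigma\times\{0\}$ has been deleted, so we may apply the sublemma to $B_d^1$; once done, the top $(d-1)$-simplices of $B_d^1$ (which equal the bottom $(d-1)$-simplices of $B_d^2$) are no longer incident to any $d$-simplex except those of $B_d^2$, enabling the sublemma for $B_d^2$, and so on up to $B_d^n$. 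No $d$-simplex survives, so the result is a $(d-1)$-dimensional subcomplex containing the support of $\boundary(\sigma\times\{n\})$.

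The proof of the sublemma breaks into two stages. Stage 1 is a direct inspection: by the classification of $d$-simplices of $B_d$ in the proof of \Cref{lem:Bd_d_chain}, each bottom $(d-1)$-face $\sigma_j\times\{0\}$ is contained in exactly one $d$-simplex, namely the ``Type 3'' simplex $(\sigma_j\times\{0\})\cup\{(v_j,1)\}$. So we immediately collapse these $d+1$ pairs. Stage 2, which is the main obstacle, must clear the remaining ``Type 1/Type 2'' $d$-simplices coming from the prism-over-$\boundary\Delta^d$ portion of the construction without disturbing any top $(d-1)$-face. The intended approach is to exploit the fact that for each $(d-1)$-face $\sigma_i$ of $\boundary\Delta^d$, the sub-stellar-prism $SP(\sigma_i)$ collapses to its stellar-subdivided top, using only interior $(d-1)$-faces as free faces (a standard prism-collapse argument carefully modified for the stellar refinement). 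These per-face collapses can be ordered and carried out in parallel over the different $\sigma_i$, and the vertex identifications $v_{\sigma_i}=(v_i,1)$ commute with the collapse. The subtlety, and the main place where care is needed, is to verify that after identification the free-face conditions still hold: collapses in distinct $SP(\sigma_i)$ must not use the same $(d-1)$-simplex, and no $(d-1)$-face lying in $\boundary(\sigma\times\{1\})$ may ever serve as a free face in the sequence. This should be arrangeable by ordering collapses according to the index $k$ of the prism simplex $\sigma^k$, just as in the classical proof that a simplicial prism collapses to one of its ends.
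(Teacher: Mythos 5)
Your plan matches the paper's proof essentially exactly: the building-block collapse sublemma you posit is precisely the paper's \Cref{lem:Bd_collapses}, and the main argument is the same telescoping induction across $B_d^1,\ldots,B_d^n$ that the paper invokes (by reference to its \Cref{lem:Bdn_collapses} argument). You make explicit the one detail the paper's one-paragraph sketch elides, namely that for $\mathcal{P}_d^n$ the induction must run bottom-up starting at $B_d^1$ (it is precisely the deletion of the $d$-simplex $\sigma\times\{0\}$ that leaves the faces $\sigma_j\times\{0\}$ free), whereas for $\mathcal{B}_d^n$ it runs top-down from $B_d^n$ because there the outermost block's bottom faces $\sigma\times\{n\}$ are free from the start.
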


For this, we need to prove an auxiliary lemma about the building block from Section~\ref{sec:construction_building_block}.

\begin{lemma}
\label{lem:Bd_collapses}
    There is a collapsing sequence that collapses all $d$-simplices of $B_d$. Furthermore, no $(d-1)$-simplices that are subsets of $\sigma\times\{1\}$ are involved in the collapsing sequence.
\end{lemma}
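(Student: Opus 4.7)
The plan is to construct an explicit collapsing sequence for $B_d$ in two phases, exploiting the cone structure of each stellar prism $\sp{\sigma_j}$. The key structural fact is that every $d$-simplex of $B_d$ arises from a unique $\sp{\sigma_j}$ (for $\sigma_j$ a $(d{-}1)$-face of $\boundary \Delta^d$) and, after the identification $v_{\sigma_j}\mapsto (v_j,1)$, takes the form $\zeta = \varphi \cup \{(v_j,1)\}$ with $\varphi$ either $\sigma_j\times\{0\}$ (``type B'') or of the form $\tau^k$ for a $(d{-}2)$-face $\tau$ of $\sigma_j$ and some $0 \le k \le d{-}2$ (``type A''). Inside $\sp{\sigma_j}$ the vertex $(v_j,1)$ plays the role of a cone apex: every $d$-simplex of $\sp{\sigma_j}$ contains it. Crucially, in every case the faces we will use for the collapse always contain at least one vertex with second coordinate $0$, so no face involved will be a subset of $\sigma \times \{1\}$ (these are precisely the simplices $\sigma_i \times \{1\}$, composed entirely of second-coordinate-$1$ vertices).

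In Phase 1 we remove all type B simplices by pairing each $(\sigma_j \times \{0\}) \cup \{(v_j,1)\}$ with its bottom face $\sigma_j \times \{0\}$. Since $\sigma_j \times \{0\}$ consists of $d$ vertices all with second coordinate $0$, while every type A simplex has at most $d{-}1$ such vertices, $\sigma_j \times \{0\}$ cannot lie in any type A simplex; among type B simplices, only the one from $\sp{\sigma_j}$ contains it. Hence each such pair is free and valid.

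Phase 2 removes the type A simplices and is the combinatorial heart of the argument. The difficulty is that each $(d{-}2)$-face $\tau$ of $\boundary \Delta^d$ lies in exactly two $(d{-}1)$-faces $\sigma_j$ and $\sigma_{j'}$ of $\boundary \Delta^d$, so the $(d{-}1)$-simplex $\tau^k$ is shared between the two $d$-simplices $\tau^k \cup \{(v_j,1)\}$ and $\tau^k \cup \{(v_{j'},1)\}$. We fix a linear order $\sigma_0 < \sigma_1 < \dots < \sigma_d$, and for each such shared face $\tau^k$ with $j < j'$, we first collapse the pair $(\tau^k \cup \{(v_j,1)\},\,\tau^k)$. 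This removes $\tau^k$ and leaves an ``orphaned'' simplex $\tau^k \cup \{(v_{j'},1)\}$, which we then collapse against a diagonal $(d{-}1)$-face $\eta := (\tau^k \setminus \{(u_k,1)\}) \cup \{(v_{j'},1)\}$ obtained by dropping the transition vertex $(u_k,1)$ and joining with the apex. The claim is that when we reach this step, $\eta$ is a free face, i.e., only a face of the orphaned $d$-simplex.

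The main obstacle is verifying freeness of the diagonal faces $\eta$ for general $d$. The plan is an induction inside each $\sp{\sigma_{j'}}$ over the pairs $(\tau,k)$ in an appropriate lexicographic order, mirroring the classical sequential collapse of a prism $\prism{\sigma_{j'}}$ onto its top face via the chain of free faces $\sigma^k \setminus \{(v_k,1)\} = \sigma^{k+1}\setminus\{(v_{k+1},0)\}$. The cone structure of $\sp{\sigma_{j'}}$ reduces the search for other $d$-simplices containing $\eta$ to a short list: type B simplices (already removed in Phase 1) and certain earlier type A simplices in $\sp{\sigma_{j'}}$ (already removed earlier in Phase 2). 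A direct case analysis, using that $\eta$ contains both the apex $(v_{j'},1)$ and at least one vertex with second coordinate $0$, then confirms freeness. Since every $(d{-}1)$-face used in either phase contains a vertex with second coordinate $0$, none is a subset of $\sigma \times \{1\}$, establishing the second conclusion of the lemma.
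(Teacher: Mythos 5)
Your Phase 1 (collapsing each $(\sigma_j\times\{0\})\cup\{(v_j,1)\}$ against its free face $\sigma_j\times\{0\}$) matches the paper's proof, and your argument for why no subset of $\sigma\times\{1\}$ is ever used as a free face (every free face retains a vertex with second coordinate $0$) is also the paper's argument. However, Phase 2 is built on a claim that is false in general: that after Phase 1 the $(d-1)$-simplex $\tau^k$ (for $\tau$ a $(d-2)$-face of $\boundary\Delta^d$) has exactly two cofaces $\tau^k\cup\{(v_j,1)\}$ and $\tau^k\cup\{(v_{j'},1)\}$. For $d=3$, take $\tau=\sigma_{01}=\{v_2,v_3\}$ and $k=0$, so $\tau^0=\{(v_2,0),(v_2,1),(v_3,1)\}$. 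In $B_3$ this has \emph{four} cofaces: besides the two you list, the $d$-simplices $\sigma^2_1=\{(v_0,0),(v_2,0),(v_2,1),(v_3,1)\}$ (arising from $SP\sigma_3$ as $(\sigma_{13})^1\cup\{(v_3,1)\}$) and $\sigma^2_0=\{(v_1,0),(v_2,0),(v_2,1),(v_3,1)\}$ (arising from $SP\sigma_3$ as $(\sigma_{03})^1\cup\{(v_3,1)\}$) also contain $\tau^0$. The extra cofaces come from identifications inside $SP\sigma_i$ for $i$ a vertex of $\tau$ itself, not only from $SP\sigma_j$ and $SP\sigma_{j'}$; your uniqueness of the apex decomposition is correct, but it does not imply your count of cofaces of $\tau^k$.

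Even granting your two-coface claim, the stated collapse order is invalid. You propose to collapse $(\tau^k\cup\{(v_j,1)\},\tau^k)$ first and say this ``leaves an orphaned simplex $\tau^k\cup\{(v_{j'},1)\}$'': but a collapse removes both elements of the pair, and you cannot remove $\tau^k$ from a simplicial complex that still contains its coface $\tau^k\cup\{(v_{j'},1)\}$. The only way around this would be to remove $\tau^k\cup\{(v_{j'},1)\}$ \emph{first} (say against the diagonal face $\eta$), making $\tau^k$ free afterward --- but you never verify that $\eta$ is free at that moment, and in light of the coface miscount above, the dependency structure among collapses is more intricate than your sketch allows. The paper avoids these issues by pairing \emph{every} $d$-simplex other than type B with the face obtained by deleting its unique transition vertex $(v_k,1)$, computing the exact number of cofaces (two or three) for each such $(d-1)$-face, and then running an explicit backward iteration over $k$ within fixed $(i,j)$ or fixed $j$ that guarantees freeness at each step. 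You would need either to correct the coface count and rebuild the ordering argument from scratch, or to adopt a uniform free-face scheme like the paper's.
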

\begin{proof}
    This proof relies on notation introduced in the proof of Lemma \ref{lem:Bd_d_chain} Part 1. Recall that the $d$-simplices of $B_d$ are of the form $\{\sigma_{j}^{k} : 0\leq k< d, j\neq k\}$,  $\{\sigma_{ij}^{k}\cup\{(v_i,1)\} : 0\leq k\leq d, j\neq k, i<k\}$, and $\{(\sigma_i\times\{0\})\cup\{(v_i,1)\} : 0\leq i \leq d\}$.
    We will collapse these $d$-simplices using the $(d-1)$-simplices of the form $\sigma^{k}_{j}\setminus\{(v_k,1)\}$, $\sigma_{ij}^{k}\cup\{(v_i,1)\}\setminus\{(v_k,1)\}$, and $\sigma_i\times\{0\}$. To define an appropriate collapsing sequence, we need to know which $d$-simplices are incident to which $(d-1)$-simplices.
    \par 
    The $(d-1)$-simplices $\sigma_i\times\{0\}$ are incident to exactly one $d$-simplex: $(\sigma_i\times\{0\})\cup\{(v_i,1)\}$. We can see this as any other $d$-simplex will have two vertices with second coordinate 1.
    \par 
    The simplices $\sigma^{k}_{ij}\cup\{(v_i,1)\}\setminus\{(v_k,1)\}$ are incident to exactly two $d$-simplices. It is straightforward to verify that $\sigma^{k}_{ij}\cup\{(v_i,1)\}\setminus\{(v_k,1)\}$ is incident to $\sigma^{k}_{ij}\cup\{(v_i,1)\}$ and $\sigma^{k+1}_{ij}\cup\{(v_i,1)\}$ for $j\neq k,k+1$ and $k<d$; we can see this as $\sigma^{k}_{ij}\cup\{(v_i,1)\}\setminus\{(v_k,1)\} = \sigma^{k+1}_{ij}\cup\{(v_i,1)\}\setminus\{(v_{k+1},0)\}$. For the case of special case of $j=k+1$, we can see that $\sigma^{k}_{i(k+1)}\cup\{(v_i,1)\}\setminus\{(v_k,1)\} = \sigma^{k+2}_{i(k+1)}\cup\{(v_i,1)\}\setminus\{(v_{k+2},0)\}$, so $\sigma^{k}_{i(k+1)}\cup\{(v_i,1)\}\setminus\{(v_k,1)\}$ is instead incident to $\sigma^{k}_{i(k+1)}\cup\{(v_i,1)\}$ and $\sigma^{k+2}_{i(k+1)}\cup\{(v_i,1)\}$. Finally, when $k=d$, the simplices $\sigma^{d}_{ij}\cup\{(v_i,1)\}\setminus\{(v_d,1)\}$ are incident to $\sigma^{d}_{ij}\cup\{(v_i,1)\}$ and $(\sigma_i\times\{0\})\cup\{(v_i,1)\}$.
    \par 
    Lastly, the simplex $\sigma^{k}_j\setminus\{(v_k,1)\}$ is incident to exactly three $d$-simplices. We see that $\sigma^{k}_{j}\setminus\{(v_{k},1)\}=\sigma^{k+1}_{j}\setminus\{(v_{k+1},0)\} = \sigma^{k+2}_{(k+1)j}\cup\{(v_{k+1},1)\}\setminus\{(v_{k+2},0)\}$ when $k\neq j+1$ and $k < d-1$, and $\sigma^{k}_{k+1}\setminus\{(v_{k},1)\}=\sigma^{k+2}_{k}\setminus\{(v_{k+2},0)\} = \sigma^{k+3}_{(k+2)j}\cup\{(v_{k+2},1)\}\setminus\{(v_{k+3},0)\}$ when $k = j+1$ and $k < d-1$. Finally, we consider the case of $k=d-1$. By the construction of $B_d$, we know that $\sigma^{d-1}_{j} = \sigma^{d-1}_{ij}$ for some $i$. We know that $i>d-1$, so we conclude that $i=d$. This implies $\sigma_{j}^{d-1}\setminus\{(v_{d-1},1)\} = \sigma^{d-1}_{dj}\cup\{(v_d,1)\}\setminus\{(v_{d-1},1)\}$ is incident to the two $d$-simplices: $\sigma^{d-1}_{j}$ and $(\sigma_{d}\times\{0\})\cup\{v_{d},1\}$ 
    \par 
    We now describe a collapsing sequence for $B$. We know that $\sigma_i\times\{0\}$ is only incident to $\sigma_i\times\{0\}\cup\{(v_i,1)\}$. We therefore collapse each simplex $\sigma_i\times\{0\}$ onto $\sigma_i\times\{0\}\cup\{(v_i,1)\}$.
    \par 
    Now fix $0\leq i\neq j\leq d$. For $k$ starting at $d$ and iterating backwards to $i$, we can collapse the simplices $\sigma^{k}_{ij}\cup\{(v_i,1)\}\setminus\{(v_k,1)\}$ into the simplex $\sigma^{k}_{ij}\cup\{(v_i,1)\}$; this collapse is valid as the other simplex incident to $\sigma^{k}_{ij}\cup\{(v_i,1)\}\setminus\{(v_k,1)\}$ was collapsed in a previous iteration. These collapses remove all simplices of the form $\sigma_{ij}^{k}\cup\{(v_i,1)\}$.
    \par
    Now fix an integer $j$. For $k$ starting at $d-1$ and iterating backwards to $0$ (and skipping $j$), we will collapse $\sigma^{k}_{j}\setminus\{(v_j,1)\}$ into $\sigma^{k}_j$. Initially, this is a valid collapse as the simplex $\sigma^{d-1}_{j}\setminus\{(v_{d-1},1)\}$ is only incident to $\sigma^{d-1}_{j}$ and $(\sigma_d\times\{0\})\cup \{(v_d,1)\}$, and the second simplex has already been collapsed. For the other values of $k$, the only simplices incident to $\sigma^{k}_j\setminus\{(v_k,1)\}$ were either removed in the previous iteration or in the series of collapses from the previous paragraph. 
    \par 
    Finally, no $(d-1)$-simplex in $K\times\{1\}$ was collapsed, as all of the simplices that were collapsed have at least one vertex with second coordinate 0.
\end{proof}

We can now prove Lemmas \ref{lem:Bdn_collapses} and \ref{lem:Pdn_collapses}.

\begin{proof}[Proof of Lemma \ref{lem:Bdn_collapses}]
    Recall that $\mathcal{B}_d^{n}$ is constructed by gluing together a $d$-simplex, denoted $B_d^{0}$, and $n$ copies of $B_d$, denoted $B_d^{i}$ for $1\leq i\leq n$. We will therefore prove that we can collapse each of the $d$-simplices of $B_d^{i}$ for $n\geq i\geq 1$ by induction on $i$ in reverse order.
    \par 
    For the base case of $i=n$, we know we can collapse each of the $d$-simplices of $B_d^{n}$. By Lemma \ref{lem:Bd_collapses}, we know we can collapse all $d$-simplices $B_d$, and the collapsing sequence does not collapse any $(d-1)$-simplex on the vertices $\sigma\times\{1\}$. As the only $(d-1)$-simplices in $B_d^{n}$ that are incident to other $d$-simplices are simplices in $\sigma\times\{n-1\}$, then we know we can collapse all $d$-simplices of $B_d^{n}$.
    \par 
    Inductively, we know that none of the $(d-1)$-simplices needed to collapse $B^{i}_{n}$ have previously been collapsed, as the only $(d-1)$-simplices of $B^{i}_n$ that are $(d-1)$-simplices of $B^{j}_{n}$ for $j>i$ are subsets of $\sigma\times\{1\}$ in $B^{i+1}_n$, which we know have not been collapsed. Finally, we can collapse the unique $d$-simplex in $B_{d}^{0}$, as none of the $(d-1)$-simplices of $B_d^{0}$ have previously been collapsed.
\end{proof}

\begin{proof}[Proof of Lemma \ref{lem:Pdn_collapses}]
    The proof of this theorem is nearly identical to the proof of~\Cref{lem:Bdn_collapses} above, so we exclude details and instead sketch the proof. Recall that $\mathcal{P}_d^{n}$ is constructed by gluing together n copies of $B_d$, denoted $B_d^{i}$ for $1\leq i\leq n$. As in the previous proof, we can prove that we collapse each of the $d$-simplices of $B_d^{i}$ for $1\leq i\leq n$ by induction. This works because, inductively, the only $(d-1)$-simplices in $B_d^{i}$ that are incident to $d$-simplices outside of $B_d^{i}$ are subsets of $\sigma\times\{1\}$. As in the previous proof, this is sufficient to prove that we can collapse the $d$-simplices of $B_d^{i}$. 
\end{proof}

\end{document}